\documentclass[draft,12pt,reqno,a4paper]{amsart}

\hfuzz=4pt

\usepackage[margin=3cm,footskip=1cm]{geometry}

\usepackage{amssymb} 
\usepackage{amsmath} 
\usepackage{amsfonts}
\usepackage{amsthm} 
\usepackage{mathtools}
\usepackage{color}

\usepackage{enumerate} 
\usepackage[latin1]{inputenc}
\usepackage[shortlabels]{enumitem}
\usepackage{color}
\usepackage{graphicx} 
\usepackage{verbatim}

\newcommand{\supp}{\operatorname{supp}}

\DeclareMathOperator\ran{ran}

\newcommand{\spann}{\mathrm{span}}

\newcommand{\ad}{{\operatorname{ad}}}
\newcommand{\N}{{\mathbb{N}}}

\newcommand{\R}{{\mathbb{R}}}

\newcommand{\C}{{\mathbb{C}}}
\renewcommand{\S}{{\mathbb{S}}}

\renewcommand\i{\mathrm{i}}



\renewcommand{\c}{{\mathrm c}}

\newcommand{\e}{{\mathrm e}}
\newcommand{\ess}{{\mathrm {ess}}}

\renewcommand{\d}{{\mathrm d}}
\newcommand{\dol}{{\mathrm {dol}}}

\newcommand{\diag}{{\mathrm {diag}}}

\newcommand{\pupo}{{\mathrm {pp}}}

\renewcommand{\Re}{\operatorname{Re}}
\renewcommand{\Im}{\operatorname{Im}}

\DeclarePairedDelimiter\inp\langle\rangle



\newcommand\paro[2][]{#1  ( #2#1 )}

\newcommand\parb[2][]{#1 \big ( #2#1\big )}

\newcommand\parbb[2][]{#1 \Big ( #2#1\Big )}
 
 \newcommand{\pp}{{\mathrm {pp}}}
  
\newcommand{\mand}{\text{ \,and\, }}

\newcommand{\cas}{{\textrm {the Cauchy-Schwarz inequality }}}
\newcommand{\caS}{{\textrm {the Cauchy-Schwarz inequality}}}

\newcommand{\1}{\hspace{ 1cm}}

\DeclarePairedDelimiter\ket{\lvert}{\rangle}
\DeclarePairedDelimiter\bra{\langle}{\rvert}

\DeclareMathOperator*{\slim}{s-lim}
\DeclareMathOperator*{\wlim}{w-lim}

\DeclareMathOperator*{\wvHlim}{{w-\mathcal H}-lim}

\DeclareMathOperator*{\wLlim}{{w}-{\mathit {L^2(C_a)}}-{lim}}

\DeclareMathOperator*{\swslim}{s- w^\star-lim}


\DeclarePairedDelimiter\abs\lvert\rvert
\DeclarePairedDelimiter\norm\lVert\rVert
\DeclarePairedDelimiter\set{\{}{\}}
\DeclarePairedDelimiter\comm{[}{]}


\newcommand\Step[1]{ 
  \par\bigskip
  \noindent
  \textbf{#1}.\enspace
}

\newcommand{\brR}{{\breve R}}
\newcommand{\brh}{{\breve h}}
\newcommand{\brH}{{\breve H}}

\newcommand{\brI}{{\breve I}}
\newcommand{\brs}{{\breve s}}

\newcommand{\brr}{{\breve r}}
\newcommand{\brq}{{\breve q}}
\newcommand{\brJ}{{\breve J}}

\newcommand{\bD}{{\mathbf D}}

\newcommand{\bX}{{\mathbf X}}

\newcommand{\vA}{{\mathcal A}}
\newcommand{\vB}{{\mathcal B}}
\newcommand{\vC}{{\mathcal C}}
\newcommand{\vD}{{\mathcal D}}
\newcommand{\vE}{{\mathcal E}}

\newcommand{\vG}{{\mathcal G}}

\newcommand{\vH}{{\mathcal H}}

\newcommand{\vL}{{\mathcal L}}

\newcommand{\vN}{{\mathcal N}}
\newcommand{\vO}{{\mathcal O}}

\newcommand{\vQ}{{\mathcal Q}}
\newcommand{\vR}{{\mathcal R}}
\newcommand{\vT}{{\mathcal T}}
\newcommand{\vU}{{\mathcal U}}
\newcommand{\vV}{{\mathcal V}}

\theoremstyle{plain}
\newtheorem{thm}{Theorem}[section]

\newtheorem{lemma}[thm]{Lemma} \newtheorem{corollary}[thm]{Corollary}
\theoremstyle{definition} 
\newtheorem{defn}[thm]{Definition} 

 \newtheorem{cond}[thm]{Condition}
 \newtheorem{remark}[thm]{Remark}
\newtheorem{remarks}[thm]{Remarks}
 \newtheorem*{remarks*}{Remarks}
\newtheorem*{remark*}{Remark}

\numberwithin{equation}{section}

\title {stationary scattering theory, the $N$-body long-range case}

\thanks{
Supported by DFF grant nr.\ 8021-00084B}

\author{E. Skibsted} \address[E. Skibsted]{Institut for Matematik\\
Aarhus Universitet\\ Ny Munkegade 8000 Aarhus C, Denmark}
\email{skibsted@math.au.dk}

\begin{document}

\begin{abstract} Within the class of Derezi{\'n}ski-Enss  pair-potentials
   which includes  Coulomb potentials and for which 
    asymptotic completeness is known \cite{De}, we show that all entries
  of the $N$-body quantum scattering matrix have a  well-defined meaning at any given
  non-threshold energy. As a function of  the energy parameter the
  scattering matrix is weakly continuous. This  result generalizes a  
  similar one  obtained previously by Yafaev for systems of particles interacting
  by short-range potentials  \cite{Ya1}. As for Yafaev's  paper  we do not make any
  assumption on the decay of channel eigenstates. The main part of the
  proof consists
  in establishing a number of 
   Kato-smoothness bounds needed for justifying a new formula for
    the
  scattering matrix. Similarly we construct and show strong continuity
  of channel wave matrices for
  all non-threshold energies. Away from a set of measure zero we show that the scattering
  and channel wave matrices constitute a well-defined `scattering
  theory', in particular at such energies the scattering matrix is
  unitary, strongly continuous and characterized by asymptotics of minimum
  generalized eigenfunctions.
\end{abstract}

\allowdisplaybreaks

\maketitle

\medskip
\noindent
Keywords: $N$-body Schr\"odinger operators, stationary scattering
theory, scattering and wave matrices, minimum 
generalized eigenfunctions.

\medskip
\noindent
Mathematics Subject Classification 2010: 81Q10, 35A01, 35P05.
\tableofcontents

\section{Introduction}\label{sec:Introduction}
Asymptotic completeness of systems of quantum particles interacting
  by long-range potentials (more precisely  by pair-potentials
 decaying like $\vO(\abs{x^a}^{-\mu})$ with  $\mu>\sqrt{3}-1$) was proven by Derezi{\'n}ski \cite{De} by 
   entirely  time-dependent methods. While asymptotic completeness of systems of quantum particles interacting
  by short-range potentials can be proven by a  time-independent
  method, see \cite{Ya3}, there does seem to appear any  time-independent method 
  for the long-range case  in the literature. A notable virtue of
  the mentioned  papers is their generality, thus the completeness results hold
  without implicit assumptions.

One deficiency of this situation is  that  we lack 
  understanding of the
  scattering matrix in a general long-range  setup. This is in contrast to Yafaev's case where his method
  reveals some properties of  the (short-range) 
  scattering matrix. In particular Yafaev showed that this quantity is
  weakly continuous in the spectral parameter away from the threshold
  set, as deduced from a formula,  \cite{Ya1}. 

We show in this paper similar results for
  the long-range case. The continuity  will be derived from a new
  stationary formula, and similarly to   Yafaev's case our  formula is justified
  in terms of various Kato smoothness bounds. Only a few of them appears in
  Yafaev's papers. In particular our paper is strongly dependent of
  certain time-independent bounds/observables which may be understood as
  modifications of Derezi{\'n}ski's key ingredients. In addition to
  \cite{De} and \cite{Ya3} our
  paper relies on  some resolvent bounds first   proven in \cite{GIS} and
  then recently extended to a Besov space setting  along with a (sharp) Sommerfeld
  uniqueness result in \cite{AIIS}.

 For any channel
  $\alpha=(a,\lambda^\alpha, u^\alpha)$ we consider the
 \emph{channel wave operators} 
\begin{align*}
  W_\alpha^{\pm}=\slim_{t\to \pm\infty}\e^{\i tH}J_\alpha\e^{-\i
  (S_a^\pm (p_a,t)+\lambda^\alpha t)},\quad J_\alpha
f_a=u^\alpha\otimes f_a,
\end{align*} where $H$ denoted the full $N$-body Schr{\"o}dinger
operator,  $S_a^\pm (\xi_a,t)$ solves  a certain  Hamilton-Jacobi
equation (it is a `distortion' of the function $t\xi_a^2$). For two
channels, say incoming $\alpha=(a,\lambda^\alpha, u^\alpha)$ and outgoing $\beta=(b,\lambda^\beta, u^\beta)$, the
corresponding entry $S_{\beta\alpha}=(W_{\beta}^+)^*W_{\alpha}^-$  of the scattering operator takes
the form in a diagonalizing (momentum)  space,
\begin{align*}
\hat
  S_{\beta\alpha}=\int^\oplus_{
  (\max\set{\lambda^\beta,\lambda^\alpha}, \infty)}
  S_{\beta\alpha}(\lambda)\,\d \lambda.
\end{align*} One can abstractly construct the fiber operators
$S_{\beta\alpha}(\cdot)$ as  almost everywhere defined quantities,
however  such a construction does not seem to  
 provide  properties 
 of scattering. On the contrary we prove (independently of any abstract construction)  that there exist a (unique)
weakly continuous candidate away from the threshold
  set. Another main result of the paper is that for fixed energy also  the restriction
  of the wave operators and their adjoints (the latter are called wave
  matrices) exist. Given these results it makes sense to ask if there is a
   `scattering theory' at a fixed energy? While this is not
  settled in this paper we do  show that almost all non-threshold
  energies are `stationary scattering energies'. In particular at  any such energy 
  $ S_{\beta\alpha}(\cdot)$ is strongly continuous.

  The weak continuity property of  $S_{\beta\alpha}(\cdot)$ all non-threshold
  energies will be read off from 
an explicit formula of the following form. Up to some trivial  multiplicative
factors in momentum space and localization away from `collision planes'
\begin{align*}
  S_{\beta\alpha}(\lambda)\approx\parbb{(H-\lambda)\Phi_\beta^+
    J_\beta\gamma_b^+(\lambda-\lambda^\beta)^* }^*\delta(H-\lambda) \parbb{(H-\lambda)\Phi_\alpha^-
    J_\alpha\gamma_a^-(\lambda-\lambda^\alpha)^*}.  
\end{align*}  Here $\delta(H-\lambda) =\pi^{-1}\Im\parb{(H-(\lambda+\i
0)^{-1}}$ is the delta-function of $H$ at $\lambda$, $\gamma_b^+(\lambda-\lambda^\beta)^* $ and
$\gamma_a^-(\lambda-\lambda^\alpha)^*$ are certain auxiliary 'one-body wave
matrices' (constructed from inter-cluster potentials), while  the
\emph{ stationary channel modifiers} $\Phi_\alpha^-$ and  $\Phi_\beta^+$ should be
considered as some appropriate, incoming and outgoing phase-space  `distortions of the
identity'. The latter operators  are introduced  so that (looking only at the
right-hand factor)
\begin{align*}
   (H-\lambda)\Phi_\alpha^-
    J_\alpha\gamma_a^-(\lambda-\lambda^\alpha)^*\approx\sum_k Q^-(a,k)^*\parbb{B_-(a,k)Q^-(a,k)J_\alpha{\gamma_a^-(\lambda-\lambda^\alpha)}^*},
\end{align*} Here the sum is finite, the operators $B_-(a,k)$ are bounded
and the only $\lambda$-dependence is sitting in the  argument of the
incoming  one-body wave
matrix. With a similar structure for  the
left-hand factor the weak continuity follows from three properties to be
demonstrated in the paper:
\begin{enumerate}[i)]
\item \label{item:1oo} $ Q^+(b,l)\delta(H-\lambda) Q^-(a,k)^* $ is weakly continuous in $\lambda$.
\item\label{item:2oo} $Q^-(a,k)J_\alpha{\gamma_a^-(\lambda-\lambda^\alpha)}^*$  is
strongly continuous in $\lambda$.
\item\label{item:3oo} $Q^+(b,l)J_\beta{\gamma_b^+(\lambda-\lambda^\beta)}^*$  is strongly continuous in $\lambda$.
\end{enumerate} Note that  uniform boundedness of the operator in
\ref{item:1oo} is strongly related to Kato boundedness of the
$Q$-operators \cite{Ka}. Apart from the limiting absorption principle bound the
corresponding boundedness assertions for \ref{item:1oo}--\ref{item:3oo}
all arise from  a well-known
positive commutator technique,  for example used extensively  in the cited papers
\cite{De, Ya3,Ya1}. 

Although  we  take the existence of the wave operators
$W_\alpha^{\pm}$ for
granted in this paper, which is legitimate due to \cite{De}, we remark that it may
be derived independently from the stationary  setup
of the present  paper. The proof of the existence and  weak continuity of the
scattering matrix at all non-threshold energies may then also be considered as being independent of
\cite{De}. However for deducing unitarity and the strong  continuity
for almost all energies of the scattering matrix, as obtained in the paper,   we do indeed
use the completeness assertion of \cite{De}. We leave it as a
conjecture that those stronger properties are fulfilled  for \emph{all}
 non-threshold energies. 

We  remark that our
method can be viewed as an extension  of the one of 
\cite{Ya1} for the short-range case (although involving  several new
ingredients and a different representation of  the scattering
matrix). In particular Yafaev's
result  on the weak continuity of the  short-range 
  scattering matrix may be viewed as a consequence of our theory, see Remark \ref{remark:Dollard-time-depend-scatt}
  \ref{item:Dol4}.

In Subsection \ref{subsec:A principle example, atomic $N$-body
  Hamiltonians} we  elaborate on  the consequences of our 
long-range theory (for the general class of  Derezi{\'n}ski-Enss  pair-potentials)
applied to  a 
well-known atomic physics model.
 The bulk of the paper is organized as follows. With a decay condition on the
channel bounds states $u^\alpha$ and $u^\beta$ one can obtain the weak
continuity of $ S_{\beta\alpha}(\cdot)$ by a different method with pair-potentials
 decaying only like  $\vO(\abs{x^a}^{-\mu})$ for   $\mu>0$. We discuss
 this in Section \ref{sec:preliminaries}. The reader may skip most
 of this material, however the precise description of the $N$-body
 model is given there and certain  ingredients will be used in
 later sections. Most notably this is the case for the theory of
 Section \ref{sec:preliminaries} specialized to the one-body case. In
 Section \ref{sec:one-body  matrices} we collect most of what is
 needed from the one-body scattering theory. A main result of the
 paper (this is the weak continuity discussed above) is stated in
 Section \ref{sec:Stationary modifier} as Theorem
 \ref{thm:mainstat-modif-n} which  holds under Condition
 \ref{cond:smooth2wea3n12} with  $\mu>\sqrt{3}-1$. In the same section
 we outline the scheme for proving the theorem, in particular we
 introduce  stationary modifiers. The ingredients from \cite{Ya3}  and
 \cite{De} that we need (essentially vector field constructions) are given in Sections \ref{sec:Yafaev's
   construction} and \ref{sec:Derezinski's
  construction}, respectively. In addition the latter section
contains a Subsection \ref{subsec:Limiting absorption principles}
with  two Mourre type
estimates.  We need to do commutation with the
stationary modifiers which are given by functional calculus. We devote
Section \ref{sec:Calculus considerations} to studying a calculus
  facilitating this task. The commutation is then implemented in Sections \ref{sec:Computation of T} and 
\ref{sec:Formula for widetilde} in which the above assertions
\ref{item:1oo} and \ref{item:2oo}-\ref{item:3oo} are proven,
respectively. We finalize the proof of Theorem
 \ref{thm:mainstat-modif-n} in Subsection \ref{subsec:Conclusion of argument, the
              weak continuity}. The exact expression for the scattering matrix, of the
type 
indicated 
above,  is derived in Appendix \ref{sec:AppendixB} with inputs from
\cite{GIS, AIIS} and by using results from Sections \ref{sec:Computation of T} and 
\ref{sec:Formula for widetilde}.  We devote Appendix 
\ref{sec:AppendixO} and Appendix  \ref{sec:AppendixA} to a few  missing
technical details on assertions used previously.

Our second  main result,  which concerns representation
 of channel  states in terms of certain  generalized eigenfunctions,   is given
 in
 Section \ref{sec:Exact channel wave-matrices}  as Theorem
 \ref{thm:chann-wave-matr}. The involved
 integrals are given in terms of \emph{channel wave
  matrices} that we construct for  all  non-threshold energies. Another issue of
  Section \ref{sec:Exact channel wave-matrices} is the 
  construction of the \emph{scattering matrix}  for
  all non-threshold energies (done concretely), as well as  its
  unitarity and strong continuity  away from a null set of
 energies,  see Definition  \ref{defn:scattering matrix}  and
  Corollary \ref{cor:direct}. Such generic energies, introduced more
  precisely in  Definition \ref{defn:scatEnergy}, we call `stationary scattering
  energies'. We characterize for any stationary scattering
  energy  and for any  incoming 
  $\alpha$-channel the associated \emph{minimum
  generalized eigenfunctions} in terms of their explicit  asymptotic
properties, see Theorems 
 \ref{prop:besov-space-setting} and \ref{Cor:besov-space-setting}. The
 latter result
  may also be vieved as a  characterization of the    incoming 
  $\alpha$-channel  part of the
  scattering matrix at any such energy.
 Wave matrices  have been studied in
 many  contexts in the literature, for example abstractly in
 \cite{KK} and concretely for  $N$-body Schr\"odinger operators
 with short-range interactions in \cite{Ya4} (our exposition is
 rather different from Yafaev's). The developed theory for stationary scattering
 energies is similar to the standard
 one  for the one-body problem at any positive energy (in that case all positive energies are stationary scattering energies), see for example
 \cite{DS, GY}. 

As for the above discussion we ignore in   the
 bulk of the paper  the possible existence of non-threshold
 embedded eigenvalues. It is a minor technical exercise to include non-threshold
   embedded eigenvalues for all  results 
  of the paper, which however will not  be discussed further.     
\subsection{A principle example, atomic and molecular $N$-body
  Hamiltonians}\label{subsec:A principle example, atomic $N$-body
  Hamiltonians}

Consider a system of  $N$ charged particles of dimension $n$  interacting by Coulomb forces. The
Hamiltonian  reads 
\begin{equation}
\label{0.1Cou}
H=-\sum_{j = 1}^N \frac{1}{2m_j}\Delta_{x_j} + \sum_{1 \le i<j \le N} q_iq_j|x_i
-x_j|^{-1}, \quad x_j\in\R^n,\,n\geq 3,
\end{equation}
where $x_j$,  $m_j$ and $q_j$ denote the position, mass and charge of
the $j$'th particle, respectively.

 The Hamiltonian $H$ is regarded as a self-adjoint operator
on $L^2(\bX)$, where $\bX$ is the $n(N-1)$ dimensional real vector
space  $ \bX:= \{ \sum_{j=1}^{N} m_j x_j = 0\}$.
Let $\vA$  denote the set of all cluster
decompositions of the $N$-particle system. The notation $a_{\max}$ and
$a_{\min}$ refers to the $1$-cluster and $N$-cluster decompositions,
respectively.
  Let for $a\in\vA$ the notation  $\# a$ denote the number of
clusters in $a$.
For $i,j \in\{1, \dots, N\}$, $i< j$, we denote by $(ij) $ the
$(N-1)$-cluster decomposition given by letting $C=\{i,j\}$ form a
cluster and all other particles $l\notin C$ form singletons. We write $(ij) \leq a$ if $i$ and $j$ belong to the same cluster
in $a$.   More general, we write $b\leq  a$ if each cluster of $b$
is a subset of a cluster of $a$. If $a$ is a $k$-cluster decomposition, $a= (C_1, \dots, C_k)$,
we let 
\[
\bX^a = \set[\big]{ x\in\bX|\,\, \sum_{l\in C_j } m_l x_l = 0,  j = 1, \dots,
k}=\bX^{C_1}\oplus\cdots \oplus\bX^{C_k},
\]
and
\[
\bX_a  =  \set[\big]{ x\in\bX|  x_i = x_j \mbox{ if } i,j \in C_m  \mbox{ for some }
m \in \{ 1, \dots, k\}  }.
\]
 Note that $b\leq a\Leftrightarrow \bX^b\subseteq\bX^a$, and that the
 subspaces  $\bX^a$ and $\bX_a$  define  an orthogonal decomposition
 of  $\bX$
equipped  with
the quadratic form
\[
 q(x)=\sum_j 2m_j|x_j|^2,  \1 x\in {\bX}.
\]
 Consequently any  $x\in \bX$ decomposes orthogonally as 
 $x =x^{a} + x_{a}$ with $x^a =\pi^a x\in\bX^a$ and $x_a =\pi_a x\in \bX_a$.

With these notations, the many-body Schr\"odinger operator
\eqref{0.1Cou}  takes the form $ H = H_0 + V$, 
where  $H_0=p^2$ is (minus)  the Laplace-Beltrami operator on   the
Euclidean space  $(\bX, q)$ and
$V=V(x) =  \sum_{b=(ij)\in\vA} V_{b}(x^{b}) $ with $ V_b (x^b) =
V_{ij} (x_i - x_j)$ for the  $(N-1)$-cluster decomposition
$b=(ij)$. Note for example  that 
\begin{align*}
  x^{(12)}=\parb{\tfrac{m_2}{m_1+m_2}(x_1-x_2),-\tfrac{m_1}{m_1+m_2}(x_1-x_2),0,\dots,0}.
\end{align*}

For any cluster 
decomposition $a\in\vA$ we introduce the  sub-Hamiltonian $H^a$ as follows. 
For $a=a_{\min}$  we define
$H^{a_{\min}}=0$ on $\mathcal H^{a_{\min}}:=\mathbb C. $
For $a\neq a_{\min}$ we introduce the potential 
\begin{equation*}
V^a(x^a)=\sum_{b=(ij)\leq a} V_{b}(x^b);\quad
x^a\in \bX^a.
\end{equation*} 
Then 
\begin{align*}
 H^a:=-
\Delta_{x^a} +V^a(x^a)=
(p^a)^2 +V^a\ \ 
\text{on }\mathcal H^a=L^2(\bX^a).
\end{align*}

A channel $\alpha$ is by
definition given as $\alpha =(a,\lambda^\alpha, u^\alpha)$, where
$a\in\vA'=\vA\setminus \{a_{\max}\}$ and  $u^\alpha\in \mathcal H^a$ obeys
$\norm{u^\alpha}=1$ and 
$(H^a-\lambda^\alpha)u^\alpha=0$ for a real number
$\lambda^\alpha$. For any $a\in \vA'$ the intercluster potential is by definition
\begin{align*}
  I_a(x)=\sum_{b=(ij)\not\leq a}V_b(x^b).
\end{align*} Next we introduce  atomic Dollard type channel wave operators. They  read
\begin{align}\label{eq:Atomwave_op}
  W_{\alpha,{\rm atom}}^{\pm}=\slim_{t\to \pm\infty}\e^{\i
  tH}{u_\alpha}\otimes \parb{\e^{-\i
  (S_{a,{\rm atom}}^\pm (p_a,t)+\lambda^\alpha t)} (\cdot)},
\end{align} where
\begin{align*}
   S_{a,{\rm atom}}^\pm (\xi_a,\pm|t|)=\pm
  S_{a,{\rm atom}}(\pm\xi_a,|t|)\mand S_{a,{\rm atom}}(\xi_a,t)=t\xi_a^2+\int_1^t\,I_a(2s\xi_a)\,\d s.
  \end{align*} The well-definedness of $W_{\alpha,{\rm atom}}^{\pm}$
  follows from the existence of  $W_{\alpha}^{\pm}$, see  Remarks
  ~\ref{remark:Dollard-time-depend-scatt} \ref{item:Dol1} and \ref{item:Dol2}.
 Let $I^\alpha=(\lambda^\alpha,\infty)$ and
$k_\alpha=p_a^2+\lambda^\alpha$. By the intertwining property $H W_{\alpha,{\rm atom}}^{\pm}\supseteq
W_{\alpha,{\rm atom}}^{\pm}k_\alpha $  and the fact that $k_\alpha$ is diagonalized by
the unitary map  $F_\alpha:L^2(\mathbf X_a)\to L^2(I^\alpha
;L^2(C_a))$, $C_a=\mathbf X_a\cap\S^{n_a-1}$ with  $ n_a=\dim \mathbf X_a$,  given by
\begin{align*}
  (F_\alpha u)(\lambda,\omega)=(2\pi)^{-n_a/2}2^{-1/2}
  \lambda_\alpha^{(n_a-2)/4}\int \e^{-\i  \lambda^{1/2}_\alpha \omega\cdot
  x_a}u(x_a)\,\d x_a,\quad \lambda_\alpha=\lambda-\lambda^\alpha,
\end{align*} we can for \emph{any} two given  channels $\alpha$ and
$\beta$  write 
\begin{align*}
  \hat
  S_{\beta\alpha,{\rm atom}}:=F_\beta(W_{\beta,{\rm atom}}^+)^*W_{\alpha,{\rm atom}}^-F_\alpha^{-1}=\int^\oplus_{
  I_{\beta\alpha} }
  S_{\beta\alpha,{\rm atom}}(\lambda)\,\d \lambda,\quad
  I_{\beta\alpha}=I^\beta\cap I^\alpha.
\end{align*} The fiber operator $ S_{\beta\alpha,{\rm atom}}(\lambda)\in
\vL(L^2(C_a),L^2(C_b))$ is from an abstract point of view (our point
of view is different as discussed previously) 
a priori defined only for
a.e. $\lambda \in I_{\beta\alpha}$. It is the
{$\beta\alpha$-entry of the  atomic  Dollard type scattering matrix} $S_{{\rm atom}}(\lambda)=\parb{S_{\beta\alpha,{\rm atom}}(\lambda)}_{\beta\alpha}$.

Now a  main result of this paper,  specialized to the above atomic  
model, reads as follows (see Remark \ref{remark:Dollard-time-depend-scatt} \ref{item:Dol3}).
\begin{thm}\label{thm:princ-example-atom} The operator-valued function
  $S_{\beta\alpha,{\rm atom}}(\cdot)$ is weakly continuous away
  from the set of thresholds (i.e. eigenvalues of the proper sub-Hamiltonians) and
  eigenvalues of $H$. Away  from a null set $S_{\beta\alpha,{\rm
      atom}}(\cdot)$ is strongly continuous. 

More generally $S_{{\rm atom}}(\cdot)$  is a weakly continuous
contraction away
  from the set of thresholds and eigenvalues of $H$, and away  from a null set $S_{{\rm
      atom}}(\cdot)$ is a strongly continuous unitary operator. 
  \end{thm}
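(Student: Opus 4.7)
The plan is to reduce Theorem \ref{thm:princ-example-atom} to the general results of the paper, notably Theorem \ref{thm:mainstat-modif-n} together with the analysis of stationary scattering energies in Section \ref{sec:Exact channel wave-matrices}. First one verifies that the Coulomb pair-potential $V_{ij}(y)=q_iq_j|y|^{-1}$ on $\R^n$ with $n\geq 3$ lies in the Derezi\'nski--Enss class with decay rate $\mu=1>\sqrt{3}-1$, so that all results quoted in the introduction, including the existence of the channel wave operators $W_\alpha^\pm$ and the weak continuity of the fiber operators $S_{\beta\alpha}(\lambda)$, apply to the Hamiltonian $H$ in \eqref{0.1Cou}.

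Next I would compare the atomic Dollard wave operator $W_{\alpha,{\rm atom}}^\pm$ with the general channel wave operator $W_\alpha^\pm$. The difference between the two phases,
\begin{align*}
\omega_\alpha^\pm(\xi_a,t):=S_{a,{\rm atom}}^\pm(\xi_a,t)-S_a^\pm(\xi_a,t),
\end{align*}
should, in the classically allowed region $\xi_a\neq 0$, converge as $|t|\to\infty$ to a $t$-independent smooth function $\omega_\alpha^\pm(\xi_a)$; this is the content of Remark \ref{remark:Dollard-time-depend-scatt} \ref{item:Dol1}--\ref{item:Dol2}, and reflects the well-known fact that two long-range modifiers of the same potential differ only by a momentum-space phase. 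A standard stationary-phase argument then yields the intertwining relation
\begin{align*}
W_{\alpha,{\rm atom}}^\pm=W_\alpha^\pm\,\e^{-\i\omega_\alpha^\pm(p_a)},
\end{align*}
and in the spectral representation $F_\alpha$ the factor $\e^{-\i\omega_\alpha^\pm(p_a)}$ becomes fiberwise multiplication by a unitary operator $U_\alpha^\pm(\lambda)\in\vL(L^2(C_a))$ depending continuously on $\lambda\in I^\alpha$ in the operator norm. Consequently
\begin{align*}
S_{\beta\alpha,{\rm atom}}(\lambda)=U_\beta^+(\lambda)^*\,S_{\beta\alpha}(\lambda)\,U_\alpha^-(\lambda)\quad\text{a.e. }\lambda\in I_{\beta\alpha}.
\end{align*}

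Given this formula the weak continuity follows at once: by Theorem \ref{thm:mainstat-modif-n} the family $S_{\beta\alpha}(\cdot)$ admits a unique weakly continuous representative on the open set obtained by removing the thresholds and the eigenvalues of $H$, and pre/post-composition with the norm-continuous unitaries $U_\beta^+(\lambda)$, $U_\alpha^-(\lambda)$ preserves weak continuity. The contraction property of the full matrix $S_{\rm atom}(\cdot)$ is inherited from the partial-isometry property of the wave operators.

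Finally, the strong-continuity and unitarity assertions are read off from Definition \ref{defn:scatEnergy} and Corollary \ref{cor:direct}: at a stationary scattering energy the matrix $S(\lambda)$ is unitary and strongly continuous, and these two properties are clearly preserved by the fibered unitary conjugation above, so $S_{\rm atom}(\cdot)$ enjoys them on the same set, whose complement is a null set by the definition of stationary scattering energies. The main obstacle is the precise identification of the asymptotic momentum-space phase $\omega_\alpha^\pm(\xi_a)$ and hence of the unitary multipliers $U_\alpha^\pm(\lambda)$; this is a Hamilton--Jacobi comparison calculation specific to the Dollard modifier that enters through Remark \ref{remark:Dollard-time-depend-scatt}, and it is the only step where the concrete Coulomb structure, rather than the abstract Derezi\'nski--Enss framework, has to be invoked.
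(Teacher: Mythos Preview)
Your approach is essentially the same as the paper's: it reduces the atomic case to Theorem~\ref{thm:mainstat-modif-n} and Corollary~\ref{cor:direct} via the identity $S_{\beta\alpha,{\rm atom}}(\lambda)=\e^{-\i\theta^+_\beta(\cdot,\lambda)}S_{\beta\alpha}(\lambda)\e^{\i\theta^-_\alpha(\cdot,\lambda)}$ with continuous multiplication operators, exactly as sketched in Remark~\ref{remark:Dollard-time-depend-scatt}~\ref{item:Dol3}. Two minor imprecisions are worth noting: Condition~\ref{cond:smooth2wea3n12} requires $\mu\in(0,1)$, so for Coulomb one takes $\mu$ close to but strictly less than $1$ (still $>\sqrt 3-1$); and the fiber multipliers $U_\alpha^\pm(\lambda)$ are in general only \emph{strongly} continuous (the phase difference involves an integral over $C_a'$, which is noncompact), not norm-continuous, but strong continuity of the unitary factors combined with uniform boundedness of $S_{\beta\alpha}(\lambda)$ is already enough to preserve weak continuity of the product, so your argument goes through unchanged.
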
 A similar result is valid for Coulomb systems of charged
  particles with static (i.e. infinite mass) nuclei. It is a minor
  doable 
  issue   to include non-threshold
   embedded eigenvalues in the above result (as mentioned above). The
  behaviour at thresholds is much more intriguing and completely outside
   the scope of  this paper. Strong continuity for all non-threshold
   energies (as conjectured before) is another difficult problem.

The second  main result, again for convenience specialized to the above 
model, reads as follows. Recall the standard notation for weighted spaces 
\begin{align*}
  L_s^2(\mathbf X)=\inp{x}^{-s}L^2(\mathbf X);
\quad s\in\R, \,\,\inp{x}=\parb{1+\abs{x}^2}^{1/2}.
\end{align*}

\begin{thm}\label{thm:chann-wave-matrD}  Let $\alpha$ be a  given channel  $\alpha
    =(a,\lambda^\alpha, u^\alpha)$,     $f:I^\alpha=(\lambda^\alpha,\infty)\to \C$ be
  continuous and compactly supported away from the set of 
thresholds and
  eigenvalues, and let $s>1/2$. For any $\varphi\in
  L^2(\mathbf  X_a)$   
  \begin{align}\label{eq:wavD1}
  W^\pm_{\alpha,{\rm atom}}
  f\paro{ k_\alpha}\varphi=\int_{I^\alpha } \,f(\lambda)
    W^\pm_{\alpha,{\rm atom}}(\lambda) \paro{F_\alpha \varphi)(\lambda,
  \cdot} 
    \,\d \lambda\in  L_{-s}^2(\bX), 
              \end{align} where the wave matrices
              $W^\pm_{\alpha,{\rm
                  atom}}(\lambda)\in\vL\parb{L^2(C_a),L_{-s}^2(\bX)}$
              with a strongly continuous dependence of $\lambda$.
 In particular for  $\varphi\in
  L_s^2(\mathbf  X_b)$   
the integrand is a continuous compactly supported 
$L_{-s}^2(\bX)$-valued function. In general the integral
has  the weak interpretation of a measurable 
$L_{-s}^2(\bX)$-valued function.
\end{thm}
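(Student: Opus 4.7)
The plan is to deduce Theorem \ref{thm:chann-wave-matrD} from the general channel wave matrix result Theorem \ref{thm:chann-wave-matr}, with the atomic Dollard modifiers inserted through a phase correction. The key observation (see also Remarks \ref{remark:Dollard-time-depend-scatt} \ref{item:Dol1}--\ref{item:Dol2}) is that the Dollard phase $S_{a,{\rm atom}}^\pm(\xi_a,t)$ and the general stationary phase $S_a^\pm(\xi_a,t)$ appearing in the definition of $W_\alpha^\pm$ are both functions of $(\xi_a,t)$ only, so the limit of their difference along the channel wave operator construction produces a unitary multiplier $U_\alpha^\pm$ on $L^2(\bX_a)$ that commutes with $k_\alpha$. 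This yields the factorization
\begin{align*}
W^\pm_{\alpha,{\rm atom}}=W^\pm_\alpha U^\pm_\alpha,
\end{align*}
where $F_\alpha U^\pm_\alpha F_\alpha^{-1}$ acts as multiplication by a strongly continuous, unitary-valued function $M^\pm_\alpha(\lambda)\in\vL(L^2(C_a))$ on $I^\alpha$. One then defines
\begin{align*}
W^\pm_{\alpha,{\rm atom}}(\lambda):=W^\pm_\alpha(\lambda)\,M^\pm_\alpha(\lambda),
\end{align*}
and strong continuity of $\lambda\mapsto W^\pm_{\alpha,{\rm atom}}(\lambda)\in\vL(L^2(C_a),L_{-s}^2(\bX))$ follows from that of $W^\pm_\alpha(\lambda)$ provided by Theorem \ref{thm:chann-wave-matr} and of $M^\pm_\alpha(\lambda)$.

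To establish the integral representation \eqref{eq:wavD1}, I would use the diagonalization $F_\alpha$ to write, for $f$ continuous and compactly supported away from thresholds and eigenvalues and any $\varphi\in L^2(\bX_a)$,
\begin{align*}
f(k_\alpha)\varphi=F_\alpha^{-1}\int_{I^\alpha}^\oplus f(\lambda)\,(F_\alpha\varphi)(\lambda,\cdot)\,\d\lambda\in L^2(\bX_a).
\end{align*}
Applying the bounded operator $W^\pm_{\alpha,{\rm atom}}:L^2(\bX_a)\to L^2(\bX)\subset L_{-s}^2(\bX)$, commuting it through the Bochner integral, and inserting the identity $W^\pm_{\alpha,{\rm atom}}=W^\pm_\alpha U^\pm_\alpha$ (so that $U^\pm_\alpha$ acts fiber-wise as $M^\pm_\alpha(\lambda)$ and commutes with multiplication by $f(\lambda)$) together with the integral formula of Theorem \ref{thm:chann-wave-matr} gives \eqref{eq:wavD1}.

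For the regularity assertions, recall that for $s>1/2$ and $\varphi\in L_s^2(\bX_a)$, the classical trace theorem for the Fourier transform on spheres yields continuity of the map $\lambda\mapsto(F_\alpha\varphi)(\lambda,\cdot)\in L^2(C_a)$ on $I^\alpha$. Combined with continuity and compact support of $f$ and strong continuity of the wave matrix, the integrand in \eqref{eq:wavD1} is a continuous compactly supported $L_{-s}^2(\bX)$-valued function, and the integral is a Bochner integral. For general $\varphi\in L^2(\bX_a)$, the function $\lambda\mapsto W^\pm_{\alpha,{\rm atom}}(\lambda)(F_\alpha\varphi)(\lambda,\cdot)$ is still measurable with values in $L_{-s}^2(\bX)$ (by the strong continuity of the wave matrix and measurability of $F_\alpha\varphi$), and the integral is defined weakly by pairing with vectors in $L_s^2(\bX)$ and invoking Fubini.

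The main obstacle is not in this reduction but in the underlying Theorem \ref{thm:chann-wave-matr}, whose proof rests on the Kato smoothness bounds, the limiting absorption principle in the Besov setting from \cite{GIS, AIIS}, and the stationary formulas developed throughout the bulk of the paper. Once those ingredients are in place, the passage to the atomic Coulomb setting with Dollard modifiers reduces to identifying the phase correction $M^\pm_\alpha(\lambda)$, which is elementary.
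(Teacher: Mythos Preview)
Your proposal is correct and follows essentially the same route as the paper. The paper does not give a separate detailed proof of Theorem~\ref{thm:chann-wave-matrD}; rather, as indicated in Remarks~\ref{remark:Dollard-time-depend-scatt}~\ref{item:Dol1}--\ref{item:Dol3}, the atomic Dollard wave operators differ from the general $W^\pm_\alpha$ by a fiber-wise multiplication operator $\e^{\i\theta^\pm_\alpha(\cdot,\lambda)}$ (given by explicit convergent integrals, hence strongly continuous in $\lambda$), and the result is then read off from Theorem~\ref{thm:chann-wave-matr} together with a finite partition of unity covering $\supp f$ by intervals $I_0$ of the type used there---exactly the reduction you outline.
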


Of particular interest is the case $f(\lambda)=f_t(\lambda)= \e^{-\i t\lambda}f_0(\lambda)$,
$t\in\R$,  in
which case the formulas \eqref{eq:wavD1}  represent exact 
Schr\"odinger wave packets of energy-localized states outgoing to  or incoming  from
the channel $\alpha$.  For a   state  $\varphi^-_\alpha=W^-_{\alpha,{\rm atom}}
  f_{0}\paro{ k_\alpha}\varphi$ the asymptotics of the 
  wave packet is `controlled' as $t\to -\infty$. To
  `understand' the behaviour  as $t\to +\infty$ (and possibly  compare
  with practical physics experiments) the right-hand side  of
  \eqref{eq:wavD1} in principle provides  a tool. The main contribution  comes from the
  asymptotics of   functions in the range of the outgoing resolvent
  $(H-(\lambda+\i 0))^{-1}$ (as confirmed by one of our  formulas for the
  generalized eigenfunctions in the range of $W^-_{\alpha,{\rm
      atom}}(\lambda)$). On the other hand this quantity is in general  poorly
  understood for the $N$-body case. In
  the paper we 
 derive several non-trivial estimates (more
  precisely 
   mostly 
for the delta-function of $H$ only), however they are all  of
  `weak type'.  Strong resolvent bounds is yet another  difficult 
  problem with  room for desirable improvements. So maybe the main virtue
  of Theorem \ref{thm:chann-wave-matrD} and its generalizations at
  this point is   conceptual: The  wave matrices
              $W^\pm_{\alpha}(\lambda)$ uniquely exist
              admitting representations like \eqref{eq:wavD1}.

From a practical point of view it is more convenient to substitute  the
factor $W^-_{\alpha,{\rm
      atom}}(\lambda)$ in  \eqref{eq:wavD1} by
  \begin{align}\label{eq:linkForm}
     W^-_{\alpha,{\rm
      atom}}(\lambda)=\sum_{\lambda^\beta<\lambda} W^+_{\beta,{\rm
      atom}}(\lambda)S_{\beta\alpha,{\rm
      atom}}(\lambda).
  \end{align} Now each term contributes by a term which is  `controlled'
  as  $t\to +\infty$, meaning that all relevant large time information
  of our ` incoming  $\alpha$-channel experiment' is encoded in the components
   of the scattering matrix. In turn,
  although 
  the resolvent is complicated,  it  appears  in  formulas  only as
  the  delta-function of $H$ sandwiched by some Kato smooth
  operators (as
   discussed previously). We do not in this paper offer  `finer analysis' of the
  derived formulas on interesting (but difficult) issues like regularity or asymptotics of associated
  distributional kernels. 

Finally let us remark that we indeed  show the
  validity of 
  \eqref{eq:linkForm}  as an  identity in
  $\vL\parb{L^2(C_a),L_{-s}^2(\bX)}$  (summing in the weak sense) for
  all `{stationary scattering energies}', see
  Definition~\ref{defn:scatEnergy}. This is for almost all energies,
  and it is also an immediate consequence of our theory   that  at
    any stationary scattering energy  the scattering matrix $S_{\rm
      atom}(\cdot)$ is a  strongly continuous unitary operator
    determined uniquely  by asymptotics of minimum
  generalized eigenfunctions.

\section{Preliminaries}\label{sec:preliminaries}
We introduce the standard abstract $N$-body setup and introduce a
radial limit for the one-body case (more precisely for a specific
one-body potential). Then we develop a stationary
scattering theory for the $N$-body case with  a decay assumption on
the channel bound states, Condition \ref{cond:decayBNstates}, and
prove under this condition the existence and weak continuity of
entries of the
scattering matrix, see Theorem \ref{thm:scat_matrices}. The
later sections of the paper do not involve Condition
\ref{cond:decayBNstates}. The reader may skip all results of the
section based on Condition
\ref{cond:decayBNstates} and go directly to Section \ref{sec:one-body
  matrices} where we have collected most of what is needed for the later
sections (except primarily for Subsection \ref{subsec:body Hamiltonians, limiting absorption
  principle  and notation}, some notation  and the discussion of the eikonal and
Hamilton-Jacobi equations). Section \ref{sec:one-body
  matrices}  concerns a specific  one-body problem  for which,
essentially  being viewed as a
special case, the results of  the present section apply.
\subsection{$N$-body Hamiltonians, limiting absorption principle  and
  notation}\label{subsec:body Hamiltonians, limiting absorption
  principle  and notation}

Let $\bX$ be                    
a finite dimensional real inner product space,
equipped with a
finite family $\{\bX_a\}_{a\in \vA}$ of subspaces closed under intersection:
For any $a,b\in\mathcal A$ there exists $c\in\mathcal A$ such that 
\begin{align}
\bX_a\cap\bX_b=\bX_c.
\label{171028}
\end{align}
For the  example of Subsection \ref{subsec:A principle example, atomic $N$-body
  Hamiltonians} the elements of $\vA$ are  {cluster
  decompositions}, but here  $\vA$ is  an abstract index set. We
 order $\vA$ by  writing $a\leq b$ (or equivalently as $b\geq a$) if
$\bX_a\supseteq \bX_b$. 
It is assumed that there exist
$a_{\min},a_{\max}\in \vA$ such that 
$$\bX_{a_{\min}}=\bX,\quad
\bX_{a_{\max}}=\{0\}.$$ 

Let $\bX^a\subseteq\bX$ be the orthogonal complement of $\bX_a\subseteq \bX$,
and denote the associated orthogonal decomposition of $x\in\bX$ by 
$$x=x^a\oplus x_a=\pi^ax\oplus \pi_ax\in\bX^a\oplus \bX_a.$$ 
The component $x_a$ may be  called the \emph{inter-cluster coordinates},
and $x^a$ the \emph{internal coordinates}. 
We note that the family $\{\mathbf X^a\}_{a\in \mathcal A}$ is closed under addition:
For any $a,b\in\mathcal A$ there exists $c\in\mathcal A$ such that 
$$\bX^a+\bX^b=\bX^c,$$
cf.\ \eqref{171028}.
 
A real-valued measurable function $V\colon\bX\to\mathbb R$ is 
a \textit{potential of many-body type} 
if there exist real-valued measurable functions
$V_a\colon\bX^a\to\mathbb R$ such that 
\begin{align}
V(x)=\sum_{a\in\mathcal A}V_a(x^a)\ \ \text{for }x\in\mathbf X.
\label{eq:170927}
\end{align} We take $V_{a_{\min}}=0$ (without loss of
generality). We impose throughout the paper the
following condition for    $a\neq a_{\min}$. By definition $\N_0=\N\cup\set{0}$.
\begin{cond}\label{cond:smooth2wea3n12}
    There exists $\mu\in (0,1)$  such that for all $a\in \vA\setminus\set{a_{\min}}$ the
    potential $V_a(x^a)=V^a_{\rm sr}(x^a)+V^a_{\rm lr}(x^a)$, where
    \begin{enumerate}
    \item $V^a_{\rm sr}(-\Delta_a+1)^{-1}$ is compact and  $\abs{x^a}^{1+\mu}
      V^a_{\rm sr}(-\Delta_a+1)^{-1}$ is bounded.
    \item $V^a_{\rm lr}\in C^\infty$ and for all $\gamma\in \N_0^{\dim \mathbf X^a}$
      \begin{align*}
        \partial^\gamma V^a_{\rm lr}(x^a)=\vO(\abs{x^a}^{-\mu-|\gamma|}).
      \end{align*}
\end{enumerate}
\end{cond}

The \emph{sub-Hamiltonian} $H^a$ associated with a cluster 
decomposition $a\in\vA$ is defined as follows. 
For $a=a_{\min}$  we define
$H^{a_{\min}}=0$ on $\mathcal H^{a_{\min}}=L^2(\set{0})=\mathbb C. $
For $a\neq a_{\min}$ 
we introduce 
\begin{equation*}
V^a(x^a)=\sum_{b\leq a} V_{b}(x^b)
,\quad
x^a\in \bX^a,
\end{equation*} 
and then 
\begin{align*}
 H^a=-
\Delta_{x^a} +V^a\ \ 
\text{on }\mathcal H^a=L^2(\bX^a).
\end{align*} 
We abbreviate 
\begin{align*}
V^{a_{\max}}=V,\quad
 H^{a_{\max}}=H,\quad 
 \mathcal H^{a_{\max}}=\mathcal H.
\end{align*}

The \textit{thresholds} of the \emph{full Hamiltonian}  $H$ are the
eigenvalues of the (proper) sub-Hamiltonians $H^a$; 
$a\in  \vA':=\vA\setminus
  \{a_{\max}\}$.
 Equivalently stated  the set of thresholds is 
\begin{align*}
 \vT (H):= \bigcup_{a\in\vA'} \sigma_{\pupo}( H^a).
\end{align*} Note that for $a\in\vA\setminus
  \{a_{\min},a_{\max}\}$ the family $\{\mathbf X_b\cap\mathbf X^a\}_{b\leq a}$ forms 
a family of subspaces of many-body type in $\mathbf X^a$. This
self-similarity structure is useful for induction arguments involved
in the proofs of various  well-known properties: 
 We recall  (see for example \cite {AIIS}) that under 
Condition \ref{cond:smooth2wea3n12} 
 the set 
$\vT (H)$ is closed and   at most countable. Moreover the set of non-threshold eigenvalues
  is discrete in $\R\setminus \vT (H)$,  and it 
 can only  accumulate  at  points in
$\vT (H)$  from below.  
The essential spectrum of $H$ is given by the formula
$\sigma_{\ess}(H)= \bigl[\min \vT(H),\infty\bigr)$. We refer to the
mimimum example $\vA= 
  \{a_{\min},a_{\max}\}$ as the \emph{one-body model} in which case $\vT(H)=\set{0}$.

Define the \emph{Sobolev spaces $\mathcal H^s$ of order $s\in\mathbb R$ 
associated with $H$} as
\begin{align}
\mathcal H^s=(H-E)^{-s/2}\mathcal H;\quad
E=\min\sigma(H)-1.
\label{eq:17111317}
\end{align}
We note that the form domain of $H$ is 
$\vH^1=H^1(\mathbf X)$ and that the operator domain $
\mathcal D(H)=\mathcal H^2=H^2(\mathbf X)$ (i.e. given by standard
Sobolev spaces). It is standard to consider $\vH^{-1}$ and $\vH^{-2}$
as the corresponding dual spaces. Denote $R(z)=(H-z)^{-1}$ for
$z\notin \sigma(H)$.

Consider   and fix $\chi\in C^\infty(\mathbb{R})$ such that 
\begin{align}
\chi(t)
=\left\{\begin{array}{ll}
0 &\mbox{ for } t \le 4/3, \\
1 &\mbox{ for } t \ge 5/3,
\end{array}
\right.
\quad
\chi'\geq  0,
\label{eq:14.1.7.23.24}
\end{align} and such that the following properties are fulfilled:
\begin{align*}
  \sqrt{\chi}, \sqrt{\chi'}, (1-\chi^2)^{1/4} ,
  \sqrt{-\parb{(1-\chi^2)^{1/2} }'}\in C^\infty.
\end{align*} We define correspondingly $\chi_+=\chi$ and
$\chi_-=(1-\chi^2)^{1/2} $ and record that
\begin{align*}
  \chi_+^2+\chi_-^2=1\mand\sqrt{\chi_+}, \sqrt{\chi_+'}, \sqrt{\chi_-}, \sqrt{-\chi_-'}\in C^\infty.
\end{align*}

We shall use the standard notation 
$\inp{x}:=\parb{1+\abs{x}^2}^{1/2}$ for  $x\in \mathbf X$ (or more generally for
any $x$ in a normed space).  If $T$ is an operator on a
Hilbert space $\vG$ and $\varphi\in \vG$ then
$\inp{T}_\varphi:=\inp{\varphi,T\varphi}$. We denote the space of  bounded operators 
from one  (general) Banach space $X$ to another one $Y$ by $\vL(X,Y)$ 
and abbreviate $\mathcal L(X)=\mathcal L(X,X)$. The dual space of $X$
is denoted by $X^*$.

To define \emph{Besov spaces 
associated with the multiplication operator
$|x|$ on $\vH$}  
let
\begin{align*}
F_0&=F\bigl(\bigl\{ x\in \mathbf X\,\big|\,\abs{x}<1\bigr\}\bigr),\\
F_m&=F\bigl(\bigl\{ x\in \mathbf X\,\big|\,2^{m-1}\le \abs{x}<2^m\bigr\} \bigr)
\quad \text{for }m=1,2,\dots,
\end{align*}
where $F(U)=F_U$ is the sharp characteristic function of any given  subset
$U\subseteq {\mathbf X}$. 
The Besov spaces $\mathcal B =\mathcal B(\mathbf X)$, $\mathcal
B^*=\mathcal B^*(\mathbf X)$ and $\mathcal B^*_0=\mathcal
B^*_0(\mathbf X)$ are then given  as 
\begin{align*}
\mathcal B&=
\bigl\{\psi\in L^2_{\mathrm{loc}}(\mathbf X)\,\big|\,\|\psi\|_{\mathcal B}<\infty\bigr\},\quad 
\|\psi\|_{\mathcal B}=\sum_{m=0}^\infty 2^{m/2}
\|F_m\psi\|_{{\mathcal H}},\\
\mathcal B^*&=
\bigl\{\psi\in L^2_{\mathrm{loc}}(\mathbf X)\,\big|\, \|\psi\|_{\mathcal B^*}<\infty\bigr\},\quad 
\|\psi\|_{\mathcal B^*}=\sup_{m\ge 0}2^{-m/2}\|F_m\psi\|_{{\mathcal H}},
\\
\mathcal B^*_0
&=
\Bigl\{\psi\in \mathcal B^*\,\Big|\, \lim_{m\to\infty}2^{-m/2}\|F_m\psi\|_{{\mathcal H}}=0\Bigr\},
\end{align*}
respectively.
Denote the standard \emph{weighted $L^2$ spaces} by 
$$
L_s^2=L_s^2(\mathbf X)=\inp{x}^{-s}L^2(\mathbf X)\ \ \text{for }s\in\mathbb R ,\quad
L_{-\infty}^2=\bigcup_{s\in\R}L^2_s,\quad
L^2_\infty=\bigcap_{s\in\mathbb R}L_s^2
.
$$ 
Then for any $s>1/2$
\begin{align}\label{eq:nest}
 L^2_s\subsetneq \mathcal B\subsetneq L^2_{1/2}
\subsetneq \mathcal H
\subsetneq L^2_{-1/2}\subsetneq \mathcal B^*_0\subsetneq \mathcal B^*\subsetneq L^2_{-s}.
\end{align}

\begin{subequations}
Under a more general condition than Condition
\ref{cond:smooth2wea3n12} it is demonstrated in  \cite{AIIS} that the
following limits exist   locally
                                                              uniformly
                                                              in $\lambda\not\in \vT
(H)\cup\sigma_{{\rm pp}}(H)$:
\begin{align}\label{eq:LAPbnda}
  R(\lambda\pm \i
  0)=\lim _{\epsilon\to 0_+} \,R(\lambda\pm\i
    \epsilon)\in \vL\parb{L^2_s,L^2_{-s}}\text{ for any }s>1/2.
\end{align} Furthermore, 
   in the strong weak$^*$-topology, 
  \begin{align}\label{eq:BB^*a}
    \begin{split}
      R(\lambda\pm \i
  0)&=\swslim _{\epsilon\to 0_+} \,R(\lambda\pm\i
    \epsilon)\in \vL\parb{\vB,\vB^*}\\& \text{
                                                              with a
                                                              locally
                                                              uniform
                                                              norm bound in
                                                              }\lambda\not
                                                              \in \sigma_{{\rm pp}}(H)\cup\vT
(H).
    \end{split}
\end{align}
\end{subequations}

\subsection{A one-body effective potential and  one-body radial
  limits}\label{subsec: -body effective potential and a $1$-body radial limit}

We introduce   $I^{\rm
  sr}_a=\sum_{b\not\leq a}V_{\rm sr}^b$, $I^{\rm lr}_a=\sum_{b\not\leq
  a}V_{\rm lr}^b$, $I_a=I^{\rm sr}_a+I^{\rm
  lr}_a$  and
\begin{align*}
  \breve I_a=\breve I_a(x_a)=I^{\rm lr}_a(x_a)\prod_{b\not\leq a} \,\,\chi_+(|\pi^b
  x_a| \ln \inp{x_a}/\inp{x_a} );\quad a\neq a_{\max}.
\end{align*}  This `regularization' $\breve I_a$  appears  in \cite{HS} for the free channel,
i.e. for $a=a_{\min}$.  It is  a one-body potential fulfilling for any
$\breve\mu\in (0,\mu)$ the bounds 
\begin{align}\label{eq:brevePotential}
       \partial^\gamma \breve I_a(x_a)=\vO(\abs{x_a}^{-\breve\mu-|\gamma|}).
      \end{align} For notational
     convenience we take from this point and throughout the paper $\breve\mu=\mu$,
     i.e. more precisely we will assume \eqref{eq:brevePotential} with
     $\breve\mu$ replaced by $\mu$. We let 
      $K_a(\cdot,\lambda)$,  $\lambda>0$, denote the corresponding approximate solution
      to the eikonal equation $\abs{\nabla_{x_a} K_a}^2+ \breve
      I_a=\lambda$ as  taken 
       from \cite {Is,II}. 
More precisely writing  $K_a(x_a,\lambda)=\sqrt
      \lambda \abs{x_a}-k_a(x_a,\lambda)$  the following properties
      are fulfilled with $\R_+:=(0,\infty)$ and  $ n_a:=\dim \mathbf X_a$. The functions $k_a\in
      C^\infty(\mathbf X_a \times \R_+)$ and:
      \begin{enumerate}[1)]
      \item For any compact $\Lambda \subset \R_+$ there exists $R>1$
        such that for all $\lambda\in \Lambda$ and all $x_a\in \mathbf X_a$
        with $\abs{x_a}>R$
        \begin{align*}
          2\sqrt{\lambda}\,\tfrac{\partial} {\partial\abs{x_a}}k_a=\breve I_a(x_a)+\abs{\nabla_{x_a}k}^2.
        \end{align*}
      \item For all multiindices $\gamma\in \N_0^{n_a}$, $m\in \N_0$ and
         compact  $\Lambda \subset \R_+$
        \begin{align*}
          \abs[\big]{{\partial}_
          {{x_a}}^\gamma {\partial}_
          {\lambda}^mk_a}\leq C\inp {x_a}^{1-\abs{\gamma}-\mu}\text{
          uniformly in }\lambda\in \Lambda.
        \end{align*}
      \end{enumerate}

We drop for the moment the subscript $a$ and consider  the one-body
Hamiltonian
\begin{align*}
  \breve h=\breve h_a=-\Delta +\breve I\quad { on }\quad L^2(\R^n),
\end{align*}
  identifying
$\mathbf X_a=\R^n$ where   $n=n_a$. Let $\vB$,  $\vB^*$ and $\vB_0^*\subset \vB^*$ denote 
corresponding Besov spaces, cf. Subsection \ref{subsec:body Hamiltonians, limiting absorption
  principle  and notation}, and let  $C_a=\mathbf X_a\cap\S^{n_a-1}$. 
\begin{lemma}\label{lemma:basic}
  Let $U$ be an open subset of $\R^n$ such that $U'=U\cap
  \S^{n-1} \neq \emptyset$ and $U\cap\set{\abs{x}\geq 1}=\set{x=cx'|\,
    c\in [1,\infty), x'\in U'}$. Let for any $g\in C_\c^\infty (U')$,
  $\lambda> 0$,  
\begin{align*}
  v^{\pm}( x)=v_\lambda^{\pm} [g]( x)= \chi_+(\abs{x}) \abs{x}^{(1-n)/2}
                                 \e^{\pm \i K_a(x,\lambda)}g(\hat x); \quad \hat x=x/\abs{x}.
\end{align*} Suppose   $\breve u\in \vB^*\cap H^2_{\mathrm {loc}}$ and
  $F_U(\breve h-\lambda)\breve u\in \vB$. Then 
  \begin{align}
    \label{eq:lim}
    \lim_{\rho\to \infty}\,\rho^{-1}\int _{\abs{x}<\rho}\overline {v^{\pm} (
      x)}\breve u(x)\, \d x=\pm 2^{-1}\i \lambda^{-1/2}\parb{\inp{
        v^{\pm},(\breve h-\lambda)\breve u}-\inp{ (\breve
        h-\lambda)v^{\pm},\breve u}}.
  \end{align} 
\end{lemma}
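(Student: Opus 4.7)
The plan is to prove \eqref{eq:lim} in three stages: first compute $(\breve h - \lambda)v^\pm$ via the eikonal equation, then apply a weak Green's identity with a smooth radial cutoff, and finally relate the resulting smoothed expression to the sharp radial average appearing in the statement.

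For the preliminary step, writing $f(x)=\abs{x}^{(1-n)/2}g(\hat x)$ I expand on $\set{\abs{x}\geq 5/3}$ (where $\chi_+(\abs{x})=1$):
\begin{align*}
(\breve h - \lambda)v^\pm = \e^{\pm\i K_a}\parb{f(\abs{\nabla K_a}^2+\breve I_a-\lambda) \mp 2\i\nabla K_a\cdot\nabla f - \Delta f \mp \i f\Delta K_a}.
\end{align*}
The eikonal equation $\abs{\nabla K_a}^2+\breve I_a=\lambda$ kills the first inner term, and the leading radial pieces $\nabla K_a\sim\sqrt\lambda\hat x$ and $\Delta K_a\sim\sqrt\lambda(n-1)/\abs{x}$, combined with $\partial_r f=\tfrac{1-n}{2}\abs{x}^{-(n+1)/2}g$, force the $\abs{x}^{-(n+1)/2}$ contributions of the $\mp 2\i\nabla K_a\cdot\nabla f$ and $\mp\i f\Delta K_a$ terms to cancel exactly. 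What remains is $\vO\parb{\abs{x}^{-(n+1)/2-\min(1,\mu)}}$, hence in $\vB$. Together with $v^\pm\in\vB^*$ (from $\abs{v^\pm}\lesssim\abs{x}^{(1-n)/2}$) and $\supp v^\pm\subseteq U$, this ensures that both pairings on the right side of \eqref{eq:lim} are well-defined $\vB$-$\vB^*$ dualities.

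Next I pick a smooth non-increasing $\phi\in C^\infty(\R)$ with $\phi=1$ on $(-\infty,\tfrac12]$ and $\phi=0$ on $[1,\infty)$, and set $\phi_\rho(x)=\phi(\abs{x}/\rho)$. The weak Green's identity, valid since $\phi_\rho v^\pm\in\vB\cap H^2$ is compactly supported and $\breve u\in H^2_{\mathrm{loc}}$, reads
\begin{align*}
\inp{\phi_\rho v^\pm,(\breve h-\lambda)\breve u} - \inp{\phi_\rho(\breve h-\lambda)v^\pm,\breve u} = \inp{[\breve h,\phi_\rho]v^\pm,\breve u}.
\end{align*}
Dominated convergence in the $\vB$-$\vB^*$ duality sends the left-hand side to the bracketed quantity of \eqref{eq:lim}. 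Using $[\breve h,\phi_\rho]v^\pm = -(\Delta\phi_\rho)v^\pm - 2\rho^{-1}\phi'(\abs{x}/\rho)\partial_r v^\pm$ together with the asymptotic
\begin{align*}
\partial_r v^\pm = \pm\i\sqrt\lambda\,v^\pm + \vO\parb{\abs{x}^{-\min(1,\mu)}}v^\pm
\end{align*}
(coming from $\partial_r K_a=\sqrt\lambda-\partial_r k_a$ with $\partial_r k_a=\vO(\abs{x}^{-\mu})$), a dyadic Cauchy-Schwarz estimate on the shell $\abs{x}\sim\rho$ annihilates both the $\Delta\phi_\rho$ term and the $\vO$-remainder in the limit, yielding
\begin{align*}
\lim_{\rho\to\infty}\inp{[\breve h,\phi_\rho]v^\pm,\breve u} = \pm 2\i\sqrt\lambda\,\lim_{\rho\to\infty}\int\rho^{-1}\phi'(\abs{x}/\rho)\overline{v^\pm}\breve u\,\d x.
\end{align*}

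To close the argument I compare the smooth-cutoff limit with the sharp average $\rho^{-1}I(\rho):=\rho^{-1}\int_{\abs{x}<\rho}\overline{v^\pm}\breve u\,\d x$. Integration by parts in the radial variable, using that $\phi'$ is compactly supported, gives
\begin{align*}
\int\rho^{-1}\phi'(\abs{x}/\rho)\overline{v^\pm}\breve u\,\d x = -\int\phi''(s)\,\tfrac{I(\rho s)}{\rho}\,\d s.
\end{align*}
A dyadic Besov pairing gives the uniform bound $\abs{I(\rho)/\rho}\leq C\norm{\breve u}_{\vB^*}$, so by Banach-Alaoglu every subsequence $\rho_k\to\infty$ has a further subsubsequence along which $I(\rho_k)/\rho_k\to L$; dominated convergence in the $s$-integral then evaluates the right-hand side to $-L\int s\phi''(s)\,\d s=-L$ (since $\int s\phi''(s)\,\d s=1$), and combined with the previous display this forces $L=\pm 2^{-1}\i\lambda^{-1/2}\parb{\inp{v^\pm,(\breve h-\lambda)\breve u}-\inp{(\breve h-\lambda)v^\pm,\breve u}}$. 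Since every subsequential limit equals this same value, the full limit exists and \eqref{eq:lim} follows. The main obstacle is the Tauberian step: applying dominated convergence inside the $s$-integral uniformly along an arbitrary subsequence requires controlling $I(\rho_k s)/(\rho_k s)$ for all $s$ in the support of $\phi''$ simultaneously, which I expect to secure via the Besov $\vB^*$-decomposition of $\breve u$ combined with the shell-wise Cauchy-Schwarz bound on $\overline{v^\pm}\breve u$ across dyadic annuli.
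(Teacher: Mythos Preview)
Your preliminary computation of $(\breve h-\lambda)v^\pm$ and the Green's identity step with a smooth cutoff are fine and match the paper's approach. The genuine gap is in your Tauberian step. Knowing that $I(\rho_k)/\rho_k\to L$ along a subsubsequence tells you nothing about $I(\rho_k s)/(\rho_k s)$ for $s\neq 1$ along that same subsubsequence; a bounded function of $\rho$ can oscillate (think of $\sin(\log\rho)$) so that the values at $\rho_k$ and at $\rho_k s$ are completely decorrelated. Hence you cannot apply dominated convergence inside the $s$-integral to get $-L\int s\phi''(s)\,ds$. The Besov bound you invoke at the end gives only uniform boundedness of $I(\rho)/\rho$, which is exactly what you already used to extract the subsequence; it supplies no equicontinuity, monotonicity, or slow-variation condition of the sort a genuine Tauberian theorem would need.

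The paper circumvents the Tauberian step entirely by letting the cutoff itself vary. Instead of a fixed $\phi$, it uses a one-parameter family $\chi_\epsilon$ with $\chi_\epsilon'(t)=-1$ on $[3\epsilon,1]$ and $\chi_\epsilon'\ge -1$ everywhere, so that
\[
-\rho^{-1}\inp{v^\pm,\chi_\epsilon'(\abs{\cdot}/\rho)\breve u}
=\rho^{-1}\int_{3\epsilon<\abs{x}/\rho<1}\overline{v^\pm}\,\breve u\,\d x + \vO(\sqrt\epsilon),
\]
the $\vO(\sqrt\epsilon)$ coming from Cauchy--Schwarz and the $\vB^*$ bound on the thin transition annuli, \emph{uniformly in $\rho$}. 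The left-hand side has a $\rho\to\infty$ limit for each fixed $\epsilon$ (by your Green's identity argument), and that limit is independent of $\epsilon$; sending $\epsilon\to 0$ afterwards then yields the sharp average directly. In effect the paper replaces your single test function $\phi'$ by an approximate identity for $-1_{[0,1]}$, which is precisely the missing regularization your argument needs.
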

\begin{proof} First
we  compute  for any $g\in C_\c^\infty (C_a) $ and $\lambda
>0$ 
\begin{align}\label{eq:basiC0}
  (\breve {h}-\lambda)v^{\pm}_{\lambda} [g]\in L^2_s(\R^n)=\inp{x_a}^{-s}L^2(\R^n)\text{
    for any  }s\in \parb{\tfrac 12,\tfrac 12 +\mu}.
\end{align} In particular 
$v^{\pm}\in \vB^*$ and $ (\breve h-\lambda)v^{\pm}\in \vB$, whence 
     the right-hand side of \eqref{eq:lim} is well-defined. For any
  $\epsilon\in(0,1/3)$ choose a decreasing function $\chi_\epsilon \in
  C^\infty(\mathbb{R})$ such that 
\begin{align*}
\chi_\epsilon(t)
=\left\{\begin{array}{ll}
1 &\mbox{ for } t \le \epsilon, \\
1+\tfrac 32 \epsilon-t &\mbox{ for } 3\epsilon \leq t \leq 1, \\
0 &\mbox{ for } t \ge 1+2\epsilon,
\end{array}
\right. 
\end{align*} and $\chi'_\epsilon\geq -1$. Letting
$\chi_{\epsilon,\rho}=\chi_\epsilon(\abs{x}/\rho)$, $\rho>1$,  we
compute the right-hand side of \eqref{eq:lim} as
\begin{align*}
   \mp 2^{-1} \lambda^{-1/2}\lim_{\rho\to \infty}\,\inp{
  v^{\pm},\i \comm{\breve h,\chi_{\epsilon,\rho}}\breve u}=-\lim_{\rho\to \infty}\,\rho^{-1}\inp{
  v^{\pm},\chi'_\epsilon(\abs{\cdot}/\rho)\breve u}.
\end{align*} By \cas we can write
\begin{align*}
  -\rho^{-1}\inp{
  v^{\pm},\chi'_\epsilon(\abs{\cdot}/\rho)\breve u}&=\rho^{-1}\int _{3\epsilon<\abs{x}/\rho<1}\overline {v^{\pm} (
                                              x)}\breve u(x)\, \d x +\vO(\sqrt \epsilon)\\
                                            &=\rho^{-1}\int _{\abs{x}/\rho<1}\overline {v^{\pm} (
                                              x)}\breve u(x)\, \d x +\vO(\sqrt \epsilon),
\end{align*} where the bounds are uniform in $\rho>1$. Whence
\begin{align*}
  -\lim_{\epsilon\to 0}\lim_{\rho\to \infty}\,\rho^{-1}\inp{
  v^{\pm},\chi'_\epsilon(\abs{\cdot}/\rho)\breve u}=\lim_{\rho\to \infty}\,\rho^{-1}\int _{\abs{x}/\rho<1}\overline {v^{\pm} (
                                              x)}\breve u(x)\, \d x.
\end{align*}
\end{proof}
\begin{remarks*} 
\begin{enumerate}[i)]
  \item \label{item:As1} The concept of radial limits introduced in Lemma
  \ref{lemma:basic} (and extended to  the $N$-body case in
  Corollary \ref{cor:radi-limits-chann}  below) is rather  weak. A
  stronger concept of  radial limits 
  (conforming with the weak one) 
  appears in several papers, see \cite{GY, HS, II, Is, IS2}. The
  latter requires  radiation condition bounds  and leads to stronger
  conclusions, see Section \ref{sec:one-body  matrices}. 
\item \label{item:As2} We learned
  about the  assertion Lemma
  \ref{lemma:basic}  from \cite{As}, where it appears   with a very
  different proof. Similarly the following Subsections \ref{Radial limits}--\ref{subsec:Time-dependent
  scattering theory} have  overlap with  \cite{As} although the goals
and 
presentation are  different. A  main result of these subsections is
the existence and weak continuity of entries of the
scattering matrix  under a suitable  decay condition on
the channel bound states.
\end{enumerate}
  \end{remarks*}

\subsection{Radial limits for channels}\label{Radial limits}
Consider any  channel
$\alpha =(a,\lambda^\alpha, u^\alpha)$, i.e.   $a\in\vA'$,  $u^\alpha\in
\mathcal H^a$ and 
$(H^a-\lambda^\alpha)u^\alpha=0$, obeying the following decay property.
  \begin{cond}\label{cond:decayBNstates} With $\alpha
    =(a,\lambda^\alpha, u^\alpha)$, the (sub) bound state
\begin{align}\label{eq:decay}
  u^\alpha\in \vD (\inp{x^a}^s)\text{ for some }s>\tfrac52-\mu.
\end{align}  
\end{cond}
 Alternatively stated,   Condition \ref{cond:decayBNstates} reads $u^\alpha\in L^2_s(\mathbf X^a)=\inp{x^a}^{-s}L^2(\mathbf X^a)$ for
some $s>\tfrac52-\mu$.  

Let 
 $ \vH_a=L^2(\mathbf X_a)$ and $\pi_\alpha\in \vL(\vH,\vH_a)$   be given by
 $(\pi_\alpha v)(x_a)=\inp{u^\alpha,v(\cdot,x_a)}$. It is a
 consequence of 
\eqref{eq:decay} that 
\begin{align}\label{eq:B2bnds}
  \pi_\alpha\in \vL(\vB^*(\mathbf X),\vB^*(\mathbf X_a))\cap \vL(\vB(\mathbf X),\vB(\mathbf X_a))
\end{align}
(only $ u^\alpha\in \vB(\mathbf X^a)$ is needed at this point). 

Recalling   $C_a=\mathbf X_a\cap\S^{n_a-1}$, $
                                    n_a=\dim \mathbf X_a$,  we let  $ C'_a=C_a\setminus \cup_{{b\not\leq a}}\mathbf X_b$. 

Suppose $u\in \vB^*(\mathbf X)$ obeys $(H-\lambda)u\in \vB(\mathbf
X)$; $\lambda>\lambda^\alpha$. We  shall then introduce
$Q^\pm_{\lambda,\alpha}(u)\in L^2(C_a)$ by the  weak radial limit
procedure of Subsection \ref{subsec: -body effective potential
  and a $1$-body radial limit}.
For this end we let, parallel to Lemma \ref{lemma:basic} and as a first step,   for any $g\in C_\c^\infty (C'_a) $
\begin{subequations}
\begin{align}\label{eq:Riesz}
  {Q^\pm_{\lambda,\alpha}(u)}[g]=\lim_{\rho\to \infty}\,\rho^{-1}\int _{\abs{x_a}<\rho}\overline {v_{\lambda,\alpha}^{\pm} [g](
      x_a)}(\pi_\alpha u)(x_a)\, \d x_a,
\end{align} where 
\begin{align*}
   v^{\pm}_{\lambda,\alpha} [g]( x_a)= \chi_+(\abs{x_a}) \abs{x_a}^{(1-n_a)/2}
                                      \e^{\pm \i
                                        K_a(x_a,\lambda_\alpha)}g(\hat x_a);\quad
                                    \lambda_\alpha=\lambda-\lambda^\alpha.
 \end{align*} We claim that indeed 
                                    Lemma \ref{lemma:basic}
                                    applies to $\breve u=\pi_\alpha
                                    u\in\vB^*(\mathbf X_a)$,  considering any open
                                    $U'\subseteq
                                    \overline{U'}\subseteq C_a'$ and
                                    any $g\in C_\c^\infty (U') $,
                                    yielding the well-definedness of
                                    \eqref{eq:Riesz} and  the formula
\begin{align}\label{eq:small0}
  {Q^\pm_{\lambda,\alpha}(u)}[g]= \pm 2^{-1}\i \lambda_\alpha^{-1/2}\parb{\inp{
        J_\alpha v^{\pm}_{\lambda,\alpha} [g],( H-\lambda)u}-\inp{ (
        H-\lambda)J_\alpha v^{\pm}_{\lambda,\alpha} [g], u}},
 \end{align} where the out- and incoming quasi-modes $J_\alpha
v^{\pm}_{\lambda,\alpha} [g]:=u^\alpha\otimes
v^{\pm}_{\lambda,\alpha}[g]$. 
\end{subequations} 
\begin{lemma}\label{lemma:radi-limits-chann00} With Condition
  \ref{cond:decayBNstates} imposed on  $\alpha=(a,\lambda^\alpha,
  u^\alpha)$,  for any  $\lambda>\lambda^\alpha$ and $g\in C_\c^\infty (C'_a) $
  \begin{align}\label{eq:besFull}
    (H-\lambda) J_\alpha
v^{\pm}_{\lambda,\alpha} [g]\in\vB(\mathbf X).
  \end{align} Furthermore, if $u\in \vB^*(\mathbf X)$ obeys  $(H-\lambda)u\in \vB(\mathbf
X)$, then \eqref{eq:small0} is fulfilled with the
  left-hand side given as  the limit \eqref{eq:Riesz}.
\end{lemma}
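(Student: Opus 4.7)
The plan is to reduce both claims to an application of Lemma \ref{lemma:basic} for the one-body operator $\breve h_a$ on $\bX_a$, with the projection $\pi_\alpha = J_\alpha^*$ used to descend from the $N$-body setting. For the Besov regularity \eqref{eq:besFull} I would start from the algebraic identity
\[
(H-\lambda) J_\alpha v^\pm_{\lambda,\alpha}[g]
= u^\alpha \otimes (\breve h_a - \lambda_\alpha) v^\pm_{\lambda,\alpha}[g]
+ \bigl[I_a(x) - \breve I_a(x_a)\bigr] J_\alpha v^\pm_{\lambda,\alpha}[g],
\]
which follows from $H = H^a - \Delta_a + I_a$, $\breve h_a = -\Delta_a + \breve I_a$, and $(H^a - \lambda^\alpha) u^\alpha = 0$ together with the fact that $\breve I_a$ depends only on $x_a$. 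By \eqref{eq:basiC0}, the nesting $L^2_s \subset \vB$ for $s > 1/2$, and $u^\alpha \in L^2(\bX^a)$, the first summand lies in $\vB(\bX)$.

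The real work is showing that the correction $[I_a - \breve I_a] J_\alpha v^\pm_{\lambda,\alpha}[g]$ also lies in $\vB(\bX)$. The geometric input is that $\hat x_a \in \supp g$ is confined to a compact subset of $C'_a$, so there exists $c>0$ with $|\pi^b x_a| \geq c|x_a|$ for all $b \not\leq a$ and for $x_a$ in the supporting cone with $|x_a|$ large; in particular $\breve I_a(x_a) = I^{\rm lr}_a(x_a)$ there. The correction then splits as $\sum_{b \not\leq a} V^b_{\rm sr}(x^b) + \sum_{b \not\leq a}\bigl[V^b_{\rm lr}(\pi^b x^a + \pi^b x_a) - V^b_{\rm lr}(\pi^b x_a)\bigr]$. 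A Taylor expansion in $x^a$ produces, for the long-range difference, the pointwise bound $|x^a|\langle x_a\rangle^{-1-\mu}$, which combined with the $L^2_s$ decay of $u^\alpha$ is amply $\vB$-summable. For the short-range terms I would factor $V^b_{\rm sr}(x^b) = \langle x^b\rangle^{-1-\mu}\cdot[\langle x^b\rangle^{1+\mu}V^b_{\rm sr}(-\Delta^b+1)^{-1}]\cdot(-\Delta^b+1)$, exploit $u^\alpha \in H^2(\bX^a)$ (so that $(-\Delta^b+1)$ produces only polynomial loss in $x^a$ and polynomially bounded losses through the smooth factor $v^\pm$), and then invoke Condition \ref{cond:decayBNstates}: the threshold $s > 5/2 - \mu$ is calibrated precisely so that the net contribution lies in $\vB(\bX)$.

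For the second assertion I would pick a conical open neighborhood $U \subset \bX_a$ of $\supp g$ with $\overline{U'} \subset C'_a$, take $\breve u := \pi_\alpha u \in \vB^*(\bX_a)$ by \eqref{eq:B2bnds}, and verify the hypothesis of Lemma \ref{lemma:basic} via the adjoint identity
\[
(\breve h_a - \lambda_\alpha)\pi_\alpha u = \pi_\alpha(H-\lambda)u + \bigl[\breve I_a \pi_\alpha u - \pi_\alpha(I_a u)\bigr].
\]
The first term is in $\vB(\bX_a)$ by \eqref{eq:B2bnds}, while $F_U$ applied to the bracketed piece is estimated exactly as in the previous paragraph, with $F_U$ now playing the role of the angular cut-off $g$. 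Finally, to convert the one-body inner products of \eqref{eq:lim} into the $N$-body pairings of \eqref{eq:small0} I would unfold using $\pi_\alpha = J_\alpha^*$; each one-body pairing differs from the corresponding $N$-body one by a term involving $\langle J_\alpha v^\pm, (I_a - \breve I_a)u\rangle$, but these cancel when one forms the difference in \eqref{eq:small0} because $I_a - \breve I_a$ is a real multiplication operator, so $\langle J_\alpha v^\pm, (I_a - \breve I_a) u\rangle = \langle (I_a - \breve I_a) J_\alpha v^\pm, u\rangle$. The principal technical obstacle is the short-range estimate described above, which is precisely where the strength of Condition \ref{cond:decayBNstates} is consumed.
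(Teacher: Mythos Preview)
Your overall architecture is the same as the paper's: the algebraic split
\[
(H-\lambda)J_\alpha v^\pm = u^\alpha\otimes(\breve h_a-\lambda_\alpha)v^\pm + (I_a-\breve I_a)J_\alpha v^\pm,
\]
the reduction of the second assertion to Lemma~\ref{lemma:basic} via $\breve u=\pi_\alpha u$, and the observation that the $(I_a-\breve I_a)$ cross-terms cancel in \eqref{eq:small0} because the operator is symmetric. All of that is correct and matches the paper.

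The gap is in your long-range estimate. The pointwise bound $|V^b_{\rm lr}(\pi^b x)-V^b_{\rm lr}(\pi^b x_a)|\le C|x^a|\langle x_a\rangle^{-1-\mu}$ is \emph{not} globally valid: the Taylor remainder involves $\nabla V^b_{\rm lr}$ at $\pi^b x_a+t\pi^b x^a$, and once $|x^a|$ is comparable to $|x_a|$ this point can sit near the origin in $\bX^b$, so the gradient is no longer controlled by $\langle x_a\rangle^{-1-\mu}$. The honest global bound (via the Peetre inequality $\langle y_0+y_1\rangle^{-1}\le C\langle y_0\rangle^{-1}\langle y_1\rangle$) is $\langle x^a\rangle^{2+\mu}\langle x_a\rangle^{-1-\mu}$, and feeding that through your argument would demand $u^\alpha\in L^2_\delta$ with $\delta>5/2+\mu$, strictly more than Condition~\ref{cond:decayBNstates} provides. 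The paper's fix is to interpolate: raise the Taylor bound to a power $\kappa$ and the trivial bound $|V^b_{\rm lr}(x)|+|V^b_{\rm lr}(x_a)|=O(\langle x^a\rangle^\mu\langle x_a\rangle^{-\mu})$ to the power $1-\kappa$, then take $\kappa$ slightly above $1-\mu$; see \eqref{eq:Ta1}--\eqref{eq:Ta2}. It is this interpolation, on the \emph{long-range} side, that consumes the threshold $s>5/2-\mu$. Your attribution of the threshold to the short-range term is misplaced --- the paper handles $V^b_{\rm sr}$ via \eqref{eq:Ta3} with $\kappa$ slightly above $(1+\mu)^{-1}$ and needs only $\delta>3/2$ there.
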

\begin{proof}
Note first that  with \eqref{eq:decay} at our disposal 
\begin{align}\label{eq:I-I}
  (I_a-\breve I_a) J_\alpha v^{\pm}_{\lambda,\alpha} [g]\in L^2_s(\mathbf X)\text{
    for some }s>1/2,
\end{align} which may be proved as follows using  the elementary bound 
\begin{align*}
  \sqrt 2 \inp {\pi^b x_a+y}\geq \inp {\pi^b x_a}\inp {y}^{-1};\quad y\in \mathbf X.
\end{align*}  For any $b\not\leq a$ and $\kappa\in [0,1]$ we 
estimate (for $\hat x_a\in \supp g$)
\begin{subequations}
 \begin{align}\label{eq:Ta1}
    \begin{split}
     \abs[\big]{V_{\rm lr}^b(x)- V_{\rm lr}^b(x_a)}^\kappa&=\abs[\Big]{\int_0^1 \pi^bx^a\cdot (\nabla
  V_{\rm lr}^b)\parb{\pi^b(x_a+tx^a)}\,\d
    t}^\kappa\\&=\inp{x^a}^{(\mu+2)\kappa}\vO\parb{\abs{x_a}^{-(\mu+1)\kappa}},
 \end{split}\\
\abs[\big]{V_{\rm lr}^b(x)-V_{\rm lr}^b(x_a)}^{1-\kappa}&=\inp{x^a}^{\mu(1-\kappa)}\vO\parb{\abs{x_a}^{-\mu(1-\kappa)}}.\label{eq:Ta2}
\end{align} We choose $\kappa> 1-\mu$, slightly bigger, and then
use \eqref{eq:decay} to assure that $u^\alpha\in
L_{\delta}^2(\mathbf X^a)$ for some
$\delta>(\mu+2)\kappa+\mu(1-\kappa)+1/2$, concluding  that
\begin{align*}
   \parb{V_{\rm lr}^b(x)- V_{\rm lr}^b(x_a)} J_\alpha v^{\pm}_{\lambda,\alpha} [g]\in L^2_s(\mathbf X)\text{
    for some }s>1/2.
\end{align*} 
 Similarly
\begin{align}\label{eq:Ta3}
   \inp {\pi^b x}^{-(1+\mu)\kappa}\leq C\inp {\pi^b x_a}^{-(1+\mu)\kappa}\inp{ x^a}^{(1+\mu)\kappa}
\end{align} leads to the requirement (by chosing $\kappa> (1+\mu)^{-1}$, slightly bigger)
that  $u^\alpha\in
L_{\delta}^2(\mathbf X^a)$ for some $\delta>3/2$ (which is consistent with \eqref{eq:decay}) for
concluding  that 
\begin{align*}
   V_{\rm sr}^bJ_\alpha v^{\pm}_{\lambda,\alpha} [g]\in L^2_s(\mathbf X)\text{
    for some }s>1/2.
\end{align*} 
 \end{subequations}

Next we write
\begin{align*}
  H=H^a\otimes I+I\otimes \brh_a+ (I_a-\breve I_a)
\end{align*} 
 and  conclude  from \eqref{eq:basiC0} and \eqref{eq:I-I} that
\begin{align}\label{eq:small}
  (H-\lambda) J_\alpha v^{\pm}_{\lambda,\alpha} [g]\in L^2_s(\mathbf X)\text{
    for some }s>1/2,
\end{align} 
 which is stronger than \eqref{eq:besFull}. Consequently  the second
 term on the
right-hand side of \eqref{eq:small0} is well-defined. The
well-definedness of the first term to the right follows from \eqref{eq:B2bnds} (implying   $J_\alpha=\pi_\alpha^*\in \vL(\vB^*(\mathbf
X_a),\vB^*(\mathbf X))$ by       Fubini's
theorem).

Finally we can  justify  \eqref{eq:Riesz} and \eqref{eq:small0} by
combining    \eqref{eq:basiC0}, \eqref{eq:B2bnds},  the above arguments for \eqref{eq:besFull}  and Lemma
\ref{lemma:basic}. We can write the  right-hand side
of \eqref{eq:small0}  exactly as the right-hand side
of \eqref{eq:lim} with $\breve u:=\pi_\alpha u\,(\in \vB^*(\mathbf
X_a))$ and $\lambda$ there replaced by
 $\lambda_\alpha\,(>0)$. With $F_U$ given as
in Lemma \ref{lemma:basic} we   can check the condition $F_U(\breve
h_a-\lambda_\alpha)\breve u\in \vB(\mathbf
X_a)$ of   Lemma
\ref{lemma:basic} as follows. By an approximation argument we see that
\begin{align*}
  F_U\pi_\alpha (H-\lambda) u-F_U (\breve
h_a-\lambda_\alpha)\breve u = F_U\pi_\alpha  (I_a-\breve I_a)u.
\end{align*} By the arguments for \eqref{eq:besFull} the right-hand
side is in $\vB(\mathbf X_a)$, and by assumption also the first term
to the left is in $\vB(\mathbf X_a)$. Consequently indeed $F_U(\breve
h_a-\lambda_\alpha)\breve u\in \vB(\mathbf
X_a)$. Note also that the right-hand
side when implemented in an expansion of the first  term of
\eqref{eq:small0} cancels with a term arising from  a similar expansion of the second term of \eqref{eq:small0}.
\end{proof}

\begin{corollary}\label{cor:radi-limits-chann}
  For any channel $\alpha=(a,\lambda^\alpha, u^\alpha)$ obeying Condition \ref{cond:decayBNstates}
  and any $u\in \vB^*(\mathbf X)$ obeying  $(H-\lambda)u\in \vB(\mathbf X)$ for some  $\lambda>\lambda^\alpha$
  there exist the weak limits
  \begin{align*}
    Q^\pm_{\lambda,\alpha}(u)=\wLlim_{\rho\to \infty}\,\rho^{-1}\int
    _0^\rho r^{(n_a-1)/2}
                                      \e^{\mp \i
                                        K_a(r\cdot,\lambda_\alpha)}\parb{\pi_\alpha u}(r\cdot)\, \d r
  \end{align*}
\end{corollary}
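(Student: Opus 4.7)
The plan is to deduce the existence of the weak limits in $L^2(C_a)$ from two ingredients: (i) a uniform bound on $\|Q^\pm_\rho\|_{L^2(C_a)}$, where
\[
Q^\pm_\rho(\omega):=\rho^{-1}\int_0^\rho r^{(n_a-1)/2}\e^{\mp\i K_a(r\omega,\lambda_\alpha)}(\pi_\alpha u)(r\omega)\,\d r,\quad \rho\geq 1,
\]
and (ii) identification of the limit of $\inp{g,Q^\pm_\rho}_{L^2(C_a)}$ for every test function $g\in C_c^\infty(C_a')$, using Lemma \ref{lemma:radi-limits-chann00}. Since the removed set $C_a\setminus C_a'$ is a finite union of intersections $\mathbf X_b\cap C_a$ with $b\not\leq a$, each $\mathbf X_b\cap\mathbf X_a$ being a proper subspace of $\mathbf X_a$, this removed set is a null set in $C_a$; hence $C_c^\infty(C_a')$ is dense in $L^2(C_a)$ and (i) together with (ii) yields the weak convergence by the standard density argument.

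For (i), I would apply Cauchy--Schwarz pointwise in $\omega\in C_a$ to the radial integral and then integrate in $\omega$, converting to Cartesian coordinates via $\d x_a=r^{n_a-1}\,\d r\,\d\omega$, to obtain
\[
\|Q^\pm_\rho\|_{L^2(C_a)}^2\leq\rho^{-1}\int_{|x_a|<\rho}|\pi_\alpha u|^2\,\d x_a.
\]
A routine dyadic decomposition using the definition of the $\vB^*$-norm bounds the right-hand side by $C\|\pi_\alpha u\|_{\vB^*(\mathbf X_a)}^2$ uniformly in $\rho\geq 1$. Since $\pi_\alpha u\in\vB^*(\mathbf X_a)$ by \eqref{eq:B2bnds}, this delivers the required uniform $L^2(C_a)$-bound.

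For (ii), for $g\in C_c^\infty(C_a')$ I would rewrite $\inp{g,Q^\pm_\rho}_{L^2(C_a)}$ in polar coordinates as
\[
\rho^{-1}\int_{|x_a|<\rho}\overline{v^\pm_{\lambda,\alpha}[g](x_a)}\,(\pi_\alpha u)(x_a)\,\d x_a + O(\rho^{-1}),
\]
the $O(\rho^{-1})$ error coming from the factor $1-\chi_+(|x_a|)$, which is supported in $|x_a|\leq 5/3$ and thus contributes a bounded quantity divided by $\rho$. By \eqref{eq:Riesz} of Lemma \ref{lemma:radi-limits-chann00} the first term converges to the number $Q^\pm_{\lambda,\alpha}(u)[g]$ as $\rho\to\infty$. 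Combining (i) and (ii), the antilinear functional $g\mapsto Q^\pm_{\lambda,\alpha}(u)[g]$ extends by density to a continuous one on $L^2(C_a)$, whose Riesz representative is the desired element $Q^\pm_{\lambda,\alpha}(u)\in L^2(C_a)$, and $Q^\pm_\rho\rightharpoonup Q^\pm_{\lambda,\alpha}(u)$ in $L^2(C_a)$. The only nontrivial technical point is the dyadic estimate converting the $\vB^*$-norm of $\pi_\alpha u$ into the averaged $L^2$-norm $\rho^{-1}\int_{|x_a|<\rho}|\cdot|^2\,\d x_a$; once that is in hand, the rest is pure duality built on the already-established Lemma.
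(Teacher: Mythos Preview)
Your proof is correct and takes essentially the same approach as the paper's, which tersely invokes \eqref{eq:B2bnds}, the Riesz theorem, and Fubini's theorem. You have simply made explicit the uniform bound (via Cauchy--Schwarz and the equivalent $\vB^*$-norm $\sup_{\rho\geq 1}\rho^{-1}\int_{|x_a|<\rho}|\cdot|^2\,\d x_a$), the polar-coordinate conversion, and the density of $C_c^\infty(C_a')$ in $L^2(C_a)$, all of which the paper leaves implicit.
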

\begin{proof}
  By \eqref{eq:B2bnds} and the  Riesz theorem the right-hand side of \eqref{eq:Riesz} belongs
$L^2(C_a)$ as a bounded linear functional in 
 $g\in L^2(C_a')=L^2(C_a)$. The result then follows from  Fubini's theorem.
\end{proof}

A useful example is given by $u=R(\lambda\pm \i 0)f$, where $f\in
\vB(\mathbf X)$,
\begin{align}\label{eq:defEI}
  \lambda \in
\vE^\alpha:=I^\alpha \setminus \parb{\sigma_{\pp}(H)\cup
  \vT(H)}\quad\text{with}\quad I^\alpha:=(\lambda^\alpha,\infty),
\end{align}  and \eqref{eq:BB^*a} is used.  
  By \eqref{eq:small0} and Lemma
\ref{lemma:Sommerfeld} \ref{item:1s}  (stated below) it follows that
\begin{align}\label{eq:zero}
  Q^\pm_{\lambda,\alpha}(R(\lambda\mp \i 0)f)=0.
\end{align} On the other hand the functions
$Q^\pm_{\lambda,\alpha}(R(\lambda\pm \i 0)f)$ are  in general nonzero,
see Lemmas \ref{lemma:Sommerfeld} \ref{item:2s} and
\ref{lemma:poiss-oper-geom} below. For later use we record  that, thanks to  \eqref{eq:Riesz},
\begin{align}\label{eq:besBaGood}
  Q^\pm_{\lambda,\alpha}R(\lambda\pm \i
0)=Q^\pm_{\lambda,\alpha}\parb{R(\lambda\pm \i
0)(\cdot)}\in\vL\parb{\vB(\mathbf X),L^2(C_a)},
\end{align} and as functions of $\lambda\in \vE^\alpha$ these
operators are locally bounded.

\subsection{Poisson operators and  geometric scattering matrix}\label{Wave matrices}

Consider  a channel
$\alpha= (a,\lambda^\alpha, u^\alpha)$ obeying Condition \ref{cond:decayBNstates},  and
consider the
quasi-modes $J_\alpha
v^{\pm}_{\lambda,\alpha} [g]=u^\alpha\otimes
v^{\pm}_{\lambda,\alpha}[g]$, $g\in C_\c^\infty (C_a')$, cf.  Subsection \ref{Radial limits}.  The
                                    assertion \eqref{eq:small} allows
                                    us 
 to define for $\lambda \in
\vE^\alpha$ the exact
generalized eigenfunctions in $\vB^*(\mathbf X)$, 
\begin{align*}
  P^\mp_{\lambda,\alpha} [g]&= J_\alpha v^{\mp}_{\lambda,\alpha}
                              [g]-R(\lambda\pm \i 0)(H-\lambda) J_\alpha v^{\mp}_{\lambda,\alpha}
                              [g],\\
  \check P^\mp_{\lambda,\alpha} [g]&= J_\alpha v^{\mp}_{\lambda,\alpha}
                              [g]-R(\lambda\mp \i 0)(H-\lambda) J_\alpha v^{\mp}_{\lambda,\alpha}
                              [g].
\end{align*} 
\begin{lemma}\label{lemma:Sommerfeld}   For any channel
  $(a,\lambda^\alpha, u^\alpha)$ obeying Condition \ref{cond:decayBNstates}, $\lambda \in
\vE^\alpha$  and  $g\in
  C_\c^\infty (C_a')$: 
  \begin{enumerate}[1)]
  \item \label{item:1s} $\check P^\mp_{\lambda,\alpha} [g]=0$,
  \item  \label{item:2s} $g=Q^\mp_{\lambda,\alpha}\parb{P^\mp_{\lambda,\alpha}
  [g]}$,
\item  \label{item:3s} $g=Q^\mp_{\lambda,\alpha}\parb{R(\lambda\mp \i 0)(H-\lambda) J_\alpha v^{\mp}_{\lambda,\alpha}
                              [g]}$.
\end{enumerate}
  \end{lemma}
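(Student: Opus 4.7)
The proof centers on 1); parts 2) and 3) follow with little extra effort. For 1), note first by Lemma~\ref{lemma:radi-limits-chann00} that $(H-\lambda)J_\alpha v^\mp_{\lambda,\alpha}[g]\in \mathcal B(\mathbf X)$, hence $R(\lambda\mp \i 0)(H-\lambda)J_\alpha v^\mp_{\lambda,\alpha}[g]\in\mathcal B^*(\mathbf X)$ is well-defined via \eqref{eq:BB^*a}, and then $(H-\lambda)\check P^\mp_{\lambda,\alpha}[g]=0$. The strategy is to invoke the sharp Sommerfeld uniqueness from \cite{AIIS}: in the Besov-space setting, among the $\mathcal B^*$-solutions of $(H-\lambda)w=f$ with $f\in\mathcal B$, the element $R(\lambda\mp\i 0)f$ is uniquely characterized by the appropriate incoming/outgoing radiation condition. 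The quasi-mode $J_\alpha v^\mp_{\lambda,\alpha}[g]=u^\alpha\otimes v^\mp_{\lambda,\alpha}[g]$ is by design of the same radiation type: the phase $\mp K_a(x_a,\lambda_\alpha)$ produces a radial momentum of the right sign along $\mathbf X_a$, while the bound state $u^\alpha$ provides decay along $\mathbf X^a$ (this decay, quantified by Condition~\ref{cond:decayBNstates}, is precisely what made \eqref{eq:besFull} available). The uniqueness then forces $R(\lambda\mp\i 0)(H-\lambda)J_\alpha v^\mp_{\lambda,\alpha}[g]=J_\alpha v^\mp_{\lambda,\alpha}[g]$, i.e.\ $\check P^\mp_{\lambda,\alpha}[g]=0$.

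Part 3) is then immediate from 1): the identity $J_\alpha v^\mp_{\lambda,\alpha}[g]=R(\lambda\mp\i 0)(H-\lambda)J_\alpha v^\mp_{\lambda,\alpha}[g]$ reduces the claim to $g=Q^\mp_{\lambda,\alpha}(J_\alpha v^\mp_{\lambda,\alpha}[g])$. Testing against an arbitrary $g'\in C_\c^\infty(C_a')$ and inserting $\pi_\alpha J_\alpha=I$ into \eqref{eq:Riesz}, the integrand becomes $\overline{v^\mp_{\lambda,\alpha}[g'](x_a)}\,v^\mp_{\lambda,\alpha}[g](x_a)$; the oscillatory phases cancel because $\overline{\e^{\mp\i K_a}}\,\e^{\mp\i K_a}=1$, leaving $\chi_+(|x_a|)^2|x_a|^{1-n_a}\overline{g'(\hat x_a)}g(\hat x_a)$. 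In polar coordinates on $\mathbf X_a\simeq\R^{n_a}$ the radial weight collapses to $\chi_+(r)^2$, and
\begin{align*}
\lim_{\rho\to\infty}\rho^{-1}\int_0^\rho\chi_+(r)^2\,\d r=1,
\end{align*}
so the limit equals $\langle g',g\rangle_{L^2(C_a)}$, proving $Q^\mp_{\lambda,\alpha}(J_\alpha v^\mp_{\lambda,\alpha}[g])=g$.

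Part 2) is then a one-liner: applying $Q^\mp_{\lambda,\alpha}$ to the definition $P^\mp_{\lambda,\alpha}[g]=J_\alpha v^\mp_{\lambda,\alpha}[g]-R(\lambda\pm\i 0)(H-\lambda)J_\alpha v^\mp_{\lambda,\alpha}[g]$, the second term vanishes by \eqref{eq:zero} (applied with $f=(H-\lambda)J_\alpha v^\mp_{\lambda,\alpha}[g]\in\mathcal B$, using \eqref{eq:besFull}), while the first contributes $g$ by the polar computation just performed. The only genuinely non-trivial step is 1), where the incoming/outgoing asymptotic character of $J_\alpha v^\mp_{\lambda,\alpha}[g]$ must be matched with the precise radiation condition pinning down the range of $R(\lambda\mp\i 0)$ on $\mathcal B$ in the sharp Sommerfeld uniqueness of \cite{AIIS}; this matching is where Condition~\ref{cond:decayBNstates} is essential.
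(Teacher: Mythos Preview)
Your plan is correct and mirrors the paper's proof: for 1) invoke Sommerfeld uniqueness from \cite[Corollary~1.10]{AIIS} after checking that the quasi-mode $J_\alpha v^\mp_{\lambda,\alpha}[g]$ satisfies the same radiation condition as $R(\lambda\mp\i 0)$ (the paper carries out this check in Appendix~\ref{sec:AppendixO} via \eqref{eq:self-similar} and \eqref{eq:r0}); then 2) follows from \eqref{eq:zero} and 3) from 1), both reducing to the direct polar computation $Q^\mp_{\lambda,\alpha}(J_\alpha v^\mp_{\lambda,\alpha}[g])=g$ that you spell out. One correction to your closing remark: the paper notes explicitly that the radiation-condition verification in Appendix~\ref{sec:AppendixO} does \emph{not} rely on Condition~\ref{cond:decayBNstates}; that hypothesis enters the argument only through \eqref{eq:besFull}.
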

  \begin{proof}
    A computation using \eqref{eq:self-similar} and \eqref{eq:r0} (stated below)
     shows that the
    eigenfunctions $\check P^\mp_{\lambda,\alpha} [g]$ are purely
    incoming/outgoing in the sense of \cite[Corollary
    1.10]{AIIS}, see Appendix \ref{sec:AppendixO} for details. Whence \ref{item:1s} follows from the conclusion of
    this version of the Sommerfeld uniqueness result. 
    Clearly \ref{item:2s} and 
\ref{item:3s}  follow from \eqref{eq:zero} and  \ref{item:1s}, respectively.
\end{proof} The \emph{Poisson operators}  $P^\mp_{\lambda,\alpha} $
fulfill  the following properties.
\begin{lemma}\label{lemma:poiss-oper-geom} For any channel
  $(a,\lambda^\alpha, u^\alpha)$ obeying Condition \ref{cond:decayBNstates}, $g\in C_\c^\infty
  (C_a')$ and   any $f\in
\vB(\mathbf X)$
\begin{subequations}
 \begin{align}\label{eq:1Form}
   \pm 2^{-1}\i \lambda_\alpha^{-1/2}\inp{ P^\pm_{\lambda,\alpha}
  [g],f}&=\inp{g,Q^\pm_{\lambda,\alpha}(R(\lambda\pm \i 0)f)},\\
 P^\pm_{\lambda,\alpha}&\in\vL \parb{L^2(C_a),\vB^*(\mathbf X)}.\label{eq:2Form}
\end{align}  
\end{subequations} In particular $P^\pm_{\lambda,\alpha}
$ depends 
strongly weak$^*$-continuously of $\lambda \in
\vE^\alpha$. In
the weaker topology of $\vL(L^2(C_a),L^2_{-s}(\mathbf X))$, 
 $s>1/2$, $ P^\pm_{\lambda,\alpha}$ are strongly  continuous in $\lambda \in
\vE^\alpha$.
\begin{proof}
  The formulas \eqref{eq:1Form} follow  by applying \eqref{eq:small0} to $u=R(\lambda\pm
  \i 0)f$. The assertion \eqref{eq:2Form} follows
from \eqref{eq:1Form} and \eqref{eq:besBaGood}. The continuity  assertions follow  from checking the computation
  \eqref{eq:small} and combining with  regularity of
  the function $K_a$ and  continuity of the boundary values of the
  resolvent, cf.  \eqref{eq:LAPbnda}.
\end{proof}
  
\end{lemma}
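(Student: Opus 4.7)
My plan is to treat the three claims of the lemma in order, using the duality identity \eqref{eq:1Form} as the pivot from which both the boundedness \eqref{eq:2Form} and the continuity statements follow.

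First I would derive \eqref{eq:1Form} by applying the master identity \eqref{eq:small0} of Lemma \ref{lemma:radi-limits-chann00} to the particular choice $u = R(\lambda \pm \i 0) f$ (matching signs with $Q^\pm$). For $\lambda \in \vE^\alpha$ and $f \in \vB(\mathbf X)$ this $u$ lies in $\vB^*(\mathbf X)$ by \eqref{eq:BB^*a} and satisfies $(H-\lambda) u = f \in \vB(\mathbf X)$, so the hypotheses of Lemma \ref{lemma:radi-limits-chann00} are met. The first pairing on the right of \eqref{eq:small0} then collapses to $\inp{J_\alpha v^\pm_{\lambda,\alpha}[g], f}$, while moving the resolvent over in the second pairing via self-adjointness of $H$ produces $\inp{R(\lambda \mp \i 0)(H-\lambda) J_\alpha v^\pm_{\lambda,\alpha}[g], f}$; subtracting and recognizing the definition of $P^\pm_{\lambda,\alpha}[g]$ yields \eqref{eq:1Form}.

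The mapping property \eqref{eq:2Form} then follows almost tautologically: combining \eqref{eq:1Form} with \eqref{eq:besBaGood} gives
\begin{align*}
\abs[\big]{\inp{P^\pm_{\lambda,\alpha}[g], f}} \leq C(\lambda)\, \norm{g}_{L^2(C_a)} \norm{f}_{\vB},
\end{align*}
and identifying $\vB^*$ with the dual of $\vB$ exhibits $g \mapsto P^\pm_{\lambda,\alpha}[g]$ as a bounded operator from the dense subspace $C_\c^\infty(C_a') \subset L^2(C_a)$ into $\vB^*(\mathbf X)$, with norm locally bounded in $\lambda$; extension by density completes the step.

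For the continuity statements I would read them off from the defining expression $P^\mp_{\lambda,\alpha}[g] = J_\alpha v^\mp_{\lambda,\alpha}[g] - R(\lambda \pm \i 0)(H-\lambda) J_\alpha v^\mp_{\lambda,\alpha}[g]$. The quasi-mode $J_\alpha v^\mp_{\lambda,\alpha}[g]$ is smooth in $\lambda$ pointwise, by the $\lambda$-regularity of $K_a(\cdot,\lambda_\alpha)$ guaranteed by the one-body eikonal construction, and revisiting the derivation of \eqref{eq:small} shows that $\lambda \mapsto (H-\lambda) J_\alpha v^\mp_{\lambda,\alpha}[g]$ is a continuous $L^2_s(\mathbf X)$-valued function for some $s > 1/2$, hence a continuous $\vB(\mathbf X)$-valued function. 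Composing with the strongly weak$^*$-continuous family $\lambda \mapsto R(\lambda \pm \i 0) \in \vL(\vB, \vB^*)$ from \eqref{eq:BB^*a} yields the claimed strong weak$^*$-continuity in $\vB^*(\mathbf X)$; composing instead with the norm-continuous family $\lambda \mapsto R(\lambda \pm \i 0) \in \vL(L^2_s, L^2_{-s})$ from \eqref{eq:LAPbnda} yields strong continuity in $L^2_{-s}(\mathbf X)$.

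The main obstacle is upgrading \eqref{eq:small} from a pointwise bound to a $\lambda$-\emph{continuous} $L^2_s$-valued statement. This reduces to inspecting the weighted estimates \eqref{eq:Ta1}--\eqref{eq:Ta3} used in the proof of Lemma \ref{lemma:radi-limits-chann00} and verifying that they are locally uniform in $\lambda$. The uniform $\lambda$-regularity of $k_a$ stated in property 2) of Subsection \ref{subsec: -body effective potential and a $1$-body radial limit}, together with the $\lambda$-independence of $\supp g$ and of the bound state $u^\alpha$, should make this a routine but slightly tedious check.
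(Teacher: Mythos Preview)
Your proposal is correct and follows essentially the same approach as the paper: applying \eqref{eq:small0} to $u=R(\lambda\pm\i 0)f$ for \eqref{eq:1Form}, deducing \eqref{eq:2Form} from \eqref{eq:1Form} together with \eqref{eq:besBaGood}, and reading off the continuity from the explicit defining formula using the $\lambda$-regularity of $K_a$, the $L^2_s$-continuity underlying \eqref{eq:small}, and the resolvent limits. One small remark: the strong weak$^*$-continuity in $\lambda$ of $R(\lambda\pm\i 0)\in\vL(\vB,\vB^*)$ is not stated explicitly in \eqref{eq:BB^*a} (which concerns the $\epsilon\to 0$ limit and a uniform bound); the paper instead routes both continuity statements through \eqref{eq:LAPbnda}, which suffices since $L^2_{-s}\hookrightarrow\vB^*$ continuously.
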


For two channels  $\alpha=(a,\lambda^\alpha, u^\alpha)$ and  $\beta=(b,\lambda^\beta, u^\beta
)$ both fulfilling Condition \ref{cond:decayBNstates} the
\emph{geometric scattering matrix}, or rather the component  given by considering  $\alpha$ as incoming and
$\beta$ as outgoing, is given by
\begin{subequations}
 \begin{align}\label{eq:geoS0}
  \Sigma_{\beta\alpha}(\lambda)g=Q^+_{\lambda,\beta}\parb{P^-_{\lambda,\alpha}
  [g]};\quad \lambda \in
\vE^\alpha\cap \vE^\beta ,\, g\in C_\c^\infty (C_a').
\end{align} Alternatively this quantity is given by the radial limit
\begin{align}\label{eq:geoS}
  \Sigma_{\beta\alpha}(\lambda)g=-Q^+_{\lambda,\beta}\parb{R(\lambda+
    \i 0)(H-\lambda) J_\alpha v^{-}_{\lambda,\alpha} [g]}.
\end{align} 
\end{subequations}
 \begin{lemma}\label{lemma:geomS}
The component
$\Sigma_{\beta\alpha}(\lambda)$ of the
geometric scattering matrix extends to an operator in
$\vL \parb{L^2(C_a),L^2(C_b)}$ with 
weakly continuous dependence of $\lambda \in
  \vE^\alpha\cap\vE^\beta$.  
\end{lemma}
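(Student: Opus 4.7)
The plan is to derive an explicit Green-type representation of the sesquilinear form
\[
\sigma_\lambda(h,g):=\inp{h,\Sigma_{\beta\alpha}(\lambda)g}_{L^2(C_b)},\qquad g\in C_\c^\infty(C'_a),\ h\in C_\c^\infty(C'_b),
\]
on this dense subspace, from which both the $L^2(C_a)\to L^2(C_b)$ boundedness and the weak continuity in $\lambda$ can be read off. Starting from \eqref{eq:geoS} and pairing with $h$, the $(\beta,+)$-version of the duality relation \eqref{eq:1Form} yields
\[
\sigma_\lambda(h,g)=-\tfrac{\i}{2\lambda_\beta^{1/2}}\,\inp[\big]{P^+_{\lambda,\beta}[h],(H-\lambda)J_\alpha v^-_{\lambda,\alpha}[g]}.
\]

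Next I would insert $P^+_{\lambda,\beta}[h]=J_\beta v^+_{\lambda,\beta}[h]-R(\lambda-\i 0)(H-\lambda)J_\beta v^+_{\lambda,\beta}[h]$, use $R(\lambda-\i 0)^*=R(\lambda+\i 0)$, and employ the identity $R(\lambda+\i 0)(H-\lambda)J_\alpha v^-_{\lambda,\alpha}[g]=J_\alpha v^-_{\lambda,\alpha}[g]-P^-_{\lambda,\alpha}[g]$ coming from the definition of $P^-$ and Lemma \ref{lemma:Sommerfeld} \ref{item:1s}. This produces a Green splitting $\inp{P^+_{\lambda,\beta}[h],(H-\lambda)J_\alpha v^-_{\lambda,\alpha}[g]}=B(\lambda,h,g)+M(\lambda,h,g)$ with the boundary bracket
\[
B(\lambda,h,g)=\inp{J_\beta v^+_{\lambda,\beta}[h],(H-\lambda)J_\alpha v^-_{\lambda,\alpha}[g]}-\inp{(H-\lambda)J_\beta v^+_{\lambda,\beta}[h],J_\alpha v^-_{\lambda,\alpha}[g]}
\]
and bulk term $M(\lambda,h,g)=\inp{(H-\lambda)J_\beta v^+_{\lambda,\beta}[h],P^-_{\lambda,\alpha}[g]}$. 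The term $M$ is bounded via $\vB$-$\vB^*$ duality together with \eqref{eq:2Form}, but the resulting bound involves $\|(H-\lambda)J_\beta v^+_{\lambda,\beta}[h]\|_{\vB}$, which depends on derivatives of $h$; to eliminate this dependence I would run the same reduction starting from the alternative formula \eqref{eq:geoS0} with $\alpha,\beta$ and $\pm$ interchanged, and average the two representations of $\sigma_\lambda(h,g)$, so that the derivative-loaded $M$-terms cancel and one is left with a pure boundary expression.

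The main obstacle is then the asymptotic evaluation of the remaining Green bracket via a stationary-phase/radial-limit analysis at $|x|\to\infty$, in the spirit of Lemma \ref{lemma:basic}. Inserting the WKB ansatz $v^\pm_{\lambda,\gamma}[\phi]=\chi_+(|x_\gamma|)|x_\gamma|^{(1-n_\gamma)/2}\e^{\pm\i K_\gamma(x_\gamma,\lambda_\gamma)}\phi(\hat x_\gamma)$ into the surface integral at $|x|=\rho$, and using the approximate eikonal equation $|\nabla K_\gamma|^2+\breve I_\gamma=\lambda_\gamma$ together with Condition \ref{cond:decayBNstates} to absorb inter-cluster potentials (cf.\ the proof of Lemma \ref{lemma:radi-limits-chann00}), the bracket reduces to a $\delta_{ab}$-type contribution $c(\lambda)\inp{u^\beta,u^\alpha}_{\vH^a}\inp{h,g}_{L^2(C_a)}$ plus a lower-order bilinear error controlled by $\|h\|_{L^2(C_b)}\|g\|_{L^2(C_a)}$; when $a\neq b$ the asymptotic cones of the quasi-modes are disjoint and the leading term vanishes. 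Granted the resulting bound $|\sigma_\lambda(h,g)|\le C(\lambda)\|h\|_{L^2(C_b)}\|g\|_{L^2(C_a)}$ locally uniformly in $\lambda\in\vE^\alpha\cap \vE^\beta$, $\Sigma_{\beta\alpha}(\lambda)$ extends by density to $\vL(L^2(C_a),L^2(C_b))$; weak continuity then follows from this uniform bound combined with pointwise continuity of the right-hand side on $C_\c^\infty(C'_b)\times C_\c^\infty(C'_a)$, which is inherited from continuity of $R(\lambda\pm\i 0)$ (see \eqref{eq:LAPbnda}--\eqref{eq:BB^*a}), smoothness of $K_\gamma$ in $\lambda$, and the strong weak*-continuity of $P^\pm_{\lambda,\gamma}$ established in Lemma \ref{lemma:poiss-oper-geom}.
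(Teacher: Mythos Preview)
Your opening step is exactly right and matches the paper: by \eqref{eq:geoS} and \eqref{eq:1Form} one has
\[
\inp{h,\Sigma_{\beta\alpha}(\lambda)g}=-\tfrac{\i}{2\lambda_\beta^{1/2}}\,\inp[\big]{P^+_{\lambda,\beta}[h],(H-\lambda)J_\alpha v^-_{\lambda,\alpha}[g]}.
\]
The paper uses this formula only for the \emph{continuity} half of the lemma: by the last assertion of Lemma~\ref{lemma:poiss-oper-geom} the first entry is continuous as an $L^2_{-s}(\mathbf X)$-valued function, and by \eqref{eq:small} (with the underlying computation) the second entry is continuous as an $L^2_s(\mathbf X)$-valued function for some $s>1/2$, so the pairing is continuous on the dense set $C_\c^\infty(C'_b)\times C_\c^\infty(C'_a)$.

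Where you go wrong is in trying to extract the $L^2\to L^2$ \emph{boundedness} from this same pairing formula. As you yourself note, the factor $(H-\lambda)J_\alpha v^-_{\lambda,\alpha}[g]$ involves derivatives of $g$, so the formula alone does not give an estimate by $\|g\|_{L^2}$. Your proposed remedy --- expanding $P^+$, producing $B+M$, averaging with a symmetrized version to cancel the $M$-terms, and then doing stationary-phase asymptotics of the boundary bracket --- is both incomplete (you call it ``the main obstacle'' and leave it as a sketch) and unnecessary. The cancellation of the $M$-terms is not justified: the two representations you propose are for different operators ($\Sigma_{\beta\alpha}$ vs.\ a quantity built from $\Sigma_{\alpha\beta}$), and there is no a priori symmetry forcing the bulk terms to match.

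The boundedness is in fact already available without any further computation. From \eqref{eq:geoS0} one has the factorization $\Sigma_{\beta\alpha}(\lambda)=Q^+_{\lambda,\beta}\circ P^-_{\lambda,\alpha}$. By \eqref{eq:2Form} the second factor lies in $\vL(L^2(C_a),\vB^*(\mathbf X))$, and by \eqref{eq:Riesz} together with \eqref{eq:B2bnds} the first factor, restricted to generalized eigenfunctions, is bounded from $\vB^*(\mathbf X)$ to $L^2(C_b)$: indeed the radial average $\rho^{-1}\int_{|x_b|<\rho}\overline{v^+_{\lambda,\beta}[h]}\,(\pi_\beta u)\,dx_b$ is controlled by $\|h\|_{L^2(C_b)}\|\pi_\beta u\|_{\vB^*(\mathbf X_b)}\le C\|h\|_{L^2(C_b)}\|u\|_{\vB^*(\mathbf X)}$ via Cauchy--Schwarz. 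This gives the locally uniform bound directly, after which the weak continuity follows from your pairing formula and density.
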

\begin{proof}
  It follows from  \eqref{eq:geoS0}, \eqref{eq:Riesz},
  \eqref{eq:B2bnds}  and \eqref{eq:2Form} that the extension $\Sigma_{\beta\alpha}(\lambda)\in
  \vL \parb{L^2(C_a),L^2(C_b)}$ exists. We record that it is locally uniformly
  bounded in $\lambda$.   The weak continuity assertion then follows
  from  \eqref{eq:geoS},  Lemma
\ref{lemma:poiss-oper-geom} and  \eqref{eq:small} (including 
 the underlying computation for  \eqref{eq:small} 
  exhibiting continuity). Indeed for any  $g_a\in C_\c^\infty
  (C_a')$ and $g_b\in C_\c^\infty
  (C_b')$ 
\begin{align*}
   \inp{g_b,Q^+_{\lambda,\beta}(R(\lambda+ \i 0)(H-\lambda) J_\alpha v^{-}_{\lambda,\alpha} [g_a])}=2^{-1}\i \lambda_\beta^{-1/2}\inp{ P^+_{\lambda,\beta}
  [g_b],(H-\lambda) J_\alpha v^{-}_{\lambda,\alpha} [g_a]}.
\end{align*}  Due to the last assertion of Lemma
\ref{lemma:poiss-oper-geom} the first entry of  the inner product to
the right is continuous as an  $L^2_{-s}(\mathbf X)$-valued
function. The second entry is continuous as an  $L^2_{s}(\mathbf X)$-valued
function, provided $s$ is taken slightly larger than $1/2$, so indeed
 the right-hand side  is continuous.
\end{proof}

\subsection{Time-dependent scattering theory and stationary
  representations}\label{subsec:Time-dependent
  scattering theory} For any channel
  $\alpha=(a,\lambda^\alpha, u^\alpha)$ obeying  Condition
  \ref{cond:decayBNstates}   one  can
  easily show the existence of
 \emph{channel wave operators} by the Cook criterion using
 \eqref{eq:Ta1}--\eqref{eq:Ta3} and one-body scattering theory \cite{H0}. Consequently, more precisely,  we introduce
\begin{align}\label{eq:wave_op}
  W_\alpha^{\pm}=\slim_{t\to \pm\infty}\e^{\i tH}J_\alpha\e^{-\i
  (S_a^\pm (p_a,t)+\lambda^\alpha t)},
 \end{align} where $p_a=-\i \nabla_{x_a}$ and $S_a^\pm (\xi_a,\pm|t|)=\pm S_a(\pm\xi_a,|t|)$ is
defined in term of the Legendre transform of the function $
K_a(\cdot, \lambda)$ of Subsection \ref{subsec: -body effective
  potential and a $1$-body radial limit}. This amounts to the
following construction,
citing \cite[Lemma 6.1]{II}:   

There exist an $\mathbf X_a$-valued function $x(\xi, t)$ and a
positive function $\lambda(\xi,t)$ both in $C^\infty\parb{(\mathbf
X_a\setminus\set{0})\times \R_+}$ and satisfying the following requirements.
 For any compact set $B\subseteq \mathbf
X_a\setminus\set{0}$,  there exist $ T, C > 0$ such  that for $\xi\in B$
and $t > T$ 
\begin{enumerate}[1)]
      \item\label{item:1ss} \quad $\xi=\partial_{x_a}K_a\parb{x(\xi, t),\lambda(\xi,t)},\quad t=\partial_{\lambda}K_a\parb{x(\xi, t),\lambda(\xi,t)},$
        
      \item \label{item:2ss}\quad $\abs{x(\xi, t)-2t\xi}\leq C\inp{t}^{1-\mu},\quad \abs[\big]{\lambda(\xi, t)-\abs{\xi}^2}\leq C\inp{t}^{-\mu}.$
\end{enumerate} Then we define 
\begin{align*}
  S_a(\xi,t)=x(\xi, t)\cdot \xi +\lambda(\xi, t)t - K_a(x(\xi, t),
  \lambda(\xi, t));\quad (\xi,t)\in\parb{\mathbf
X_a\setminus{0}}\times \R_+.
\end{align*} 

Note that this function solves the Hamilton-Jacobi equation
\begin{align*}
  \partial_t S_a(\xi,t)=
  \xi^2+\brI_a\parb{\partial_{\xi}S_a(\xi,t)};\quad t> t(\xi), \, \xi\neq 0.
\end{align*}

\begin{remarks}\label{remark:Dollard-time-depend-scatt}
  \begin{enumerate}[i)]
  \item \label{item:Dol1}
  For $\mu> 1/2$ in Condition \ref{cond:smooth2wea3n12} one can use
  the Dollard-approximation (cf. \cite{Do})  rather than the above exact solution $
  S_a$ to the Hamilton-Jacobi equation to define channel wave operators. It reads
  \begin{align*}
    S_{a,\dol}(\xi,t)=t\xi^2+\int_1^t\,\brI_a(2s\xi)\,\d s.
  \end{align*} Introducing   $S_{a,\dol}^\pm (\xi,\pm|t|)=\pm
  S_{a,\dol}(\pm\xi,|t|)$,  the existence of the limits 
\begin{align}\label{eq:Dolwave_op}
  W_{\alpha,\dol}^{\pm}=\slim_{t\to \pm\infty}\e^{\i tH}J_\alpha\e^{-\i
  (S_{a,\dol}^\pm (p_a,t)+\lambda^\alpha t)}
\end{align}  then  follows  by the chain rule  (assuming that
$W_\alpha^{\pm}$ exist) and  from the bounds
\begin{align*}
  \int_1^\infty\,\abs{\brI_a\parb{\partial_{\xi}S_a(\xi,s)}-\brI_a(2s\xi)}\,\d
  s<\infty;\quad \xi\neq 0.
\end{align*}  In turn the latter are  consequences of the bounds
\begin{align}\label{eq:xBNds}
  \abs{\partial_{\xi}S_a(\xi,t)-2t\xi}\leq C\inp{t}^{1-\mu};\quad
  C=C(\xi)<\infty.
\end{align} Note that \eqref{eq:xBNds}  follows from the above
property  \ref{item:2ss} and the fact that
\begin{align*}
  \partial_{\xi}S_a(\xi,t)-x(\xi, t)=0\text{ for }t> t(\xi).
\end{align*}
\item \label{item:Dol2} The construction of  \ref{item:Dol1} applies
  to the example of Subsection \ref{subsec:A principle example, atomic $N$-body
  Hamiltonians} and for that case one can alternatively use  the
function $S_{a,{\rm atom}}(\xi,t)$ defined there. Note the property
\begin{align*}
  \int_1^\infty\,\abs{I_a(2s\xi)-\brI_a(2s\xi)}\,\d
  s<\infty;\quad \xi\in {\mathbf X_a}\setminus\cup_{{c\not\leq a} }\mathbf X_c.
\end{align*} By the chain rule the existence of $ W_{\alpha,{\rm
    atom}}^{\pm}$ then follows from the existence of
$W_{\alpha,\dol}^{\pm}$.

\item \label{item:Dol3}
It is also an elementary fact that the existence, along
with continuity properties of the scattering matrix, to be stated in
Theorems \ref{thm:scat_matrices} and \ref{thm:mainstat-modif-n} (under
different hypotheses) imply similar assertions for the `Dollard' and
`atomic' counterparts discussed above. For example, under the conditions
of Theorem \ref{thm:mainstat-modif-n} 
\begin{align*}
  S_{\beta\alpha,{\rm atom}}(\lambda)=\e^{-\i\theta^+_\beta(\cdot,\lambda)}S_{\beta\alpha}(\lambda)\e^{\i\theta^-_\alpha(\cdot,\lambda)},
\end{align*} where on the right-hand side the multiplication operators to the left and to
the right are both strongly continuous factors (they are given by
explicit integrals). Whence Theorem \ref{thm:princ-example-atom} is an
immediate consequence of Theorem \ref{thm:mainstat-modif-n} and
Corollary \ref{cor:direct}.
\item \label{item:Dol4} For short-range systems, say given by
  Condition \ref{cond:smooth2wea3n12} with the additional requirement
  $V^a_{\rm lr}(x^a)=\vO(\abs{x^a}^{-1-\epsilon})$ for some
  $\epsilon>0$, we may use $ S_{a,{\rm sr}}(\xi,t):=t\xi^2$ rather
  than $ S_{a}(\xi,t)$. Arguing as  above we can then use the results of this
  paper (derived for the function $ S_a$) to deduce analogue results for stationary 
short-range scattering theory (based on the function $ S_{a,{\rm sr}}$), overlapping with \cite {Ya1, Ya4}.
\end{enumerate} 
\end{remarks}

Let
$k_\alpha=p_a^2+\lambda^\alpha$  and recall $I^\alpha=(\lambda^\alpha,\infty)$. By the intertwining property $H W_\alpha^{\pm}\supseteq
W_\alpha^{\pm}k_\alpha $  and the fact that $k_\alpha$ is diagonalized by
the unitary map  $F_\alpha:L^2(\mathbf X_a)\to L^2(I^\alpha ;L^2(C_a))$ given by
\begin{align}\label{eq:Four}
  \begin{split}
  (F_\alpha u)(\lambda,\omega)&=(2\pi)^{-n_a/2}2^{-1/2}
  \lambda_\alpha^{(n_a-2)/4}\int \e^{-\i  \lambda^{1/2}_\alpha \omega\cdot
  x_a}u(x_a)\,\d x_a,\\ &\quad \quad \text{ where }n_a=\dim \mathbf X_a\mand \lambda_\alpha=\lambda-\lambda^\alpha,  
  \end{split}
\end{align} we can write
\begin{align*}
  \hat
  S_{\beta\alpha}:=F_\beta(W_\beta^+)^*W_\alpha^-F_\alpha^{-1}=\int^\oplus_{
  I_{\beta\alpha} }
  S_{\beta\alpha}(\lambda)\,\d \lambda\quad
 \text{with }I_{\beta\alpha}=I^\beta\cap I^\alpha.
\end{align*} The fiber operator $ S_{\beta\alpha}(\lambda)\in
\vL(L^2(C_a),L^2(C_b))$ is a priori defined only for
a.e. $\lambda\in  I_{\beta\alpha}$. It is the
\emph{$\beta\alpha$-entry of the  scattering matrix}.

Similarly  the restrictions of the maps $F_\alpha (W^\pm_\alpha)^*$
have  
strong almost everywhere interpretations, meaning more precisely
\begin{align*}
  F_\alpha (W^\pm_\alpha  )^*f= \int^\oplus_{
  I^{\alpha} }\parb{F_\alpha (W^\pm_\alpha)^*f}(\lambda)\,\d
  \lambda;\quad f\in \vH.
\end{align*} When applied to $f\in \vB (\mathbf X)\subseteq
\vH $ the following assertion holds. 

Let $\vR_a$ denote the
reflection operator on $ L^2(C_a)$, $(\vR_ag)(\omega)=g(-\omega)$, and let 
\begin{align*}
 c_\alpha^\pm(\lambda)=\e^{\pm \i
  \pi (n_a-3)/4}\pi ^{-1/2}\lambda_\alpha^{1/4}, \quad\lambda>\lambda^\alpha.
\end{align*} Let for $\lambda\in \vE_\alpha$ 
\begin{align}\label{eq:genFou}
  \begin{split}
   \Gamma_\alpha^+(\lambda)=c_\alpha^+(\lambda)&Q^+_{\lambda,\alpha}R(\lambda+ \i
                               0)\in \vL(\vB(\mathbf X),L^2(C_a)),\\
\Gamma_\alpha^-(\lambda)=c_\alpha^-(\lambda)\vR_a&Q^-_{\lambda,\alpha}R(\lambda-
\i 0)\in \vL(\vB(\mathbf X),L^2(C_a)).
 \end{split}
\end{align} The  operator  $ \Gamma_\alpha^+(\lambda)^*$ is  the
\emph{$\alpha$-entry of the  future wave matrix}, while  $ \Gamma_\alpha^-(\lambda)^*$ is  the
\emph{$\alpha$-entry of the  past wave matrix}. This language is
motivated by the following result.

\begin{thm}\label{thm:wave_matrices} For any channel
  $(a,\lambda^\alpha, u^\alpha)$ obeying  Condition
  \ref{cond:decayBNstates}  and  any $f\in \vB (\mathbf X)\subseteq
\vH$ 
\begin{align}\label{eq:adjFORM}
  \parb{F_\alpha
  (W^\pm_\alpha)^*f}(\lambda)=\Gamma_\alpha^\pm(\lambda)f\quad\text{for a.e.
  } \lambda\in \vE^\alpha.
\end{align} In particular  for any  $f\in \vB (\mathbf X)$ the restrictions $\parb{F_\alpha (W^\pm_\alpha)^*f}(\cdot)\in
L^2(C_a)$ are
weakly continuous in $\vE^\alpha$.  
\end{thm}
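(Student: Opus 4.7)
The plan is to identify the \emph{a priori} merely a.e.-defined spectral fiber $(F_\alpha(W^\pm_\alpha)^*f)(\lambda)$ with the concrete operator $\Gamma_\alpha^\pm(\lambda)f$ from \eqref{eq:genFou}, whence the weak continuity reads off from the continuity properties of the Poisson operators $P^\pm_{\lambda,\alpha}$ recorded in Lemma \ref{lemma:poiss-oper-geom}. I would first reduce to $f\in\vB(\mathbf X)$ with $\mathbf 1_B(H)f=f$ for some compact $B\subset\vE^\alpha$, such states being dense in a suitable sense; the general case then follows from the local uniform bound on $\Gamma_\alpha^\pm(\cdot)$ recorded in \eqref{eq:besBaGood}.

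For such $f$ and for test data $\tilde h\in C_\c^\infty(\vE^\alpha)$ and $g\in C_\c^\infty(C_a')$, the claimed fiber identity rearranges, via \eqref{eq:genFou} and \eqref{eq:1Form}, into the stationary representation
\begin{align*}
  \inp{W^\pm_\alpha F_\alpha^{-1}(\tilde h\otimes g),f}=\pm\tfrac{1}{2\i}\int_{\vE^\alpha}\overline{\tilde h(\lambda)}\,c_\alpha^\pm(\lambda)\,\lambda_\alpha^{-1/2}\inp{P^\pm_{\lambda,\alpha}[g_\pm],f}\,\d\lambda,
\end{align*}
with $g_+=g$ and $g_-=\vR_ag$. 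To prove this I would apply Cook's method to the time-dependent formula \eqref{eq:wave_op} with $\phi=F_\alpha^{-1}(\tilde h\otimes g)$, whose momentum-space support is compact in $\mathbf X_a\setminus\bigcup_{b\not\le a}\mathbf X_b$. The resulting Cook integrand $\e^{\i sH}\bigl(HJ_\alpha-J_\alpha(\partial_sS_a^\pm(p_a,s)+\lambda^\alpha)\bigr)\e^{-\i(S_a^\pm(p_a,s)+\lambda^\alpha s)}\phi$ simplifies via the Hamilton-Jacobi identity $\partial_sS_a^\pm=(\partial_\xi S_a^\pm)^2+\breve I_a(\partial_\xi S_a^\pm)$ to the sum of a piece $(I_a-\breve I_a(\partial_\xi S_a^\pm(p_a,s)))J_\alpha$ and a piece $V^a\otimes I$ acting on the modified wave packet. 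Stationary-phase analysis of $\e^{-\i(S_a^\pm(p_a,s)+\lambda^\alpha s)}\phi$, using the asymptotics $\partial_\xi S_a^\pm(\xi,s)\approx 2s\xi$ from item \ref{item:2ss}, then identifies the $s\to\pm\infty$ profile (up to the normalizing Jacobian $c_\alpha^\pm(\lambda)$) with the quasi-modes $J_\alpha v^\pm_{\lambda,\alpha}[g_\pm]$ smeared against $\tilde h(\lambda)\lambda_\alpha^{-1/2}$. Finally, inserting the Sommerfeld identity from Lemma \ref{lemma:Sommerfeld}\ref{item:1s} and the definition of $P^\pm_{\lambda,\alpha}$ recombines everything into the Poisson representation displayed above.

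The main obstacle is the combined stationary-phase / Cook estimate securing $L^1(\d s)$-integrability of the error terms. The delicate contributions are those of $V^a\otimes I$ acting on $u^\alpha$, which are controlled by the decay \eqref{eq:decay} propagated through the expansions \eqref{eq:Ta1}--\eqref{eq:Ta3}, and of $(I_a-\breve I_a)$ acting on the wave packet, controlled by \eqref{eq:brevePotential} and the off-collision-plane momentum localization of $\phi$. Once \eqref{eq:adjFORM} is established, the weak continuity of $(F_\alpha(W^\pm_\alpha)^*f)(\cdot)$ in $L^2(C_a)$ is automatic: testing against a dense family $g\in C_\c^\infty(C_a')$,
\begin{align*}
  \inp{g,\Gamma_\alpha^\pm(\lambda)f}=\pm\tfrac{1}{2\i}\,c_\alpha^\pm(\lambda)\,\lambda_\alpha^{-1/2}\,\inp{P^\pm_{\lambda,\alpha}[g_\pm],f}
\end{align*}
is continuous in $\lambda\in\vE^\alpha$ by the strong continuity of $\lambda\mapsto P^\pm_{\lambda,\alpha}$ in $\vL(L^2(C_a),L^2_{-s}(\mathbf X))$ from Lemma \ref{lemma:poiss-oper-geom} (and the inclusion $f\in\vB\subset L^2_s$ for some $s>1/2$); combined with the local uniform boundedness from \eqref{eq:besBaGood}, this yields the asserted weak continuity.
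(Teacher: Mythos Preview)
Your overall strategy—reduce to the identity \eqref{eq:ForINT}, connect the time-dependent wave packet to the quasi-modes $J_\alpha v^\pm_{\lambda,\alpha}[g]$ via stationary phase, and then invoke the Sommerfeld/Poisson machinery—matches the paper's. But the Cook-method framing is the wrong vehicle and leaves a gap.

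Two computational slips first. The Hamilton--Jacobi equation in the paper reads $\partial_t S_a(\xi,t)=\xi^2+\breve I_a(\partial_\xi S_a(\xi,t))$, with $\xi^2$, not $(\partial_\xi S_a)^2$. And once you use $(H^a-\lambda^\alpha)u^\alpha=0$, the factor $HJ_\alpha-J_\alpha(\partial_s S_a^\pm+\lambda^\alpha)$ collapses exactly to $(I_a-\breve I_a(\partial_\xi S_a^\pm(p_a,s)))J_\alpha$; there is no ``$V^a\otimes I$'' piece. Also, at the end you invoke ``$f\in\vB\subset L^2_s$ for some $s>1/2$'', which is false (the inclusion goes the other way, cf.\ \eqref{eq:nest}); you must instead use the strong weak$^*$-continuity of $\lambda\mapsto P^\pm_{\lambda,\alpha}$ in $\vL(L^2(C_a),\vB^*)$ recorded in Lemma~\ref{lemma:poiss-oper-geom}.

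The substantive gap is the passage from a time-integral to the target energy-integral. Cook's method produces a representation of $W^\pm_\alpha\phi$ as an integral in $s$, while \eqref{eq:ForINT} is an integral in $\lambda$. Your stationary-phase step only identifies the \emph{profile} of $J_\alpha\e^{-\i(S_a^\pm(p_a,s)+\lambda^\alpha s)}\phi$ as $s\to\pm\infty$ with $\int(\cdots)\e^{-\i\lambda s}J_\alpha v^\pm_{\lambda,\alpha}[g_\pm]\,\d\lambda$; it does not by itself evaluate $\lim_{s\to\pm\infty}\inp{\e^{-\i sH}f,\,\cdot\,}$. Splitting $J_\alpha v^+_{\lambda,\alpha}=P^+_{\lambda,\alpha}+R(\lambda-\i 0)(H-\lambda)J_\alpha v^+_{\lambda,\alpha}$ via Lemma~\ref{lemma:Sommerfeld}\ref{item:1s} disposes of the first piece (it is $H$-invariant at energy $\lambda$), but you offer no mechanism for killing the second piece in the limit, and that is where all the analysis sits. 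The paper does not use Cook here; it follows the scheme of \cite[Lemma~3.8]{IS3} with the stationary-phase input \cite[Lemma~6.4]{II}: one regularizes by a parameter $\epsilon>0$ (an Abel limit), inserts a spatial cutoff $\chi_m=\chi_-(|x|/2^m)$, commutes $H$ through $\chi_m$ to generate the source term $\chi_m(H-\lambda)J_\alpha v^+_{\lambda,\alpha}[g(\lambda,\cdot)]\in\vB(\mathbf X)$ (cf.\ \eqref{eq:besFull}), and then takes $\epsilon\to 0$ followed by $m\to\infty$, concluding via \eqref{eq:small0}. The cutoff $\chi_m$ is precisely what legitimizes the interchange of limits; your sketch lacks any substitute for it.
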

\begin{proof} Due to Lemma \ref{lemma:poiss-oper-geom} it suffices to
  check \eqref{eq:adjFORM}, in fact only the formulas
\begin{align}\label{eq:ForINT}
  \inp{f,W^\pm_\alpha F_\alpha^*g}=\int
  \inp[\big]{\Gamma_\alpha^\pm(\lambda)f,g(\lambda, \cdot)} \,\d
  \lambda;\quad g\in C^\infty_\c(\vE^\alpha\times C_a').
\end{align} Note that the latter space is dense in $ L^2(I^\alpha ,
L^2(C_a))$. 
We can mimic  the  proof of \cite[Lemma 3.8]{IS3} using as input 
  \cite[Lemma 6.4]{II}, considering as in \cite{IS3} only  the plus
  case and (without loss of generality) only   $f\in
  L^2_1(\mathbf X)$. 
With reference  to the   proof of
  \cite[Lemma 3.8]{IS3}   we  first commute $H$ and a 
factor $\chi_m$ appearing there, here taken as
$\chi_m=\chi_-(\abs{x}/2^m)$,  and as a result  then obtain for
any $g$ as in \eqref{eq:ForINT}  the expression
$\chi_m(H-\lambda) J_\alpha v^{+}_{\lambda,\alpha}
[g(\lambda,\cdot)]$. Noting  that $(H-\lambda) J_\alpha v^{+}_{\lambda,\alpha}
[g(\lambda,\cdot)] \in\vB(\mathbf X)$, cf.  \eqref{eq:besFull},  we can take
 the parameter $\epsilon>0$  appearing in the   proof of
  \cite[Lemma 3.8]{IS3} to $0$  and $m\to \infty$ in the written
  order (parallel to \cite{IS3}). By using the resulting formula
  in combination with \eqref{eq:small0} we conclude  \eqref{eq:ForINT} (in the plus case).
\end{proof}

Under the condition of asymptotic completeness and with
 Condition
  \ref{cond:decayBNstates}  fulfilled for {all}  channels (or rather for
all  open channels) we can use Theorem \ref{thm:wave_matrices} and
write, arguing here  formally,
\begin{align*}
   \Gamma_\beta^+(\lambda)=\sum_{\alpha}\, S_{\beta\alpha}(\lambda)\Gamma_\alpha^-(\lambda).
 \end{align*} Applying this formula to $f=(H-\lambda) J_\alpha
 v^{-}_{\lambda,\alpha} [g]$  leads with Lemma \ref{lemma:Sommerfeld}
  and \eqref{eq:geoS}  to the identification of
 $\Sigma_{\beta\alpha}(\lambda)$ and $S_{\beta\alpha}(\lambda)$. 
 However we will do  the identification (stated precisely below) under
 weaker conditions.

\begin{lemma}\label{lemma:weakconV} Let $g\in
  C^\infty_\c(\vE^\alpha\times C_a')$, where 
  $\alpha=(a,\lambda^\alpha, u^\alpha)$ is  any channel obeying 
   Condition \ref{cond:decayBNstates}. Then the  map
   \begin{align*}
     \R\ni
                              \lambda\to f_{\lambda,g}:=(H-\lambda) J_\alpha v^{+}_{\lambda,\alpha}
                              [g(\lambda,\cdot)]
   \end{align*}
  is a
continuous $L^2_s(\mathbf X)$-valued function for some $s>1/2$
(depending on $\alpha$ and $\mu$), and 
  \begin{align}\label{eq:intRE}
   \wvHlim_{\epsilon \to 0_+}
    \int  \parb{R(\lambda+ \i \epsilon)-R(\lambda- \i
    \epsilon)}f^+_{\lambda,g}\,\d \lambda= \int P^+_{\lambda,\alpha} [g(\lambda,\cdot)]\,\d \lambda.
  \end{align} (Here the integrand  on the right-hand side is  an 
  $L^2_{-t}(\mathbf X)$-valued function  for any  $t>1/2$.)
\end{lemma}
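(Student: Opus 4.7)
The plan is to verify $L^2_s$-continuity of $\lambda\mapsto f_{\lambda,g}$ first, then apply the Sommerfeld identity $\check P^+_{\lambda,\alpha}[g]=0$ from Lemma \ref{lemma:Sommerfeld}\ref{item:1s} together with the LAP \eqref{eq:LAPbnda} to obtain $L^2_{-s}$-convergence of the integrals, and finally upgrade this to the claimed weak-$\vH$ limit.

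For the continuity, I split as in the proof of Lemma \ref{lemma:radi-limits-chann00}, writing
$f_{\lambda,g}=J_\alpha(\breve h_a-\lambda_\alpha)v^+_{\lambda,\alpha}[g(\lambda,\cdot)]+(I_a-\breve I_a)J_\alpha v^+_{\lambda,\alpha}[g(\lambda,\cdot)]$.
The first term lies in $L^2_s(\mathbf X)$ for any $s\in(\tfrac12,\tfrac12+\mu)$ by \eqref{eq:basiC0}, the second in $L^2_s$ for some $s>1/2$ by \eqref{eq:I-I}. Since the phase $K_a(\cdot,\lambda_\alpha)$ and all its $\lambda_\alpha$-derivatives satisfy locally uniform bounds in $\lambda_\alpha>0$, and $g\in C^\infty_\c(\vE^\alpha\times C_a')$, dominated convergence yields continuity of $\lambda\mapsto f_{\lambda,g}$ as an $L^2_s$-valued function. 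Next, Lemma \ref{lemma:Sommerfeld}\ref{item:1s} gives $R(\lambda+\i 0)f_{\lambda,g}=J_\alpha v^+_{\lambda,\alpha}[g(\lambda,\cdot)]$, whence the definition of the Poisson operator $P^+_{\lambda,\alpha}$ produces the pointwise identity
$P^+_{\lambda,\alpha}[g(\lambda,\cdot)]=(R(\lambda+\i 0)-R(\lambda-\i 0))f_{\lambda,g}$ in $L^2_{-s}$. By \eqref{eq:LAPbnda} the resolvent boundary values are attained in the norm of $\vL(L^2_s,L^2_{-s})$ locally uniformly in $\lambda\in\vE^\alpha$, so combined with the compact $\lambda$-support and the $L^2_s$-continuity of $f_{\lambda,g}$ this gives Bochner convergence
$\int(R(\lambda+\i\epsilon)-R(\lambda-\i\epsilon))f_{\lambda,g}\,\d\lambda\to\int P^+_{\lambda,\alpha}[g(\lambda,\cdot)]\,\d\lambda$ in $L^2_{-s}$.

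To upgrade this to a weak-$\vH$ limit, I use that the right-hand side lies in $\vH$: by Theorem \ref{thm:wave_matrices}, more precisely by \eqref{eq:ForINT} combined with \eqref{eq:1Form} and \eqref{eq:genFou}, the integral $\int P^+_{\lambda,\alpha}[g(\lambda,\cdot)]\,\d\lambda$ equals a constant times $W^+_\alpha F_\alpha^*\tilde g\in\vH$, where $\tilde g$ is obtained from $g$ by a smooth nonvanishing $\lambda$-dependent rescaling that preserves membership in $C^\infty_\c(\vE^\alpha\times C_a')$. For $\epsilon>0$ the pre-limit $u_\epsilon=\int(R(\lambda+\i\epsilon)-R(\lambda-\i\epsilon))f_{\lambda,g}\,\d\lambda$ belongs to $\vH$, and the key technical step is a uniform bound $\|u_\epsilon\|_{\vH}\le C$. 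This can be obtained by writing $u_\epsilon=2\pi\i\int\delta_\epsilon(H-\lambda)f_{\lambda,g}\,\d\lambda$ via functional calculus and exploiting the direct-integral decomposition of $H$ on a neighborhood of $\mathrm{supp}_\lambda\,g$, where the spectral density of $f_{\lambda,g}$ is controlled by LAP. Combined with the $L^2_{-s}$-convergence above and density of $L^2_s$ in $\vH$, this uniform bound yields the asserted weak-$\vH$ convergence. The main obstacle is precisely this uniform $\vH$-bound: LAP by itself only gives $L^2_{-s}$-control of each $u_\epsilon$, so either a refined spectral-theoretic argument or the identification with wave packets via $W^+_\alpha$ is needed to upgrade to $\vH$.
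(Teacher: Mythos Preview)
Your overall structure matches the paper's: continuity of $\lambda\mapsto f_{\lambda,g}$ in $L^2_s$, the pointwise identity $P^+_{\lambda,\alpha}[g(\lambda,\cdot)]=(R(\lambda+\i 0)-R(\lambda-\i 0))f_{\lambda,g}$ via Lemma~\ref{lemma:Sommerfeld}\ref{item:1s}, convergence against test vectors in $L^2_t$ (equivalently your $L^2_{-s}$ Bochner convergence), and then the upgrade to a weak-$\vH$ limit via a uniform $\vH$-bound plus density.

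The gap is precisely where you flag it: the uniform bound $\sup_{\epsilon\in(0,1)}\|u_\epsilon\|_{\vH}<\infty$. Your two proposed routes do not close it. Identifying the right-hand side with $W^+_\alpha F_\alpha^*\tilde g\in\vH$ only tells you the limit lies in $\vH$; it says nothing about $u_\epsilon$ for $\epsilon>0$. And ``direct-integral decomposition of $H$'' is too vague: the difficulty is that $f_{\lambda,g}$ depends on $\lambda$, so the spectral identity $\int\delta_\epsilon(H-\lambda)\,\d\lambda=I$ cannot be applied directly to it.

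The paper's device is short and worth knowing. Write $\delta_\epsilon(\lambda)=(2\pi\i)^{-1}\bigl(R(\lambda+\i\epsilon)-R(\lambda-\i\epsilon)\bigr)\ge 0$, so for any $\varphi\in\vH$,
\[
|\langle\varphi,u_\epsilon\rangle|
\le 2\pi\int\langle\varphi,\delta_\epsilon(\lambda)\varphi\rangle^{1/2}
\langle f_{\lambda,g},\delta_\epsilon(\lambda)f_{\lambda,g}\rangle^{1/2}\,\d\lambda
\le 2\pi\,\|\varphi\|\Bigl(\int\langle f_{\lambda,g},\delta_\epsilon(\lambda)f_{\lambda,g}\rangle\,\d\lambda\Bigr)^{1/2},
\]
using first Cauchy--Schwarz for the positive form $\delta_\epsilon(\lambda)$, then Cauchy--Schwarz in $\lambda$ together with $\int\langle\varphi,\delta_\epsilon(\lambda)\varphi\rangle\,\d\lambda=\|\varphi\|^2$. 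The remaining integral is over the compact $\lambda$-support of $g$, and there \eqref{eq:LAPbnda} bounds $\langle f_{\lambda,g},\delta_\epsilon(\lambda)f_{\lambda,g}\rangle$ uniformly in $\epsilon$ by $C\|f_{\lambda,g}\|_{L^2_s}^2$, which is bounded by Step~I. This gives $\|u_\epsilon\|\le C'$ and, combined with your $L^2_{-s}$ convergence and the density of $L^2_s\subset\vH$, the weak-$\vH$ limit via Riesz. No appeal to $W^+_\alpha$ is needed.
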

\begin{proof} \Step{I} Thanks to \eqref{eq:small} and the known regularity of
  the function $K_a$ the map $\lambda\to f_{\lambda,g}$ is 
  checked to be a 
continuous $L^2_s(\mathbf X)$-valued function for some $s>1/2$. The
assertion  appears already in  the proof
of Lemma \ref{lemma:poiss-oper-geom}.
\Step{II}  It remains to prove
\eqref{eq:intRE}. 
Write $\delta_\epsilon(\lambda)=(2\pi\i)^{-1}\parb{R(\lambda+ \i \epsilon)-R(\lambda- \i
  \epsilon)}\geq 0$ and note the familiar
\begin{align*}
  \int^{\infty}_{-\infty}  \inp{\varphi,\delta_\epsilon(\lambda)\varphi}\,\d \lambda=
   \norm{\varphi}^2;\quad \phi\in \vH.
\end{align*} By the support properties of $g$, \eqref{eq:LAPbnda}, Step I and  \caS, it  follows
that the $\vH$-valued integrals to the left  in
\eqref{eq:intRE} are
uniformly bounded in $\epsilon\in (0,1)$.
\Step{III}  Taking any $f\in L^2_{t}(\mathbf X)$, $t>1/2$,  we compute, using in the second  step  Lemma
\ref{lemma:Sommerfeld} \ref{item:1s},
\begin{align*}
  &\lim_{\epsilon\to 0_+}\int \inp[\big]{f,\parb{R(\lambda+ \i \epsilon)-R(\lambda- \i \epsilon)}f^+_{\lambda,g}}\,\d \lambda\\
&=\int \inp[\big]{f,\parb{R(\lambda+ \i 0)-R(\lambda- \i 0)}f^+_{\lambda,g}}\,\d \lambda\\
&=\int \inp[\big]{f,J_\alpha v^{+}_{\lambda,\alpha}
                              [g(\lambda,\cdot)]-R(\lambda- \i 0)f^+_{\lambda,g}}\,\d \lambda\\
&=\int \inp[\big]{f,P^+_{\lambda,\alpha}[g(\lambda,\cdot)]}\,\d \lambda.
\end{align*} We conclude \eqref{eq:intRE} by this computation, Step II
and the  Riesz theorem.
\end{proof}

\begin{thm}\label{thm:scat_matrices} 
Let any two channels  $\alpha=(a,\lambda^\alpha, u^\alpha)$ and  $\beta=(b,\lambda^\beta, u^\beta)$ both fulfilling  Condition \ref{cond:decayBNstates} be
given (allowing the case  $\alpha=\beta$). Then
\begin{align}\label{eq:link}
  S_{\beta\alpha}(\lambda)=\e^{ \i
  \pi
  (n_a+n_b-2)/4}\lambda_\alpha^{1/4}\lambda_\beta^{-1/4}\Sigma_{\beta\alpha}(\lambda)\vR_a;\quad 
  \lambda\in 
  \vE^\alpha\cap\vE^\beta.
\end{align} In particular the map $\vE^\alpha\cap\vE^\beta\ni
\lambda\to S_{\beta\alpha}(\lambda)\in\vL \parb{L^2(C_a),L^2(C_b)}$ is
weakly continuous.
\end{thm}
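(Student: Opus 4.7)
My plan is to verify \eqref{eq:link} fiber-wise by testing $\hat S_{\beta\alpha}$ against the concrete $\vB$-class probe $f_{\lambda,g} := (H-\lambda) J_\alpha v^-_{\lambda,\alpha}[g]$, $g\in C_\c^\infty(C_a')$, and then invoking Theorem \ref{thm:wave_matrices} in both its $\alpha$ and $\beta$ versions. By the minus analog of \eqref{eq:besFull} (proved exactly as in Lemma \ref{lemma:radi-limits-chann00}), $f_{\lambda,g}$ lies in $L^2_s(\mathbf X)\subset \vB(\mathbf X)$ for some $s>1/2$, with a continuous dependence on $\lambda$; this makes it the natural single-energy probe for incoming $\alpha$-channel data, and coupling the two evaluations $\Gamma_\alpha^-(\lambda) f_{\lambda,g}$ and $\Gamma_\beta^+(\lambda) f_{\lambda,g}$ through $\hat S_{\beta\alpha}=F_\beta(W_\beta^+)^* W_\alpha^- F_\alpha^{-1}$ will, after Lebesgue differentiation in $\lambda$, yield \eqref{eq:link}.

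The two key computations go as follows. For the incoming side, Lemma \ref{lemma:Sommerfeld}\ref{item:1s} gives $R(\lambda-\i 0) f_{\lambda,g} = J_\alpha v^-_{\lambda,\alpha}[g]$, and a direct evaluation of the radial limit, using $|v^-_{\lambda,\alpha}[g]|^2 = \chi_+^2 |x_a|^{1-n_a}|g(\hat x_a)|^2$ together with Fubini's theorem, produces $Q^-_{\lambda,\alpha}(J_\alpha v^-_{\lambda,\alpha}[g]) = g$, whence $\Gamma_\alpha^-(\lambda) f_{\lambda,g} = c_\alpha^-(\lambda)\vR_a g$. For the outgoing side, $R(\lambda+\i 0) f_{\lambda,g} = J_\alpha v^-_{\lambda,\alpha}[g] - P^-_{\lambda,\alpha}[g]$; the first term has vanishing $Q^+_{\lambda,\beta}$-limit (by channel orthogonality $\inp{u^\beta,u^\alpha}=0$ when $a=b$ but $\alpha\neq \beta$, by the rapidly oscillatory radial integral $\rho^{-1}\int \chi_+^2 \e^{\mp 2\i K_a}\,\d r\to 0$ when $\alpha=\beta$, and by an analogous non-stationary phase argument when $a\neq b$), while by \eqref{eq:geoS0} the second term gives $\Sigma_{\beta\alpha}(\lambda) g$. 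Together these yield $\Gamma_\beta^+(\lambda)f_{\lambda,g} = -c_\beta^+(\lambda)\Sigma_{\beta\alpha}(\lambda) g$.

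To assemble these evaluations into \eqref{eq:link}, I would apply \eqref{eq:ForINT} for both $W_\alpha^-$ and $W_\beta^+$ to a $\lambda$-smeared test family $f = \int \phi(\lambda) f_{\lambda,g(\lambda,\cdot)}\,\d\lambda \in \vB$ and match both sides of $\inp{h,\hat S_{\beta\alpha}g'}_{L^2(I^\beta;L^2(C_b))} = \inp{W_\beta^+ F_\beta^{-1} h, W_\alpha^- F_\alpha^{-1} g'}_\vH$ against test functions $g'\in C_\c^\infty(\vE^\alpha\times C_a')$ and $h\in C_\c^\infty(\vE^\beta\times C_b')$. Lebesgue differentiation in $\lambda$ then extracts the pointwise identity $c_\alpha^-(\lambda) S_{\beta\alpha}(\lambda)\vR_a g = -c_\beta^+(\lambda)\Sigma_{\beta\alpha}(\lambda) g$, i.e.\ $S_{\beta\alpha}(\lambda) = -\bigl(c_\beta^+/c_\alpha^-\bigr)(\lambda)\Sigma_{\beta\alpha}(\lambda)\vR_a$; a direct scalar computation of $-c_\beta^+/c_\alpha^-$ reproduces the claimed prefactor $\e^{\i\pi(n_a+n_b-2)/4}\lambda_\alpha^{1/4}\lambda_\beta^{-1/4}$. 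Weak continuity of $\lambda\mapsto S_{\beta\alpha}(\lambda)$ on $\vE^\alpha\cap\vE^\beta$ is then immediate from Lemma \ref{lemma:geomS} and the smoothness of the prefactor.

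The main obstacle is the extension of \eqref{eq:ForINT} to the $\lambda$-dependent $\vB$-valued family $f_{\lambda,g(\lambda,\cdot)}$ and the justification of interchanging the two $\lambda$-integrations with the wave-operator inner product. The technical engine for this is Step I of Lemma \ref{lemma:weakconV} (and its minus analog), which ensures $\lambda\mapsto f_{\lambda,g(\lambda,\cdot)}$ is continuous as an $L^2_s(\mathbf X)$-valued function, combined with the strong continuity of the Poisson operators from Lemma \ref{lemma:poiss-oper-geom}. These two inputs make all integrands continuous $L^2_s$-valued, allowing the doubled integration to be carried out in the $\vB^*\times \vB$ duality without issue and the Lebesgue-differentiation step to be straightforward.
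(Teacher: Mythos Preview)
Your two single-energy evaluations are correct --- $\Gamma_\alpha^-(\lambda)f_{\lambda,g} = c_\alpha^-(\lambda)\vR_a g$ is Lemma~\ref{lemma:Sommerfeld}~\ref{item:3s} in disguise, and $\Gamma_\beta^+(\lambda)f_{\lambda,g} = -c_\beta^+(\lambda)\Sigma_{\beta\alpha}(\lambda)g$ is \eqref{eq:geoS} --- but the assembly step does not go through. Knowing $\Gamma_\alpha^-(\lambda)f$ and $\Gamma_\beta^+(\lambda)f$ for a single probe $f$ does not determine $S_{\beta\alpha}(\lambda)$: there is no identity $S_{\beta\alpha}(\lambda)\Gamma_\alpha^-(\lambda)f = \Gamma_\beta^+(\lambda)f$ valid for an arbitrary $f\in\vB$. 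What one has is $(W_\beta^+)^* f = \sum_{\alpha'} S_{\beta\alpha'}(W_{\alpha'}^-)^* f$ on the range of $\sum_{\alpha'} W_{\alpha'}^-(W_{\alpha'}^-)^*$, and collapsing this to a single $\alpha$-term would require asymptotic completeness \emph{and} Condition~\ref{cond:decayBNstates} for every open channel $\alpha'$ (so that one could even define $\Gamma_{\alpha'}^-(\lambda)$ and argue it annihilates $f_{\lambda,g}$). The paper flags exactly this obstruction in the paragraph preceding Lemma~\ref{lemma:weakconV} and announces that it will ``do the identification under weaker conditions''. Your sentence about matching both sides of $\inp{h,\hat S_{\beta\alpha}g'} = \inp{W_\beta^+ F_\beta^{-1}h, W_\alpha^- F_\alpha^{-1}g'}$ is just the definition of $\hat S_{\beta\alpha}$ and never uses the probe $f$; the smeared family you introduce is not connected to either side of that identity.

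What the paper actually does is compute $\inp{W_\beta^+ F_\beta^{-1}g_\beta, W_\alpha^- F_\alpha^{-1}g_\alpha}$ directly, representing each factor (via \eqref{eq:1Form} and Theorem~\ref{thm:wave_matrices}) as an energy integral of Poisson operators $\int P^\pm_{\lambda,\cdot}[g_{\cdot}(\lambda,\cdot)]\,\d\lambda$. The device that replaces your missing step is Lemma~\ref{lemma:weakconV}: one rewrites $\int P^+_{\lambda,\beta}[g_\beta(\lambda,\cdot)]\,\d\lambda$ as the weak $\vH$-limit of $\int\bigl(R(\lambda+\i\epsilon)-R(\lambda-\i\epsilon)\bigr)f^+_{\lambda,g_\beta}\,\d\lambda$, pairs this against $\int P^-_{\lambda',\alpha}[g_\alpha(\lambda',\cdot)]\,\d\lambda'$, and uses the generalized-eigenfunction property $(H-\lambda')P^-_{\lambda',\alpha}=0$ to turn the resolvents into the explicit Poisson kernel $-2\i\epsilon/\bigl((\lambda'-\lambda)^2+\epsilon^2\bigr)$. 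Sending $\epsilon\to 0$ collapses the double integral to the diagonal and produces $\Sigma_{\beta\alpha}(\lambda)$ via \eqref{eq:small0}, with no reference to any channel other than $\alpha$ and $\beta$. That resolvent-to-Poisson-kernel collapse is the idea your sketch is missing.
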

\begin{proof} The second assertion
follows from \eqref{eq:link} and  Lemma  \ref{lemma:geomS}.
 To obtain  \eqref{eq:link} let $g_\alpha\in C^\infty_\c(\vE^\alpha\times C_a')$ and $g_\beta\in
  C^\infty_\c(\vE^\beta\times C_b')$ (taking $g_\alpha=g_\beta$ if $\alpha=\beta$). We compute using Lemma
  \ref{lemma:weakconV} and  \eqref{eq:small0}
  \begin{align*}
    &\inp*{\int P^+
_{\lambda,\beta} [g_\beta(\lambda,\cdot)]\,\d
    \lambda, \int P^-_{\lambda',\alpha} [g_\alpha(\lambda',\cdot)]\,\d
    \lambda'}\\
&=\lim_{\epsilon\to 0_+}\int \inp[\Big]{\parb{R(\lambda+ \i \epsilon)-R(\lambda- \i
  \epsilon)}f^+_{\lambda,g_\beta},      \int P^-
_{\lambda',\alpha} [g_\alpha(\lambda',\cdot)]\,\d
    \lambda'}\,\d \lambda\\
&=\lim_{\epsilon\to 0_+}\int \,\d
    \lambda\int \inp[\big]{   f^+_{\lambda,g_\beta},\tfrac{ -2\i\epsilon}{(\lambda'-\lambda)^2+\epsilon^2}P^-
_{\lambda',\alpha} [g_\alpha(\lambda',\cdot)]     }\,\d
  \lambda'\\
&=-2\pi\i \int \,\inp[\big]{ f^+_{\lambda,g_\beta}, P^-
_{\lambda,\alpha } [g_\alpha(\lambda,\cdot)]       }\,\d
  \lambda\\
&=4\pi\int \,\lambda_\alpha^{1/2} \inp[\big]{g_\beta(\lambda,\cdot),Q^+_{\lambda,\beta}\parb{P^-_{\lambda,\alpha}
  [g_\alpha(\lambda, \cdot)]}}\,\d \lambda\\
&=4\pi\int \,\lambda_\alpha^{1/2} \inp[\big]{g_\beta(\lambda,\cdot),\Sigma_{\beta\alpha}(\lambda)g_\alpha(\lambda, \cdot)}\,\d \lambda.
\end{align*} In the third step we used that  for any $t>1/2$ the map $\lambda'\to P^+
_{\lambda',\alpha} [g_\alpha(\lambda',\cdot)]\in L^2_{-t}(\mathbf X)$
is  continuous. The formula \eqref{eq:link} follows from this
computation, Lemma
\ref{lemma:poiss-oper-geom} and Theorem
\ref{thm:wave_matrices}. 
\end{proof}

\section{Wave and scattering matrices for the one-body
  problem}\label{sec:one-body  matrices}
In this section we collect various results from Section
\ref{sec:preliminaries} specialized to  the one-body case and add a few more
one-body results. Apart from this material and Subsection \ref{subsec:body Hamiltonians, limiting absorption
  principle  and notation} the other subsections of Section
\ref{sec:preliminaries} will not be used in the rest part of the paper.

For any $a\neq a_{\max}$ we consider with the one-body potential
$\brI_a=\brI_a(x_a)$ from 
\eqref{eq:brevePotential} (there with $\breve\mu$ replaced by $\mu$, for convenience) 
\begin{align}\label{eq:brevH}
  \brh_a=p^2_a+\brI_a,\quad\brH_a=H^a\otimes I+I\otimes \brh_a\,\mand \,H_a=H^a\otimes I+I\otimes p^2_a.
\end{align}
Using \eqref{eq:wave_op} with $H$ replaced by
$\brH_a$ we obtain 
\begin{align}\label{eq:wave_op2}
  \breve W_\alpha^{\pm}=\slim_{t\to \pm\infty}\e^{\i t\brH_a}J_\alpha\e^{-\i
  (S_a^\pm (p_a,t)+\lambda^\alpha t)}=J_\alpha \breve w_a^{\pm},
\end{align}  where 
\begin{align}\label{eq:wave_op3}
  \breve w_a^{\pm}=\slim_{t\to \pm\infty}\e^{\i t\brh_a}\e^{-\i
  S_a^\pm (p_a,t)}.
\end{align} Note that $\breve w_a^{\pm}$ is a one-body wave
operator and that the theory of  Section \ref{sec:preliminaries}  applies (in a
simplified form). 
 It is well-known (given the property \eqref{eq:brevePotential}) that the scattering operator $\breve
s_a=(\breve w_a^{+})^*\breve w_a^{-}$ is a unitary operator on
$L^2(\mathbf X_a)$. Whence in that case 
\begin{align}\label{eq:1-bodycont}
 (0,\infty)\ni
\lambda\to \brs_{a}(\lambda)\in\vL \parb{L^2(C_a)}\text{ is
  strongly continuous.}
\end{align} 

More importantly for this paper  we can introduce operators $
{\gamma}_a^\pm(\lambda)$ as in \eqref{eq:genFou}, using obviously
modified  notation
for the one-body case,
\begin{align}\label{eq:genFou2}
  \begin{split}
   \gamma_a^+(\lambda)=c_a^+(\lambda)&\brq^+_{\lambda,a}\brr_a(\lambda+ \i
                               0)\in \vL(\vB(\mathbf X_a),L^2(C_a)),\\
\gamma_a^-(\lambda)=c_a^-(\lambda)\vR_a&\brq^-_{\lambda,a}\brr_a(\lambda-
\i 0)\in \vL(\vB(\mathbf X_a),L^2(C_a)).
 \end{split}
\end{align} The operators $ \gamma_a^+(\lambda)^*$ and $
\gamma_a^-(\lambda)^*$, related to \eqref{eq:wave_op3} as in Theorem
\ref{thm:wave_matrices},  are   the
 \emph{future
 and past wave matrices} at $\lambda \in (0,\infty)$,
 respectively. They are strongly weak*-continuous as functions of $\lambda$. In
 the weaker topology of $\vL\parb{L^2(C_a),L^2_{-s}(\mathbf X_a)}$, 
 $s>1/2$, $ \gamma_a^\pm(\cdot)^*$ are strongly  continuous. The above 
 assertions follow from the one-body version of 
 Lemma \ref{lemma:poiss-oper-geom}.

We recall the familar formulas, see for example \cite[Lemma 3.5]{IS2}
(and its proof),
\begin{align}\label{eq:stone}
 {\gamma_a^\pm(\lambda)}^*
  \gamma_a^\pm(\lambda)=\delta\parb{\brh_a-\lambda}=(2\pi\i)^{-1}\parb{\brr_a(\lambda+ \i
                               0)-\brr_a(\lambda- \i
                               0)};\quad \lambda>0.
\end{align} In combination with the asserted weak*-continuity it
follows (as for  \cite[ Theorem 1.4]{IS2}) that  $ \gamma_a^\pm(\cdot)$ are strongly
continuous in $\vL\parb{\vB(\mathbf X_a),L^2(C_a)}$. We shall only use the latter property for the free case
 (i.e. with $\brI_a=0$, see Appendix \ref{sec:AppendixB}),  in which
 case we will use the notation
 $\gamma_{a,0}(\lambda)$ for $\gamma_a^\pm(\lambda)$ (the two operators
 corresponding to different  signs coincide in that case).  

\begin{subequations}
Introducing 
\begin{align}\label{eq:Ffield}
  G_a:=\vG_a\cdot
  p_a\mand \vG_a=\vG_a(x_a):=\chi_+(4\abs{x_a})\abs{x_a}^{-1/2}\parb{I-\abs{x_a}^{-2}\ket{ x_a}\bra{ x_a}},
\end{align} 
we have
\begin{align}\label{eq:perpOne1}
  G_a{\gamma_a^\pm(\lambda)}^* \in\vL\parb{L^2(C_a),L^2(\mathbf X_a)^{n_a}}\text{ with
  a locally uniform bound in }\lambda>0,
\end{align} and in fact
\begin{align}\label{eq:perpOne2}
   G_a{\gamma_a^\pm(\cdot)}^*\in
  \vL\parb{L^2(C_a),L^2(\mathbf X_a)^{n_a}}\text{ are
 strongly continuous}.
\end{align}  

As for \eqref{eq:perpOne1} it is thanks to \eqref{eq:stone} and `the
$TT^*$ argument' a consequence of  the  resolvent bound 
\cite[Theorem 3.9]{Ya1}.  Alternatively \eqref{eq:perpOne1}  can 
easily be derived by  `the
$T^*T$ argument' given in Step I of the proof of Lemma
\ref{lemma:strongCont}. As for
\eqref{eq:perpOne2} the result follows readily from  an application of \eqref{eq:perpOne1},
Lemma \ref{lemma:poiss-oper-geom}  and
\cite[Theorem 1.10]{IS1}. For comparison   we remark that
\eqref{eq:perpOne2}  for the free Laplacian appears in  \cite{Ya1} and 
the proof there applies 
a scaling argument.
\end{subequations}

\section{Stationary channel modifiers
   and a main result}\label{sec:Stationary modifier}
We assume from this point of the paper Condition
\ref{cond:smooth2wea3n12} with  $\mu\in (\sqrt 3-1,1)$. Asymptotic completeness is then a  known fact, 
\cite{De}. In particular the wave operators $ W_\alpha^{\pm} $ from
\eqref{eq:wave_op} exist. Henceforth Condition
\ref{cond:decayBNstates}  is not used.
 We will in the present  section explain our overall scheme for
 proving weak 
continuity of any entry of the scattering matrix. This  result is
stated as Theorem \ref{thm:mainstat-modif-n}. Most of the needed  notation for its
proof will be fixed in this section.

We shall consider a `sufficiently small' open interval $\Lambda$ containing any
arbitrarily fixed
$\lambda_0$  obeying
\begin{align*}
  \lambda_0\in \Lambda\subseteq \vE:=(\min \vT(H),\infty)\setminus \parb{\sigma_{\pp}(H)\cup \vT(H)}.
\end{align*} Only the scattering matrix in an open interval $ I_0\ni
\lambda_0$, $\bar I_0\subseteq \Lambda$, needs 
consideration (by partitioning of unity).

For  $f_1,f_2\in C^\infty_\c(\R)$ taking values in $[0,1]$ we write
$f_2\prec f_1$ if $f_1=1$ in a neighbourhood of $\supp f_2$.
We consider
real $f_1,f_2\in C^\infty_\c(\Lambda)$ with $f_2\prec f_1$  and such
that $f_2=1$ in $\bar I_0$. (The smallness of $\Lambda$ will depend on Mourre estimates at $\lambda_0$, see
Subsection \ref{subsec:Limiting absorption principles}.)

For a given channel $\alpha= (a,\lambda^\alpha, u^\alpha)$ let 
\begin{align*}
  \brJ^\pm_\alpha= N^a_\pm (I\otimes \breve w_a^{\pm})J_\alpha \mand
  \tilde{J}_\alpha^\pm =f_1(H)M_aN^a_\pm M_a(I\otimes \breve w_a^{\pm})J_\alpha,
\end{align*} where $\breve w_a^{\pm}$ are given by
\eqref{eq:wave_op3}, and  $N^a_\pm$ and $M_a$  are symmetric operators
specified below. The latter
operator $M_a$  is
given as in \cite[(5.1)]{Ya1} (with changed  notation $M^{(a)}\to M_a$, and
recalled in Section \ref{sec:Yafaev's construction}), in
particular $ M_{a}\in\vL(\vH^1,\vH)$, and
$N^a_\pm\in\vL(\vH)$ will be constructed such that
(recalling  $k_\alpha= p_a^2+\lambda^\alpha $)
\begin{subequations}
\begin{align}\label{eq:wave_opa}
  W_\alpha^{\pm} f_2(k_\alpha)=\slim_{t\to \pm\infty}\e^{\i tH}J_\alpha \breve w_a^{\pm}\e^{-\i
   t k_\alpha}f_2(k_\alpha)=\slim_{t\to \pm\infty}\e^{\i tH}\brJ^\pm_\alpha\e^{-\i
   tk_\alpha}f_2(k_\alpha),
\end{align} 
By the form of $\brJ^\pm_\alpha$ this  amounts to the property
\begin{align}\label{eq:lim00}
  \slim_{t\to \pm \infty}\parb{I-N^a_\pm }J_\alpha  \breve w_a^{\pm}\e^{-\i
   tk_\alpha}f_2(k_\alpha)=0.
\end{align}  
\end{subequations}

Recalling the operators in \eqref{eq:brevH} we introduce in addition
$\brh_\alpha=\brh_a+\lambda^\alpha$ and then (abbreviating $I\otimes
f_1(\brh_\alpha)=f_1(\brh_\alpha)$)
\begin{align*}
  \Phi_\alpha^\pm=f_1(H)M_a N^a_\pm M_a
    f_1(\brh_\alpha)f_1(\brH_a). 
\end{align*}  We  call  the operators  $\Phi_\alpha^\pm$
\emph{stationary channel  modifiers}. 
 Provided the two limits
\begin{align}\label{eq:wav2s}
  \Omega^\pm_\alpha=\slim_{t\to \pm\infty}\,\e^{\i tH}
  \Phi_\alpha^\pm  \e^{-\i t\brH_a}\text{ exist},
\end{align}  the limits 
\begin{subequations}
\begin{align}\label{eq:wave_opc}
  \begin{split}
   \slim_{t\to \pm\infty}\e^{\i tH}\tilde{J}_\alpha^\pm\e^{-\i
   tk_\alpha}f_2(k_\alpha)=\slim_{t\to \pm\infty}{\e^{\i tH} \Phi_\alpha^\pm
  \e^{-\i t\brH_a} J_\alpha \breve w_a^{\pm}}f_2(k_\alpha) \text{ exist},
  \end{split}
\end{align} and denoting them  by $\widetilde
W_\alpha^{\pm}
$, respectively, obviously 
\begin{align}\label{eq:formlW}
\widetilde W_\alpha^{\pm} = \Omega^\pm_\alpha J_\alpha \breve w_a^{\pm}f_2(k_\alpha).
\end{align}
  \end{subequations}

  For   $a\neq a_{\min}$  the operators $N^a_\pm$ are of
the form (one for  the upper sign plus  and one for the lower sign minus)
\begin{align}\label{eq:propgaObs}
  \begin{split}
  N^a_\pm&=A_1 A_2^aA_3^aA_2^aA_1;\\
A_1&=A_{1\pm}=\chi_+(\pm
  B/\epsilon_0),\\
A_2^a &=\chi_-\parb{
  r^{\rho_2-1}r_\delta^a},\\
A_3^a&=A^a_{3\pm}=\chi_-\parb{
  \pm r^{\rho_1/2}B_\delta^ar^{\rho_1/2}},
  \end{split}
\end{align} where  $\epsilon_0>0$ is sufficiently  small
 (determined by Mourre estimates at $\lambda_0$, see
 Subsection \ref{subsec:Limiting absorption principles}), 
\begin{align}\label{eq:parameters}
  1-\mu<\rho_2<\rho_1<1-\delta\text{ with } \delta\in [2/(2+\mu),\mu),
\end{align}
 and
$ r, B, r_\delta^a$ and $B_\delta^a$ are operators constructed by quantities from
\cite{De} ($r$ and $r_\delta^a$ are multiplication operators while
$B$ and $B_\delta^a$ are corresponding Graf vector field type constructions). 
For $a= a_{\min}$ we  take $
N^a_\pm=A_1^2=A_{1\pm}^2$.  

More precisely 
 $r  $ is  a function on $\mathbf X$, which apart from a trivial rescaling (to assure
 Mourre estimates for the  Graf
vector field $\nabla r^2/2$) is given by  the function $r$ from \cite{De}.  It partly plays the role  as a
`stationary time variable' compared to the usage of the real time
parameter in  \cite{De}. (It should not be mixed up with the function
$\abs{x}$.) The operator $B:
=2\Re(p\cdot\nabla r)$. Let $r^a$ be  the same
function now constructed on $\mathbf X^a$ rather than on $\mathbf X$. Let then $r_\delta^a=
r^\delta r^a(x^a/r^\delta)$ and 
$B_\delta^a=2\Re\parb{p^a\cdot(\nabla r^a)\parb{x^a/r^\delta}}$.
 We give further  elaboration of the above quantities in Section \ref{sec:Derezinski's
  construction}.

We  verify \eqref{eq:wav2s} and \eqref{eq:lim00} in  Subsection
\ref{subsec:Integrability} and Appendix \ref{sec:AppendixA}, respectively.  Moreover by
\eqref{eq:lim00}  and   stationary
phase analysis we may  rewrite \eqref{eq:wave_opc}  and
\eqref{eq:formlW}  as 
\begin{subequations}
\begin{align}\label{eq:wave_opc2}
  \widetilde W_\alpha^\pm=\slim_{t\to \pm\infty}{\e^{\i tH}  
  \e^{-\i t\brH_a} J_\alpha \breve
  w_a^{\pm}}f_2(k_\alpha)\parb{m_\alpha^\pm}^2= W_\alpha^\pm f_2(k_\alpha)\parb{m_\alpha^\pm}^2,
\end{align}
 where ${m_\alpha^\pm}$ (after conjugation by the Fourier
 transform $F_\alpha $, see \eqref{eq:Four})  act as multiplication operators on
 $L^2\parb{I^\alpha; L^2(C_a)}$ (with multiplicative
 fiber operators). More precisely 
 \begin{align}\label{eq:conjF}
   F_\alpha m_\alpha^\pm F^{-1}_\alpha =\int^\oplus_{I^\alpha}
   \pm 2\lambda_\alpha^{1/2 }m_a(\hat \xi_a) \, \d \lambda=\pm
   2\lambda_\alpha^{1/2 }m_a(\pm\hat \xi_a);\quad \hat \xi_a=\xi_a/\abs{\xi_a}\in C_a
 \end{align}  and  $m_a$ is the function used in \eqref{eq:M_a}
 specifying the operator $M_a$. The verification of
 \eqref{eq:wave_opc2} and \eqref{eq:conjF} is  given in Appendix \ref{sec:AppendixA}. 
\end{subequations}

We conclude  the following representation of any
 scattering matrix entry  $S_{\beta\alpha}(\lambda)$ (assuming 
 $\lambda^\alpha,\lambda^\beta<\lambda_0$),
 \begin{subequations}
 \begin{align}\label{eq:locScata}
   16\lambda_\beta\lambda_\alpha f^2_2(\lambda)m_b(\hat \xi_b)^2
   S_{\beta\alpha}(\lambda) m_a(-\hat \xi_a)^2=\widetilde S_{\beta\alpha}(\lambda),
 \end{align} where (recalling $I_{\beta\alpha}=I^\beta\cap
 I^\alpha=(\max\set{\lambda^\alpha,\lambda^\beta}, \infty)$)
\begin{align}\label{eq:locScatb}
  F_\beta\widetilde 
  S_{\beta\alpha}F_\alpha^{-1}=\int^\oplus_{
  I_{\beta\alpha} }
  \widetilde S_{\beta\alpha}(\lambda)\,\d \lambda\quad
  \text{with}\quad   \widetilde S_{\beta\alpha}=\parbb{\widetilde W_\beta^+}^*\widetilde W^-_\alpha.
\end{align}   
 \end{subequations}

We shall verify \eqref{eq:locScatb} for a weakly continuous
operator-valued function  $\widetilde
S_{\beta\alpha}(\cdot)$.  By \eqref{eq:locScata} the operator-valued function  
$S_{\beta\alpha}(\cdot)$ will then also be weakly
continuous  on $I_0\ni \lambda_0$ thanks to a freedom in 
choosing the functions $m_b$ and $m_a$. The reader may at this point
consult the beginning of Section \ref{sec:Exact channel wave-matrices}
where these functions are taken  explicitly  (of course conforming 
 with Section \ref{sec:Yafaev's construction}.)

Since $\lambda_0\in \vE$ is arbitrary a partition of unity allow us to
 conclude  a  main
result of the paper, stated as follows. Let $\vE_{\beta\alpha} :=I_{\beta\alpha}\setminus \parb{\sigma_{\pp}(H)\cup \vT(H)}$.
\begin{thm}\label{thm:mainstat-modif-n}
  For any  incoming  channel $\alpha$ and  any outgoing  channel $\beta$  the map
  \begin{align*}
\vE_{\beta\alpha} \ni
\lambda\to S_{\beta\alpha}(\lambda)\in\vL \parb{L^2(C_a),L^2(C_b)}
\end{align*} is a well-defined 
weakly continuous map. Within the class of such maps it is uniquely determined by the identity
\begin{align*}
  F_\beta 
  S_{\beta\alpha}F_\alpha^{-1}=F_\beta \parb{W_{\beta}^+}^*W_{\alpha}^-F_\alpha^{-1}=\int^\oplus_{\vE_{\beta\alpha} 
  }
   S_{\beta\alpha}(\lambda)\,\d \lambda.
\end{align*}  
\end{thm}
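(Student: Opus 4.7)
The plan is to reduce the assertion to a new explicit stationary representation of the fiber $S_{\beta\alpha}(\lambda)$ via the scheme already sketched in Section~\ref{sec:Stationary modifier} and then to verify three Kato-smoothness based continuity statements from which weak continuity of the representation follows automatically. First I would fix an arbitrary $\lambda_0\in\vE_{\beta\alpha}$, choose a small open interval $\Lambda\ni\lambda_0$ on which the Mourre estimates of Subsection~\ref{subsec:Limiting absorption principles} apply, pick $f_2\prec f_1\in C_\c^\infty(\Lambda)$ with $f_2=1$ near $\lambda_0$, and construct the stationary channel modifiers $\Phi_\alpha^\pm$ and $\Phi_\beta^\pm$ of \eqref{eq:wav2s} using the operators $M_a,N^a_\pm$ from Sections~\ref{sec:Yafaev's construction}--\ref{sec:Derezinski's construction}. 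A partition of unity argument in $\vE_{\beta\alpha}$ will then extend local weak continuity near each $\lambda_0$ to the whole set.

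Next I would establish the time-dependent identifications. Concretely, one verifies
\[
  \slim_{t\to\pm\infty}(I-N^a_\pm)J_\alpha\breve w_a^\pm\e^{-\i tk_\alpha}f_2(k_\alpha)=0,
\]
which by the structure of $N^a_\pm$ and a stationary phase analysis using property \ref{item:2ss} of $S_a$ yields \eqref{eq:wave_opa} and, after an $L^1$-in-$t$ Cook-type bound for $(H-\brH_a)\Phi_\alpha^\pm\e^{-\i t\brH_a}$, the existence of the limits $\Omega_\alpha^\pm$ in \eqref{eq:wav2s} and the identity \eqref{eq:wave_opc2} with the multipliers $m_\alpha^\pm$ of \eqref{eq:conjF}. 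A symmetric statement holds for $\beta$. This is the content of Appendix~\ref{sec:AppendixA} and reduces the theorem to exhibiting $\widetilde S_{\beta\alpha}$ as a direct integral \eqref{eq:locScatb} with weakly continuous fiber.

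The heart of the argument is then the stationary computation of $\widetilde S_{\beta\alpha}(\lambda)$. Writing $\widetilde W_\alpha^-=\Omega_\alpha^- J_\alpha\breve w_a^-f_2(k_\alpha)$ and expanding $(H-\lambda)\Phi_\alpha^-J_\alpha\gamma_a^-(\lambda-\lambda^\alpha)^*$ via functional calculus and the commutator computation of Sections~\ref{sec:Calculus considerations}--\ref{sec:Formula for widetilde} gives a finite sum
\[
  (H-\lambda)\Phi_\alpha^-J_\alpha\gamma_a^-(\lambda-\lambda^\alpha)^*\approx\sum_k Q^-(a,k)^*\bigl(B_-(a,k)Q^-(a,k)J_\alpha\gamma_a^-(\lambda-\lambda^\alpha)^*\bigr),
\]
with bounded $B_-(a,k)$ and the only $\lambda$-dependence inside the wave-matrix factor. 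Inserting the corresponding expression for $\beta$ on the left and using Stone's formula together with the Kato-smoothness of the $Q$-operators produces the representation
\[
  \widetilde S_{\beta\alpha}(\lambda)\approx\bigl((H-\lambda)\Phi_\beta^+J_\beta\gamma_b^+(\lambda-\lambda^\beta)^*\bigr)^*\delta(H-\lambda)\bigl((H-\lambda)\Phi_\alpha^-J_\alpha\gamma_a^-(\lambda-\lambda^\alpha)^*\bigr),
\]
as derived rigorously in Appendix~\ref{sec:AppendixB}. Weak continuity of $\lambda\mapsto\widetilde S_{\beta\alpha}(\lambda)$ then follows at once from the three ingredients \ref{item:1oo}--\ref{item:3oo}: strong continuity of $Q^{\pm}(\cdot,\cdot)J_{(\cdot)}\gamma_{(\cdot)}^{\pm}(\cdot)^*$ (Section~\ref{sec:Formula for widetilde}) combined with weak continuity of $Q^+(b,l)\delta(H-\lambda)Q^-(a,k)^*$ (Section~\ref{sec:Computation of T}), which in turn rests on the Besov-space limiting absorption principle \eqref{eq:LAPbnda}--\eqref{eq:BB^*a} of \cite{AIIS,GIS} and the Mourre bounds of Subsection~\ref{subsec:Limiting absorption principles}.

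The hardest step is expected to be the stationary identification in the previous paragraph: one must commute the modifiers $\Phi_\alpha^\pm$ given by smooth functional calculus through $H-\lambda$ and through $J_\alpha\gamma_a^\pm(\cdot)^*$ while tracking all error terms in appropriate weighted or Besov spaces, and verify that the resulting bounded operators $B_\pm(a,k)$ and auxiliary operators $Q^\pm(a,k)$ are genuinely Kato-smooth on the range of $\delta(H-\lambda)$ uniformly on $\bar I_0$. This is precisely the content of the technical Sections~\ref{sec:Calculus considerations}--\ref{sec:Formula for widetilde} and relies crucially on the constraint $\mu>\sqrt{3}-1$ through the choice of exponents \eqref{eq:parameters}. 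Once the representation and the three continuity properties are in place, weak continuity of $\widetilde S_{\beta\alpha}(\cdot)$, and hence of $S_{\beta\alpha}(\cdot)$ via \eqref{eq:locScata} and the freedom in choosing $m_a,m_b$, is immediate. Uniqueness within the class of weakly continuous fiber functions is standard: any two such representatives of the same direct-integral decomposition must agree pointwise by continuity since they coincide a.e.\ on $\vE_{\beta\alpha}$.
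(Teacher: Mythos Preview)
Your proposal is correct and follows essentially the same route as the paper: the local reduction via Mourre estimates and partition of unity, the construction of $\Phi_\alpha^\pm$ and $\Omega_\alpha^\pm$, the stationary representation \eqref{eq:SreP} derived in Appendix~\ref{sec:AppendixB}, and the verification of the three continuity ingredients \ref{item:1oo}--\ref{item:3oo} in Sections~\ref{sec:Computation of T}--\ref{sec:Formula for widetilde}. One small imprecision: the existence of $\Omega_\alpha^\pm$ is not obtained from an $L^1$-in-$t$ Cook bound on $(H-\brH_a)\Phi_\alpha^\pm$ but rather from expanding $T_\alpha^\pm=\i(H\Phi_\alpha^\pm-\Phi_\alpha^\pm\brH_a)$ into $Q$-terms and applying Cauchy--Schwarz together with the $L^2$-in-$t$ propagation estimates of Lemma~\ref{lemma:Hbnds} (see Lemma~\ref{lemma:exist_integr-tpm_-}).
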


Motivated by  \eqref{eq:wav2s} and \eqref{eq:formlW} it
is convenient to introduce the notation
\begin{align*}
  T^\pm_\alpha=\i \parb{H\Phi_\alpha^\pm  -\Phi_\alpha^\pm\brH_a}=\i f_1(H)\parb{H  M_a N^a_\pm M_a - M_a N^a_\pm M_a\brH_a}f_1(\brh_\alpha)f_1(\brH_a).
\end{align*} In Section \ref{sec:Computation of T} we shall compute it
in detail, using various preliminary calculus considerations from  Section \ref{sec:Calculus considerations},  and then show the
existence of the operators $\Omega^\pm_\alpha$ of \eqref{eq:wav2s} 
using a variety of  `weak propagation estimates'. These  estimates will
then  be
applied to derive  a representation formula of $\widetilde
S_{\beta\alpha}(\lambda)$ in Section \ref{sec:Formula for widetilde},
and from this  formula we  finally deduce the weak continuity  of
Theorem \ref{thm:mainstat-modif-n}.

We devote Section \ref{sec:Exact channel wave-matrices} to further
results that come out as byproducts of our proof of Theorem
\ref{thm:mainstat-modif-n}. In particular we show that
$S_{\beta\alpha}(\cdot)$ is strongly continuous for almost
  all energies in $\vE_{\beta\alpha} $, and we  introduce and show the
strong  continuity of the channel wave matrices for  all energies in $\vE_{\beta\alpha} $.

\subsection{Extended lattice structure of subspaces}\label{sebsec:Lattice structure}

For the above constructions  we  assume that the family of subspaces $\{\mathbf X^c\}$, $c\in\vA$,  is stable under addition
(the standard assumption) and includes all sets of the form $\mathbf X_c$
(a non-standard assumption). This  can be done without loss of
generality by adding to  the collection  $\{\mathbf X^c\}\cup \{\mathbf X_c\}$  all sums of
subspaces from the collection, and we can then consider $H$ as an
$N$-body operator with this new lattice structure, say 
in this subsection indexed by
$\vA^{\rm{new}}$, simply by taking $V_d=0$   on  all 
added subspaces $\mathbf X^d$. The notions of eigenvalues and thresholds of
$H$ are the same as with  the original (i.e. `old')   lattice
structure, say in this subsection  indexed by
$\vA^{\rm{old}}$.

However  with $\vA^{\rm{new}}$  we can also consider the operators $\brH_c$
and $H_c$ from \eqref{eq:brevH} as `full' $N$-body operators.  
 More precisely for $\brH_c$, $c\in
\vA^{\rm{old}}$, we define
$V_c=\brI_c$  on $\mathbf X_c$ and for all 
added subspaces $\mathbf X^d\neq \mathbf X_c$ we take $V_d=0$.  If $\mathbf X^d\subseteq \mathbf X^c$
is an old subspace
the potential $V_d$  is the same as for the  old subspace, and for the
remaining 
 case we take $V_d=0$. 
 For $H_c$, $c\in \vA^{\rm{old}}$, we define
$V_c=0$  on $\mathbf X_c$.
 Otherwise we use
the same definition as for   $\brH_c$.

With these conventions it is easy to check that  the set of
thresholds for $H_c$, $c\in\vA^{\rm{old}}\setminus \{a_{\max}\}$, is
exactly given by 
$\sigma_{\pp}(H^c)\cup \vT(H^c)\subseteq \vT(H)$. We shall use this
feature in Subsection \ref{subsec:Limiting absorption principles}.

\section{Yafaev's constructions}\label{sec:Yafaev's construction}

Let for $\epsilon\in(0,1)$ (below  taken sufficiently small)  and for $a\in\vA'=\vA\setminus \{a_{\max}\}$
\begin{align*}
  \mathbf X_a(\epsilon)=\set{\abs{x_a}>(1-\epsilon)\abs{x}}.
\end{align*} Note that $\mathbf X_a \setminus\set{0}\subseteq \mathbf
X_a(\epsilon)$ and in fact that $\mathbf X_a \setminus\set{0}=\cap_{\epsilon\in(0,1)}\,\mathbf
X_a(\epsilon)$.

In \cite{Ya3} a real function $m_a\in C^\infty(\mathbf X)$,   $a\in\vA'$,  is
constructed, fulfilling the following
properties for any sufficiently small  $\epsilon>0$: 

\begin{enumerate}[1)]
\item\label{item:1a} $m_a(x)$ is homogeneous of degree $1$ for $\abs{x}\geq 1$ and
  $m_a(x)=0$ for $\abs{x}\leq 1/2$.
\item\label{item:2a} If $x\in \mathbf X_b(\epsilon)$ and $\abs{x}\geq 1$, then
  $m_a(x)=m_a(x_b)$ (i.e. $m_a(x)$ does not depend on $x^b$).
\item\label{item:3a} If ${b\not\leq a}$ and $x\in \mathbf X_b(\epsilon)$, then   $m_a(x)=0$.
\end{enumerate}

Let for  any such function $w_a=\mathop{\mathrm{grad}} m_a$ and 
\begin{align}\label{eq:M_a}
  M_a=2\Re(w_a\cdot p)=-\i\sum_{j\leq n}\parb{(w_a)_j\partial_{x_j}+\partial_{x_j}(w_a)_j};\quad  n:=\dim \mathbf X.
\end{align}

\begin{subequations}
Let for  $a\in\vA'$
\begin{align}\label{eq:primes}
  \begin{split}
    \mathbf X'_a&={\mathbf X_a}\setminus\cup_{{ c\gneq a} }\mathbf
                 X_c={\mathbf X_a}\setminus\cup_{{c\not\leq a}
                 }\mathbf X_c,\\
\mathbf X'_a(\delta)&= \mathbf X_a\setminus\cup_{ c\gneq a}
\overline{\mathbf X_c(\delta)};\quad \delta>0.
\end{split}
\end{align} Here and below the overline means topological closure.  Note that $\mathbf X'_a(\delta)\subseteq \mathbf X'_a$.

We need the following additional information  from \cite[Lemma 3.5]{Ya3}
 on the construction of the functions $m_a$, cf. a discussion in
 Section \ref{sec:Stationary modifier}:
We can for any  given point 
$z_a\in \mathbf X'_a$ 
 choose   $m_a$
with the additional property $m_a(\hat z_a)\neq 0$;  $\hat
z_a:=z_a/\abs{z_a}$. 
 More generally,   
for any given $\delta>0$ and for any sufficiently small $\epsilon>0$
  we can  take $m_a$    such that 
  $m^a( x) =\abs{x_a}$ for $x\in \mathbf X'_a(\delta)$ with
  $\abs{x}\geq 1$. We shall implement the latter property explicitly
  in the beginning of Section \ref{sec:Exact channel wave-matrices},
  however for our proof of weak continuity of the fiber operator
  $\widetilde S_{\beta\alpha}(\cdot)$ of \eqref{eq:locScatb} it is not
  relevant. It is needed only for extracting  the continuity of 
  $S_{\beta\alpha}(\cdot)$ in terms of  \eqref{eq:locScata}.

To control the Hessian  of the functions
$m_a$, $\mathop{\mathrm{Hess}}m_a=\nabla^2m_a$,  (or more precisely $
p\cdot\nabla^2m_a \,p$ arising in  commutator calculations with
\eqref{eq:M_a}) Yafaev used a family of  similar
functions  $m$ with an additional convexity property. Before recalling
 these functions
 we  introduce more conical subsets of  $\mathbf X$.

Let for  $a\in\vA'$ and  $\epsilon\in(0,1)$
\begin{align}\label{eq:Gamma}
  \mathbf \Gamma_a(\epsilon)=\parb{\mathbf X \setminus\set{0}}\setminus\cup_{{b\not\leq a}
}\mathbf X_b(\epsilon).
\end{align} Note,  as a motivation, that
$\set{\abs{x}\geq 1}\cap\supp m_a\subseteq \mathbf
\Gamma_a(\epsilon)$ for any small $\epsilon>0$. 

Next we note that  
\begin{align}\label{eq:Gamma2}
  \mathbf \Gamma_a(\epsilon)\subseteq\cup_{d\leq a}  \,\mathbf X'_d.
\end{align} In fact  we can for any $x\in \mathbf
\Gamma_a(\epsilon)$ introduce $d(x)\leq a$ by  $\mathbf X_{d(x)}=\cap_{d\leq
a,\,x\in \mathbf X_d} \,\mathbf X_d$ and then check that $x\in \mathbf
X'_{d(x)}$: If not, $x\in \mathbf X_c \cap \mathbf X_{d(x)}$ for some $c$ with $d(x)\lneq c\leq a$ and
  therefore  $\mathbf X_c=\mathbf X_c\cap \mathbf X_{d(x)}=\mathbf X_{d(x)}$,  contradicting that
  $c\neq d(x)$. 

Introduce for  $\delta>0$ the open cone 
\begin{align}\label{eq:Y}
  \mathbf Y_d(\delta)=\mathbf X_d(\delta)\setminus\cup_{c\gneq d}
\,\overline{\mathbf X_c(3\delta^{1/n})}.
\end{align}
 Thanks to   \eqref{eq:Gamma2} we can for any  $\delta_0>0$  write 
  \begin{align}\label{eq:deltaNBHa}
    \mathbf \Gamma_a(\epsilon)\subseteq\cup_{d\leq a} \cup_{\delta\in (0,\delta_0]} \,\mathbf Y_d(\delta).
  \end{align} By compactness this leads, for any fixed
  $\epsilon\in(0,1)$,  to the existence of $J\in\N$, $\delta_1,\dots, \delta_J\in
(0,\delta_0]$ and  $d_1,\dots, d_J\leq a$ such that 
\begin{align}\label{eq:deltaNBHb}
    \mathbf \Gamma_a(\epsilon)\subseteq\cup_{j\leq J} \,\,\mathbf Y_{d_j}(\delta_j).
  \end{align}

 \end{subequations}

 By \cite[p. 538]{Ya3} there exists $\delta_0'>0$ such that for all  
 $\delta\in(0,\delta_0']$ there exists a real function $m=m_{a_{\max}}\in C^\infty(\mathbf X)$ fulfilling  the following properties:

\begin{enumerate}[i)]
\item\label{item:1b} $m(x)$ is homogeneous of degree $1$ for $\abs{x}\geq 1$ and
  $m(x)=0$ for $\abs{x}\leq 1/2$.
\item\label{item:2b} If $x\in \mathbf X_b(\delta)$ and $\abs{x}\geq 1$, then
  $m(x)=m(x_b)$ (i.e. $m(x)$ does not depend on $x^b$).
\item\label{item:3b} $m(x)$ is convex in $\abs{x}\geq 1$.
\item\label{item:4b} For all $d\in \vA'$ there exists $\mu_d\geq 1$ such that 
\begin{align*}
  m( x) =\mu_d\abs{x_d} \text{ for all }x\in \mathbf Y_d(\delta)\text{ with } \abs{x}\geq 1.
\end{align*}
\end{enumerate}

Similar to \eqref{eq:M_a} we introduce for any such function  $w=\mathop{\mathrm{grad}} m$ and 
\begin{align}\label{eq:M}
  M=M_{a_{\max}}=2\Re(w\cdot p)=-\i\sum_{j\leq n}\parb{w_j\partial_{x_j}+\partial_{x_j}w_j}.
\end{align}

To relate and further specify the introduced functions $m_a$, $a\in\vA'$ and $a=a_{\max}$,
let us first fix the order of construction as follows: Fix $\epsilon>0$ conforming  with the
construction of the family of functions $m_a$ fulfilling the
properties 
\ref{item:1a}--\ref{item:3a} (as well as the non-vanishing condition
discussed above in a separate paragraph) and fix $\delta_0'>0$ conforming  with the
construction of the family of functions $m$ fulfilling  the
properties 
\ref{item:1b}--\ref{item:4b}. Obviously we can assume $\delta_0'\leq
\epsilon$. Take then $\delta_0=\delta_0'$ in
\eqref{eq:deltaNBHa}  and \eqref{eq:deltaNBHb}.   Next for
each function $m_a$, $a\in\vA'$, we use  \eqref{eq:deltaNBHb} and
construct functions $m_j$  fulfilling  \ref{item:1b}--\ref{item:4b}
with $\delta=\delta_j$; $j=1\dots J$.

 For each  $a\in\vA'$ we now choose 
   a quadratic partition $\xi_1,\dots, \xi_J\in
C^\infty(\S^{n-1})$ 
subordinate to the covering \eqref{eq:deltaNBHb}, and let
$\xi^+_j(x)=\xi_j(\hat x)\chi_+(4\abs{x})$; $\hat x:=x/\abs{x}$. Then
we can write, using  the support properties \ref{item:1a} and \ref{item:3a}
\begin{align*}
  m_a(x)=\Sigma_{j\leq J}\,\,m_{a,j}(x);\quad m_{a,j}(x)=\xi^+_j( x)^2 m_a(x),
\end{align*} and using  \ref{item:2a} and the properties 
$\mathbf Y_{d_j}(\delta_j)\subseteq \mathbf X_{d_j}(\delta_j)\subseteq \mathbf X_{d_j}(\epsilon)$
\begin{align*}
  m_{a,j}(x)\chi_+(\abs{x})=\xi^+_j( x)^2 m_a(x_{d_j})\chi_+(\abs{x}).
\end{align*} Similarly (using in addition the notation $G_{a}$ of \eqref{eq:Ffield})
\begin{subequations}
\begin{align}\label{eq:Hes0}
  \begin{split}
 &p\cdot\parb{\chi_+(\abs{x})\nabla^2m_a(x)} p=\Sigma_{j\leq J}\,\,
                                 p\cdot\parb{\chi_+(\abs{x}) \xi^+_j( x)^2 \nabla^2m_a(x_{d_j})}p,\\&
=\Sigma_{j\leq J}\, \,G^*_{d_j}\parb{\chi_+(\abs{x}) \xi^+_j( x)^2\vG_j}G_{d_j};\quad
                                                                  \vG_j=\vG_j(x_{d_j})\text{ 
                                                                  bounded}.   
  \end{split}
\end{align}
In turn due to \ref{item:3b} and \ref{item:4b}, the latter applied with
$d=d_j$ (and $\delta=\delta_j$), 
\begin{align}\label{eq:Hes_est}
   G^*_{d_j}\chi_+(\abs{x})\xi^+_j( x)^2
  G_{d_j}\leq  p\cdot \parb{\chi_+(\abs{x})\nabla^2m(x)} p.
\end{align} 
\end{subequations}

 Finally  we introduce functions $\xi^+_a$ and $\tilde\xi^+_a$ as follows.
First choose any $\xi_a\in
C^\infty(\S^{n-1})$ such that $\xi_a=1$ on
$\S^{n-1}\cap\mathbf \Gamma_a(\epsilon)$ and $\xi_a=0$ on
$\S^{n-1}\setminus \mathbf \Gamma_a(\epsilon/2)$. Choose then any $\tilde\xi_a\in
C^\infty(\S^{n-1})$ using this recipe   with $\epsilon$
replaced by  $\epsilon/2$. Finally  let  $\xi^+_a(x)=\xi_a(\hat
x)\chi_+(4\abs{x})$ and  $\tilde\xi^+_a(x)=\tilde\xi_a(\hat
x)\chi_+(8\abs{x})$,  and note  that
$\tilde\xi^+_a\xi^+_a=\xi^+_a$. Then by \ref{item:1a} and \ref{item:3a}
\begin{align}\label{eq:partM}
  M_a= M_a\xi^+_a= \xi^+_aM_a=M_a\tilde\xi^+_a=\tilde \xi^+_aM_a,
\end{align} which in applications will provide `free factors' of
$\xi^+_a$ and $\tilde\xi^+_a$
where  convenient.

\section{Derezi\'nski type constructions}\label{sec:Derezinski's
  construction}

There exists  a positive  function $r \in C^\infty(\mathbf X)$  fulfilling  the
following properties \cite{De, AIIS} (recall $n=\dim \mathbf X$):
 \begin{enumerate}[i)]
\item\label{item:1c} For all $\gamma\in \N^n_0$ and $k\in\N_0$ there
  exists $C>0$ such that 
\begin{align*}
  \abs[\big]{\partial ^\gamma (x\cdot \nabla)^k \parb{r(x)-\inp{x}}} \leq C\inp{x}^{-1}.
\end{align*} 
\item\label{item:2c} There exists $c>0$ such that for all $a\in\vA$:
  \begin{align*}
    \abs{x^a}\leq c\Rightarrow \nabla^a r(x)=0
  \end{align*}
(i.e.  if $\abs{x^a}\leq c$, then $r(x)$ does not depend on $x^a$).
\item\label{item:3c} $r(x)$ is convex.
\item\label{item:3d} For  any  given non-threshold  energy  of $H$ there
  is a Mourre
  estimate for $H$ with $A:= \Re\parb{\nabla r^2\cdot p}$, or more
  precisely with such $A$ for  a suitably rescaled version of the
  function $r$
  obeying \ref{item:1c}--\ref{item:3c}. (Here and later we slightly abusively  use the same
  notation $r$ for the rescaled 
  function $Rr(x/R)$ with  $R\geq 1$  taken sufficiently large). See
  Subsections \ref{subsec:Limiting absorption principles} and
  \ref{subsubsec:17111117} for  details.
\item\label{item:3e} The family of functions $\set{r^a|\, a\in \vA}$,
    with $r^a$ being a function on $\mathbf X^a$ obeying
  \ref{item:1c}--\ref{item:3c} (rather than on $\mathbf X$, and with
  $r^{a_{\min}}$ being a positive constant), fulfills the
following self-similar structure.
\begin{align}\label{eq:self-similar}
  \forall \epsilon>0 \,\exists R'\geq 1 \,\forall  x\in
  \mathbf \Gamma_a(\epsilon)\cap\set{\abs{x}\geq R'}: \quad r^2=\abs{x_a}^2+(r^a)^2;
\end{align} here  we use the notation of \eqref{eq:Gamma}, $\mathbf \Gamma_a(\epsilon)=\parb{\mathbf X \setminus\set{0}}\setminus\cup_{{b\not\leq a}
}\mathbf X_b(\epsilon)$.

\end{enumerate}

It follows from \ref{item:1c} that
\begin{subequations}
\begin{align}\label{eq:r0}
\abs[\big]{\partial ^\gamma \parb{r(x)-\inp{x}}} &\leq C\inp{x}^{-1},\\\label{eq:r1}
  \abs[\big]{\partial ^\gamma\parb{x\cdot \nabla r(x)-r(x)}} & \leq C
  \inp{x}^{-1},\\
\abs[\big]{\partial ^\gamma\parb{  x\cdot (\nabla^2r)(x)x} }& \leq C\inp{x}^{-1},\label{eq:r2}
\end{align} cf.  \cite{De}.
\end{subequations}

For completeness of presentation we  note that  the bounds of
\ref{item:1c}, proven only for $k\leq 2$ in  \cite{De}, are  equivalent to  
\begin{align*}
  \abs[\big]{\partial ^\gamma (x\cdot \nabla)^k \parb{r(x)^2-\inp{x}^2}} \leq C,
\end{align*}  and the latter estimates  can be checked using the construction of \cite{De}.
  (They were verified  explicitly in  \cite[Appendix
A]{Sk1} for  the seminal
 construction of Graf \cite{Gr}.) As in \cite{De} we shall need
 \eqref{eq:r0}-\eqref{eq:r2} rather than \ref{item:1c}.

We also remark that a  version of \eqref{eq:self-similar} appears as 
     \cite[(5.10)]{Sk2} (proven in detail in the paper using  the seminal
 construction of Graf). Again one can  check the stated assertion 
 using the construction of \cite{De}. Although it is conveniently  used  in
 Appendix \ref{sec:AppendixO} it is not essential  there.

\begin{subequations}
Let $\omega=\mathop{\mathrm{grad}}r$ and 
\begin{align}\label{eq:B}
  B=2\Re(\omega\cdot p)=-\i\sum_{j\leq n}\parb{\omega_j\partial_{x_j}+\partial_{x_j}\omega_j}.
\end{align} For the 
self-adjointness of $B$  we
refer to Subsection \ref{subsec:Operatat B}.  This operator appears in
the definition of the  factor $A_1$ of the operators $
N^a_\pm$  of Section \ref{sec:Stationary
  modifier}.

  With $\delta$ as in \eqref{eq:parameters} 
 we define for $a\in \vA\setminus\set {a_{\max},a_{\min}}$,  and with
 $r^a$ given by \ref{item:3e} and  $\omega^a=\mathop{\mathrm{grad}}  r^a$,
\begin{align}\label{eq:b^a1}
  r_\delta^a&=r_\delta^a(x)=
r^\delta
              r^a(x^a/r^\delta),\\\quad \omega^a_\delta&=\omega^a_\delta(x)=\omega^a\parb{x^a/r^\delta}
,\label{eq:b^a11}\\B_\delta^a&=2\Re\parb{p^a\cdot\omega^a_\delta}.\label{eq:b^a2}
\end{align}
\end{subequations}

The quantities of \eqref{eq:B}--\eqref{eq:b^a2} are
building blocks of 
the factors $A_2^a$ and $A_3^a$ of the operators $
N^a_\pm$ and fix them  for any choice of 
parameters  $\rho_2,\rho_1$ and $\delta$ in  \eqref{eq:parameters}. For the 
self-adjointness of $B_\delta^a$ and $B_{\delta,\rho_1}^a=r^{\rho_1/2}B_\delta^a
                         r^{\rho_1/2}$  we
refer to Subsection \ref{subsec:Commutation
  with A_3^a}. Recall  that $N^{a_{\min}}_\pm:=A_1^2$.

\subsection{Mourre estimates and limiting absorption principles}\label{subsec:Limiting absorption principles}

Parallel to \eqref{eq:LAPbnda}  we 
note that  (with the limits taken uniformly in $\lambda$ in the norm-topology)
\begin{align}\label{eq:LAPbndbr}
  \begin{split}
   &\forall  s>1/2\,\forall \lambda\in \R:\quad f_1(\brh_\alpha)f_1(\brH_a)  \brR_a(\lambda\pm \i
  0)f_1(\brH_a)f_1(\brh_\alpha)\\&=\lim _{\epsilon\to 0_+} f_1(\brh_\alpha)f_1(\brH_a)  \brR_a(\lambda\pm \i
  \epsilon)f_1(\brH_a)f_1(\brh_\alpha)\,\in
  \vL\parb{L^2_s,L^2_{-s}}.  
  \end{split}
\end{align} Considered as   operator-valued functions  of 
$\lambda$, these quantities    are  continuous. These facts follow
from similar (well-known) assertions for the one-body operator 
$\brh_a$ localized to positive energies.

We claim that for some $\epsilon_1>0$ (being independent of the
scaling parameter used below)
\begin{subequations}
\begin{align}\label{eq:mourre000}
  f_1(\brh_\alpha)f_1(\brH_a)
  \i[\brH_a,r^{1/2}Br^{1/2}]f_1(\brH_a)f_1(\brh_\alpha)\geq \epsilon_1f_1(\brh_\alpha)f_1(\brH_a)^2f_1(\brh_\alpha).
\end{align} Here the commutator $\i[\brH_a,r^{1/2}Br^{1/2}]$ is given
by  its formal expression, see
 Subsection \ref{subsubsec:17111117} or \cite[(2.15)]{AIIS}, which is a bounded form on $\vD(H)$.

We can verify \eqref{eq:mourre000} as follows. Consider 
$\tilde f_1,\tilde f_2\in C^\infty_\c(\Lambda)$  with $f_1\prec \tilde
f_2\prec \tilde
f_1$. Upon shrinking the support of $\tilde f_2$ (for fixed $\tilde f_1$)
\begin{align*}
  \norm{\tilde f_2(\brh_\alpha)\parb{\tilde{f_1}(\brH_a)-\tilde{f_1}(H_a)}}\text{
  is (arbitrarily) small},
\end{align*} and  it suffices  to show that 
\begin{align}
  \tilde{f_1}(H_a)
  \i[\brH_a,r^{1/2}Br^{1/2}]\tilde{f_1}(H_a)\geq
  \epsilon_2 \tilde{f_1}(H_a)^2.
\end{align} At this point we stress that the commutator in
\eqref{eq:mourre000} is given by its formal expression and note that it   can be estimated uniformly in $R\geq 1$. (Recall
that we are going to use the rescaled version  $r(x)\to
Rr(x/R)$, $R\geq 1$ sufficiently big.) The `size' of the support of
$\tilde f_2$ may depend on the scaling parameter  $R$ through an
$R$-dependence of $\tilde f_1$ (this  dependence consequently also
holds for the 'smaller' functions $f_2$ and $f_1$, however being an
irrelevant property). The stated smallness follows from  the
fact that $\lambda_0-\lambda^\alpha$ is not an eigenvalue of
$\brh_a$ (there are no positive eigenvalues) and a compactness argument.

In turn (possibly by taking the scaling parameter large enough, as seen
by the concrete expression of Subsection \ref{subsubsec:17111117} for the commutator) it suffices to show 
\begin{align}\label{eq:mourre}
  \tilde{f_1}(H_a)
  \i[H_a,r^{1/2}Br^{1/2}]\tilde{f_1}(H_a)\geq
  \epsilon_3 \tilde{f_1}(H_a)^2.
\end{align}
 
Now the Mourre estimate \eqref{eq:mourre} essentially  follows from
\cite{AIIS} (see the
paragraph below for a comment),
again   (if needed)  in terms of  the rescaled version  $r(x)\to
Rr(x/R)$, $R\geq 1$ sufficiently big. 
(As mentioned before we  suppress such modification and will use the
same notation $r$ for the rescaled version of the function.) Note that $\mathbf X_a$ belongs to the lattice of
subspaces $\{\mathbf X^c\}$, cf. Subsection \ref{sebsec:Lattice structure},
and that we use the fact that
$\lambda_0$ is not a threshold nor an eigenvalue of  the $N$-body
Hamiltonian $H_a$. 

Strictly speaking the conjugate operator appearing
in \eqref{eq:mourre} is defined in terms of  the vector field
$\tilde\omega=\tfrac12\mathop{\mathrm{grad}} r^2$ associated to  the
construction of \cite{De} while the Mourre
estimate of  \cite{AIIS}  (originating from \cite{Sk1})  is stated in
terms of  the  vector field from the
seminal paper \cite{Gr}. However the difference of the two vector
fields is minor, caused only  from the fact that different
regularization procedures are used. In particular one can indeed, based
on the construction of \cite{De}, easily check that the same proof as the one 
referred to in \cite{AIIS} works for 
the Mourre estimate \eqref{eq:mourre} (with the stated version of the
Graf vector). A  similar
remark  applies  for \eqref{eq:mourreFull} stated below.
\end{subequations}

Next, since  \eqref{eq:mourre000} is established, it is convenient
(and harmless) to change
notation by replacing $f_1$ in  \eqref{eq:mourre000} by some $\tilde
f_2\in C^\infty_\c(\Lambda)$ with ${f_1}\prec \tilde{f_2}$,
 amounting  to
\begin{align}\label{eq:mourre0}
  \tilde f_2(\brh_\alpha)\tilde f_2(\brH_a)
  \i[\brH_a,r^{1/2}Br^{1/2}]\tilde f_2(\brH_a)\tilde
  f_2(\brh_\alpha)\geq \epsilon_1\tilde
  f_2(\brh_\alpha)\tilde f_2(\brH_a)^2\tilde f_2(\brh_\alpha).
\end{align} 

 Note also  the Mourre estimate   used in   \cite{AIIS}, corresponding to
replacing $H_a$ in \eqref{eq:mourre} by $ H$,
\begin{align}\label{eq:mourreFull}
  \tilde{f_2}(H)
  \i[H,r^{1/2}Br^{1/2}]\tilde{f_2}(H)\geq
  \epsilon_4 \tilde{f_2}(H)^2;
\end{align}  again possibly for the rescaled version of $r$ only, and 
with $\tilde f_2$  given as in \eqref{eq:mourre0}. See 
Subsection \ref{subsubsec:17111117} for a convenient form of
\eqref{eq:mourreFull} needed  in Section \ref{sec:Computation of T}.

\begin{subequations}
We recall 
 from \cite{AIIS} that \eqref{eq:mourreFull} implies
the following version of \eqref{eq:LAPbndbr} in a neighbourhood $\vN$
of $\supp
f_1$,
cf. \eqref{eq:LAPbnda} and \eqref{eq:BB^*a}:
\begin{align}\label{eq:LAPbnd}
  \forall \, s>1/2:\quad R(\lambda\pm \i
  0)=\lim _{\epsilon\to 0_+}  R(\lambda\pm \i
  \epsilon)\in \vL\parb{L^2_s,L^2_{-s}}\text{ for }\lambda\in \vN.
\end{align} In  any such  norm-topology the limits are taken
uniformly on $\vN$, and for  the (stronger) Besov space  topology the
limiting operators obey
  \begin{align}\label{eq:BB^*}
    R(\lambda\pm \i
  0)\in \vL\parb{\vB,\vB^*}\text{  with a uniform bound in }\lambda\in \vN.
  \end{align}
\end{subequations} 

With \eqref{eq:mourre0}
  and \eqref{eq:mourreFull}  in place we can henceforth  consider the scaling
  parameter $R\geq 1$ as fixed, see  Subsection
  \ref{subsec:Operatat B} for a more general discussion.
In particular  the above parameters $\epsilon_1$ and $\epsilon_4$  can be chosen as a
function of the distance to the biggest threshold of  $H$ below
$\lambda$, see Lemma \ref{lemma:Mourre1_hard}.
 We may at this point fix the parameter $\epsilon_0>0$ in the factor $A_1$  of the operators $
N^a_\pm$  in Section \ref{sec:Stationary
  modifier} by the single requirement $ 4\epsilon^2_0<\min\set{\epsilon_1,
  \epsilon_4}$, however its size will not play any role. Henceforth we
just consider  $\epsilon_0$  as a small positive parameter. (The 
freedom of possibly choosing $\epsilon_0$ smaller will come in
conveniently, although not crucially, in the proof of Lemma
\ref{lemma:strongCont}.)  In combination
with the previous constructions \eqref{eq:B}--\eqref{eq:b^a2} we have
by now fixed  $
N^a_\pm$ (for any given 
parameters  $\rho_2,\rho_1$ and $\delta$).

\section{Calculus considerations}\label{sec:Calculus considerations}

 To facilitate  our treatment of  
$T^\pm_\alpha$ in Section \ref{sec:Computation of T} we recall from
\cite[Section 2]{AIIS}  a calculus in which the Mourre estimate
 \eqref{eq:mourreFull} can be implemented. This calculus fits well for
 computing $T^\pm_\alpha$ by  the usual product rule for commutation.

In Subsection~\ref{subsubsec:Smooth sign function}
we introduce notation frequently used in the later arguments.
Subsection~\ref{subsec:Functional calculus} 
 concerns  the Helffer--Sj\"ostrand formula, and two applications on
 computing  commutators are given.
In Section~\ref{subsec:Operatat B} we provide  the self-adjoint realization of 
the  operator $B$ of \eqref{eq:B}. 
We investigate the first commutator $\mathrm i[H,B]$ in Subsection~\ref{subsubsec:17111117}, 
and the second commutator $\mathrm i[\mathrm i[H,B],B]$ in
Subsection~\ref{subsubsec:17111119}. This will apply for commutation
with the factor $A_1$.   Commutation
with the factor $A_3^a$ is similar. This will be treated in Subsection \ref{subsec:Commutation
  with A_3^a}.   Commutation
with the factors $M_a$ and $A_2^a$ are easier, to be treated in Subsections \ref{subsec:Commutation
  with M} and \ref{subsec:A2}, respectively.

\subsection{Notation}\label{subsubsec:Smooth sign function}


Let $T$ be a linear operator on $\mathcal H=L^2(\mathbf X)$ such that
$T,T^*:L^2_\infty\to L^2_\infty$, and let $t\in\mathbb R$.  Then we
say that   $T$ is an {\emph{operator of order $t$}}, if 
 for each $s\in\mathbb R$  the restriction  $T_{|L^2_\infty}$ extends to
 an operator $T_s\in\vL(L^2_{s}, L^2_{s-t})$. Alternatively stated,
\begin{align}\label{eq:defOrder}
\|r^{s-t}Tr^{-s}\psi\|\le C_s\|\psi\| \text{ for all }\psi\in L^2_\infty.
\end{align} We  can (and will) use this with  $r$ replaced by
$r_R=r_R(x):=Rr(x/R)$ for  a fixed sufficiently large   $R\geq 1$,
cf. 
Subsection \ref{subsec:Limiting absorption principles}. Slightly abusively  we dont change the
notation and use  $r$ rather than $r_R$ throughout the section,  although in the context of Subsection
\ref{subsubsec:17111117} the quantity $r_R$ is  needed in our
presentation (and meant).  Note  (for consistency) that  $T_s$ extends the restriction $T_{|
\vD(T)\cap L^2_{s}}$. 
 If   $T$ is of {order $t$}, we write 
\begin{align}
T=\vO(r^t).
\label{eq:1712022}
\end{align} 
Note also that, if $T=\vO(r^t)$ and $S=\vO(r^s)$, then $T^*=\vO(r^t)$
and $TS=\vO(r^{t+s})$.  
If
$T=\vO(r^t)$  for all $t\in\R$, then we write $T=\vO(r^{-\infty})$.
 If $T=\vO(r^s)$ for some $s<t$, we write
$T=\vO(r^{t_-})$. (This will in Section \ref{sec:Computation of T} be
a desirable  property  with  $t=-1$.)

\subsection{Functional calculus}\label{subsec:Functional calculus}

Here we present the Helffer--Sj\"ostrand formula to represent 
functions of self-adjoint operators, 
and its application to commutators.

For  $t\in\mathbb R$ we denote by $\mathcal F^t$  the set of real $f\in
C^\infty(\mathbb R)$ obeying
\begin{align*}
|f^{(k)}(x)|\le C_k\langle x\rangle^{t-k}\text{ for any }k\in\mathbb N_0\text{ and }x\in\mathbb R
.
\end{align*}
It is known that for any $f\in\mathcal F^t$, $t\in\mathbb R$,
there always exists a  \textit{good  almost analytic extension}
$\tilde f\in C^\infty (\C)$, meaning more precisely that such function
$\tilde f$ obeys
\begin{equation*} 
\tilde f_{|\mathbb R}=f, 
\quad 
|\tilde f(z)|\le C\langle z\rangle^t,\quad 
\bigl|(\bar{\partial}\tilde{f})(z)\bigr| 
\leq C_k|\Im z|^{k}\left\langle z\right\rangle^{t-k-1} 
\ \ \text{for any }k\in\mathbb N_0.
\end{equation*} 
Here one can choose $\tilde f\in C^\infty_{\mathrm c}(\mathbb C)$
if $f\in C^\infty_{\mathrm c}(\mathbb R)$.

\begin{lemma}\label{lem:A1} 
Let $T$ be a self-adjoint operator on $\vH$, and let $ f\in \mathcal F^t$ with $t\in\mathbb R$.
Take an almost analytic extension $\tilde f\in C^{\infty }(\C)$ of $f$
as above. 
Then for any $k\in\mathbb N_0$ with $k>t$ 
the operator $f^{(k)}(T)\in\mathcal L(\mathcal H)$ is expressed as 
\begin{align}\label{82a0}
  \begin{split}
 f^{(k)}(T) 
&=
(-1)^kk!\int _{\C}(T -z)^{-k-1}\,\mathrm d\mu_f(z)\, \text{ with}\\
&\mathrm d\mu_f(z)=\pi^{-1}(\bar\partial\tilde f)(z)\,\mathrm du\mathrm dv;\quad 
z=u+\i v.   
  \end{split}
\end{align}
\end{lemma}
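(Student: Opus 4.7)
The plan is to derive \eqref{82a0} by combining the classical Helffer--Sj\"ostrand identity (the case $k=0$) with a differentiation under the integral sign, after passing from operators to scalars via the spectral theorem for $T$.

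First I would verify that the right-hand side of \eqref{82a0} defines a bounded operator. Using the elementary resolvent estimate $\|(T-z)^{-k-1}\|\le |\mathop{\rm Im} z|^{-k-1}$ together with the bound
\begin{align*}
|(\bar\partial\tilde f)(z)|\le C_{k+1}\,|\mathop{\rm Im} z|^{k+1}\langle z\rangle^{t-k-2},
\end{align*}
the operator-norm integrand is dominated by $\langle z\rangle^{t-k-2}$. Since the hypothesis $k>t$ gives $t-k-2<-2$, this majorant is integrable over $\mathbb C$, so the integral converges absolutely in $\mathcal L(\mathcal H)$ and in particular depends only on $f$ (not on the choice of almost analytic extension $\tilde f$, by a standard deformation argument).

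Next, by the spectral theorem it suffices to prove the scalar identity
\begin{align*}
f^{(k)}(x)=(-1)^kk!\,\pi^{-1}\int_{\mathbb C}(x-z)^{-k-1}(\bar\partial\tilde f)(z)\,\mathrm du\,\mathrm dv
\end{align*}
for every $x\in\mathbb R$: the operator identity then follows by integrating against the spectral measure of $T$ and using the uniform integrable majorant above to interchange integrals. For the scalar identity I would first treat the base case $k=0$ (requiring $t<0$) by applying Stokes' theorem to the $1$-form $(x-z)^{-1}\tilde f(z)\,\mathrm dz$ on $\{z\in\mathbb C\,:\,\epsilon<|z-x|<R\}$ and then sending $\epsilon\to 0$ and $R\to\infty$; the inner boundary yields $2\pi\mathrm i f(x)$ by continuity of $\tilde f|_{\mathbb R}=f$, while the outer boundary vanishes because $|\tilde f(z)|\le C\langle z\rangle^t$ with $t<0$. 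For the general case $k>t$, differentiate the scalar $k=0$ identity $k$ times in $x$ under the integral sign; the identity $\partial_x^k(x-z)^{-1}=(-1)^kk!(x-z)^{-k-1}$ yields exactly the claimed formula, and dominated convergence justifies the interchange using the same integrable majorant $\langle z\rangle^{t-k-2}$.

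The main potential obstacle is the step $k>t\ge 0$, where the $k=0$ formula is not directly available for $f$ itself. This is handled either by the route just sketched (derive $k=0$ only for $t<0$, then differentiate under the integral for higher $k$, which is where the hypothesis $k>t$ is used), or, equivalently, by applying the $k=0$ Helffer--Sj\"ostrand formula to $f^{(k)}\in\mathcal F^{t-k}$ using $\partial_z^k\tilde f$ as an almost analytic extension, after checking that $\bar\partial\partial_z^k\tilde f=\partial_z^k\bar\partial\tilde f$ still has the required decay in $|\mathop{\rm Im}z|$. Either variant reduces the result to the classical base case together with straightforward convergence bookkeeping.
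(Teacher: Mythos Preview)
The paper does not give a proof of this lemma; it is stated as a standard result (the section is introduced as recalling calculus from \cite[Section~2]{AIIS}, and the text following the lemma only remarks that the case $k=0$ is the well-known Helffer--Sj\"ostrand formula). Your proposal is a correct and standard derivation.

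One minor point: your route (a) for the case $t\ge 0$ is not quite coherent as written, since the $k=0$ identity for $f$ itself is unavailable and hence cannot be differentiated. Your route (b) is the clean fix. Alternatively one can bypass the reduction to $k=0$ entirely and apply Stokes' theorem directly to the $1$-form $\tilde f(z)(x-z)^{-k-1}\,\mathrm dz$ on the annulus $\{\epsilon<|z-x|<R\}$: the outer boundary contribution is $O(R^{t-k})\to 0$ since $k>t$, while the Taylor expansion of $\tilde f$ near $x$ (using that $\bar\partial\tilde f$ vanishes to infinite order on $\mathbb R$, so $\partial_z^k\tilde f(x)=f^{(k)}(x)$) shows that the inner boundary contributes exactly $(-1)^{k+1}2\pi\i\, f^{(k)}(x)/k!$.
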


The expression \eqref{82a0} for $k=0$ is the well-known Helffer--Sj\"ostrand
formula used extensively in the literature to compute and  bound commutators. 
In general there are  several variations of  the  definition of a commutator,
and in this paper we do not fix a particular one. 
It will be  clear from the context 
in what sense we will be considering a commutator.
Typically, for symmetric operators $T$ and $S$,  we first define $\mathrm
i[T,S]$ as the   quadratic form 
\begin{align*}
\langle \mathrm i[T,S]\rangle_\psi
=2\langle \mathop{\mathrm{Im}}(ST)\rangle_\psi
=\mathrm i\langle T\psi,S\psi\rangle-\mathrm i\langle S\psi,T\psi\rangle
\ \ \text{for }\psi\in \mathcal D(T)\cap\mathcal D(S),
\end{align*}
and then extend it to a larger space. 

Let us provide an example of a commutator formula derived this way using \eqref{82a0}.

\begin{corollary}\label{cor:A2} 
Let $T$ be a self-adjoint operator on $\vH$, 
 $S$ be a  symmetric  relatively $T$-bounded operator,
and assume that there exists a bounded extension
$$(|T|+1)^{-\epsilon/2}\bigl(\mathrm i[T,S]\bigr) (|T|+1)^{-\epsilon/2} \in\mathcal L(\mathcal H)
\ \ \text{for some }\epsilon\in [0,2].$$
Let a    $f\in \mathcal F^t$  with $t<1-\epsilon$ be given, and let $\mathrm d\mu_f$
be given as in \eqref{82a0}.
Then, as a quadratic form on $\mathcal D(f(T))\cap \mathcal D(S)$,
\begin{equation*}
\mathrm i[f(T),S] 
=-
\int _{\C}
(T-z)^{-1}\bigl(\mathrm i[T,S]\bigr) (T-z)^{-1}\,\mathrm d\mu_f(z),
\end{equation*}
and it extends to  a bounded self-adjoint operator on $\mathcal H$.
\end{corollary}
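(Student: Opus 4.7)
The plan is to start from the Helffer--Sj\"ostrand representation of $f(T)$ (the $k=0$ case of Lemma \ref{lem:A1}),
\begin{equation*}
  f(T) = \int_{\C}(T-z)^{-1}\,\mathrm d\mu_f(z),
\end{equation*}
and commute $S$ with each resolvent factor under the integral sign. For $\psi,\varphi\in\mathcal D(f(T))\cap\mathcal D(S)$ I would expand
\begin{equation*}
  \mathrm i\langle f(T)\psi,S\varphi\rangle-\mathrm i\langle S\psi,f(T)\varphi\rangle
  =\int_{\C}\bigl(\mathrm i\langle(T-\bar z)^{-1}\psi,S\varphi\rangle-\mathrm i\langle S\psi,(T-z)^{-1}\varphi\rangle\bigr)\,\mathrm d\mu_f(z),
\end{equation*}
where interchange of integral and inner product is licensed by $\|(T-z)^{-1}\|\le|\Im z|^{-1}$ and a sufficiently large $k>t$ in the estimate of $\bar\partial\tilde f$. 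Using that $(T-z)^{-1}\psi,(T-z)^{-1}\varphi\in\mathcal D(T)$ and that $S$ is relatively $T$-bounded, so that $S(T-z)^{-1}\psi$ is well-defined, the bracketed integrand rewrites in the form sense as $\langle -(T-z)^{-1}(\mathrm i[T,S])(T-z)^{-1}\psi,\varphi\rangle$ by the quadratic-form identity $\mathrm i[(T-z)^{-1},S]=-(T-z)^{-1}(\mathrm i[T,S])(T-z)^{-1}$.

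Next, I would verify that the resulting sesquilinear form extends to a bounded operator by dominating the operator-norm integrand. By the spectral theorem, for $\epsilon\in[0,2]$,
\begin{equation*}
  \bigl\|(|T|+1)^{\epsilon/2}(T-z)^{-1}\bigr\|\le C\langle z\rangle^{\epsilon/2}|\Im z|^{-1}.
\end{equation*}
Inserting this on either side of the hypothesised bounded operator $(|T|+1)^{-\epsilon/2}(\mathrm i[T,S])(|T|+1)^{-\epsilon/2}$ yields
\begin{equation*}
  \bigl\|(T-z)^{-1}(\mathrm i[T,S])(T-z)^{-1}\bigr\|\le C'\langle z\rangle^{\epsilon}|\Im z|^{-2}.
\end{equation*}
Combined with $|(\bar\partial\tilde f)(z)|\le C_2|\Im z|^{2}\langle z\rangle^{t-3}$ (the $k=2$ instance of the almost-analyticity bound), the operator-norm integrand is dominated by $C\langle z\rangle^{t+\epsilon-3}$. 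Since $t<1-\epsilon$ gives $t+\epsilon-3<-2$, the two-dimensional integral $\int_{\C}\langle z\rangle^{t+\epsilon-3}\,\mathrm du\,\mathrm dv$ converges, whence the displayed formula indeed defines an element of $\mathcal L(\vH)$ agreeing with $\mathrm i[f(T),S]$ on $\mathcal D(f(T))\cap\mathcal D(S)$.

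Finally, for self-adjointness I would choose the good almost analytic extension $\tilde f$ to satisfy $\tilde f(\bar z)=\overline{\tilde f(z)}$, arranged by averaging $\tilde f(z)$ with $\overline{\tilde f(\bar z)}$ since $f$ is real-valued; equivalently, $\mathrm d\mu_f$ is invariant under $z\leftrightarrow\bar z$ up to conjugation. Taking the Hilbert-space adjoint inside the integral sends $(T-z)^{-1}(\mathrm i[T,S])(T-z)^{-1}$ to $(T-\bar z)^{-1}(\mathrm i[T,S])(T-\bar z)^{-1}$ (using symmetry of $\mathrm i[T,S]$), and the substitution $z\leftrightarrow\bar z$ reproduces the original expression. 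The main technical obstacle is the rigorous handling of the quadratic-form commutator $\mathrm i[(T-z)^{-1},S]$ when $S$ is only relatively $T$-bounded; a clean device is to approximate first by $f\in C^\infty_{\mathrm c}(\R)$ (so $\tilde f$ is compactly supported and all manipulations occur inside $\mathcal L(\vH)$) and then pass to general $f\in\mathcal F^t$ using the uniform bound established in the integrability step.
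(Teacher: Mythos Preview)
The paper does not give a proof of this corollary (it is recalled from \cite{AIIS}), but your argument is the standard Helffer--Sj\"ostrand computation and is correct. The technical obstacle you flag at the end is not really present: since $(T-z)^{-1}$ maps $\mathcal H$ into $\mathcal D(T)\subseteq\mathcal D(S)$ and $S$ is symmetric on $\mathcal D(S)$, writing $\phi_1=(T-\bar z)^{-1}\psi$, $\phi_2=(T-z)^{-1}\varphi$ and using $\langle S\psi,\phi_2\rangle=\langle\psi,S\phi_2\rangle=\langle T\phi_1,S\phi_2\rangle-z\langle\phi_1,S\phi_2\rangle$ (and similarly for the other term) gives the resolvent--commutator identity directly, with no approximation by $C^\infty_{\mathrm c}$ needed.
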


Another example is the commutator $[f(H),r^s]$ treated below, which is
an example of commutators of functions of entries from the triple of operators $(H,r,B)$.
Other such  examples  will be discussed  in 
Lemmas~\ref{lem:171113b} and \ref{lem:fHB}.
In Section \ref{sec:Computation of T} we will  repeatedly use
Lemmas~\ref{lem:171113}, \ref{lem:171113b} and 
\ref{lem:fHB}--\ref{lem:17111515}.

\begin{lemma}\label{lem:171113}
\begin{enumerate}[1)]
\item \label{item:180128b}
For any  $f\in \mathcal F^0$ the operator $f(H)$ is of order $0$.
\item\label{item:180128}
Let any  $f\in \mathcal F^t$ with $t<1/2$  and $s\in\mathbb
R$ be given.
Then $\mathrm i[f(H),r^s]$ 
has an expression, 
as a  quadratic  form on $\mathcal D(f(H))\cap L^2_{\max\{0,s\}}$,
\begin{align}\label{eq:fComR}
\mathrm i[f(H),r^s] 
=
-2s
\int _{\C}
(H-z)^{-1}\mathop{\mathrm{Re}}(r^{s-1}\omega\cdot p)\,(H-z)^{-1}\,\mathrm d\mu_f(z).
\end{align} 
In particular  $\mathrm i[f(H),r^s]$ is of order $s-1$.
\end{enumerate}
\end{lemma}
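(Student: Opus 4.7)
The plan is to derive both assertions from the Helffer--Sj\"ostrand formula in Lemma~\ref{lem:A1} with $k=0$, combined with the basic resolvent identity for commutators and weighted resolvent estimates.

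For part~\ref{item:180128b}, fixing $s\in\R$, I aim to show that there exist $N=N(s)\in\mathbb N_0$ and $C_s>0$ with
\[
\|r^s(H-z)^{-1}r^{-s}\psi\|\le C_s|\Im z|^{-N}\langle z\rangle^N\|\psi\|
\quad\text{for all }\psi\in L^2_\infty,\ z\notin\sigma(H).
\]
Combined with the bound $|\bar\partial\tilde f(z)|\le C_k|\Im z|^k\langle z\rangle^{-k-1}$, valid for every $k\in\mathbb N_0$ when $f\in\mathcal F^0$, choosing $k$ sufficiently large then makes the Helffer--Sj\"ostrand integral absolutely convergent in $\mathcal L(L^2_s)$ and proves $f(H)=\vO(r^0)$. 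The a priori estimate itself is obtained by iterating
\[
r^s(H-z)^{-1}r^{-s}=(H-z)^{-1}+[r^s,(H-z)^{-1}]r^{-s},
\]
together with the identities $[r^s,(H-z)^{-1}]=-(H-z)^{-1}[H,r^s](H-z)^{-1}$ and $[H,r^s]=-2\mathrm i s\,\mathrm{Re}(r^{s-1}\omega\cdot p)$; each iteration replaces $r^s$ in the weight by $r^{s-1}$ at the cost of an additional resolvent and a $p$-factor, both absorbed into a polynomial in $|\Im z|^{-1}$ and $\langle z\rangle$ by means of $\|p(H-z)^{-1}\|\le C\langle z\rangle^{1/2}|\Im z|^{-1}$.

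For part~\ref{item:180128}, the algebraic identity $\mathrm i[H,r^s]=2s\,\mathrm{Re}(r^{s-1}\omega\cdot p)$ follows immediately from $H=p^2+V$ (since $V$ commutes with $r^s$) and $\nabla(r^s)=sr^{s-1}\omega$. Substituting the resulting resolvent-commutator identity
\[
\mathrm i[(H-z)^{-1},r^s]=-2s(H-z)^{-1}\mathrm{Re}(r^{s-1}\omega\cdot p)(H-z)^{-1}
\]
into the Helffer--Sj\"ostrand representation of $f(H)$ yields \eqref{eq:fComR} formally. To promote it to a rigorous form identity on $\mathcal D(f(H))\cap L^2_{\max\{0,s\}}$ and simultaneously to deduce the order $s-1$, I would show that for each $\sigma\in\R$ the integrand defines a bounded operator in $\mathcal L(L^2_\sigma,L^2_{\sigma-s+1})$ with a norm estimate of the form $C|\Im z|^{-N'}\langle z\rangle^{N'+1/2}$ for some $N'=N'(s,\sigma)$. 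The bound is obtained by pushing the outer weights $r^{\sigma-s+1}$ and $r^{-\sigma}$ across the two resolvents via the weight-commutation technique of part~\ref{item:180128b}, and bounding the single interior $p$-factor using $\|p(H-z)^{-1}\|\le C\langle z\rangle^{1/2}|\Im z|^{-1}$. The hypothesis $t<1/2$ enters precisely to ensure that the resulting integrand is $\d\mu_f$-integrable; the threshold value $1/2$ reflects the sharpness of this weighted bound.

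The main obstacle I anticipate is the rigorous justification of the interchange of the form commutator with the contour integral on the claimed domain. I would first verify \eqref{eq:fComR} for $f\in C^\infty_{\mathrm c}(\R)$, where $f(H)$ and the related operators act strongly on $L^2_s$ and Fubini applies directly, and then pass to arbitrary $f\in\mathcal F^t$ with $t<1/2$ by mollification, using the uniform-in-approximation estimates just derived together with a density argument on the form domain.
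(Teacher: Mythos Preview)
The paper does not prove this lemma; it is stated without argument, the calculus of Section~\ref{sec:Calculus considerations} being recalled from \cite[Section~2]{AIIS}. Your strategy---weighted resolvent bounds fed into the Helffer--Sj\"ostrand integral---is the standard one and is essentially what one would expect that reference to contain.

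There is, however, one genuine gap in your treatment of part~\ref{item:180128b}. You claim that the integral $\int(H-z)^{-1}\,\mathrm d\mu_f(z)$ converges absolutely in $\mathcal L(L^2_s)$ for $f\in\mathcal F^0$ once $k$ is large. This fails at the endpoint $t=0$: the integrand is bounded only by $|\Im z|^{-1}$ in operator norm, and after the $v$-integration over $|v|\lesssim\langle u\rangle$ one is left with $\int\langle u\rangle^{-1}\,\mathrm du=\infty$. This is precisely why Lemma~\ref{lem:A1} requires $k>t$, forcing $k\ge1$ when $t=0$. Passing to weighted norms only inserts further factors of $\langle z\rangle^{1/2}|\Im z|^{-1}$ and cannot cure the divergence. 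The clean fix is to reverse the order of the two parts: establish part~\ref{item:180128} first (where the single $p$-factor in $[H,r^s]$ supplies exactly the gain that shifts the convergence threshold from $t<0$ to $t<1/2$), and then deduce part~\ref{item:180128b} via
\[
r^\sigma f(H)r^{-\sigma}=f(H)+r^\sigma[f(H),r^{-\sigma}],
\]
the first term being bounded by the spectral theorem and the second being $r^\sigma\cdot\vO(r^{-\sigma-1})=\vO(r^{-1})$ by part~\ref{item:180128}.

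Two minor points on part~\ref{item:180128}. Your iteration for the weighted resolvent bound terminates only for integer exponents; complex interpolation between consecutive integers handles real $s$. And your stated integrand estimate $C|\Im z|^{-N'}\langle z\rangle^{N'+1/2}$ has the $\langle z\rangle$-exponent too high: each commutation step contributes $\langle z\rangle^{1/2}|\Im z|^{-1}$, so the correct form is $C|\Im z|^{-N'}\langle z\rangle^{(N'-1)/2}$, with the binding case $N'=2$ giving the uniform threshold $t<1/2$ you correctly identify.
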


\subsection{The operator $B$, the Mourre estimate  and commutation with $A_1$}\label{subsec:Operatat B}

We show that the operator $B$ is self-adjoint and examine  commutators
 with functions of $B$, including the  prime  example $\i[f(H),A_1]$,
 $f\in
 C^\infty_\c(\R)$  real  and  $A_1$ given by \eqref{eq:propgaObs}.

\subsubsection{Self-adjoint realization}\label{subsubsec: self-ad B}

We recall  the self-adjoint realization of the  operator $B$ from
\cite{AIIS} and accompanying properties related to 
 the spaces  \eqref{eq:17111317}.

\begin{lemma}\label{lemma:Flow} 
The operator $B$ defined as \eqref{eq:B}
is essentially self-adjoint on $C^\infty_{\mathrm c}(\mathbf X)$,
and the self-adjoint extension, denoted by $B$ again,
satisfies that for some $C>0$
\begin{subequations}
\begin{align}
\mathcal D(B)\supseteq \mathcal H^1,\quad 
\|B\psi\|_{\mathcal H}\le C\|\psi\|_{\mathcal H^1}\ \ \text{for  }\psi\in\mathcal H^1.
\label{eq:171005}
\end{align}
In addition, $\mathrm e^{\mathrm itB}$ for each $t\in\mathbb R$
naturally restricts/extends
as bounded operators $\mathrm e^{\mathrm itB}\colon \mathcal H^{\pm
  k}\to\mathcal H^{\pm k}$, $k=1,2$, and 
they satisfy
\begin{align}
\sup_{t\in [-1,1]}\|\mathrm e^{\mathrm itB}\|_{\mathcal L(\mathcal H^{\pm k})}<\infty,
\label{eq:17100722}
\end{align}  
\end{subequations}
respectively. Moreover, the restriction $\mathrm e^{\mathrm itB}\in
\mathcal L(\mathcal H^k)$, $k=1,2$,  is 
strongly continuous in $t\in\mathbb R$.
\end{lemma}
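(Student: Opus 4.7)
The plan is to reduce the statement to the standard theory of first-order symmetric differential operators with smooth bounded coefficients, combined with a Duhamel/Gr\"onwall argument for propagation of $H$-regularity along the flow of $2\omega$.

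First I would observe that property \ref{item:1c} with $k=0$ implies that $\omega=\nabla r$ is smooth and bounded together with all its derivatives on $\mathbf X$; hence
\begin{align*}
B=-2\i\,\omega\cdot\nabla-\i\,\Div\omega\quad\text{on }C^\infty_{\mathrm c}(\mathbf X)
\end{align*}
is a symmetric first-order operator with smooth bounded coefficients. The elementary estimate $\|B\psi\|\le C(\|\nabla\psi\|+\|\psi\|)\le C\|\psi\|_{\vH^1}$ on $C^\infty_{\mathrm c}$ will yield both $\vD(B)\supseteq\vH^1$ and the norm bound \eqref{eq:171005} once self-adjointness is in hand.

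For essential self-adjointness I would invoke the flow of $2\omega$: since $2\omega$ is globally bounded and smooth, the ODE $\dot y=2\omega(y)$ admits a unique complete smooth flow $\phi_t\colon\mathbf X\to\mathbf X$ with strictly positive Jacobian $J_t=\abs{\det\d\phi_t}$, and the prescription $(U_t\psi)(x)=J_t(x)^{1/2}\psi(\phi_t(x))$ defines a strongly continuous one-parameter unitary group on $\vH$ by a direct change-of-variables computation. A pointwise differentiation at $t=0$ gives $(\d/\d t)|_{t=0}U_t\psi=\i B\psi$ for $\psi\in C^\infty_{\mathrm c}$, and since $\omega$ is bounded the flow preserves compactness, so $C^\infty_{\mathrm c}$ is $U_t$-invariant and is thus a core for the generator of $U_t$. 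Stone's theorem will then identify $\e^{\i tB}=U_t$ and prove essential self-adjointness.

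The $\vH^{\pm k}$-bounds are obtained by propagating $H$-regularity. The central ingredient is the computation
\begin{align*}
\i[H,B]=4\,p\cdot(\nabla^2r)\,p-2\,\omega\cdot\nabla V,
\end{align*}
interpreted as a quadratic form, where the first term is bounded $\vH^1\to\vH^{-1}$ by the Hessian estimate from \ref{item:1c} with $k=2$, and the second is $H$-form-bounded by Condition~\ref{cond:smooth2wea3n12} (cluster by cluster, using the short-range compactness for $V^a_{\rm sr}$ and the decay of $\nabla V^a_{\rm lr}$). The Duhamel identity
\begin{align*}
(H-E)\e^{\i tB}\psi=\e^{\i tB}(H-E)\psi+\int_0^t\e^{\i sB}\parb{\i[H,B]}\e^{\i(t-s)B}\psi\,\d s,
\end{align*}
understood in $\vH^{-1}$ for $\psi\in\vH^1$, together with a Gr\"onwall argument, will yield \eqref{eq:17100722} for $k=1$; the case $k=2$ is analogous via the iterated commutator $\i[\i[H,B],B]$, whose boundedness reduces to \ref{item:1c} with $k=3$ and further form bounds of the same nature. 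The dual bounds on $\vH^{-k}$ follow from $(\e^{\i tB})^*=\e^{-\i tB}$, and strong continuity on $\vH^{\pm k}$ is then deduced from strong continuity on $\vH$, the uniform bound on $[-1,1]$, and a $3\varepsilon$-argument using density of $\vH^{k+1}\subseteq \vH^k$. I expect the main obstacle to be the second-commutator bound needed for $k=2$, since it forces one to iterate carefully \ref{item:1c} together with the short-range compactness assumption and the decay of $\nabla^2 V^a_{\rm lr}$, and the bookkeeping (rather than any single estimate) is the delicate part.
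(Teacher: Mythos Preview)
The paper does not prove this lemma; it simply cites \cite{AIIS} (the sentence introducing Subsection~\ref{subsec:Operatat B} reads ``We recall the self-adjoint realization of the operator $B$ from \cite{AIIS}''). Your overall strategy --- essential self-adjointness via the complete flow of the bounded smooth field $2\omega$, then Duhamel/Gr\"onwall to propagate Sobolev regularity --- is the standard route and is essentially what one finds in that reference.

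There is, however, a genuine gap in your Gr\"onwall step. You claim that $\i[H,B]=4p\cdot(\nabla^2 r)p-2\omega\cdot\nabla V$ is $H$-form-bounded, i.e.\ bounded as a quadratic form on $\vH^1$. This is not justified for the short-range part: Condition~\ref{cond:smooth2wea3n12} only gives that $V^a_{\rm sr}(-\Delta_a+1)^{-1}$ is compact, so $V^a_{\rm sr}\colon H^2\to L^2$ is bounded, but there is no reason for $V^a_{\rm sr}\colon H^1\to L^2$ to hold, and it fails for admissible $V^a_{\rm sr}$ with local singularities. Consequently the $V^a_{\rm sr}$-contribution to $\i[H,B]$ lies only in $\vL(\vH^2,\vH^{-1})\cap\vL(\vH^1,\vH^{-2})$ --- exactly what the paper records in Lemma~\ref{lem:17100620} --- and \emph{not} in $\vL(\vH^1,\vH^{-1})$. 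Your differential inequality for $t\mapsto\langle\e^{\i tB}\psi,(H-E)\e^{\i tB}\psi\rangle$ therefore does not close; the analogous obstruction (one would need $\bD B\colon\vH^2\to\vH$) blocks your $k=2$ argument as well. (Your commutator formula is also missing the zeroth-order piece coming from $[-\Delta,\Div\omega]$, though that is harmless.)

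The repair is immediate once you use, as the paper notes right after \eqref{eq:17111317}, that $\vH^1=H^1(\mathbf X)$ and $\vH^2=H^2(\mathbf X)$: run the Gr\"onwall argument with $-\Delta+1$ in place of $H-E$. Then $\i[-\Delta,B]$ is a second-order differential operator with smooth bounded coefficients (no potential enters at all), hence bounded both $H^1\to H^{-1}$ and $H^2\to L^2$. Both cases $k=1,2$ then follow from a single-commutator Gr\"onwall estimate, and the second commutator you anticipated for $k=2$ is not needed.
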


\begin{lemma}\label{lem:171113b}
\begin{enumerate}[1)]
\item\label{item:181115}
For any  $F\in \mathcal F^0$  the operator   $F(B)$ is of order $0$.
\item\label{item:BrForm}
 Let  $F\in \mathcal F^t$ with $t<1$ and 
  $s\in\mathbb R$ be given.
Then $\mathrm i[F(B),r^s]$ 
is represented,  
 as a  quadratic form on $\mathcal D(F(B))\cap L^2_{\max\{0,s\}}$,
by 
\begin{align*}
\mathrm i[F(B),r^s] 
= -2s
\int _{\C}
(B-z)^{-1}\bigl(\omega^2r^{s-1}\bigr) (B-z)^{-1}\,\mathrm d\mu_F(z).
\end{align*}  
 In particular  $\mathrm i[F(B),r^s]$ is of order $s-1$.
\end{enumerate}
\end{lemma}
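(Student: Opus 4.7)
The argument mirrors the proof of Lemma \ref{lem:171113} with the self-adjoint operator $T=B$ fed into Lemma \ref{lem:A1}. It hinges on one explicit low-order commutator: using $\omega=\nabla r$ and $[p_j,r^s]=-\mathrm i s\,r^{s-1}\omega_j$, a direct calculation on $B=-\mathrm i\sum_j(\omega_j\partial_j+\partial_j\omega_j)$ gives
$$
\mathrm i[B, r^s]=2s\,\omega^2 r^{s-1}.
$$
Since $|\omega|$ is bounded and $r\sim\langle x\rangle$, this is a multiplication operator of order $s-1$.

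\textbf{Weighted resolvent bounds.} The main preliminary is the auxiliary estimate $\|r^\tau(B-z)^{-1}r^{-\tau}\|_{\mathcal L(\mathcal H)}\le C_\tau |\Im z|^{-N_\tau}$ for some $N_\tau\in\mathbb N$ depending on $\tau$. Conjugation gives
$$
r^\tau(B-z)^{-1}r^{-\tau}=(B_\tau-z)^{-1},\qquad B_\tau=B+2\mathrm i\tau\,\omega^2 r^{-1},
$$
so $B_\tau$ is $B$ plus the bounded, anti-symmetric perturbation $W_\tau=2\mathrm i\tau\,\omega^2 r^{-1}$. A Neumann series delivers $\|(B_\tau-z)^{-1}\|\le 2|\Im z|^{-1}$ when $|\Im z|\ge 2\|W_\tau\|$; on the complementary bounded strip one iterates the commutator $[B,r^{\pm\tau}]$ with the resolvent identity to compensate for the failure of symmetry, at the cost of raising the exponent to some $N_\tau$.

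\textbf{Parts (1) and (2).} Apply Lemma \ref{lem:A1} with $T=B$. For part (1) use the higher-order representation there with $k\ge N_\tau$ so that the $|\Im z|^k$ factor in $\mathrm d\mu_F$ absorbs the weighted resolvent growth from the previous step; since $F\in\mathcal F^0$ admits good almost analytic extensions with arbitrarily many bounded $\bar\partial$-derivatives, the required $k$ is always available, yielding $\|r^\tau F(B) r^{-\tau}\|\le C_\tau$. For part (2) start from the $k=0$ formula and insert the standard resolvent commutator identity
$$
\mathrm i\bigl[(B-z)^{-1},r^s\bigr]=-(B-z)^{-1}\,(\mathrm i[B,r^s])\,(B-z)^{-1},
$$
valid as a quadratic form on $\mathcal D(B)\cap L^2_{\max\{0,s\}}$; together with Step 1 this produces the stated integral representation on $\mathcal D(F(B))\cap L^2_{\max\{0,s\}}$. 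The order-$(s-1)$ assertion then follows by applying the weighted resolvent bounds to both resolvent factors, noting that $\omega^2 r^{s-1}$ itself has order $s-1$.

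\textbf{Main obstacle.} The delicate point is precisely the weighted resolvent bound on the small-$|\Im z|$ strip where $B_\tau$'s non-symmetry defeats the bare $|\Im z|^{-1}$ bound available in the self-adjoint case. The Helffer--Sj\"ostrand bootstrap with $k\ge 1$, using the smoothness of the extension $\tilde F$, is exactly what absorbs the resulting loss in the final estimates; this is also the reason smoothness is built into the hypothesis $F\in\mathcal F^0$ in part (1) and the restriction $t<1$ in part (2).
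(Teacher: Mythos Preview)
The paper does not prove this lemma in the text; it is recalled from \cite{AIIS}. Your strategy---compute $\mathrm i[B,r^s]=2s\,\omega^2 r^{s-1}$, establish weighted resolvent bounds $\|r^\tau(B-z)^{-1}r^{-\tau}\|\le C_\tau|\Im z|^{-N_\tau}$, then feed both into the Helffer--Sj\"ostrand machinery---is indeed the standard one, and your commutator computation and the conjugation $r^\tau(B-z)^{-1}r^{-\tau}=(B_\tau-z)^{-1}$ with $B_\tau=B+2\mathrm i\tau\,\omega^2 r^{-1}$ are correct.

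There is, however, a genuine gap in your treatment of part~\ref{item:181115}. You write that one should ``use the higher-order representation there with $k\ge N_\tau$'', referring to Lemma~\ref{lem:A1}. But that formula represents $F^{(k)}(B)$, not $F(B)$; for $F\in\mathcal F^0$ the $k=0$ identity $F(B)=\int_{\C}(B-z)^{-1}\,\mathrm d\mu_F(z)$ is exactly the borderline case $k=t$ excluded in Lemma~\ref{lem:A1}, and the integral need not converge absolutely at infinity (one only gets $\langle z\rangle^{-2}$ decay of the integrand, which is not integrable over $\C$). The $|\Im z|^k$ vanishing of $\bar\partial\tilde F$ near the real axis---the mechanism you emphasize---does not repair the decay at infinity.

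A clean fix is to reverse the order: establish part~\ref{item:BrForm} first. The commutator formula carries an extra resolvent factor, hence one extra power of decay, so the integral converges for all $t<1$; the order-$(s-1)$ conclusion then follows from your weighted resolvent bounds (which can be obtained for integer $\tau$ by iterating $\mathrm{ad}_r$ on $(B-z)^{-1}$, noting $[r,[r,B]]=0$, and then interpolating). With part~\ref{item:BrForm} in hand, part~\ref{item:181115} follows from
\[
r^\sigma F(B)r^{-\sigma}=F(B)+\bigl[r^\sigma,F(B)\bigr]r^{-\sigma},
\]
since the second term has order $(\sigma-1)+(-\sigma)=-1$ and is therefore bounded.
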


\subsubsection{First commutator with $B$ and the Mourre estimate}\label{subsubsec:17111117}
Here we are going to compute the commutator $\mathrm i[H,B]$,
and bound it from below.
We define 
$\mathrm i[H,B]$ first as a (bounded) quadratic form on $\mathcal H^2$:
\begin{align}
\langle \mathrm i[H,B]\rangle_\psi
=2\langle \mathop{\mathrm{Im}}(BH)\rangle_\psi
=\mathrm i\langle H\psi,B\psi\rangle-\mathrm i\langle B\psi,H\psi\rangle
\ \ \text{for }\psi\in \mathcal H^2.
\label{eq:17100614}
\end{align}
We recall that $\omega:=\mathop{\mathrm{grad}} r$ and let 
$$
\tilde\omega=\tfrac12\mathop{\mathrm{grad}} r^2,\quad 
\tilde h=\tfrac12\mathop{\mathrm{Hess}}r^2,\quad
h=\mathop{\mathrm{Hess}}r.
$$
Then \textit{formal} computations  suggest that 
\begin{align*}
 B=&\mathrm i[H,r],\\
 A:=&\tfrac{\mathrm i}2[H,r^2]=r^{1/2}Br^{1/2},\\
\mathbf DA:=\i [H,A]=&4p\cdot \tilde h p-\tfrac12\parb{\Delta^2r^2}-2\tilde \omega \cdot (\nabla V),\\
\mathbf DB:=\mathrm i[H,B]=&r^{-1/2}\bigr(\mathrm i[H,A] -B^2\bigr)r^{-1/2} +r^{-2}\omega\cdot h\omega.
\end{align*} 
Thus we {could expect that $\mathrm i[H,B]$  extends  continuously to larger spaces,
and this is justified in the following lemma, which in turn partly
justifies the above computations.

\begin{lemma}\label{lem:17100620}
Denoting  the extension of the quadratic form ${\bf D}B=\mathrm
i[H,B]$ of \eqref{eq:17100614}  by the same notation, 
it is expressed as 
\begin{subequations}
\begin{align}
{\bf D}B
&=\slim_{t\to 0} t^{-1}\parb{H\e^{\i tB}-\e^{\i tB}H}\ \ 
\text{in }
{\mathcal L(\mathcal H^2,\mathcal H^{-1})
\cap \mathcal L(\mathcal H^1,\mathcal H^{-2})}
\label{eq:limit00b}
\end{align}
and, more explicitly,  as 
\begin{align}\label{eq:formulasy}
\begin{split}
{\bf D}B
=r^{-1/2}\parb{L -B^2}r^{-1/2}
, 
\end{split}
\end{align} 
where $L\in \mathcal L(\mathcal H^2,\mathcal H^{-1})
\cap \mathcal L(\mathcal H^1,\mathcal H^{-2})$ 
is given by
\begin{align}\label{eq:mourre comm}
\begin{split}
L&=4p\cdot \tilde h p
-\tfrac12(\Delta^2 r^2)
+r^{-1} \omega\cdot h\omega
\\&\phantom{{}={}}{}
+2\sum_{a\in\mathcal A}
\Bigl(-\tilde\omega^a\cdot \bigl(\nabla^a V^a_{\rm lr}\bigr)
+\bigl(V^a_{\rm sr}\tilde\omega^a\bigr)\cdot \nabla^a
-\nabla^a\cdot
\bigl(V^a_{\rm sr}\tilde\omega^a\bigr)
+V^a_{\rm sr}\mathop{\mathrm{div}} \tilde\omega^a
\Bigr).
\end{split}
\end{align}  
\end{subequations}
Here $\tilde\omega^a$ and $\nabla^a$  for any $a\in\mathcal A$ denote  the projection 
onto the internal components of $\tilde\omega$ and $\nabla$,
respectively.
\end{lemma}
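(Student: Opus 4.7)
The plan is to first derive the explicit identity \eqref{eq:mourre comm} by a formal product-rule computation on a dense core, then verify that the resulting operator $L$ has the stated continuity properties, and finally upgrade the quadratic-form identity to the strong-limit representation \eqref{eq:limit00b} by invoking the group properties of $\e^{\i tB}$ from Lemma~\ref{lemma:Flow}.

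For the formal computation I would work on $\psi\in C^\infty_{\mathrm c}(\mathbf X)$ where every product is classical. A direct calculation using $[p^2,r^2]=-4\i r\omega\cdot p-2|\omega|^2-2r\Delta r$ together with $2r^{1/2}(\omega\cdot p)r^{1/2}=2r\omega\cdot p-\i|\omega|^2$ yields the exact factorization $A=r^{1/2}Br^{1/2}$. Differentiating this through $\i[H,\cdot]$ and using \eqref{eq:r0}--\eqref{eq:r2} to account for the cross-terms $\i[H,r^{1/2}]$, one arrives at
\begin{align*}
\i[H,B]=r^{-1/2}\parb{\i[H,A]-B^2}r^{-1/2}+r^{-2}\omega\cdot h\omega,
\end{align*}
the $r^{-2}\omega\cdot h\omega$ summand being produced by commuting $r^{-1/2}$ against the $B$ factors when collecting the $B^2$ contribution from $B^2=r^{-1/2}Ar^{-1}Ar^{-1/2}$. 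The free-operator part of $\i[H_0,A]$ then collapses to $4p\cdot\tilde h p-\tfrac12(\Delta^2 r^2)$ via the standard $[p^2,\cdot]$ identities combined with $\tilde h=\omega\otimes\omega+rh$.

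The potential contribution to $\i[H,A]$ is formally $-2\tilde\omega\cdot\nabla V=-2\sum_{a}\tilde\omega^a\cdot\nabla^a V^a$. For the long-range piece this is already a bounded multiplication by Condition~\ref{cond:smooth2wea3n12}, yielding the $-2\tilde\omega^a\cdot\nabla^a V^a_{\rm lr}$ term of \eqref{eq:mourre comm}. For the short-range piece an integration by parts rewrites
\begin{align*}
-2\tilde\omega^a\cdot\nabla^a V^a_{\rm sr}=2\bigl((V^a_{\rm sr}\tilde\omega^a)\cdot\nabla^a-\nabla^a\cdot(V^a_{\rm sr}\tilde\omega^a)+V^a_{\rm sr}\mathop{\mathrm{div}}\tilde\omega^a\bigr),
\end{align*}
each summand defining a bounded form on $\mathcal H^2\times\mathcal H^1$ (and symmetrically on $\mathcal H^1\times\mathcal H^2$) using only compactness of $V^a_{\rm sr}(-\Delta^a+1)^{-1}$ together with boundedness of $\tilde\omega^a$ and its divergence. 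Assembling, the leading term $4p\cdot\tilde h p$ belongs to $\mathcal L(\mathcal H^2,\mathcal H^{-1})\cap\mathcal L(\mathcal H^1,\mathcal H^{-2})$, the $-\tfrac12(\Delta^2 r^2)+r^{-1}\omega\cdot h\omega$ contribution is a bounded multiplication by \eqref{eq:r0}--\eqref{eq:r2}, and the short-range rewriting supplies the remaining pieces of $L$ in the desired duality. Thus $L$ has the asserted mapping property and \eqref{eq:formulasy} holds as a quadratic-form identity.

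For the strong-limit representation, fix $\psi\in\mathcal H^2$ and $\varphi\in\mathcal H^1$ and apply the fundamental theorem of calculus to $s\mapsto\langle\e^{-\i sB}\varphi,H\e^{\i sB}\psi\rangle$, which is well defined by \eqref{eq:17100722} of Lemma~\ref{lemma:Flow}. This yields
\begin{align*}
\bigl\langle\varphi,t^{-1}(H\e^{\i tB}-\e^{\i tB}H)\psi\bigr\rangle=t^{-1}\int_0^t\bigl\langle\e^{-\i sB}\varphi,({\bf D}B)\,\e^{\i sB}\psi\bigr\rangle\,\d s,
\end{align*}
the integrand being continuous in $s$ by strong continuity of the flow on the Sobolev scale. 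Letting $t\to 0$ gives \eqref{eq:limit00b} in $\mathcal L(\mathcal H^2,\mathcal H^{-1})$, and taking adjoints provides the analogous limit in $\mathcal L(\mathcal H^1,\mathcal H^{-2})$. The main obstacle is the careful bookkeeping of the $\mathcal O(r^{-1})$ commutator remainders that appear when moving $r^{\pm 1/2}$ across $B$ and $p\cdot\tilde h p$: they must sum to precisely the correction $r^{-2}\omega\cdot h\omega$ with no leftover of higher order, while simultaneously the short-range rewriting must be consistent with the quadratic-form extension to the full $\mathcal H^1\times\mathcal H^2$ duality.
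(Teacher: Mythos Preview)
The paper does not supply its own proof of this lemma; it is one of the results explicitly recalled from \cite[Section~2]{AIIS}, so there is nothing to compare against in the present text. Your outline is the standard one and the formal computation, the integration-by-parts rewriting of the short-range contribution, and the verification that $L\in\mathcal L(\mathcal H^2,\mathcal H^{-1})\cap\mathcal L(\mathcal H^1,\mathcal H^{-2})$ are all correct.

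There are two small slips in the strong-limit step. First, the integral identity you wrote is not the one produced by the function $s\mapsto\langle\e^{-\i sB}\varphi,H\e^{\i sB}\psi\rangle$: the fundamental theorem of calculus applied to that function yields $\langle\e^{-\i tB}\varphi,H\e^{\i tB}\psi\rangle-\langle\varphi,H\psi\rangle$, not $\langle\varphi,(H\e^{\i tB}-\e^{\i tB}H)\psi\rangle$. The correct Duhamel representation is
\[
H\e^{\i tB}\psi-\e^{\i tB}H\psi=\int_0^t \e^{\i(t-s)B}(\mathbf D B)\,\e^{\i sB}\psi\,\d s\quad\text{in }\mathcal H^{-1},
\]
obtained by differentiating $s\mapsto \e^{\i(t-s)B}H\e^{\i sB}\psi$; the differentiability here uses not just \eqref{eq:17100722} but also the elementary fact $B\in\mathcal L(\mathcal H^2,\mathcal H^1)$ (first order, smooth bounded coefficients), so that $s\mapsto\e^{\i sB}\psi$ is $C^1$ in $\mathcal H^1$. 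The limit $t\to0$ of the averaged integral then gives \eqref{eq:limit00b} in $\mathcal L(\mathcal H^2,\mathcal H^{-1})$. Second, ``taking adjoints'' of a strong limit only gives a weak${}^*$ limit; to obtain the strong limit in $\mathcal L(\mathcal H^1,\mathcal H^{-2})$ you should rerun the same Duhamel argument with $\psi\in\mathcal H^1$, using $B\in\mathcal L(\mathcal H^1,\mathcal H)$, $H\in\mathcal L(\mathcal H,\mathcal H^{-2})$ and the $\mathcal L(\mathcal H^{-2})$-boundedness of $\e^{\i tB}$. Both fixes are routine and your overall approach is sound.
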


Note that formally 
\begin{align*}
L=\mathbf DA+ r^{-1} \omega\cdot h\omega=\mathbf DA+\vO(r^{-1}).
\end{align*}
 Whence we may use the following  version of the  {Mourre estimate}, cf. \cite{AIIS}. For
$\lambda\in \R\setminus \vT(H)$,  $\lambda>\lambda_\c:=\min\sigma_{\c}(H)=\min\sigma_{\ess}(H)$,  the notation $d(\lambda)$ is used for
the distance to the biggest threshold of $H$ below $\lambda$.
\begin{lemma}\label{lemma:Mourre1_hard} 
For any $\lambda\in
(\lambda_\c, \infty)\setminus \vT(H)$   and $\epsilon>0$ 
there exist a
neighbourhood $\vU$ of $\lambda$ and a compact operator $K$ on
$\vH$, such that for all  real $f\in C^\infty_{\c}(\vU)$
\begin{equation*}
f(H)Lf(H)\geq f(H)\bigl(4d(\lambda)-\epsilon
-K\bigr)f(H).
\end{equation*} Here, strictly speaking,  $r$ is meant as the rescaled
version $r_R=r_R(x):=Rr(x/R)$ with $R=R(\lambda,\epsilon)\ge 1$ taken
sufficiently large, and also   $\vU$ and  $K$ may depend on $R$.
\end{lemma}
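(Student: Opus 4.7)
The plan is to reduce the claimed bound to the classical sharp $N$-body Mourre estimate for the Graf--Derezi\'nski conjugate operator $A=r^{1/2}Br^{1/2}$, with all `unwanted' terms absorbed into the compact remainder $K$. To begin, recall from Lemma \ref{lem:17100620} that $L=\mathbf DA+r^{-1}\omega\cdot h\omega$, where $\mathbf DA$ is (formally) $\mathrm i[H,A]$ and the trailing term is, by \eqref{eq:r2}, a bounded multiplier of size $\mathcal O(r^{-1})$ as $|x|\to\infty$. Hence $f(H)(r^{-1}\omega\cdot h\omega)f(H)$ is norm-small plus compact, and can be absorbed into $K$ together with an arbitrarily small uniform error. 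In the same way, the rearrangement of the $V^a_{\mathrm{sr}}$-contribution in \eqref{eq:mourre comm}, when combined with the compactness of $V^a_{\mathrm{sr}}(H^a+\mathrm i)^{-1}$ (Condition \ref{cond:smooth2wea3n12}(1)) and standard commutator manipulations, differs from the formal expression $-2\tilde\omega^a\cdot(\nabla^a V^a_{\mathrm{sr}})$ by an $f(H)$-compact operator; the long-range piece $-2\tilde\omega^a\cdot(\nabla^a V^a_{\mathrm{lr}})$ is already an $f(H)$-compact multiplier since $\tilde\omega^a\cdot(\nabla^a V^a_{\mathrm{lr}})=\mathcal O(\inp{x^a}^{-\mu})$ vanishes in directions where $x^a$ is large while $f(H)$ controls the $\mathbf X_a$-directions via the intertwining.

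The problem thus reduces to proving the classical Mourre bound
\[
f(H)\,\mathbf DA\,f(H)\ \ge\ f(H)\bigl(4d(\lambda)-\tfrac{\epsilon}{2}-K_0\bigr)f(H)
\]
for some compact $K_0$ depending on a neighbourhood $\vU$ of $\lambda$. This is the Perry--Sigal--Simon/Froese--Herbst estimate adapted to the Graf-type vector field $\tilde\omega=\tfrac12\mathop{\mathrm{grad}} r^2$. The argument proceeds by induction on $\#a$ along the lattice $\vA$: one uses a Deift--Simon type partition of unity $\{J_a\}_{a\in\vA}$ localizing phase space to conical neighbourhoods of the subspaces $\mathbf X_a$, invokes the self-similarity property \eqref{eq:self-similar} to split $\mathbf DA=\sum_a J_a(\mathbf D^{(a)}A^a+4p_a\cdot \tilde h_{a} p_a)J_a$ modulo compacts, applies the inductive Mourre estimate to the internal factor $\mathbf D^{(a)}A^a$ on Ran$\,f(H^a)$, and uses the convexity of $r^2$ (property \ref{item:3c}) to get $4p_a\cdot\tilde h_a p_a\ge 4(H-H^a-I_a)$ on intercluster phase space. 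The inductive constant $d(\lambda)$ emerges from the identity $\lambda-\max\{\mu\le\lambda:\mu\in\sigma_{\mathrm{pp}}(H^a)\cup\vT(H^a)\}\ge d(\lambda)$ together with the definition $\vT(H)=\bigcup_a\sigma_{\mathrm{pp}}(H^a)$.

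The rescaling $r(x)\to r_R(x)=Rr(x/R)$ with $R$ large plays two roles: it moves all relevant geometry into the region $|x|\ge R'$ where the self-similarity \eqref{eq:self-similar} and the convexity estimates are effective, and it simultaneously makes $r_R^{-1}\omega_R\cdot h_R\omega_R=\mathcal O(R^{-2})$ uniformly, rendering the discrepancy between $L$ and $\mathbf DA$ arbitrarily small. The main obstacle is that the Derezi\'nski construction of $r$ in Section \ref{sec:Derezinski's construction} differs from the original Graf construction used in \cite{Sk1,AIIS}; however, both yield the same bounds \eqref{eq:r0}--\eqref{eq:r2}, the convexity \ref{item:3c}, and the self-similarity \eqref{eq:self-similar}, which are exactly the inputs the induction needs. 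As noted in the paragraph following \eqref{eq:mourre}, the proof from \cite{AIIS} therefore carries over verbatim, completing the argument with $K=K_0+f(H)(r^{-1}\omega\cdot h\omega)f(H)+(\text{compact short-range rearrangement})$ and $\vU$, $R$ adjusted so that the aggregate non-compact error is at most $\epsilon/2$.
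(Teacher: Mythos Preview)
The paper does not supply its own proof of this lemma; it is stated with the attribution ``cf.~\cite{AIIS}'' (the sentence immediately preceding the lemma), relying on the well-known $N$-body Mourre estimate together with the observation $L=\mathbf DA+\vO(r^{-1})$. Your sketch is essentially the standard inductive argument that \cite{AIIS} (originating from \cite{Sk1}) carries out, and the paper explicitly remarks in the paragraph after \eqref{eq:mourre} that this proof transfers from the Graf vector field to the Derezi\'nski construction without change. In overall strategy you are aligned with what the paper invokes.

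There is, however, an incorrect assertion in your first paragraph. You claim that the long-range contribution $-2\tilde\omega^a\cdot(\nabla^a V^a_{\mathrm{lr}})$ is an $f(H)$-compact multiplier because it is $\vO(\langle x^a\rangle^{-\mu})$ and ``$f(H)$ controls the $\mathbf X_a$-directions via the intertwining.'' This is false: a bounded multiplication operator depending on $x^a$ alone, sandwiched by $f(H)$, is \emph{not} compact in general---already for $H=-\Delta$ the operator $f(-\Delta)g(x^a)f(-\Delta)$ commutes with all translations in $\mathbf X_a$ and hence cannot be compact unless it vanishes. The potential commutator terms are not disposed of by a direct compactness argument at this stage; they are constituents of $\mathbf DA$ and are handled \emph{inside} the cluster-lattice induction you outline in your second paragraph. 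On the support of each $J_b$ the intercluster pieces of $I_b$ become small (hence compact after the geometric localization), while the intracluster pieces are absorbed into the sub-Hamiltonian commutator to which the inductive hypothesis applies. Your first paragraph should therefore stop after absorbing the genuinely extra term $r^{-1}\omega\cdot h\omega$; the remaining potential bookkeeping is redundant with (and correctly handled only by) the induction.
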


We will in Section \ref{sec:Computation of T} implement Lemma~\ref{lemma:Mourre1_hard}
in combination with Lemma~\ref{lem:17100620} in the following form,
see also \eqref{eq:basicFBB}.

\begin{corollary}\label{cor:171007}  For any $\lambda\in
(\lambda_\c, \infty)\setminus \vT(H)$ 
there exist $\sigma>0$ and a
neighbourhood $\vU$ of $\lambda$:  For any real 
$f\in C^\infty_{\mathrm c}(\mathcal U)$  there exists $C>0$ 
such that (as quadratic forms on $\mathcal H$)
 \begin{align}
\label{eq:limit}
 \begin{split}
 f(H)(\mathbf DB)f(H)
\ge 
f(H)r^{-1/2}\bigl(\sigma^2-B^2\bigr)r^{-1/2}f(H)
-Cr^{-2}. 
 \end{split}
 \end{align} Here the  strict meaning is the bound with  $r$ replaced with  the rescaled
version $r_R$ with $R\ge 1$ taken
sufficiently large (as in Lemma \ref{lemma:Mourre1_hard}, and  $\vU$ and  $C$ may depend on $R$). 
\end{corollary}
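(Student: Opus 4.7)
The plan is to combine the explicit decomposition $\mathbf{D}B = r^{-1/2}(L - B^2)r^{-1/2}$ furnished by Lemma~\ref{lem:17100620} with the Mourre-type lower bound for $L$ provided by Lemma~\ref{lemma:Mourre1_hard}, using the commutator calculus of Lemma~\ref{lem:171113} to control the error incurred when shuffling $r^{-1/2}$ past $f(H)$. First, fix a small $\epsilon>0$ (to be chosen so that $\sigma^2 := 4d(\lambda)-2\epsilon$ is the target positive constant) and apply Lemma~\ref{lemma:Mourre1_hard} at $\lambda$ with this $\epsilon$. This produces a neighbourhood $\vU_0$ of $\lambda$ and a compact operator $K$ such that, for all real $f\in C^\infty_{\rm c}(\vU_0)$,
\begin{align*}
f(H) L f(H) \geq (4d(\lambda)-\epsilon) f(H)^2 - f(H) K f(H).
\end{align*}
Shrinking $\vU\subseteq\vU_0$ around $\lambda$ allows $\|f(H) K f(H)\|$ to be made arbitrarily small by the standard spectral compactness argument, so that after restricting $f\in C^\infty_{\rm c}(\vU)$ we obtain $f(H) L f(H) \geq \sigma^2 f(H)^2$.

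Second, the task is to pass from this inequality, sandwiched by $r^{-1/2}$, to the inequality for $f(H) r^{-1/2} L r^{-1/2} f(H)$. By Lemma~\ref{lem:171113}\ref{item:180128} applied with $s=-1/2$ and any $t<1/2$, the commutator $[f(H),r^{-1/2}]$ is of order $-3/2$ and admits the explicit representation \eqref{eq:fComR}. Writing
\begin{align*}
f(H) r^{-1/2} L r^{-1/2} f(H) = r^{-1/2} f(H) L f(H) r^{-1/2} + R,
\end{align*}
where
\begin{align*}
R = [f(H),r^{-1/2}] L r^{-1/2} f(H) - r^{-1/2} f(H) L [f(H),r^{-1/2}],
\end{align*}
each term in $R$ has the schematic structure $\vO(r^{-3/2})\cdot L\cdot\vO(r^{-1/2})$ or its adjoint, and since $L\in\vL(\vH^1,\vH^{-2})\cap\vL(\vH^2,\vH^{-1})$ by Lemma~\ref{lem:17100620}, the composed operator is bounded as a quadratic form by $Cr^{-2}$. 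The same reshuffling on the right-hand side gives $r^{-1/2} f(H)^2 r^{-1/2} = f(H) r^{-1} f(H) + \vO(r^{-2})$.

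Combining these two steps yields
\begin{align*}
f(H) r^{-1/2} L r^{-1/2} f(H) \geq \sigma^2 f(H) r^{-1} f(H) - C r^{-2},
\end{align*}
and subtracting $f(H) r^{-1/2} B^2 r^{-1/2} f(H)$ from both sides, followed by invoking the decomposition $\mathbf{D}B = r^{-1/2}(L - B^2) r^{-1/2}$ of Lemma~\ref{lem:17100620}, produces the claimed bound
\begin{align*}
f(H)(\mathbf{D}B) f(H) \geq f(H) r^{-1/2}(\sigma^2 - B^2) r^{-1/2} f(H) - C r^{-2}.
\end{align*}
The main obstacle is the rigorous verification that the commutator remainder $R$ is of order $r^{-2}$ as a quadratic form: since $L$ is not of any better order in $r$ than $0$ (indeed it contains $p\cdot\tilde h\,p$), the gain must come entirely from the two $r^{-1/2}$ decays supplied by $[f(H), r^{-1/2}] = \vO(r^{-3/2})$ on one side and the ambient $r^{-1/2}$ on the other. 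Carrying this out rests on the Helffer--Sj\"ostrand representation \eqref{eq:fComR} together with the mapping properties $L\in\vL(\vH^k,\vH^{k-3})$ for $k=1,2$ from Lemma~\ref{lem:17100620}, and amounts to routine but careful bookkeeping of how resolvent factors $(H-z)^{-1}$ interact with the weighted Sobolev scale.
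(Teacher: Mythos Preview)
Your approach is correct and is precisely what the paper intends: the corollary is stated without proof, as an immediate consequence of combining Lemma~\ref{lem:17100620} (the decomposition $\mathbf DB=r^{-1/2}(L-B^2)r^{-1/2}$) with Lemma~\ref{lemma:Mourre1_hard} (the Mourre lower bound for $L$), and your outline fills in exactly those steps.

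Two small comments. First, your ``standard spectral compactness argument'' for absorbing $f(H)Kf(H)$ into $\epsilon f(H)^2$ uses that $1_{\vU}(H)\to 0$ strongly as $\vU\to\{\lambda\}$, which needs $\lambda\notin\sigma_{\pp}(H)$. The corollary as stated only excludes thresholds, but the paper explicitly ignores non-threshold embedded eigenvalues throughout (and in the application $\lambda_0\in\vE$ excludes them), so this is harmless in context. Second, your claim that the remainder $R$ is bounded by $Cr^{-2}$ is correct but, as you say, requires tracking $r$-decay and Sobolev regularity simultaneously: the point is that the Helffer--Sj\"ostrand representation \eqref{eq:fComR} gives $[f(H),r^{-1/2}]$ an outer factor $(H-z)^{-1}$ on each side, so that the relevant object is $(H-z)^{-1}L(H-z')^{-1}$, which \emph{is} $\vO(r^0)$ in the calculus of Subsection~\ref{subsubsec:Smooth sign function} (use $L=r^{1/2}(\mathbf DB)r^{1/2}+B^2$ together with \eqref{eq:firstHB}). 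With that observation the ``routine bookkeeping'' goes through.
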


Finally we compute and bound commutators of functions of $H$ and $B$.

\begin{lemma}\label{lem:fHB} 
Let    $f\in \mathcal F^t$ and   $F\in \mathcal F^{t'}$ with
$t<-1/2$ and $t'<1$, respectively. 
Then the  commutators $\mathrm i[f(H),B]$ and  $\mathrm i[f(H),F(B)]$
extend  from $\mathcal D(B)$ and  $\mathcal D(F(B))$ to
bounded quadratic forms 
on $\mathcal H$,  
 represented as
\begin{align*}
\mathrm i[f(H),B]
&=
-\int _{\C}(H-z)^{-1}\bigl(\mathbf DB\bigr) (H-z)^{-1}\,\mathrm d\mu_f(z)
\quad{\mand}\\
\mathrm i[f(H),F(B)]
&=
-\int _{\mathbb C}(B-z)^{-1}\bigl(\mathrm i[f(H),B]\bigr)(B-z)^{-1}\,\mathrm d\mu_F(z),
\end{align*}
respectively. 
Moreover, with  the notation   \eqref{eq:1712022}
\begin{align}\label{eq:firstHB}
\mathrm i[f(H),B]=\vO(r^{-1})
\quad \mand \quad
\mathrm i[f(H),F(B)]=\vO(r^{-1})
.
\end{align}
\end{lemma}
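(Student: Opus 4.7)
The plan is to establish both identities via the Helffer--Sj\"ostrand representation \eqref{82a0} applied respectively to $f(H)$ and $F(B)$, combined with the resolvent commutator identities
\[
\mathrm i[(H-z)^{-1},B]=-(H-z)^{-1}(\mathbf DB)(H-z)^{-1},\qquad \mathrm i[f(H),(B-z)^{-1}]=-(B-z)^{-1}\mathrm i[f(H),B](B-z)^{-1}.
\]
For the first formula, I would apply Lemma~\ref{lem:A1} with $k=0$ to write $f(H)=\int_{\C}(H-z)^{-1}\,\mathrm d\mu_f(z)$ and commute with $B$ termwise. The pointwise integrand $(H-z)^{-1}(\mathbf DB)(H-z)^{-1}$ is a bounded operator on $\mathcal H$: by Lemma~\ref{lem:17100620}, $\mathbf DB\in\vL(\mathcal H^2,\mathcal H^{-1})\cap\vL(\mathcal H^1,\mathcal H^{-2})$, while the standard resolvent mapping properties give $(H-z)^{-1}\in\vL(\mathcal H^{-1},\mathcal H^1)$ with norm bounded by $C|\Im z|^{-1}\langle z\rangle^{1/2}$. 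Together with the good almost-analytic bound $|\bar\partial\tilde f(z)|\leq C_k|\Im z|^k\langle z\rangle^{t-k-1}$, choosing $k$ sufficiently large makes the integral absolutely convergent in $\vL(\mathcal H)$; the condition $t<-1/2$ is precisely what ensures the surplus $\langle z\rangle$ growth coming from the factors of $H$ inside $\mathbf DB$ is absorbed by $\mathrm d\mu_f$.

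For the $\vO(r^{-1})$ bound, I would exploit the explicit structure $\mathbf DB=r^{-1/2}(L-B^2)r^{-1/2}$ from \eqref{eq:formulasy} and the fact that $(H-z)^{-1}$ preserves $L^2_s$ up to commutator corrections: writing $r^{s+1}(H-z)^{-1}(\mathbf DB)(H-z)^{-1}r^{-s}$ and commuting the outer $r$-weights into the integrand via Lemma~\ref{lem:171113} (which gives $\mathrm i[(H-z)^{-1},r^\sigma]$ of order $\sigma-1$ with controlled resolvent growth in $|\Im z|^{-1}$), the result is again a Helffer--Sj\"ostrand-type integral whose absolute convergence is secured by enlarging $k$. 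This gives the weighted bound \eqref{eq:defOrder} with $t=-1$ and hence $\mathrm i[f(H),B]=\vO(r^{-1})$.

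For the second identity, I would apply Lemma~\ref{lem:A1} with $k=0$ to $F(B)=\int_{\C}(B-z)^{-1}\,\mathrm d\mu_F(z)$ and commute with $f(H)$ using the second resolvent identity above; this commutation is legitimate because $\mathrm i[f(H),B]$ has already been identified as a bounded operator by the first step. Convergence of the resulting double integral in $\vL(\mathcal H)$ follows from $\|(B-z)^{-1}\|\leq|\Im z|^{-1}$, the boundedness of $\mathrm i[f(H),B]$, and the decay of $\mathrm d\mu_F$ (for which $t'<1$ is exactly the sharp condition). The order bound then combines the order-$0$ property of $(B-z)^{-1}$ from Lemma~\ref{lem:171113b}~\ref{item:181115} with $\mathrm i[f(H),B]=\vO(r^{-1})$, once more tracking the weights through the integrand with sufficient $k$ in the almost-analytic extension.

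The main obstacle is the nested weight/$|\Im z|^{-1}$ bookkeeping inside these Helffer--Sj\"ostrand integrals: each commutation of a resolvent with either $\mathbf DB$ or with $r^\sigma$ produces additional negative powers of $|\Im z|$, and one must verify that the hypotheses $t<-1/2$ and $t'<1$ leave enough room in the $\bar\partial\tilde f$ and $\bar\partial\tilde F$ estimates (after choosing $k$ large) to absorb them. Once this calculus is in place, the formulas and the order estimates \eqref{eq:firstHB} follow simultaneously.
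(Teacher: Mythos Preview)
Your approach is correct and is the standard Helffer--Sj\"ostrand commutator calculus; the paper does not give its own proof of this lemma but recalls it from \cite[Section~2]{AIIS}, where exactly this method is carried out. One small imprecision: you write $(H-z)^{-1}\in\vL(\mathcal H^{-1},\mathcal H^1)$ with norm $C|\Im z|^{-1}\langle z\rangle^{1/2}$, but that map actually carries a full $\langle z\rangle$; the sharp bookkeeping uses the interpolated space $\mathcal H^{3/2}$ (from $\mathbf DB\in\vL(\mathcal H^2,\mathcal H^{-1})\cap\vL(\mathcal H^1,\mathcal H^{-2})$), giving $\|(H-z)^{-1}(\mathbf DB)(H-z)^{-1}\|_{\vL(\mathcal H)}\le C\langle z\rangle^{3/2}|\Im z|^{-2}$, which with $k=2$ in the almost-analytic estimate recovers exactly $t<-1/2$. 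Your identification of the second formula as another instance of Corollary~\ref{cor:A2} (now with $T=B$, $S=f(H)$, $\epsilon=0$, hence $t'<1$) and your plan for the $\vO(r^{-1})$ bounds via Lemma~\ref{lem:171113} and Lemma~\ref{lem:171113b}~\ref{item:181115} are both on target.
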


\subsubsection{Second commutator $\ad^2_{\i B}(\cdot)$}\label{subsubsec:17111119}
Here we provide a realization of the 
second commutator $\mathrm i[\mathbf DB,B]$, 
and bound it in some operator space. 
Although 
one can explicitly compute  this second commutator,
this will not be needed. 
 
Note that by Lemmas~\ref{lemma:Flow} and \ref{lem:17100620} we may consider 
$${(\mathbf DB)\e^{\i tB}-\e^{\i tB}(\mathbf DB)\in 
\mathcal L(\mathcal H^2,\mathcal H^{-2}).}$$

\begin{lemma}\label{lemma:sing} 
There exists the strong limit 
\begin{align*} 
\ad^2_{\i B}(H)=\mathrm i[\mathbf DB,B]&:=\slim_{t\to 0}t^{-1}
\bigl((\mathbf DB)\mathrm e^{\mathrm i tB}-\e^{\mathrm i tB}(\mathbf DB)\bigr)
\ \ \text{in }\mathcal L(\mathcal H^2,\mathcal H^{-2}).
\end{align*} 
Moreover 
\begin{align}
(H-\mathrm i)^{-1}\ad^2_{\i B}(H)(H+\mathrm i)^{-1}
=\vO(r^{-1-\mu}).
\label{eq:1710081345}
\end{align}
\end{lemma}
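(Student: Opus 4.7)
My plan is to exploit the explicit representation $\mathbf D B = r^{-1/2}(L-B^2)r^{-1/2}$ provided by Lemma~\ref{lem:17100620}, combined with the fact that $\i[B^2,B]=0$. Formally applying the Leibniz rule yields the decomposition
\begin{equation*}
\i[\mathbf D B,B]
= \i[r^{-1/2},B](L-B^2)r^{-1/2}
+ r^{-1/2}\,\i[L,B]\,r^{-1/2}
+ r^{-1/2}(L-B^2)\,\i[r^{-1/2},B],
\end{equation*}
where a direct computation (or Lemma~\ref{lem:171113b}~\ref{item:BrForm}) gives the explicit multiplication operator $\i[r^{-1/2},B] = r^{-3/2}|\omega|^2$. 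Each of the three summands extends to a bounded operator in $\mathcal L(\mathcal H^2,\mathcal H^{-2})$ thanks to the mapping properties from Lemmas~\ref{lemma:Flow} and \ref{lem:17100620}.

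For the existence of the strong limit, I would split
\begin{equation*}
t^{-1}\bigl((\mathbf D B)\e^{\i tB}-\e^{\i tB}(\mathbf D B)\bigr)\psi
=(\mathbf D B)\bigl[t^{-1}(\e^{\i tB}-I)\psi\bigr]
-\bigl[t^{-1}(\e^{\i tB}-I)\bigr](\mathbf D B)\psi,
\quad \psi\in\mathcal H^2.
\end{equation*}
By Lemma~\ref{lemma:Flow}, $t^{-1}(\e^{\i tB}-I)\psi\to \i B\psi$ strongly in $\mathcal H^1$, and since $\mathbf D B\in \mathcal L(\mathcal H^1,\mathcal H^{-2})$ the first term converges in $\mathcal H^{-2}$ to $(\mathbf D B)(\i B)\psi$. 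Dually, $(\mathbf D B)\psi\in\mathcal H^{-1}$ and $t^{-1}(\e^{\i tB}-I)$ converges strongly to $\i B$ in $\mathcal L(\mathcal H^{-1},\mathcal H^{-2})$, handling the second term. The limit then agrees with the formal expression above.

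For the decay bound \eqref{eq:1710081345}, the middle summand is expanded term by term using the explicit formula \eqref{eq:mourre comm} for $L$: the kinetic-type contributions $p\cdot\tilde h\,p$, $\Delta^2 r^2$ and $r^{-1}\omega\cdot h\,\omega$, when commuted with the radial first-order operator $B$, pick up one additional radial derivative, producing symbols with decay at least $\vO(r^{-2})$ by virtue of \eqref{eq:r0}-\eqref{eq:r2}; the long-range contribution $-2\tilde\omega^a\cdot(\nabla^a V^a_{\rm lr})$ yields, after commutation with $B$, a multiplication operator of order exactly $-1-\mu$ by Condition~\ref{cond:smooth2wea3n12}; the short-range contributions are handled after conjugation by $(H\pm\i)^{-1}$ using the boundedness of $|x^a|^{1+\mu}V^a_{\rm sr}(-\Delta_a+1)^{-1}$. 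For the two outer summands, I would use the identity $L-B^2=r^{1/2}\,\mathbf D B\,r^{1/2}$ to rewrite them as $|\omega|^2 r^{-1}\,\mathbf D B$ and $\mathbf D B\,r^{-1}|\omega|^2$, respectively, and then, after sandwiching by the resolvents, exploit that $\mathbf D B(H+\i)^{-1}$ is bounded on $\mathcal H$ (via $B(H\pm\i)^{-1}\in\mathcal L(\mathcal H)$, Lemma~\ref{lemma:Flow}) and inherits the $\vO(r^{-\mu})$ decay of $L-B^2$ coming from the long-range potential piece of $L$.

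The main obstacle is the last point: controlling the decay of the outer terms sharply. Naively one only gets $\vO(r^{-3/2})$ from the $r^{-3/2}$ prefactor, which is insufficient since $\mu>1/2$; the required $\vO(r^{-1-\mu})$ is obtained only after absorbing the second-order differential piece $B^2$ into the resolvents and extracting the additional $r^{-\mu}$ decay from the $V^a_{\rm lr}$ contribution in $\mathbf D B$. Once that bookkeeping is done, the three summands combine to give the stated bound, with the long-range potential commutator being the binding constraint.
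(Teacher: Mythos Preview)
The paper does not include a proof of this lemma; as stated at the beginning of Section~\ref{sec:Calculus considerations}, the calculus results are quoted from \cite[Section~2]{AIIS}. So there is no in-paper argument to compare against, and your approach must be judged on its own merits.

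Your overall strategy---the Leibniz decomposition of $\i[\mathbf DB,B]$ via $\mathbf DB=r^{-1/2}(L-B^2)r^{-1/2}$, and the existence argument using the integral representation $\e^{\i tB}-I=\int_0^t \e^{\i sB}\i B\,ds$ (valid in $\mathcal H^1$ for $\psi\in\mathcal H^2$, and by duality in $\mathcal H^{-2}$ for $\phi\in\mathcal H^{-1}$)---is sound.

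The final paragraph, however, is confused. You worry that the outer terms give only $\vO(r^{-3/2})$ and propose to repair this by ``extracting the additional $r^{-\mu}$ decay from the $V^a_{\rm lr}$ contribution in $\mathbf DB$''. That is not the right mechanism: the sandwiched first commutator $(H-\i)^{-1}\mathbf DB(H+\i)^{-1}$ is $\vO(r^{-1})$, not $\vO(r^{-\mu})$ (this is exactly the content of \eqref{eq:firstHB} applied with $f(\lambda)=(\lambda\mp\i)^{-1}$). Using your own rewriting $|\omega|^2r^{-1}\mathbf DB$ and inserting $(H-\i)(H-\i)^{-1}$, the outer term factors as
\[
\bigl[(H-\i)^{-1}|\omega|^2r^{-1}(H-\i)\bigr]\cdot\bigl[(H-\i)^{-1}\mathbf DB(H+\i)^{-1}\bigr]
=\vO(r^{-1})\cdot\vO(r^{-1})=\vO(r^{-2})\subset\vO(r^{-1-\mu}),
\]
so the outer terms are in fact easier than you suggest.

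The genuine work is in the \emph{middle} term $r^{-1/2}\i[L,B]r^{-1/2}$. For the kinetic piece $4p\cdot\tilde h\,p$ and the long-range potential piece, commutation with $B=2\Re(\omega\cdot p)$ together with the radial-derivative bounds \eqref{eq:r0}--\eqref{eq:r2} (which give e.g.\ $(x\cdot\nabla)h_{ij}=\vO(r^{-1})$, hence $\omega\cdot\nabla h_{ij}=\vO(r^{-2})$) again yields $\vO(r^{-2})$. The binding constraint $\vO(r^{-1-\mu})$ comes from the \emph{short-range} contribution, where $V^a_{\rm sr}$ cannot be differentiated and one must instead rewrite the double commutator so that $V^a_{\rm sr}$ always sits next to a resolvent, and then invoke the hypothesis that $\abs{x^a}^{1+\mu}V^a_{\rm sr}(-\Delta_a+1)^{-1}$ is bounded (combined with $|\tilde\omega^a|\le C|x^a|$ on its support). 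You gloss over this step, and it is the one that actually determines the exponent $-1-\mu$.
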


Finally, as a continuation of Lemmas \ref{lem:fHB} and
\ref{lemma:sing},   we consider  a  second commutator of $B$ and  a function of $H$.

\begin{lemma}\label{lem:17111515}
   For any  $f\in \mathcal F^t$ with $t<-1$
the second commutator
\begin{align*}
  \ad^2_{\i B}\parb{f(H)}=\mathrm i\bigl[\mathrm i[f(H),B],B\bigr]
\end{align*}
extends  from $\mathcal D(B)$ to  a bounded  quadratic  form  on  $\mathcal H$, 
 represented as 
\begin{align*}
\ad^2_{\i B}\parb{f(H)} 
&
=-\int_{\C} (H-z)^{-1}\ad^2_{\i B}(H)(H-z)^{-1}\d \mu_f(z)
\\&\phantom{{}={}}{}
+2\int_{\C} (H-z)^{-1}(\bD B)(H-z)^{-1} (\bD B)(H-z)^{-1}\d \mu_f(z).
\end{align*} 
In particular
\begin{align}\label{eq:order}
  \ad^2_{\i B}\parb{f(H)}  =\vO(r^{-1-\mu}).
\end{align}
\end{lemma}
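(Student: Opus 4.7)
The plan is to start from the Helffer--Sj\"ostrand representation
\begin{align*}
\mathrm i[f(H),B] = -\int_\C (H-z)^{-1}(\mathbf{D}B)(H-z)^{-1}\,\mathrm d\mu_f(z)
\end{align*}
of Lemma~\ref{lem:fHB} and to differentiate the integrand under one more commutation with $B$. This is a natural continuation of the scheme used in Corollary~\ref{cor:A2} and Lemmas~\ref{lem:fHB} and \ref{lemma:sing}; the strengthened hypothesis $t<-1$ (in place of $t<-1/2$) should exactly absorb one additional power of $|\Im z|^{-1}$ arising from the extra resolvent factor produced by the second commutation.

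Two commutator identities drive the computation. First, rewriting $(H-z)^{-1}B-B(H-z)^{-1}=(H-z)^{-1}[B,H-z](H-z)^{-1}$ and using $[B,H]=\mathrm i\mathbf{D}B$ yields
\begin{align*}
\mathrm i[(H-z)^{-1},B] = -(H-z)^{-1}(\mathbf{D}B)(H-z)^{-1},
\end{align*}
bounded on $\vH$ by Lemma~\ref{lem:17100620}. Second, $\mathrm i[\mathbf{D}B,B]=\ad^2_{\i B}(H)$ is precisely the strong limit supplied by Lemma~\ref{lemma:sing}. I would then apply the Leibniz rule to $\mathrm i[(H-z)^{-1}(\mathbf{D}B)(H-z)^{-1},B]$: the two outer contributions each yield a copy of $-(H-z)^{-1}(\mathbf{D}B)(H-z)^{-1}(\mathbf{D}B)(H-z)^{-1}$, while the middle contribution is $(H-z)^{-1}\ad^2_{\i B}(H)(H-z)^{-1}$. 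Placing these inside the $\mathrm d\mu_f$-integral with the overall minus sign from Lemma~\ref{lem:fHB} produces precisely the stated formula.

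To make the differentiation rigorous I would parallel the proof of Lemma~\ref{lem:17100620} and define
\begin{align*}
\mathrm i\bigl[\mathrm i[f(H),B],B\bigr]:=\slim_{t\to 0}\,t^{-1}\bigl(\mathrm i[f(H),B]\,\mathrm e^{\i tB}-\mathrm e^{\i tB}\,\mathrm i[f(H),B]\bigr)
\end{align*}
in an appropriate operator topology, insert the Helffer--Sj\"ostrand formula, and apply dominated convergence via \eqref{eq:17100722} together with $\|(H-z)^{-1}\|\le|\Im z|^{-1}$. The $\mathrm d u\,\mathrm d v$-integrability of the dominating function is assured by the almost-analytic estimate $|\bar\partial\tilde f(z)|\le C_k|\Im z|^k\langle z\rangle^{t-k-1}$ together with $t<-1$, yielding a bounded quadratic form on $\vH$ as claimed. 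The order bound \eqref{eq:order} is then obtained term by term: in the double $\mathbf{D}B$-integral each factor $\mathbf{D}B$ is of order $-1$ (by \eqref{eq:formulasy} and Lemma~\ref{lem:17100620}), so the product is of order $-2\le -1-\mu$; in the $\ad^2_{\i B}(H)$-integral I would sandwich
\begin{align*}
(H-z)^{-1}\ad^2_{\i B}(H)(H-z)^{-1}=\bigl[(H-z)^{-1}(H-\i)\bigr]\bigl[(H-\i)^{-1}\ad^2_{\i B}(H)(H+\i)^{-1}\bigr]\bigl[(H+\i)(H-z)^{-1}\bigr],
\end{align*}
apply \eqref{eq:1710081345} to the middle factor, and use $(H-z)^{-1}(H\mp\i)=I+(z\mp\i)(H-z)^{-1}$ to absorb the flanking factors into $\mathrm d\mu_f$. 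The main obstacle will be precisely this bookkeeping on weighted operator orders combined with the polynomial $z$-growth coming from the $(H\pm\i)$-rewriting, but the gap $t<-1$ provides the margin needed to close all estimates.
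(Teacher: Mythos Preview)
The paper does not supply its own proof of this lemma: Section~\ref{sec:Calculus considerations} explicitly recalls the calculus from \cite[Section~2]{AIIS}, and Lemma~\ref{lem:17111515} is stated there without argument. Your approach---differentiating the Helffer--Sj\"ostrand integrand of Lemma~\ref{lem:fHB} once more via the resolvent identity $\mathrm i[(H-z)^{-1},B]=-(H-z)^{-1}(\mathbf DB)(H-z)^{-1}$ and the Leibniz rule, then invoking Lemma~\ref{lemma:sing} for the middle term---is the standard route and yields exactly the displayed representation; the sign and factor-of-two bookkeeping you sketch is correct.

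One imprecision: you write that ``each factor $\mathbf DB$ is of order $-1$''. In the sense of \eqref{eq:defOrder} this is not literally true, since $\mathbf DB\in\mathcal L(\mathcal H^1,\mathcal H^{-2})$ is unbounded on $\mathcal H$ and the notion of order in \eqref{eq:1712022} is defined only for operators preserving $L^2_\infty$. What you need (and what the argument in \cite{AIIS} uses) is that $(H-\mathrm i)^{-1}(\mathbf DB)(H+\mathrm i)^{-1}=\vO(r^{-1})$, which follows from the explicit form \eqref{eq:formulasy}--\eqref{eq:mourre comm} together with Lemma~\ref{lem:171113}~\ref{item:180128b}. One then rewrites the three-resolvent integrand as
\[
\bigl[(H-z)^{-1}(H-\mathrm i)\bigr]\,\vO(r^{-1})\,\bigl[(H+\mathrm i)(H-z)^{-1}(H-\mathrm i)\bigr]\,\vO(r^{-1})\,\bigl[(H+\mathrm i)(H-z)^{-1}\bigr],
\]
exactly parallel to your treatment of the $\ad^2_{\mathrm iB}(H)$-term; the extra polynomial growth in $z$ (now cubic rather than quadratic) is absorbed by the hypothesis $t<-1$ just as you anticipate. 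With this adjustment the order bound $\vO(r^{-2})\subseteq\vO(r^{-1-\mu})$ for the second integral goes through.
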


\subsubsection{Commutation with $A_1$}\label{subsubsec:A_1}

By combining Lemmas \ref{lem:fHB}--\ref{lem:17111515} we obtain the
following result.

\begin{lemma}\label{lem:Bpost} Let    $f\in \mathcal F^t$ and   $F\in \mathcal F^{t'}$ with
$t<-1$ and $t'<1$, respectively. Suppose also that $ F'\geq 0$  with
$\sqrt{F'}\in \mathcal F^0$. Then 
\begin{align*}
  \mathrm i[f(H),F(B)]
= \sqrt{F'}(B)\parb{\i[f(H),B]}\sqrt{F'}(B) +\vO(r^{(-1)_-}).
\end{align*}  In particular for any real functions  $f,g$ and ${F}$ with $ F'\geq 0$,
$f,g,\sqrt{F'}\in C^\infty_\c(\R)$ and given such that   $g(\lambda)f(\lambda)=\lambda f(\lambda)$,
\begin{align}\label{eq:basicFB}
  \begin{split}
  &f(H)\mathrm i[g(H),F(B)]f(H)\\
&\quad= \sqrt{F'}(B)f(H)(\mathbf DB) f(H)\sqrt{F'}(B) +\vO(r^{(-1)_-}).
  \end{split}
\end{align} 
\end{lemma}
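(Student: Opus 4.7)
The plan is to start from the Helffer--Sj\"ostrand representation established in Lemma \ref{lem:fHB},
\begin{equation*}
\mathrm i[f(H),F(B)] = -\int_{\C} (B-z)^{-1}\, C\, (B-z)^{-1}\,\mathrm d\mu_F(z),
\end{equation*}
where $C := \mathrm i[f(H),B]$, and then to commute one resolvent through $C$ via the identity
\begin{equation*}
(B-z)^{-1}\, C\, (B-z)^{-1} = C(B-z)^{-2} + (B-z)^{-1}[C,B](B-z)^{-2}.
\end{equation*}
By Lemma \ref{lem:17111515} we have $[C,B] = -\mathrm i\,\ad^2_{\mathrm iB}\parb{f(H)} = \vO(r^{-1-\mu})$. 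Integrating against $\mathrm d\mu_F$ and using the $k=1$ case of Lemma \ref{lem:A1}, namely $\int(B-z)^{-2}\,\mathrm d\mu_F(z) = -F'(B)$, yields
\begin{equation*}
\mathrm i[f(H),F(B)] = C\cdot F'(B) - \int_{\C} (B-z)^{-1}[C,B](B-z)^{-2}\,\mathrm d\mu_F(z).
\end{equation*}
A routine order-calculus estimate, propagating $r$-weights through the resolvents via Lemma \ref{lem:171113b} and using $[C,B] = \vO(r^{-1-\mu})$, shows that the remainder integral is $\vO(r^{(-1)_-})$.

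To obtain the symmetric form, I first rewrite $C\cdot F'(B) = F'(B)\cdot C + [C, F'(B)]$, where applying Helffer--Sj\"ostrand to $F'$ represents $[C, F'(B)]$ as an integral involving the second commutator $[C,B]$, yielding $[C, F'(B)] = \vO(r^{-1-\mu})$. Then, writing $F'(B) = \sqrt{F'}(B)^2$ and pulling one factor through $C$,
\begin{equation*}
F'(B)\, C = \sqrt{F'}(B)\, C\, \sqrt{F'}(B) + \sqrt{F'}(B)\, [\sqrt{F'}(B), C],
\end{equation*}
the commutator $[\sqrt{F'}(B), C]$ is likewise $\vO(r^{-1-\mu})$ by the same Helffer--Sj\"ostrand argument applied to $\sqrt{F'}\in\vF^0$ in combination with Lemma \ref{lem:17111515}. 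This establishes the first claim.

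The particular case \eqref{eq:basicFB} will follow in three further steps. First, apply the just-proved identity with $f$ replaced by $g$ to obtain
\begin{equation*}
\mathrm i[g(H),F(B)] = \sqrt{F'}(B)\,\mathrm i[g(H),B]\,\sqrt{F'}(B) + \vO(r^{(-1)_-}).
\end{equation*}
Second, sandwich by $f(H)$ on both sides and commute the outer factors $\sqrt{F'}(B)$ past $f(H)$; since Lemma \ref{lem:fHB} gives $[\sqrt{F'}(B), f(H)] = \vO(r^{-1})$ and $\mathrm i[g(H),B] = \vO(r^{-1})$, all cross terms are of order $\vO(r^{-2}) \subseteq \vO(r^{(-1)_-})$. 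Third, the hypothesis $g(\lambda)f(\lambda) = \lambda f(\lambda)$ yields by functional calculus the identity $f(H)g(H) = Hf(H)$, from which a direct algebraic computation gives
\begin{equation*}
f(H)\,\mathrm i[g(H),B]\,f(H) = f(H)\,\bD B\,f(H)
\end{equation*}
without remainder. Combining these steps establishes \eqref{eq:basicFB}.

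The main technical obstacle is the verification that each of the commutator remainders is of order strictly better than $r^{-1}$, i.e.\ in $\vO(r^{(-1)_-})$. This relies crucially on the quantitative improvement from Lemma \ref{lem:17111515} contributing the extra $r^{-\mu}$ factor, combined with careful bookkeeping of $r$-weights through the Helffer--Sj\"ostrand integrals using the calculus of Lemmas \ref{lem:171113}--\ref{lem:171113b}. No further ingredient beyond what is already assembled in Section \ref{sec:Calculus considerations} is required.
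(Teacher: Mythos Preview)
Your proof is correct and is essentially the natural way to combine Lemmas~\ref{lem:fHB}--\ref{lem:17111515}, which is all the paper indicates (it gives no explicit proof, only the sentence ``By combining Lemmas~\ref{lem:fHB}--\ref{lem:17111515} we obtain the following result''). One small point worth tightening: when you invoke the Helffer--Sj\"ostrand representation for $[\sqrt{F'}(B),C]$ with $\sqrt{F'}\in\vF^0$, the $k=0$ case of Lemma~\ref{lem:A1} does not apply directly (it needs $t<0$); the clean justification is Corollary~\ref{cor:A2} with $T=B$, $S=C$ and $\epsilon=0$, which covers $\vF^t$ for $t<1$ once you know $\i[B,C]=-\ad^2_{\i B}\parb{f(H)}$ is bounded by Lemma~\ref{lem:17111515}. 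For comparison, the paper's parallel computation in Subsection~\ref{subsec:Commutation with A_3^a} uses the symmetric decomposition $\Re\parb{CF'(B)}$ plus a second-order remainder $\tfrac12\int(B-z)^{-2}\ad^2_{\i B}(C)(B-z)^{-2}\,\d\mu_F$ rather than your one-sided commutation followed by symmetrization, but the two routes are equivalent and rest on the same $\vO(r^{-1-\mu})$ gain from the second commutator.
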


We shall in Section 
\ref{sec:Computation of T}  use \eqref{eq:basicFB} in combination with
Corollary \ref{cor:171007}. Note that with \eqref{eq:limit} and with
\eqref{eq:basicFB} for ${F'}\in C^\infty_\c(\R)$ supported close to
$0$ we can infer the bound
\begin{align}\label{eq:basicFBB}
  \begin{split}
  \exists \epsilon>0:\quad &f(H)\mathrm i[g(H),F(B)]f(H)\\
&\quad\geq  \epsilon f(H)\sqrt{F'}(B)r^{-1}\sqrt{F'}(B)f(H) +\vO(r^{(-1)_-}).
  \end{split}
\end{align} 

\subsection{Commutation with $M_a$}\label{subsec:Commutation
  with M}
 The operators $B$, $M=M_{a_{\max}}$  and  $M_a$, $a\in\vA'$, are all of first order given in
 terms of  smooth bounded globally Lipschitz vector
 fields. Consequently the conclusions of Lemma \ref{lemma:Flow} also
 hold  
 for $M_a$,  $a\in\vA$. However we shall actually not need the
 self-adjointness of the latter  (and use  their propagators for
 example); symmetry will suffice. However we will need to pass
 functions of $H$ through $M_a$. Recalling 
 $w_a:=\mathop{\mathrm{grad}} m_a$ we let  $w^b_a=(w_a)^b$ for any  $a,b\in\vA$.

\begin{lemma}\label{lem:fHB2} 
Let    $f\in \mathcal F^t$  with
$t<-1/2$ and $a\in\vA$. 
Then the  product $M_af(H)=\vO(r^{-0})$, and the commutator $\mathrm
i[f(H),M_a]$  extends  from $\vH^1$ to
a bounded quadratic form  
on $\mathcal H$,  
 represented    as 
\begin{align*}
\mathrm i[f(H),M_a]
&=
-\int _{\C}(H-z)^{-1}\i[H,M_a] (H-z)^{-1}\,\mathrm d\mu_f(z);\\
  \i[H,M_a] &= 4p\cdot \parb{\mathop{\mathrm{Hess}} m_a}  p
-(\Delta^2 m_a)
\\&\phantom{{}={}}{}
+2\sum_{b\in\mathcal A}
\Bigl(-w_a^b\cdot \bigl(\nabla^b V^b_{\rm lr}\bigr)
+\bigl(V^b_{\rm sr}w_a^b\bigr)\cdot \nabla^b
-\nabla^b\cdot
\bigl(V^b_{\rm sr}w_a^b\bigr)
+V^b_{\rm sr}\mathop{\mathrm{div}} w_a^b
\Bigr).
\end{align*}

\begin{subequations}
In particular
 (with  the notation   \eqref{eq:1712022})
\begin{align}\label{eq:firstHB2}
\mathrm i[f(H),M_a
]=\vO(r^{-1}),
\end{align} and
\begin{align}\label{eq:firstHB22}\begin{split}
&\mathrm i[f(H),M_a
]-4f'(H)p\cdot \parb{\mathop{\mathrm{Hess}} m_a}  p\\=
\mathrm i[f(H),M_a
]&+4\int _{\C}(H-z)^{-1}p\cdot \parb{\mathop{\mathrm{Hess}} m_a}  p
  (H-z)^{-1}\,\mathrm d\mu_f(z)\\&+\vO(r^{-1-\mu})=\vO(r^{-1-\mu}). 
\end{split}
 \end{align}
  \end{subequations}
\end{lemma}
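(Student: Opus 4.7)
The plan is to mirror the structure of the proof of Lemma \ref{lem:fHB}, with the globally bounded vector field $w_a = \mathop{\mathrm{grad}} m_a$ playing the role that $\omega = \mathop{\mathrm{grad}} r$ plays there. The three assertions---the $r$-order of $M_a f(H)$, the Helffer--Sj\"ostrand representation of $\mathrm i[f(H), M_a]$ with its order bound, and the refined identity \eqref{eq:firstHB22}---all reduce to standard commutator calculus once the first commutator $\mathrm i[H,M_a]$ has been realized in the correct operator space.

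First I would establish that the formal expression in the statement defines $\mathrm i[H,M_a]\in \vL(\vH^1,\vH^{-1})$, in exact parallel to Lemma \ref{lem:17100620}. The kinetic contribution $\mathrm i[p^2, M_a] = 4p\cdot(\mathop{\mathrm{Hess}} m_a)p - (\Delta^2 m_a)$ is a standard computation; its boundedness follows from the fact that $m_a$ is homogeneous of degree one for $\abs{x}\geq 1$ and vanishes for $\abs{x}\leq 1/2$, so the second derivatives are $\vO(\abs{x}^{-1})$ and the biharmonic is $\vO(\abs{x}^{-3})$. The potential commutator splits through $V=\sum_b(V^b_{\rm lr}+V^b_{\rm sr})$: for each long-range part one gets $-2w_a^b\cdot \nabla^b V^b_{\rm lr}$, bounded and in fact $\vO(r^{-1-\mu})$ by Condition \ref{cond:smooth2wea3n12}(ii); for each short-range part the symmetric rewriting produces the remaining four summands of the displayed expression, whose combined $\vL(\vH^1,\vH^{-1})$-boundedness follows from Condition \ref{cond:smooth2wea3n12}(i) exactly as for the operator $L$ in Lemma \ref{lem:17100620}.

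With $\mathrm i[H,M_a]\in \vL(\vH^1,\vH^{-1})$ in hand, Corollary \ref{cor:A2} (applied with $T=H$ and $S=M_a$, using that $M_a\in \vL(\vH^1,\vH)$ is relatively $H$-bounded) together with Lemma \ref{lem:A1} yields the stated integral representation, convergent as a bounded-operator integral on $\vH$ because the decay $t<-1/2$ makes $\abs{\Im z}^{-2}\inp{z}^{t-1}\,\d\mu_f$ absolutely integrable. The order bound $\mathrm i[f(H),M_a] = \vO(r^{-1})$ follows by inserting weights $r^s$ and commuting with $M_a$, $H$ and the resolvents: one uses $[M_a, r^{-s}] = -\mathrm is\,w_a\cdot(\nabla r^{-s}) + \text{h.c.} = \vO(r^{-s-1})$, the resolvent--weight commutator of Lemma \ref{lem:171113}\,\ref{item:180128}, and the fact that each summand of $\mathrm i[H,M_a]$ is itself at most of order $-1$. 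The assertion $M_a f(H) = \vO(r^{-0})$ is obtained by the same conjugation argument with only one $f(H)$ factor and without a second commutator: $r^s M_a f(H) r^{-s}$ is bounded on $\vH$ since $r^s M_a r^{-s} = M_a + \vO(r^{-1})\in \vL(\vH^1,\vH)$ and $r^s f(H) r^{-s}\in \vL(\vH,\vH^1)$.

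For the refined identity \eqref{eq:firstHB22} the key observation is that $\mathrm i[H,M_a] - 4p\cdot(\mathop{\mathrm{Hess}} m_a)p$ is already $\vO(r^{-1-\mu})$, because the $\Delta^2 m_a$ term is $\vO(r^{-3})$ and both potential contributions are $\vO(r^{-1-\mu})$ (using $\mu<1$). Sandwiching this difference by $(H-z)^{-1}$ on either side and integrating in $\d\mu_f$ preserves the order and gives the first equality of \eqref{eq:firstHB22}. For the second, commute $p\cdot(\mathop{\mathrm{Hess}} m_a)p$ past one resolvent, picking up $(H-z)^{-1}\bigl[p\cdot(\mathop{\mathrm{Hess}} m_a)p,\,H\bigr](H-z)^{-1}$; the inner commutator involves third derivatives of $m_a$ and first derivatives of $V$ paired with $\mathop{\mathrm{Hess}} m_a$, all $\vO(r^{-2})\subseteq \vO(r^{-1-\mu})$. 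The remaining sandwich reduces to $\int(H-z)^{-2}\,\d\mu_f(z)\cdot p\cdot(\mathop{\mathrm{Hess}} m_a)p = -f'(H)\,p\cdot(\mathop{\mathrm{Hess}} m_a)p$ by the $k=1$ case of Lemma \ref{lem:A1}, producing the principal term. The main obstacle is bookkeeping the $z$-dependence uniformly so that the $\inp{z}$ growth from the weight commutations is absorbed by the bound $\abs{\bar\partial\tilde f(z)}\lesssim \abs{\Im z}^N\inp{z}^{t-N-1}$ for sufficiently large $N$; this is precisely the pattern already established in Lemmas \ref{lem:171113}, \ref{lem:fHB} and \ref{lem:17111515}.
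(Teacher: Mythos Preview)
Your proposal is correct and follows exactly the route the paper intends: the paper does not give an explicit proof of this lemma, leaving it as a direct analogue of the treatment of $B$ in Lemmas~\ref{lem:17100620} and~\ref{lem:fHB} (the preamble to Subsection~\ref{subsec:Commutation with M} makes this analogy explicit). Your realization of $\mathrm i[H,M_a]\in\vL(\vH^1,\vH^{-1})$, your invocation of Corollary~\ref{cor:A2} with $\epsilon=1$, and your derivation of \eqref{eq:firstHB22} by commuting $p\cdot(\mathop{\mathrm{Hess}}m_a)p$ past one resolvent and applying Lemma~\ref{lem:A1} with $k=1$ are all in line with this pattern.
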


In applications the second term to the left in \eqref{eq:firstHB22}
for  $a\in\vA'$
will be controlled by \eqref{eq:Hes0} and \eqref{eq:Hes_est}, i.e. by
such term for $a=a_{\max}$.

\subsection{Commutation with $A_3^a$, $a\neq a_{\min}$}\label{subsec:Commutation
  with A_3^a}

Here we examine the operator $A_3^a$ appearing as a factor in $ N^a_\pm
$. For convenience we only look at the `plus' case, i.e. we study the
factor $A_3^a$ appearing as a factor in $ N^a_+
$. The `minus' case may be treated  similarly.  

Clearly the operators
\begin{align}\label{eq:notS}
  \begin{split}
  B_1&:=B_\delta^a,\\
B_2=B_{\delta,\rho_1}^a&:=r^{\rho_1/2}B_\delta^a
                         r^{\rho_1/2}=2\Re\parb{p^a\cdot
                           r^{\rho_1}\omega^a_\delta}
  \end{split}
\end{align}
 are both   of first order given in
 terms of  smooth  globally Lipschitz vector
 fields. As for  $B_1$ the vector field is bounded as well, and
 consequently all of the conclusions of Lemma \ref{lemma:Flow} hold
 upon replacing $B$ by  $B_1$. The operators $B_1$ and $B_2$
 are  essentially self-adjoint on $C^\infty_{\mathrm c}(\mathbf X)$ and
 the analogue of \eqref{eq:17100722} is  fulfilled for both of them. 

Now 
$A_3^a=\chi_-(B_2)$ is well-defined and we can record that
$A_3^a\in\vO(r^0)$, cf. Lemma \ref{lem:171113b}. From a formula similar
to the one of Lemma  \ref{lem:171113b} \ref{item:BrForm} we see that
$[A_3^a,r^s]$ is of order $s-1+\rho_1<s$, exemplifying good
commutation properties when  commuting  with functions of $r$. 

As for commutation with
functions of $H$ we  revisit Subsections
\ref{subsubsec:17111117}--\ref{subsubsec:A_1} except for the part  concerning the Mourre
estimate. This is technically the most demanding part of Section \ref{sec:Calculus considerations}.

\begin{lemma}\label{lem:2version}
Introducing  the quadratic form ${\bf D}_aB_2=\mathrm
i[H_a,B_2]$ as 
\begin{subequations}
\begin{align}
{\bf D}_aB_2
&=\slim_{t\to 0} t^{-1}\parb{H_a\e^{\i tB_2}-\e^{\i tB_2}H_a}\ \ 
\text{in }
{\mathcal L(\mathcal H^2,\mathcal H^{-1})
\cap \mathcal L(\mathcal H^1,\mathcal H^{-2})},
\label{eq:limit00b2}
\end{align}
 it is given by  
\begin{align}\label{eq:formulasy2}
  &{{\bf D}_aB_2} =r^{\rho_1/2}\parb{T_1+\cdots +T_5}r^{\rho_1/2};\\\nonumber
  T_1&=4p^a\cdot r^{-\delta}\parb{\mathop{\mathrm{Hess}  }
    r^a}\parb{x^a/r^\delta}  p^a,\\\nonumber
T_2&=-2\delta\Re\parb{p^a\cdot
  r^{-\delta}\parb{\mathop{\mathrm{Hess}  }  r^a}\parb{x^a/r^\delta}
  x^a r^{-1}B},\\\nonumber
T_3&=\rho_1\Re\parb{Br^{-1}B_\delta^a},\\\nonumber
T_4 &=2\sum_{b\leq a}
\Bigl(-(\omega^a_\delta)^b\cdot \bigl(\nabla^b V^b_{\rm lr}\bigr)
+\bigl(V^b_{\rm sr}(\omega^a_\delta)^b\bigr)\cdot \nabla^b
-\nabla^b\cdot
\bigl(V^b_{\rm sr}(\omega^a_\delta)^b\bigr)
+V^b_{\rm sr}\mathop{\mathrm{div}} (\omega^a_\delta)^b
\Bigr),\\\nonumber
T_5&=T_5(x)=\vO(r^{-3\delta})\text{ explicitly given by (summing over
     repeated indices):}\nonumber\\
-\partial^a_i&\partial^a_j\,\set{r^{-\delta}\parb{\mathop{\mathrm{Hess}  }
  r^a}_{ij}\parb{x^a/r^\delta}}-\delta (\Delta r)\mathop{\mathrm{div}^a}\set{r^{-1}
  F^a} -\tfrac 12
  \omega_j^a\parb{x^a/r^\delta} \partial^a_j\set{\abs{\omega}^2r^{-2}}\nonumber\\&
\quad \quad +\partial_i \partial^a_j\,M_{ij}-\delta r^{-1}\set{\partial^a_j (\Delta r)}F_j^a;
\nonumber\\
&F^a=F^a(x)=\set{\parb{{y^a\cdot \nabla_{y^a} \omega^a}
  }(y^a)}_{|y^a=x^a/r^\delta},\quad M_{ij}=M_{ij}(x)=\delta
  r^{-1}\omega_i F^a_j. \nonumber
\end{align}   
   \end{subequations}
\end{lemma}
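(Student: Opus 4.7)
The plan is to imitate the argument of Lemma \ref{lem:17100620}, adapted to the first-order operator $B_2$ in place of $B$ and to $H_a$ in place of $H$. The first step would be to establish existence of the strong limit \eqref{eq:limit00b2}: since $B_2$ is a symmetric first-order differential operator with smooth coefficient vector field $r^{\rho_1}\omega^a_\delta$, all of whose derivatives are polynomially bounded, the group $\e^{\i tB_2}$ restricts and extends to a strongly continuous group on each $\vH^{\pm k}$, $k=1,2$, with norm uniformly bounded on bounded time intervals, in parallel with Lemma \ref{lemma:Flow}. Then, writing
\begin{align*}
t^{-1}\parb{H_a\e^{\i tB_2}-\e^{\i tB_2}H_a}=t^{-1}\int_0^t \e^{\i sB_2}\parb{\i[H_a,B_2]}\e^{\i(t-s)B_2}\,\d s,
\end{align*}
the strong limit in $\vL(\vH^2,\vH^{-1})\cap\vL(\vH^1,\vH^{-2})$ exists and coincides with the formal commutator $\i[H_a,B_2]$.

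The second step is to compute this commutator by the product rule. Splitting $H_a=(p^a)^2+p_a^2+V^a$ and $B_2=r^{\rho_1/2}B_\delta^a r^{\rho_1/2}$, one obtains
\begin{align*}
\i[H_a,B_2]&=r^{\rho_1/2}\i[(p^a)^2+p_a^2,B_\delta^a]r^{\rho_1/2}+r^{\rho_1/2}\i[V^a,B_\delta^a]r^{\rho_1/2}\\
&\quad +\i[(p^a)^2+p_a^2,r^{\rho_1/2}]\,B_\delta^a\,r^{\rho_1/2}+r^{\rho_1/2}B_\delta^a\,\i[(p^a)^2+p_a^2,r^{\rho_1/2}],
\end{align*}
using that $[V^a,r^{\rho_1/2}]=0$. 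The potential commutator gives $\i[V^a,B_\delta^a]=-2\omega^a_\delta\cdot\nabla^a V^a$; substituting $V^a=\sum_{b\leq a}(V_{\rm sr}^b+V_{\rm lr}^b)$ and symmetrizing the short-range pieces (exactly as in \eqref{eq:mourre comm}) reproduces the $T_4$ entry of \eqref{eq:formulasy2}. The two boundary commutators can be evaluated via $\i[p^2,r^{\rho_1/2}]=\rho_1\Re(r^{\rho_1/2-1}\omega\cdot p)$, in which one recognises $B=2\Re(\omega\cdot p)$; after combining the two contributions and moving factors of $r^{\rho_1/2}$ past $B_\delta^a$, the leading symmetric piece assembles into $r^{\rho_1/2}\rho_1\Re\parb{B\,r^{-1}B_\delta^a}r^{\rho_1/2}=r^{\rho_1/2}T_3 r^{\rho_1/2}$, and the reshuffling remainders are absorbed into $T_5$.

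For the central kinetic commutator $\i[(p^a)^2+p_a^2,B_\delta^a]$, I would apply the chain rule to $\omega^a_\delta(x)=\omega^a(x^a/r^\delta)$. Differentiating the argument $x^a/r^\delta$ through its $x^a$-entry produces $r^{-\delta}\parb{\mathop{\mathrm{Hess}}r^a}(x^a/r^\delta)$, which after the usual $4p^a\cdot(\cdot)p^a$ symmetrisation gives the leading term $T_1$; differentiating $r^{-\delta}$ produces the factor $-\delta r^{-1-\delta}\omega$, which, combined with the outer $p^a$ and the remaining $\omega^a_\delta$-factor and symmetrised using $B=2\Re(\omega\cdot p)$, yields $T_2$. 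All remaining second- and third-order contributions (from doubly differentiating the composition $\omega^a\circ(x^a/r^\delta)$, from the Laplacians of $r$ controlled via \eqref{eq:r0}--\eqref{eq:r2}, and from commutator reshufflings of $r^{\rho_1/2}$) collect into $T_5$, each summand being $\vO(r^{-3\delta})$ by the bounds \eqref{eq:r0}--\eqref{eq:r2} and the homogeneity of $r^a$.

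The main obstacle will be the meticulous bookkeeping required to identify $T_5$ explicitly with the form listed in \eqref{eq:formulasy2} and the uniform bound $\vO(r^{-3\delta})$. Each time a factor of $r^{\rho_1/2}$ is moved past $B_\delta^a$, or a derivative $p^a$ past $r^{-\delta}$, new remainders are produced; these must be systematically collected, checked to cancel or combine, and matched term by term against the stated expression. This is technically involved but essentially routine once the calculus of Section \ref{sec:Calculus considerations} is in place, since it requires only the chain rule applied to $\omega^a(x^a/r^\delta)$, the composition structure of $\omega^a_\delta$, and the standard estimates on $r$ and its derivatives.
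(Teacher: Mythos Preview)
Your proposal is correct and follows precisely the approach the paper intends: the paper does not give an explicit proof of this lemma, treating it as a direct computational analogue of Lemma~\ref{lem:17100620} (with $B$ replaced by $B_2$ and $H$ by $H_a$), and your outline --- establishing the group mapping properties as in Lemma~\ref{lemma:Flow}, then computing the formal commutator via the product rule on $B_2=r^{\rho_1/2}B_\delta^a r^{\rho_1/2}$ and the splitting $H_a=(p^a)^2+p_a^2+V^a$ --- is exactly that computation. Your identification of $T_1,\dots,T_4$ from the respective pieces is correct, and the paper's own remark after the lemma confirms that $T_5$ is simply the collection of lower-order remainders, with the stated $\vO(r^{-3\delta})$ bound following from \eqref{eq:r0}--\eqref{eq:r2} and the derivative bounds on $r^a$.
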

\begin{remarks*} Although the stated exact  expression  for $T_5$  will
  not be relevant for us, we remark that 
   the  fifth term of the expression is on the form
  $\vO(r^{-2})$ (and no better), however it cancels with a term
  from expanding the fourth  term using the Leibniz rule for
  differentiation. The first term has the stated order
  $\vO(r^{-3\delta})$, but  the sum of all the others has a better
  order. The stated order $\vO(r^{-3\delta})$ will  by far be sufficient  for our treatment of  $T_5$.
\end{remarks*}

\begin{lemma}\label{lem:fHB3} 
Let    $f\in \mathcal F^t$ and  $F\in \mathcal F^{t'}$ with
$t<-1/2$ and $t'<1$, respectively. 
Then the  commutators $\mathrm i[f(H_a),B_2]$ and  $\mathrm
i[f(H_a),F(B_2)]$ extend  from $\mathcal D(B_2)$ and  $\mathcal
D(F(B_2))$ to
bounded quadratic forms 
on $\mathcal H$,  
 represented as
 \begin{subequations}
  \begin{align}\label{eq:2_1}
\mathrm i[f(H_a),B_2]
&=
-\int _{\C}(H_a-z)^{-1}\bigl(\mathbf D_aB_2\bigr) (H_a-z)^{-1}\,\mathrm d\mu_f(z)
\quad{\mand}\\
\mathrm i[f(H_a),F(B_2)]
&=
-\int _{\mathbb C}(B_2-z)^{-1}\bigl(\mathrm i[f(H_a),B_2]\bigr)(B_2-z)^{-1}\,\mathrm d\mu_F(z),\label{eq:2_2}
\end{align} 
 \end{subequations}
respectively. 
Moreover
\begin{align}\label{eq:firstHB3}
\mathrm i[f(H_a),B_2]=\vO(r^{\rho_1-\delta}) 
\quad \mand \quad
\mathrm i[f(H_a),F(B_2)]=\vO(r^{\rho_1-\delta})
.
\end{align}
\end{lemma}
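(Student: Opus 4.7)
The proof parallels Lemma~\ref{lem:fHB} with the triple $(H,B,\mathbf DB)$ replaced by $(H_a,B_2,\mathbf D_a B_2)$. I would proceed in three steps: establish the $r$-order bound for the first commutator $\mathbf D_a B_2$; derive \eqref{eq:2_1} by applying the Helffer--Sj\"ostrand formula to $f(H_a)$; then derive \eqref{eq:2_2} by applying the Helffer--Sj\"ostrand formula to $F(B_2)$ and substituting the first representation into the integrand.

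The first step is an inspection of the explicit expression \eqref{eq:formulasy2}. Each of the five inner terms $T_j$ must be shown to be of $r$-order no worse than $-\delta$, so that conjugation by $r^{\rho_1/2}$ produces $\mathbf D_a B_2=\vO(r^{\rho_1-\delta})$. Concretely, $T_1$ and $T_2$ inherit the factor $r^{-\delta}$ from the rescaling $x^a\mapsto x^a/r^\delta$ in the Hessian of $r^a$ (using the analogue for $r^a$ of \eqref{eq:r2}), together with the harmless factor $r^{-1}B$, $B=\vO(r^0)$, in $T_2$; $T_3$ is of order $-1$, better than $-\delta$ since $\delta<1$; $T_4$ collects the potential terms and is of order $-\mu$ by Condition~\ref{cond:smooth2wea3n12} and the support property $\{|x^a|\gtrsim r^\delta\}$ of $\omega^a_\delta$, better than $-\delta$ by \eqref{eq:parameters}; and $T_5=\vO(r^{-3\delta})$ is explicitly noted, with $3\delta>\delta$.

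Next, \eqref{eq:2_1} follows by applying the Helffer--Sj\"ostrand formula $f(H_a)=-\int_{\mathbb C}(H_a-z)^{-1}\,\mathrm d\mu_f(z)$ and the resolvent identity
\[
\mathrm i[(H_a-z)^{-1},B_2]=-(H_a-z)^{-1}(\mathbf D_a B_2)(H_a-z)^{-1},
\]
valid as a form on $\mathcal H^2\cap \mathcal D(B_2)$ and extending by density. Absolute convergence of the resulting integral as a bounded form on $\mathcal H$ is guaranteed by $t<-1/2$: the $|\Im z|^{-2}$ loss from the two resolvents, combined with the $\mathcal L(\mathcal H^2,\mathcal H^{-2})$-character of $\mathbf D_a B_2$, is absorbed by the decay $|\bar\partial\tilde f(z)|\lesssim|\Im z|^k\langle z\rangle^{t-k-1}$ for sufficiently large $k$. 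That the resolvents $(H_a-z)^{-1}$ preserve the $r$-grading (a standard computation paralleling Lemma~\ref{lem:171113}~\ref{item:180128}) then yields $\mathrm i[f(H_a),B_2]=\vO(r^{\rho_1-\delta})$.

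Finally, \eqref{eq:2_2} is obtained by a second Helffer--Sj\"ostrand application to $F(B_2)$, using the self-adjointness of $B_2$ recorded just before Lemma~\ref{lem:2version}. Substitution into the resolvent identity for $B_2$ gives
\[
\mathrm i[f(H_a),F(B_2)]=-\int_{\mathbb C}(B_2-z)^{-1}\bigl(\mathrm i[f(H_a),B_2]\bigr)(B_2-z)^{-1}\,\mathrm d\mu_F(z),
\]
which converges since the integrand is bounded and $t'<1$. The main obstacle is to show the outer resolvents $(B_2-z)^{-1}$ preserve the $r$-grading, which is delicate because $B_2$ has a growing coefficient $r^{\rho_1/2}$. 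The key input is the commutator bound $\mathrm i[r^s,B_2]=\vO(r^{s+\rho_1-1})$, which is of $r$-order strictly less than $s$ since $\rho_1<1$; a Duhamel argument in the spirit of Lemma~\ref{lem:171113b}~\ref{item:BrForm} then yields $r^s(B_2-z)^{-1}r^{-s}\in\vL(\mathcal H)$ with an acceptable bound. Combined with the order estimate of the inner commutator this produces $\mathrm i[f(H_a),F(B_2)]=\vO(r^{\rho_1-\delta})$, completing the proof.
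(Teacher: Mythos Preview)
Your proposal is correct and follows exactly the route the paper indicates: the proof parallels Lemma~\ref{lem:fHB} with $(H,B,\mathbf DB)$ replaced by $(H_a,B_2,\mathbf D_aB_2)$, and the paper does not write it out explicitly. One small correction on the $T_4$ step: the relevant support property is the componentwise one, namely that $(\omega^a_\delta)^b$ vanishes where $|x^b|\le c\,r^\delta$ for each $b\le a$ (this is property~\ref{item:2c} applied to $r^a$), which yields $T_4=\vO(r^{-(1+\mu)\delta})$; the global support $\{|x^a|\gtrsim r^\delta\}$ alone does not control the decay of $\nabla^b V^b$ in $|x^b|$ when $b\lneq a$, so your stated order $-\mu$ is not what the argument actually produces.
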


Ideally we would like the right-hand sides of 
\eqref{eq:firstHB3} to be on the form $\vO(r^{(-1)_-})$, which is not
  doable. However there is the following partial result of this
  type: Let us  split
  \begin{align*}
    {{\bf D}_aB_2} &=D_1+D_2;\\
D_1&=
r^{\rho_1/2}\parb{T_1+T_2+T_3}r^{\rho_1/2}\\
D_2&=
r^{\rho_1/2}\parb{T_4+T_5}r^{\rho_1/2}.
  \end{align*}   From the the property \ref{item:2c} of Section \ref{sec:Derezinski's
  construction} (with $r$ replaced by  $r^a$) and the relations 
$(1+\mu)\delta-\rho_1>1$ 
 and $3\delta-\rho_1>1$  it  follows that  
\begin{align*}
  -\int _{\C}(H_a-z)^{-1}D_2 (H_a-z)^{-1}\,\mathrm d\mu_f(z)=\vO(r^{(-1)_-}),
\end{align*} and therefore we conclude by  plugging into 
\eqref{eq:2_1} and \eqref{eq:2_2} that the contribution  from $D_2$ to
$\mathrm i[f(H_a),F(B_2)]$  is on the desired form $\vO(r^{(-1)_-})$. 
As for  the contribution  from $D_1$ to
$\mathrm i[f(H_a),F(B_2)]$ we write 
\begin{align*}
  &\mathrm i[f(H_a),F(B_2)]-\vO(r^{(-1)_-})
\\
   &=\Re\parb{ T^{(1)} F'(B_2)}+\tfrac12 \int _{\mathbb C}(B_2-z)^{-2}\mathrm
  \ad^2_{\i B_2}(T^{(1)})(B_2-z)^{-2}\,\mathrm d\mu_F(z);\\
T^{(1)}&:=-\int _{\C}(H_a-z)^{-1}D_1 (H_a-z)^{-1}\,\mathrm d\mu_f(z)=\vO(r^{\rho_1-\delta}).
\end{align*}  
The second  term to the right is $\vO(r^{(-1)_-})$,  which follows from  the
following more complicated  version of Lemma \ref{lem:17111515} 
involving  
\begin{align*}
 T^{(2)}&:=
-\int_{\C} (H_a-z)^{-1}\ad_{\i B_2}(D_1)(H_a-z)^{-1}\d \mu_f(z)
\\&\quad \quad 
+2\int_{\C} (H_a-z)^{-1}D_1(H_a-z)^{-1} D_1(H_a-z)^{-1}\d
    \mu_f(z)= \vO(r^{2(\rho_1-\delta)}).
\end{align*}  For convenience we write in the remaining part of
the section
  $T\simeq T'$ for any
given  operators
$T,T'\in \vL(\vH)$, if $T-T'=\vO(r^{(-1)_-})$.

\begin{lemma}\label{lemma:sing2} 
Let  $T^{(1)}$ and  $T^{(2)}$ be given as above for   $f\in \mathcal F^t$ with
$t<-1/2$. Then there exists the strong limits 
\begin{align*} 
\ad_{\i B_2}(T^{(j)})&=\mathrm i[T^{(j)},B_2]:=\slim_{t\to 0}t^{-1}
\bigl(T^{(j)}\mathrm e^{\mathrm i tB_2}-\e^{\mathrm i
                                                 tB_2}T^{(j)}\bigr)
\ \ \text{in }\mathcal L(\mathcal H);\,\quad j=1,2.
\end{align*}  Moreover 
\begin{align*}
\ad_{\i B_2}\parb{T^{(1)}} 
&
=T^{(2)}+\vO(r^{(-1)_-}),
\end{align*} and  for   the above second term we can substitute
\begin{align*}  
\ad^2_{\i B_2}(T^{(1)})&=\ad_{\i
  B_2}(T^{(2)})+\i\parb{\ad_{\i
  B_2}(T^{(1)})-T^{(2)}}B_2-\i B_2\parb{\ad_{\i
  B_2}(T^{(1)})-T^{(2)}}\\
&=\ad_{\i
  B_2}(T^{(2)})+\vO(r^{(-1)_-})B_2+ B_2\vO(r^{(-1)_-})\\&\quad \quad  \simeq\vO(r^{(-1)_-})B_2+ B_2\vO(r^{(-1)_-}).
\end{align*} 
\end{lemma}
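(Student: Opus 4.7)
The plan is to mimic the proofs of Lemmas~\ref{lemma:sing} and \ref{lem:17111515} with $H_a$ and $B_2$ playing the roles of $H$ and $B$. First I would establish that the strong limits $\ad_{\i B_2}(T^{(j)})$ exist in $\vL(\vH)$ for $j=1,2$. Since $T^{(1)}$ and $T^{(2)}$ are absolutely convergent contour integrals of products of resolvents $(H_a-z)^{-1}$ and smooth operators ($D_1$, $\ad_{\i B_2}(D_1)$, and $D_1(H_a-z)^{-1}D_1$), and since the analog of Lemma~\ref{lemma:Flow} applies to $B_2$ (whose generating vector field is smooth and globally Lipschitz), each factor is differentiable in $t$ under conjugation by $\e^{\i t B_2}$. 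For a resolvent one has, by the $B_2$-analog of \eqref{eq:limit00b},
\begin{align*}
\ad_{\i B_2}\parb{(H_a-z)^{-1}}=-(H_a-z)^{-1}(\mathbf D_aB_2)(H_a-z)^{-1},
\end{align*}
while the commutator with $D_1$ (or $\ad_{\i B_2}(D_1)$) reduces to that of two differential operators with smooth coefficients. Applying the product rule together with uniform-in-$t$ bounds on the integrand (controllable by the $B_2$-analog of Lemma~\ref{lemma:Flow} and the order bounds for $D_1,D_2,\ad_{\i B_2}(D_1)$) then yields the strong limits with explicit integral representations.

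For the first identity, apply the product rule to $T^{(1)}=-\int_\C(H_a-z)^{-1}D_1(H_a-z)^{-1}\,d\mu_f(z)$:
\begin{align*}
\ad_{\i B_2}(T^{(1)}) &= -\int_\C (H_a-z)^{-1}\ad_{\i B_2}(D_1)(H_a-z)^{-1}\,d\mu_f(z) \\
&\quad + \int_\C (H_a-z)^{-1}(\mathbf D_aB_2)(H_a-z)^{-1}D_1(H_a-z)^{-1}\,d\mu_f(z) \\
&\quad + \int_\C (H_a-z)^{-1}D_1(H_a-z)^{-1}(\mathbf D_aB_2)(H_a-z)^{-1}\,d\mu_f(z).
\end{align*}
Substituting $\mathbf D_aB_2=D_1+D_2$, the two $D_1$-contributions from the last two integrals combine (producing the factor $2$) into the second term of $T^{(2)}$, while each $D_2$-contribution is $\vO(r^{(-1)_-})$ by exactly the bookkeeping used in the paragraph preceding the lemma: the inequalities $(1+\mu)\delta-\rho_1>1$ and $3\delta-\rho_1>1$ remain sufficient in the presence of the extra $D_1(H_a-z)^{-1}$ factor, which is of order $\rho_1-\delta<0$.

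For the second identity, set $E:=\ad_{\i B_2}(T^{(1)})-T^{(2)}=\vO(r^{(-1)_-})$ by the previous step, so that
\begin{align*}
\ad^2_{\i B_2}(T^{(1)}) = \ad_{\i B_2}\parb{T^{(2)}+E} = \ad_{\i B_2}(T^{(2)}) + \i E B_2 - \i B_2 E,
\end{align*}
which is precisely the first displayed formula and the $\vO(r^{(-1)_-})B_2+B_2\vO(r^{(-1)_-})$ representation. To obtain the closing $\simeq$ it remains to show $\ad_{\i B_2}(T^{(2)}) = \vO(r^{(-1)_-})$: applying the product rule once more to each of the two terms defining $T^{(2)}$ produces sums in which every summand contains at least three factors of order $\rho_1-\delta$ (three $D_1$'s, two $D_1$'s together with an $\ad_{\i B_2}(D_1)$, or a single $\ad^2_{\i B_2}(D_1)$, each $\vO(r^{\rho_1-\delta})$ or better). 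The overall order is therefore at most $3(\rho_1-\delta)<-1$, since $\delta>2/3$ is forced by \eqref{eq:parameters}.

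The hard part will be the order bookkeeping in each of the product-rule expansions, in particular verifying that the $D_2$-contributions in the first identity and the iterated-commutator contributions in $\ad_{\i B_2}(T^{(2)})$ gain the required additional $r^{-\epsilon}$ beyond $r^{-1}$. These verifications are completely parallel to --- though notably more involved than --- those in the proofs of Lemmas~\ref{lemma:sing} and \ref{lem:17111515}, relying on Lemma~\ref{lem:2version} for the explicit form of $\mathbf D_aB_2=D_1+D_2$ and on the parameter constraints \eqref{eq:parameters}.
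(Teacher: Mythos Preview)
Your approach is essentially the paper's: expand by the product rule, replace each $\mathbf D_aB_2$ by $D_1$ (absorbing $D_2$ into $\vO(r^{(-1)_-})$), and iterate. The paper in fact only writes out the very last step, displaying
\[
\ad_{\i B_2}(T^{(2)})=-\!\int(H_a-z)^{-1}\ad^2_{\i B_2}(D_1)(H_a-z)^{-1}\,\d\mu_f
+3\!\int\cdots+3\!\int\cdots-6\!\int\cdots+\vO(r^{(-1)_-})
\]
and concluding $\simeq\vO(r^{3(\rho_1-\delta)})\simeq 0$.

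There is one genuine slip in your bookkeeping. The cross terms in $\ad_{\i B_2}(T^{(2)})$ carry \emph{one} factor of $D_1$ together with \emph{one} factor of $\ad_{\i B_2}(D_1)$, not two $D_1$'s and one $\ad_{\i B_2}(D_1)$; and the remaining term carries a single $\ad^2_{\i B_2}(D_1)$. So the claim ``three factors, each $\vO(r^{\rho_1-\delta})$'' is not what makes the argument work: with only $\ad_{\i B_2}(D_1)=\vO(r^{\rho_1-\delta})$ the cross terms would be merely $\vO(r^{2(\rho_1-\delta)})$, which is \emph{not} guaranteed to be $\vO(r^{(-1)_-})$ under \eqref{eq:parameters} (one would need $\delta>3/4$). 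What you actually need---and what the paper is tacitly using---is the stronger scaling
\[
\ad^k_{\i B_2}(D_1)=\vO\bigl(r^{(k+1)(\rho_1-\delta)}\bigr),\qquad k=0,1,2,
\]
i.e.\ each commutation with $B_2$ gains an additional $r^{\rho_1-\delta}$. This follows from the explicit form of $D_1=r^{\rho_1/2}(T_1+T_2+T_3)r^{\rho_1/2}$ and $B_2=r^{\rho_1/2}B_\delta^a r^{\rho_1/2}$ in Lemma~\ref{lem:2version}: differentiating the coefficients (which are smooth functions of $x^a/r^\delta$) produces a factor $r^{-\delta}$, while the $r^{\rho_1}$ weight of $B_2$ supplies the $r^{\rho_1}$. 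Once you state and check this, your order count and the conclusion $3(\rho_1-\delta)<-1$ (indeed forced by $\delta>2/3$) go through exactly as in the paper.
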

\begin{proof} We compute (for the very last assertion)
\begin{align*}
&\ad_{\i
  B_2}(T^{(2)}) 
=-\int_{\C} (H_a-z)^{-1}\ad^2_{\i B_2}(D_1)(H_a-z)^{-1}\d \mu_f(z)
\\ &\phantom{{}={}}{}
+3\int_{\C} (H_a-z)^{-1}\ad_{\i B_2}(D_1)(H_a-z)^{-1} D_1(H_a-z)^{-1}\d
    \mu_f(z)\\
&+3\int_{\C} (H_a-z)^{-1} D_1(H_a-z)^{-1}\ad_{\i B_2}(D_1)(H_a-z)^{-1}\d
    \mu_f(z)\\
&-6\int_{\C} (H_a-z)^{-1}D_1(H_a-z)^{-1} D_1(H_a-z)^{-1} D_1(H_a-z)^{-1}\d
    \mu_f(z)+\vO(r^{(-1)_-})\\
&\simeq \vO(r^{3(\rho_1-\delta)})\simeq 0.
\end{align*} 
  \end{proof}

By  commutation  --  of the same sort,  continuing the above
calculations --  we conclude  the
following result.

\begin{subequations}

\begin{lemma}\label{lem:171115152} Let    $f\in \mathcal F^t$ and   $F\in \mathcal F^{t'}$ with
$t<-1/2$ and $t'<1$, respectively. Suppose also that $ F'\leq 0$  with
$\sqrt{-F'}\in \mathcal F^0$. Then 
\begin{align*}
  \mathrm i[f(H_a),F(B_2)]
\simeq  -\sqrt{-F'}(B_2)\parb{\i[f(H_a),B_2]}\sqrt{-F'}(B_2).
\end{align*}  In particular for any real functions  $f,g$ and ${F}$ with $ F'\leq 0$,
$f,g,\sqrt{-F'}\in C^\infty_\c(\R)$ and given such that   $g(\lambda)f(\lambda)=\lambda f(\lambda)$,
\begin{align}\label{eq:basicF}
  \begin{split}
  &f(H_a)\mathrm i[g(H_a),F(B_2)]f(H_a)\\
&\quad\simeq  -\sqrt{-F'}(B_2)f(H_a)\parb{\i[g(H_a),B_2]}f(H_a)\sqrt{-F'}(B_2)
\\
&\quad\simeq  -\sqrt{-F'}(B_2)f(H_a)D_1f(H_a)\sqrt{-F'}(B_2).  
  \end{split}
\end{align} 
\end{lemma}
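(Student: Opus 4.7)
The strategy is to combine the Helffer--Sj\"ostrand representation of Lemma \ref{lem:fHB3} with the symmetrization already performed in the paragraph preceding Lemma \ref{lemma:sing2}. Starting from \eqref{eq:2_2} and replacing $\mathrm i[f(H_a),B_2]$ by $T^{(1)}$ modulo $\vO(r^{(-1)_-})$, I would apply the Jacobi-type identity
\begin{align*}
(B_2-z)^{-1}T^{(1)}(B_2-z)^{-1} &= \tfrac12\bigl[T^{(1)}(B_2-z)^{-2}+(B_2-z)^{-2}T^{(1)}\bigr]\\
&\quad-\tfrac12(B_2-z)^{-2}\ad^2_{\i B_2}(T^{(1)})(B_2-z)^{-2},
\end{align*}
integrate against $-\mathrm d\mu_F(z)$ and use $\int(B_2-z)^{-2}\,\mathrm d\mu_F(z)=-F'(B_2)$ to arrive at $\mathrm i[f(H_a),F(B_2)]\simeq \Re\bigl(T^{(1)}F'(B_2)\bigr)$. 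The $\ad^2_{\i B_2}(T^{(1)})$ remainder is $\vO(r^{(-1)_-})$ by Lemma \ref{lemma:sing2}, the flanking $B_2$-factors being absorbed through $(B_2-z)^{-1}B_2=1+z(B_2-z)^{-1}$ balanced against the $|\Im z|$-decay of the almost analytic extension of $F$.

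Setting $G:=\sqrt{-F'}\in\mathcal F^0$ so that $F'(B_2)=-G(B_2)^2$, the first formula of the lemma reduces to
\begin{equation*}
\tfrac12\bigl(T^{(1)}G(B_2)^2+G(B_2)^2T^{(1)}\bigr)\simeq G(B_2)T^{(1)}G(B_2),
\end{equation*}
whose discrepancy equals the double commutator $\tfrac12\bigl[[G(B_2),T^{(1)}],G(B_2)\bigr]$. Two applications of Helffer--Sj\"ostrand in the $B_2$-variable reduce this double commutator to a resolvent-sandwiched version of $[B_2,[B_2,T^{(1)}]]=-\ad^2_{\i B_2}(T^{(1)})$, which is $\simeq 0$ by the same absorption mechanism as above. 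Substituting $T^{(1)}\simeq\mathrm i[f(H_a),B_2]$ back then yields the first claim.

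For the ``in particular'' assertion I would redo the entire derivation with the sandwich $f(H_a)(\,\cdot\,)f(H_a)$ in place and $g$ (rather than $f$) as the function of $H_a$ whose commutator with $F(B_2)$ is being treated; each commutator $[f(H_a),(B_2-z)^{-1}]=(B_2-z)^{-1}[f(H_a),B_2](B_2-z)^{-1}$ that must be displaced to line up the outer $f(H_a)$'s with the $G(B_2)$'s is absorbed into $\simeq$ by the same symmetrization/$\ad^2_{\i B_2}$-cancellation as in the first part, giving the first $\simeq$. The second $\simeq$ uses the splitting $\mathrm i[g(H_a),B_2]=T^{(1)}_g+\vO(r^{(-1)_-})$ with $T^{(1)}_g:=-\int(H_a-z)^{-1}D_1(H_a-z)^{-1}\,\mathrm d\mu_g(z)$, together with the identity $g'(H_a)f(H_a)=f(H_a)$ obtained by differentiating $g(\lambda)f(\lambda)=\lambda f(\lambda)$ on $\supp f$: this permits one to commute the $H_a$-resolvents past $D_1$ modulo $\simeq$-errors and collapse $\int(H_a-z)^{-2}\,\mathrm d\mu_g(z)=-g'(H_a)$ on the spectral subspace where $f(H_a)\neq 0$.

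The principal obstacle is the control of the flanking $B_2$-factors inside $\ad^2_{\i B_2}(T^{(1)})$: Lemma \ref{lemma:sing2} yields only $\ad^2_{\i B_2}(T^{(1)})\simeq\vO(r^{(-1)_-})B_2+B_2\vO(r^{(-1)_-})$, so the $(B_2-z)^{-1}B_2=1+z(B_2-z)^{-1}$ absorption has to be balanced carefully against the polynomial $z$-growth of the almost analytic extensions $\tilde F,\tilde G$ and their rapid $|\Im z|$-decay in $\mathrm d\mu_F,\mathrm d\mu_G$, so that the resulting $z$-integrals genuinely produce $\vO(r^{(-1)_-})$ and no order is lost along the way.
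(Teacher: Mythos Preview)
Your strategy is essentially the one the paper indicates (its own proof is the single sentence ``By commutation --- of the same sort, continuing the above calculations''): Helffer--Sj\"ostrand in the $B_2$-variable, symmetrize, and push all errors to $\ad^2_{\i B_2}$-type terms controlled by Lemma~\ref{lemma:sing2} and the bound $3(\rho_1-\delta)<-1$.

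Two points of streamlining. First, the second $\simeq$ in \eqref{eq:basicF} is much simpler than your resolvent-collapsing argument: from $g(\lambda)f(\lambda)=\lambda f(\lambda)$ one has $f(H_a)g(H_a)=H_af(H_a)$, and since $[H_a,f(H_a)]=0$ this gives directly
\[
f(H_a)\,\mathrm i[g(H_a),B_2]\,f(H_a)=f(H_a)\,(\mathbf D_aB_2)\,f(H_a)=f(H_a)D_1f(H_a)+f(H_a)D_2f(H_a),
\]
with the $D_2$-term $\vO(r^{(-1)_-})$ by the same estimates as before; no need for $g'(H_a)f(H_a)=f(H_a)$ or collapsing $(H_a-z)^{-2}$-integrals. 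Second, for the first $\simeq$ in \eqref{eq:basicF}, be aware that a single commutation of $f(H_a)$ past $\sqrt{-F'}(B_2)$ only gives $\vO(r^{\rho_1-\delta})$, so the cross-terms with $\mathrm i[g(H_a),B_2]=\vO(r^{\rho_1-\delta})$ are merely $\vO(r^{2(\rho_1-\delta)})$, not yet $\vO(r^{(-1)_-})$; you need one further round of symmetrization to reach three factors and invoke $3(\rho_1-\delta)<-1$. Your phrase ``the same symmetrization/$\ad^2_{\i B_2}$-cancellation'' is the right idea, but it takes two iterations here, not one.
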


Effectively the   right-hand side  of \eqref{eq:basicF} will
contribute by a negative term. 
To see this we first write
\begin{align*}
  T_1+T_2&=4\parb{p^a-\tfrac \delta 4B\tfrac {x^a}r}\cdot r^{-\delta}\parb{\mathop{\mathrm{Hess}  }
    r^a}\parb{x^a/r^\delta}  \parb{p^a-\tfrac \delta 4\tfrac
  {x^a}r B}\\
&\quad -\tfrac{\delta^2}4B\tfrac {x^a}r\cdot r^{-\delta}\parb{\mathop{\mathrm{Hess}  }
    r^a}\parb{x^a/r^\delta} \tfrac
  {x^a}r B.
\end{align*} Using \eqref{eq:r2} this leads to
\begin{align}
  &f(H_a)r^{\rho_1/2}\parb{T_1+T_2}r^{\rho_1/2}f(H_a) \label{eq:pos1}\\
&=f(H_a)r^{\rho_1/2}\parb{4\parb{p^a-\tfrac \delta 4B\tfrac {x^a}r}\cdot r^{-\delta}\parb{\mathop{\mathrm{Hess}  }
    r^a}\parb{x^a/r^\delta}  \parb{p^a-\tfrac \delta 4\tfrac
  {x^a}r B}}r^{\rho_1/2}f(H_a)\nonumber\\& 
 \quad +\vO(r^{\rho_1+\delta-2}).\nonumber
\end{align} 
 Now the first term to the right is positive, and the last term
 $\vO(r^{\rho_1+\delta-2})=\vO(r^{(-1)_-})$.

We are only interested in $f\in  C^\infty_\c(\R)$  taking values in $[0,1]$. For convenience let us
henceforth impose this condition.
 The contribution from $T_3$ is of order $\vO(r^{-1})$ (here we use
that $F'$ is compactly supported), but with
a localization  to $B>0$ and provided $\sqrt{-F'}\in C^\infty_\c(\R_+)$
 we can again find a better lower bound. More
precisely we have the following result for which it remains to be
noted that the  
right-hand side   of \eqref{eq:Pos2} is positive.
\begin{lemma}\label{lemma:comm-with-a_3x}
Suppose  $\zeta:=\sqrt{-F'}\in C^\infty_\c(\R_+)$. Then 
for any $\epsilon>0$ and with $\eta(b):=\sqrt{b}\chi_+(
  b/\epsilon)$ and  $\zeta_1(b):=\sqrt{b}\zeta(b)$,
\begin{align}\label{eq:Pos2}
  \begin{split}
   &\chi_+(
  B/\epsilon) \zeta(B_2)f(H_a)r^{\rho_1/2}T_3r^{\rho_1/2}f(H_a)\zeta(B_2)\chi_+(
  B/\epsilon)\\& \simeq \rho_1f(H_a)\eta(B) \zeta_1(B_2)r^{
-1}\zeta_1(B_2)\eta(B)f(H_a). 
  \end{split}
\end{align}   
\end{lemma}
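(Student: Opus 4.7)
The plan is to prove the asserted $\simeq$ identity in three stages, each invoking the commutator calculus of Lemmas \ref{lem:fHB}--\ref{lem:171115152} and tracking orders in $r$ until every discrepancy has been reduced to $\vO(r^{(-1)_-})$.

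First, I would eliminate the outer $r^{\rho_1/2}$ factors by absorbing them into $B_\delta^a$ so as to produce $B_2=r^{\rho_1/2}B_\delta^a r^{\rho_1/2}$. Writing $T_3=\tfrac{\rho_1}{2}(Br^{-1}B_\delta^a+B_\delta^a r^{-1}B)$, I commute $r^{\rho_1/2}$ past $Br^{-1}$ (resp.\ past $r^{-1}B$); the commutator $[r^{\rho_1/2},B]$ is a bounded multiplication operator of order $\vO(r^{\rho_1/2-1})$, so after multiplication by the remaining factors of total order $\vO(r^{-1+\rho_1/2})$ the error is $\vO(r^{\rho_1-2})$, which is $\vO(r^{(-1)_-})$ since $\rho_1<1$. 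This yields
\begin{equation*}
r^{\rho_1/2}T_3\,r^{\rho_1/2}\simeq \rho_1\Re(Br^{-1}B_2).
\end{equation*}

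Second, inside the sandwich by $\chi_+(B/\epsilon)\zeta(B_2)f(H_a)$ and its adjoint, I would move $B$ leftward through $\zeta(B_2)f(H_a)$ so as to become adjacent to $\chi_+(B/\epsilon)$, and symmetrically move $B_2$ rightward to the outer $\zeta(B_2)$ (and mirror image for the second summand of $\Re(\cdot)$). The inputs are $[B,f(H_a)]=\vO(r^{-1})$ (the analog of \eqref{eq:firstHB} for $H_a$), $[B_2,f(H_a)]=\vO(r^{\rho_1-\delta})$ (from \eqref{eq:firstHB3}), and $[B,\zeta(B_2)]=\vO(r^{\rho_1-\delta})$, the last of which follows from the Helffer--Sj\"ostrand representation
\begin{equation*}
[B,\zeta(B_2)]=-\int_{\C}(B_2-z)^{-1}[B,B_2](B_2-z)^{-1}\,\d\mu_\zeta(z)
\end{equation*}
once one checks $[B,B_2]=\vO(r^{\rho_1-\delta})$ by direct computation, using $[B,r^{\rho_1/2}]=\vO(r^{\rho_1/2-1})$ and the chain-rule scaling $\nabla\omega_\delta^a=\vO(r^{-\delta})$. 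Each such commutator, when multiplied by the remaining $\vO(r^{-1})$ factor $r^{-1}B_2$ or $Br^{-1}$, contributes at most $\vO(r^{\rho_1-\delta-1})$. Since $\delta\ge 2/(2+\mu)>1/2$ we have $\rho_1<1-\delta\le\delta$, so $\rho_1-\delta-1<-1$ and every such error is $\vO(r^{(-1)_-})$.

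Third, having reorganized the expression modulo $\vO(r^{(-1)_-})$ into a form in which all bounded functional-calculus factors sit outside the central $r^{-1}$ and a single $B$ (resp.\ $B_2$) is paired with $\chi_+(B/\epsilon)^2$ (resp.\ $\zeta(B_2)^2$), I would invoke the scalar identities $b\chi_+(b/\epsilon)^2=\eta(b)^2$ and $b\zeta(b)^2=\zeta_1(b)^2$ to transmute these factors into $\eta(B)^2$ and $\zeta_1(B_2)^2$. A final symmetrization, commuting $f(H_a)$, $\eta(B)$ and $\zeta_1(B_2)$ through $r^{-1}$ and through one another --- which by the same estimates together with $[\eta(B),r^{-1}]=\vO(r^{-2})$ and $[f(H_a),r^{-1}]=\vO(r^{-2})$ only contributes further $\vO(r^{(-1)_-})$ errors --- distributes them evenly around $r^{-1}$ and yields the right-hand side of \eqref{eq:Pos2}. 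The principal technical obstacle is the sharp estimate $[B,B_2]=\vO(r^{\rho_1-\delta})$, which has to be obtained by explicit computation from $B=2\Re(\omega\cdot p)$ and $B_2=2\Re(p^a\cdot r^{\rho_1}\omega_\delta^a)$ using the Hessian bound \eqref{eq:r2} and the scaling of $\omega_\delta^a$; all remaining ingredients are already available in this section.
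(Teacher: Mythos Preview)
Your three-stage strategy mirrors the paper's, and Stage~1 is fine. The gap is in Stage~2 (and it recurs in Stage~3): your claim that $[B,B_2]=\vO(r^{\rho_1-\delta})$ is not correct in the sense of \eqref{eq:1712022}. The commutator of the two first-order operators $B=2\Re(\omega\cdot p)$ and $B_2=2\Re(p^a\cdot r^{\rho_1}\omega_\delta^a)$ is again a first-order differential operator; its \emph{coefficients} decay like $r^{\rho_1-\delta}$, but the operator itself is not bounded between weighted $L^2$ spaces. In particular the resolvent $(B_2-z)^{-1}$ does not supply any Sobolev regularization, so the integrand $(B_2-z)^{-1}[B,B_2](B_2-z)^{-1}$ in your Helffer--Sj\"ostrand representation of $[B,\zeta(B_2)]$ is not a priori bounded, and the formula is unjustified. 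This is precisely the obstruction flagged in Remark~\ref{remark:comm-with-a_3yy}: the formal commutator $[B,B_{\delta,\rho_1}^a]$ ``does not seem relatively bounded neither to $B$ nor to $B_{\delta,\rho_1}^a$''.

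The paper's cure is to regularize before commuting. Pick $\tilde f\in C_\c^\infty(\R)$ with $f\prec\tilde f$ and set $\widetilde B_a:=\tilde f(H_a)B\tilde f(H_a)$. One first moves the outer $f(H_a)$ factors outward past $\zeta(B_2)$ and $\chi_+(B/\epsilon)$ (using \eqref{eq:firstHB3} and \eqref{eq:firstHB}), leaving free $\tilde f(H_a)$ factors adjacent to $B$; this converts $B$ into $\widetilde B_a$ up to $\vO(r^{(-1)_-})$ errors. Now $\widetilde B_a$ is bounded, and one has the genuine estimate $\tilde f(H_a)[B_2,B]\tilde f(H_a)=\vO(r^{\rho_1-\delta})$, so the Helffer--Sj\"ostrand representation gives $[\zeta(B_2),\widetilde B_a]=\vO(r^{\rho_1-\delta})$ as in \eqref{eq:comtild}. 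The same device (replacing $\zeta_1^2(B_2)$ by $\widetilde Z_a=\tilde f(H_a)\zeta_1^2(B_2)\tilde f(H_a)$) is needed in your Stage~3 to justify $[\eta(B),\zeta_1^2(B_2)]=\vO(r^{0_-})$ and $[\chi_+(B/\epsilon),\zeta_1^2(B_2)]=\vO(r^{0_-})$; see \eqref{eq:comtilde}. With this regularization inserted, your outline goes through and coincides with the paper's proof.
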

  \end{subequations}
\begin{proof}
  Pick $\tilde{f}\in C^\infty_\c(\R)$ such that
  $f\prec\tilde{f}$, and let $\widetilde{B}_a=\tilde{f}(H_a)B\tilde{f}(H_a)$. Then by commutation
\begin{align*}
T:=&\chi_+(
  B/\epsilon) \zeta(B_2)f(H_a)r^{\rho_1/2}   \Re\parb{Br^{-1}B_\delta^a}  r^{\rho_1/2}f(H_a) \zeta(B_2)\chi_+(
  B/\epsilon)
\\ &\simeq f(H_a) \chi_+(
  B/\epsilon) \Re\parb{\zeta(B_2)\widetilde{B}_ar^{-1}B_2\zeta(B_2)}  \chi_+(
  B/\epsilon)f(H_a).
\end{align*}  It follows from  the first bound of
\eqref{eq:firstHB3} and   the bound $\tilde f{(H_a)}
[B_2,{B}]\tilde f{(H_a)} =\vO(r^{\rho_1-\delta})$ that
\begin{subequations}
 \begin{align}\label{eq:comtild}
[\zeta(B_2),\widetilde{B}_a]
=-
\int _{\C}
(B_2-z)^{-1}\mathrm [B_2,\widetilde{B}_a](B_2-z)^{-1}\,\mathrm d\mu_\zeta(z)=\vO(r^{\rho_1-\delta}).
\end{align}
Whence 
\begin{align*}
  &\Re\parb{\zeta(B_2)\widetilde{B}_ar^{-1}B_2\zeta(B_2)} \\
&=\Re\parb{\widetilde{B}_a\zeta(B_2)r^{-1}B_2\zeta(B_2)}
  +\Re\parb{[\zeta(B_2),\widetilde{B}_a]r^{-1}B_2\zeta(B_2)}\\
&=\Re\parb{\widetilde{B}_ar^{-1}\zeta^2_1(B_2)} +\vO(r^{\rho_1-\delta-1}),
\end{align*} and we infer that 
\begin{align*}
T&\simeq f(H_a) \chi_+(
  B/\epsilon) \Re\parb{\widetilde{B}_ar^{-1}\zeta^2_1(B_2)} \chi_+(
  B/\epsilon)f(H_a)\\
&\simeq f(H_a) \Re\parb{\chi_+(
  B/\epsilon) {B}r^{-1}\zeta^2_1(B_2)\chi_+(
  B/\epsilon)} f(H_a).
\end{align*}  
Next we write
\begin{align*}
  &\Re\parb{\chi_+(
  B/\epsilon) {B}r^{-1}\zeta^2_1(B_2)\chi_+(
  B/\epsilon)}\\& =\Re\parb{\eta^2(B)r^{-1}\zeta^2_1(B_2)} +\Re\parb{\chi_+(
  B/\epsilon) {B}[r^{-1}\zeta^2_1(B_2),\chi_+(
  B/\epsilon)]}, 
\end{align*} and (for the second term)
\begin{align*}
  [r^{-1}\zeta^2_1(B_2),&\chi_+(
  B/\epsilon)]\\&=[r^{-1},\chi_+(
  B/\epsilon)]\zeta^2_1(B_2)+r^{-1}[\zeta^2_1(B_2),\chi_+(
  B/\epsilon)]\\
&=\vO(r^{-2})\zeta^2_1(B_2)+r^{-1}[\zeta^2_1(B_2),\chi_+(
  B/\epsilon)].
\end{align*} Here the first term  contributes by a term on the form $\vO(r^{(-1)_-})$, so
we are left with
\begin{align*}
  T\simeq f(H_a) \Re\parb{\eta^2(B)&r^{-1}\zeta^2_1(B_2)}f(H_a) \\&-f(H_a) \Re\parbb{\chi_+(
  B/\epsilon) {B}r^{-1}[\chi_+(
  B/\epsilon),\zeta^2_1(B_2)]}f(H_a) .
\end{align*} The first term
\begin{align*}
  f(H_a) \Re\parb{\eta^2(B)&r^{-1}\zeta^2_1(B_2)}f(H_a) 
\\\simeq f(H_a)&
    \Re\parb{\eta(B)\zeta_1(B_2)r^{-1}\zeta_1(B_2)\eta(B)}f(H_a)\\&
    +f(H_a) \Re\parb{\eta(B)r^{-1}[\eta(B),\zeta^2_1(B_2)]}f(H_a),
\end{align*}  and here in turn the first term is what we need for
\eqref{eq:Pos2}.

It remains to show that
\begin{align*}
  -f(H_a) \Re\parbb{\chi_+(
  B/\epsilon) &{B}r^{-1}[\chi_+(
  B/\epsilon),\zeta^2_1(B_2)]}f(H_a) \\&+f(H_a) \Re\parb{\eta(B)r^{-1}[\eta(B),\zeta^2_1(B_2)]}f(H_a)\simeq 0.
\end{align*}
 To deal with these commutators we  let 
 $\widetilde{Z}_a=\tilde{f}(H_a)Z\tilde{f}(H_a)$, $Z=\zeta^2_1(B_2)$. By
 \eqref{eq:firstHB} it suffices to show that
 \begin{align}
\label{eq:comtilde}
   [\chi_+(
  B/\epsilon),\widetilde{Z}_a], \,[\eta(B),\widetilde{Z}_a]=\vO(r^{0_-}).
 \end{align}  
\end{subequations}
Let us only do the second bound. We use
 \eqref{eq:firstHB} and 
\eqref{eq:firstHB3} computing  in the last step exactly as in \eqref{eq:comtild}
\begin{align*}
[\eta(B),\widetilde{Z}_a]
&=-
\int _{\C}
(B-z)^{-1}\mathrm [B,\widetilde{Z}_a](B-z)^{-1}\,\mathrm
d\mu_\eta(z)
\\
&=-
\int _{\C}
(B-z)^{-1}\mathrm [\widetilde B_a,\zeta^2_1(B_2)](B-z)^{-1}\,\mathrm
  d\mu_\eta(z)+\vO(r^{\rho_1-\delta})
\\
&=\vO(r^{\rho_1-\delta})=\vO(r^{0_-}).
\end{align*}
\end{proof}

\begin{remark}\label{remark:comm-with-a_3yy} The  procedure in the
  proof of Lemma \ref{lemma:comm-with-a_3x} of regularizing by
  replacing an operator $T$ by
  $\widetilde{T}_{a}=\tilde{f}(H_a)T\tilde{f}(H_a)$ before computing a
  commutator with $T$ by the Helffer--Sj\"ostrand
formula will be used
  frequently in the remaining part of the paper. We are dealing  with various
  examples of first order operators,  most typically   $B$, $B_{\delta,\rho_1}^a$ and $\xi^+_j( x)
  G_{d_j}$, that along with functions of $B$ and
  $B_{\delta,\rho_1}^a$  at
  various points need to be commuted. The technical problem one should
  avoid (and possibly would encounter upon `blindly' applying  the Helffer--Sj\"ostrand
formula)  is  the appearance of certain (presumably unbounded)
commutators  without a proper  weight. Thus for
example the formal commutator  $[B,B_{\delta,\rho_1}^a]$ is of first
order but  does not seem  relatively  bounded  neither to $B$ nor
 to $B_{\delta,\rho_1}^a$. In the
  proof of Lemma \ref{lemma:comm-with-a_3x} the $\widetilde{T}_{a}$-construction  cured
 this problem providing  $H_a$-weights. We used `free' factors of
$\tilde{f}(H_a)$ at our disposal and the feature that commutation with those was
relatively harmless. We will proceed similarly later (mostly with  cases
where $H_a$ is replaced by  $H$), although typically 
without giving 
 the details.
  
\end{remark}

We have computed the commutator \eqref{eq:basicF} as a sum of two negative operators plus an
operator on the form $\vO(r^{(-1)_-})$. This is  with
a localization  to $B>0$ and for 
$\sqrt{-F'}\in C^\infty_\c(\R_+)$, which will be   conditions  met in 
our treatment of $A^a_3$ in the `plus' case.

We shall in applications need
commutation with  $A^a_3$, however  with $H_a$ replaced by $H$. The presence of
factors of $M_a$   enables us to reduce to the case studied
above. Note  that thanks to \eqref{eq:partM} for  any real ${f}\in C^\infty_\c(\R)$
\begin{align}\label{eq:comMloc}
  \parb{f(H_a)-f(H)}\tilde\xi_a^+=\vO(r^{-\mu}),
\end{align}  and note that
$\vO(r^{-\mu})\vO(r^{\rho_1-\delta})=\vO(r^{(-1)_-})$ (the second
factor appears in  \eqref{eq:firstHB3}).
Alternatively, one can essentially just mimic the above arguments  with
$H_a$ replaced by $H$; the additional complication is minor thanks to \eqref{eq:partM}.

\subsection{Commutation with $A_2^a$,  $a\neq a_{\min}$}\label{subsec:A2}
Recalling 
\begin{align*}
  A_2^a =\chi_-\parb{
  r^{\rho_2-1}r_\delta^a}\text{ with } r_\delta^a=
r^\delta r^a(x^a/r^\delta),
\end{align*} we compute for  real ${f}\in C^\infty_\c(\R)$
\begin{align*}
  \i [f(H), A_2^a]&=-2\int _{\C}
(H-z)^{-1}\mathop{\mathrm{Re}}\parb{\chi_-'(\cdot)\nabla(
  r^{\rho_2-1}r_\delta^a)\cdot p } (H-z)^{-1}\,\mathrm d\mu_f(z)\\
&=-\int _{\C}
(H-z)^{-1}\parb{S_1+S_2+S_3}(H-z)^{-1}\,\mathrm d\mu_f(z);\\
&\quad S_1=\mathop{\mathrm{Re}}\parb{\chi_-'(\cdot)r^{(\rho_2-1)}B^a_\delta},\\
&\quad S_2=(\rho_2-1)\mathop{\mathrm{Re}}\parb{\chi_-'(\cdot)
  r^{\rho_2-2}r_\delta^a B}, \\
&\quad S_3=\delta
  \mathop{\mathrm{Re}}\parb{\chi_-'(\cdot)r^{\rho_2+\delta-2}\parb{
  r^{-\delta}r_\delta^a -\omega_\delta^a\cdot \tfrac{x^a}{r^\delta}} B}.
\end{align*}

Due to \eqref{eq:r1}   we can write
\begin{align*}
 S_3= \mathop{\mathrm{Re}}\parb{\vO(r^{\rho_2+\delta-2})B}= \mathop{\mathrm{Re}}\parb{\vO(r^{(-1)_-})B},
\end{align*} and consequently the contribution from $S_3$ to the
commutator is of order $\vO(r^{(-1)_-})$. On the other hand  the
contributions from $S_1$ and $S_2$  are of order $\vO(r^{\rho_2-1})$
and $\vO(r^{-1})$, respectively, and no better.

Applied to real $f,g\in C^\infty_\c(\R)$ with 
$g(\lambda)f(\lambda)=\lambda f(\lambda)$ (as we used for
\eqref{eq:basicFB} and \eqref{eq:basicF}) we estimate  with $\eta(b)=\sqrt{b}\chi_+(
  b/\epsilon)$   and for any $\epsilon>0$, $\zeta(b)=-b \chi_+(-b)$,
   $\chi_1(s)=-(\chi^2_-)'(s)=2\chi_+(s)\chi_+'(s)$, $\chi_2(s)=-s(\chi^2_-)'(s)=2s\chi_+(s)\chi_+'(s)$
\begin{align}\label{eq:comlong}
  &2\chi_+(
  B/\epsilon) \Re\parbb{f(H) \i [g(H), A_2^a]f(H)\chi_-(B_2)A_2^a}\chi_+(
  B/\epsilon)\nonumber\\&\simeq 2\chi_+(
  B/\epsilon) \Re\parbb{f(H) \parb{S_1+S_2}f(H)\chi_-(B_2)A_2^a}\chi_+(
  B/\epsilon)
 \\&\simeq  -\chi_+(B/\epsilon) f(H) r^{(\rho_2-\rho_1-1)/2}\sqrt{\chi_1}(\cdot)B_2\chi_-(B_2) \sqrt{\chi_1}(\cdot)r^{(\rho_2-\rho_1-1)/2}f(H) \chi_+(
  B/\epsilon)\nonumber\\
&\quad +(1-\rho_2)\eta(
  B) f(H)\chi^{1/2}_-(B_2) \parb{\chi_1(\cdot)
  r^{\rho_2-2}r_\delta^a }\chi^{1/2}_-(B_2)f(H)\eta(B)\nonumber
\\&\simeq  \chi_+(B/\epsilon) f(H) r^{(\rho_2-\rho_1-1)/2}\sqrt{\zeta}(B_2)\chi_1\parb{r^{\rho_2-1}r_\delta^a }\sqrt{\zeta}(B_2) r^{(\rho_2-\rho_1-1)/2}f(H)\chi_+(
  B/\epsilon) \nonumber\\
&\quad +(1-\rho_2) \eta(
  B) f(H)\chi^{1/2}_-(B_2){\chi_2\parb{r^{\rho_2-1}r_\delta^a }
  }r^{-1}\chi^{1/2}_-(B_2)f(H)\eta(B). \nonumber
  \end{align}  We commuted repeatedly functions of $B_2$ and $B$ with multiplication
operators (for example $A_2^a$). Using the fact that $\rho_1<1/3$ this
is rather harmless (we  do not give  the details).  Note also that the above treatment of $S_2$ used commutation of
functions of $B$ and $\chi_-(B_2)$. This is implemented as explained in  Remark
\ref{remark:comm-with-a_3yy} (with $H_a$  replaced by
$H$).

To the
right of the  computation  \eqref{eq:comlong} both terms are positive.

\section{Computation of $T^\pm_\alpha$ and existence of
   $\Omega^\pm_\alpha$}\label{sec:Computation of T}

We pick  $\tilde{f_1},\tilde{f_2} \in C^\infty_\c(\Lambda)$ such that   $f_1\prec \tilde
f_2\prec\tilde{f_1}$ (cf.  Subsection \ref{subsec:Limiting absorption
  principles}) and  let    $g(\lambda)=\lambda\tilde f_1(\lambda)$.
This allows us to  write 
\begin{align*}
  T^\pm_\alpha=\i f_1(H) \parbb{g(H)  M_a N^a_\pm M_a - M_a N^a_\pm M_ag(\brH_a)}f_1(\brh_\alpha)f_1(\brH_a),
\end{align*} and with this representation  we can show  that $
T^\pm_\alpha=\vO(r^{\rho_1-\delta})$ in the sense of Subsection
\ref{subsubsec:Smooth sign function}. Furthermore we can show that all
terms in an expansion that are not on the form $\vO(r^{-1-2\epsilon})$
for some $\epsilon>0$ tend to have a sign allowing us to prove
non-trivial smoothness estimates for concrete expressions  that fail
to be on the favourite  form 
$\vO(r^{-1/2-\epsilon})$. Note that for any $\epsilon>0$ the
operators 
$r^{-1/2-\epsilon}f_1(\brh_\alpha)f_1(\brH_a)$ and $r^{-1/2-\epsilon}f_1(H)$
are $\brH_a$- and $H$-smooth in the sense of Kato (see  \cite{Ka}), cf.
\eqref{eq:LAPbndbr} and \eqref{eq:LAPbnd}. 

\subsection{Propagation estimates}\label{subsec:Propagation estimates}
For any $\psi\in \vH$ and any self-adjoint operator $T$ on $\vH$ we
let $\psi(t)=\e^{-\i tH } \psi$  and denote
$\inp{T}_t=\inp{\psi(t),T \psi(t)}$, $t\in
\R$. In the proof of the following lemma we calculate modulo  terms
on the form $f_1(H)\vO(r^{(-1)_-})f_1(H)$.  We will in this subsection
write $T_1\simeq_1 T_2$ if $T_1-T_2$ has this form (which is slightly
different from the relation used in Subsections \ref{subsec:Commutation
  with A_3^a}--\ref{subsec:A2}). 
Recall the smoothness bound
\begin{align}\label{eq:smoothBnd}
  \int^\infty_{-\infty} \,\norm{Q_0f_1 (H)\psi(t)}^2\,\d t\leq
  C_\epsilon \norm{\psi}^2;\quad  Q_0=r^{-1/2-\epsilon},\,\epsilon>0.
\end{align} We will below extensively use  the notation
$B_{\delta,\rho_1}^a=r^{\rho_1/2}B_\delta^ar^{\rho_1/2}$ for
$a\in\vA'\setminus \set{a_{\min}}$, cf. \eqref{eq:notS}. Recall that
the factors $A^a_3=A^a_{3\pm}$ of $N^a_\pm $ are  given in terms of
$B_{\delta,\rho_1}^a$ with parameters given as in
 \eqref{eq:parameters}. In the list below \ref{item:H1}--\ref{item:H8}
there appear respective operators $Q_1\dots Q_8$ (equipped with
additional indices). For $a={a_{\min}}$ only  $Q_1$ and $Q_2$ enter.
They all have the feature of being $H$-bounded
and (if wanted) we can make them bounded by looking instead at
$Q_1\tilde f_1 (H)\dots Q_8\tilde f_1 (H)$, respectively. None of
these operators appears to be on the form $\vO(r^{(-1/2)_-})$, so the
corresponding bounds do not conform with \eqref{eq:smoothBnd}. On the
other hand some are on the form $\vO(r^{-1/2})$, just missing the
applicability of \eqref{eq:smoothBnd}, and the `worse' appearing order in
the list is  $\vO(r^{(\rho_1-\delta)/2})$.
\begin{lemma}\label{lemma:Hbnds} 
  Let $a\in\vA'$. The following bounds hold for a constant $C>0$
    being independent of $\psi\in \vH$. 
\begin{enumerate}[1)] 
  \item \label{item:H1} For $j\leq J=J(a)$ (as in  \eqref{eq:Hes0})
    \begin{align*}
  \int^\infty_{-\infty} \,\norm[\big]{Q_1f_1 (H)\psi(t)}^2\,\d t\leq
  C \norm{\psi}^2;\quad  Q_1=Q_1(a,j)=\xi^+_j( x) G_{d_j} .
\end{align*}
\item \label{item:H2} 
    \begin{align*}
  \int^\infty_{-\infty} \,\norm[\big]{Q_2^\pm f_1 (H)\psi(t)}^2\,\d t\leq
  C \norm{\psi}^2;\quad Q_2^\pm=r^{-1/2}\sqrt{(\chi^2_+)'}\parbb{\pm B/\epsilon_0}.
\end{align*}
\item \label{item:H3} Let $ T_1^a= \sqrt{-\chi'_-}\parb{\pm B_{\delta,\rho_1}^a}{\chi_+}(\pm
B/\epsilon_0)$ and 
$\vH^a={\parb{\mathop{\mathrm{Hess}  }r^a}\parb{x^a/r^\delta}} $.
Then 
    \begin{align*}
  \int^\infty_{-\infty} \,&\norm[\big]{ Q_3^\pm f_1 (H)\psi(t)}^2\,\d t\leq
  C \norm{\psi}^2;\\&\quad Q_3^\pm=Q_3^\pm(a)=2\parb{\vH^a}^{1/2}\parb{p^a\mp\tfrac \delta 4\tfrac
  {x^a}r
  B}\tilde{f}_2(H_a) r^{(\rho_1-\delta)/2}T_1^aM_a.
\end{align*}
\item \label{item:H4} Let $\eta(b)=\sqrt{b}\chi_+(
  b/\epsilon_0)$, $\zeta_1(b)=\sqrt{-b\chi'_-(
  b)}$  and $T_{2}^a={\zeta_1}\parb{\pm B_{\delta,\rho_1}^a}{\eta}(\pm
      B)$. Then 
    \begin{align*}
  \int^\infty_{-\infty}
      \,&\norm[\big]{Q_4^\pm f_1 (H)\psi(t)}^2\,\d t\leq
  C \norm{\psi}^2;\\&\quad Q_4^\pm=Q_4^\pm(a)=\rho_1^{1/2} r^{-1/2}T^a_2M_a.
\end{align*}
\item \label{item:H5} Let $T_{1+}^a= \sqrt{-\chi'_-}\parb{\pm B_{\delta,\rho_1}^a}\chi_+\parb{r^{\rho_2-1}r_\delta^a
}{\chi_+}(\pm B/\epsilon_0)$ and  $\vH^a$ be given as in \ref{item:H3}. Then 
    \begin{align*}
  \int^\infty_{-\infty} \,&\norm[\big]{Q_5^\pm f_1 (H)\psi(t)}^2\,\d t\leq
  C \norm{\psi}^2;\\&\quad Q_5^\pm=Q_5^\pm(a)=2\parb{\vH^a}^{1/2}\parb{p^a\mp\tfrac \delta 4\tfrac
  {x^a}r
  B}\tilde{f}_2(H_a)r^{(\rho_1-\delta)/2}T_{1+}^aM_a.
\end{align*}
\item \label{item:H6} Let $\eta(b)=\sqrt{b}\chi_+(
  b/\epsilon_0)$,   $\zeta_1(b)=\sqrt{-b\chi'_-(
  b)}$ (as in \ref{item:H4}) and \newline $T_{2+}^a= {\zeta_1}\parb{\pm B_{\delta,\rho_1}^a}\chi_+\parb{r^{\rho_2-1}r_\delta^a
      }{\eta}(\pm B)$. Then 
    \begin{align*}
  \int^\infty_{-\infty}
      \,&\norm[\big]{Q_6^\pm f_1 (H)\psi(t)}^2\,\d t\leq
  C \norm{\psi}^2;\\&\quad Q_6^\pm=Q_6^\pm(a)=\rho_1^{1/2} r^{-1/2}T_{2+}^aM_a.
\end{align*}
\item \label{item:H7} Let $\zeta_2(b)=\sqrt{-b \chi_+(-b)}$, 
  $T_3^a =\zeta_2\parb{\pm B_{\delta,\rho_1}^a}\tilde{f}_2(H_a)\chi_+(
  \pm B/\epsilon_0)$ and \newline $\chi_{1+}(s)=2\chi_+(s)\chi_+'(s)$. Then 
    \begin{align*}
  \int^\infty_{-\infty} \,&\norm[\big]{Q_7^\pm f_1 (H)\psi(t)}^2\,\d t\leq
  C \norm{\psi}^2;\\&\quad Q_7^\pm=Q_7^\pm(a)={r^{(\rho_2-\rho_1-1)/2}\sqrt{\chi_{1+}}\parb{r^{\rho_2-1}r_\delta^a }
  }T_3^aM_a.
\end{align*}
\item \label{item:H8} Let $\eta(b)=\sqrt{b}\chi_+(
  b/\epsilon_0)$ (as in \ref{item:H4}),
 $T_4^a=\chi^{1/2}_-\parb{\pm B_{\delta,\rho_1}^a}\eta(\pm
      B)$ and  $\chi_{2+}(s)=2s\chi_+(s)\chi_+'(s)$. Then 
    \begin{align*}
  \int^\infty_{-\infty} \,&\norm[\big]{Q_8^\pm f_1 (H)\psi(t)}^2\,\d t\leq
  C \norm{\psi}^2;\\&\quad Q_8^\pm=Q_8^\pm(a)=(1-\rho_2)^{1/2} {r^{-1/2}\sqrt{\chi_{2+}}\parb{r^{\rho_2-1}r_\delta^a }
  }T_4^aM_a.
\end{align*}
\end{enumerate}
\end{lemma}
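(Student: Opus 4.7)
The plan follows the standard positive-commutator (propagation observable) methodology used throughout \cite{De, Ya3, Ya1}. For each $Q=Q_j^\pm$ in the list, the target bound is equivalent to Kato $H$-smoothness of $Qf_1(H)$, and we establish it by constructing a bounded self-adjoint observable $\Phi=\Phi_j^\pm$ such that
\begin{align*}
f_1(H)\i[H,\Phi]f_1(H)\geq f_1(H)Q^*Q f_1(H)+f_1(H)\vO(r^{(-1)_-})f_1(H).
\end{align*}
Integrating $\tfrac{\d}{\d t}\inp{\Phi}_t=\inp{\i[H,\Phi]}_t$ on $[T_-,T_+]$, using boundedness of $\Phi$ and letting $T_\pm\to\pm\infty$, bounds $\int\norm{Qf_1(H)\psi(t)}^2\,\d t$ by $2\norm{\Phi}\norm{\psi}^2$ plus the $t$-integral of the remainder, and the latter is absorbed by the standard LAP smoothness bound \eqref{eq:smoothBnd} since $r^{(-1)_-}=r^{-1/2-\epsilon}\cdot r^{-1/2+\epsilon'}$ can be split with one factor conforming to \eqref{eq:smoothBnd}.

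The observables are read off from $N^a_\pm=A_1A_2^aA_3^aA_2^aA_1$ by selectively differentiating one factor at a time via the calculus of Section \ref{sec:Calculus considerations}. Concretely: for \ref{item:H1} take $\Phi=f_1(H)M^*f_1(H)$ with $M=M_{a_{\max}}$; the principal term of $\i[H,M]$ is $4p\cdot(\mathop{\mathrm{Hess}}m)p$ by Lemma~\ref{lem:fHB2}, which by \eqref{eq:Hes0}--\eqref{eq:Hes_est} dominates $\sum_jQ_1(a,j)^*Q_1(a,j)$; the short-/long-range potential terms land in $\vO(r^{(-1)_-})$. For \ref{item:H2} take $\Phi=\chi_+^2(\pm B/\epsilon_0)$; positivity is immediate from the Mourre-type estimate \eqref{eq:basicFBB} applied with $F=\chi_+^2$. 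For \ref{item:H3}--\ref{item:H4} use $\Phi=M_a^*A_1\chi_-(\pm B^a_{\delta,\rho_1})A_1M_a$; commuting $A_3^a$ through $H$ via Lemma \ref{lem:171115152} and applying the decomposition \eqref{eq:pos1} generates exactly $(Q_3^\pm)^*Q_3^\pm$ from $T_1+T_2$, while Lemma \ref{lemma:comm-with-a_3x} (equation \eqref{eq:Pos2}) generates $(Q_4^\pm)^*Q_4^\pm$ from $T_3$. For \ref{item:H5}--\ref{item:H6}, insert an additional factor of $\chi_+^{1/2}(r^{\rho_2-1}r_\delta^a)$ adjacent to $A_1$, which restricts to the region where $A_2^a=\chi_-(\cdot)$ is bounded away from saturation; the same commutations as before then produce the desired positive forms. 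For \ref{item:H7}--\ref{item:H8} use an observable of the shape $M_a^*A_1A_2^a\chi_-(\pm B^a_{\delta,\rho_1})A_2^aA_1M_a$, and the identity \eqref{eq:comlong} delivers the sum $(Q_7^\pm)^*Q_7^\pm+(Q_8^\pm)^*Q_8^\pm$ directly with matching weights.

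The main obstacle is the combinatorial bookkeeping of error commutators. Each $\i[H,(\text{factor})]$ generates, in addition to the positive principal contribution, error terms from commuting $H$ with the remaining factors, as well as internal commutators among the factors (most delicately $[B,B^a_{\delta,\rho_1}]$ and $[M_a,B]$) that arise when rearranging products into the particular sandwich form appearing in $(Q_j^\pm)^*Q_j^\pm$. Controlling all of these requires the regularization recipe of Remark \ref{remark:comm-with-a_3yy}, i.e.\ inserting free $\tilde f_1(H)$-factors to convert a priori unbounded first-order commutators into $\vO(r^{\rho_1-\delta})$ quantities, combined with Lemmas \ref{lem:171113}--\ref{lem:17111515}. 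A further technical step is the passage between $H_a$ and $H$, needed because the positivity mechanisms of Subsection \ref{subsec:Commutation with A_3^a} are natural for $H_a$ but we require $H$-smoothness; this is achieved by \eqref{eq:comMloc} at cost $\vO(r^{-\mu})$, and multiplication with the a priori bound \eqref{eq:firstHB3} gives $\vO(r^{-\mu+\rho_1-\delta})=\vO(r^{(-1)_-})$ exactly under the parameter constraints \eqref{eq:parameters} enabled by $\mu>\sqrt3-1$. This is the point at which the Derezi{\'n}ski threshold enters the bookkeeping.
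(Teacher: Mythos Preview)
Your approach is the paper's: the same propagation-observable method with essentially the same observables. There is, however, one substantive inaccuracy in your framing.

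The displayed inequality in your first paragraph, with all remainders landing in $f_1(H)\vO(r^{(-1)_-})f_1(H)$, is not what one actually gets for items \ref{item:H3}--\ref{item:H8}. When $\Phi$ contains the factors $M_a$ and $A_1$, the product rule produces cross-terms from $\i[g(H),M_a]$ and $\i[g(H),A_1]$ which are only $\vO(r^{-1})$, not $\vO(r^{(-1)_-})$, and therefore cannot be absorbed by \eqref{eq:smoothBnd} alone. The paper handles them hierarchically: the $M_a$-commutator is rewritten via \eqref{eq:firstHB22} and \eqref{eq:Hes0} in the form $\sum_j Q_1(a,j)^*(\text{bounded})\,Q_1(a,j)$ and then bounded by Cauchy--Schwarz together with the already-established item \ref{item:H1}; likewise the $A_1$-commutator is controlled by item \ref{item:H2}. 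So \ref{item:H1}--\ref{item:H2} are genuine inputs to \ref{item:H3}--\ref{item:H8}, a bootstrap your summary elides. (A related minor point: for \ref{item:H1} the paper uses a separate $\Psi_{1,j}=f_1(H)M_jf_1(H)$ for each $j$, since \eqref{eq:Hes_est} requires the convex $m=m_j$ matched to $\delta_j$; one fixed $M$ need not dominate all $j$.)

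For \ref{item:H5}--\ref{item:H8} the paper is also more economical than your split. A single observable
\[
\Psi_4=f_1(H)M_aA_1\,\chi_+\!\parb{r^{\rho_2-1}r_\delta^a}\,A_3^a\,\chi_+\!\parb{r^{\rho_2-1}r_\delta^a}\,A_1M_af_1(H)
\]
is used, with $\chi_+$ rather than $A_2^a=\chi_-$. The reason is sign alignment: in $\Psi_4$ both the $A_3^a$-derivative (giving $Q_5,Q_6$) and the $\chi_+$-derivative (giving $Q_7,Q_8$) come out with the same sign, so all four bounds follow at once. Your proposed observable for \ref{item:H7}--\ref{item:H8} (with $A_2^a$) produces the $Q_7,Q_8$ terms with one sign via \eqref{eq:comlong} and a residual $A_3^a$-contribution with the opposite sign; this can still be made to work, but only by invoking items \ref{item:H3}--\ref{item:H4} as a further input, which is an extra layer of bootstrap the paper avoids.
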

\begin{proof} 
We only consider the upper sign case. The lower sign case assertions can be
demonstrated similarly.

{\bf \ref{item:H1}:}
 We choose  $m=m_j$ conforming with \eqref{eq:Hes_est} and let $M=M_j$ be the
 corresponding operator  defined in  \eqref{eq:M}. Applying
 \eqref{eq:firstHB22} and \eqref{eq:Hes_est} we can then estimate
\begin{align*}
  &{\i f_1(H) [g(H),  M] f_1(H)}\\& \simeq_1
                                                          4{
                                                          f_1(H)
                                                          p\cdot \parb{\chi_+(\abs{x})\mathop{\mathrm{Hess}}
                                                          m}p
                                                     f_1(H)}
\\& \geq 4f_1(H) G^*_{d_j}\chi_+(\abs{x})\xi^+_j( x)^2
  G_{d_j} f_1(H).
\end{align*} Let \begin{align*}
  \Psi_{1,j}=  f_1(H) M_j f_1(H)
\end{align*} Since  $\Psi_1=\Psi_{1,j}$ is   bounded, \ref{item:H1} follows from
integration of $\tfrac{\d}{\d t}\inp{\Psi_1}_t=\inp{\i[g(H),\Psi_1]}_t$ using the above
estimation   and 
\eqref{eq:smoothBnd}.

 {\bf \ref{item:H2}:}  Let \begin{align*}
  \Psi_2=  f_1(H) {\chi^2_+}(B/\epsilon_0) f_1(H).
\end{align*} We obtain \ref{item:H2} by integrating  $\tfrac{\d}{\d
  t}\inp{\Psi_2}_t=\inp{\i[g(H),\Psi_2]}_t$ using \eqref{eq:basicFBB}, \eqref{eq:smoothBnd}
and the boundedness of  $\Psi_2$.  Here 
\eqref{eq:basicFBB} is used with  $f=\tilde f_2$. Upon    multiplying   both
sides of that  estimate from the left and from the right by a factor
$f_1(H)$  the second term to the right gets the good form $\simeq_1 0$.

{\bf \ref{item:H3}, \ref{item:H4}:} 
 Let  $A_1=\chi_+(
  B/\epsilon_0)$, $A_3^a=\chi_-\parb{B_{\delta,\rho_1}^a}$ (as in
\eqref{eq:propgaObs} for the `plus case') and 
\begin{align*}
  \Psi_3=  f_1(H) M_aA_1A_3^aA_1M_af_1(H).
\end{align*} We obtain \ref{item:H3} and  \ref{item:H4} by integrating  $\tfrac{\d}{\d
  t}\inp{\Psi_3}_t$ using   \ref{item:H1}, \ref{item:H2},
\eqref{eq:smoothBnd}, 
 the boundedness of  $\Psi_3$ and by using \eqref{eq:basicF}--\eqref{eq:comMloc}
(as well as  the arguments for the latter  assertions and the
accompanying   comments). 
By the  product rule for commutation  there are several
terms. Let us first treat the contribution
from $\i[g(H), M_a]$: We write using 
\eqref{eq:firstHB22}, \eqref{eq:Hes0}  and  Remark \ref{remark:comm-with-a_3yy}
\begin{align*}
  &2f_1(H) \Re\parb{\i [g(H),  M_a] A_1A_3^aA_1M_a}f_1(H)\\
& \simeq_1
                                                          8f_1(H)\Re\parb{
                                                          p\cdot \parb{\mathop{\mathrm{Hess}}
                                                          m_a}p
                                                          A_1A_3^aA_1 M_a}
                                                          f_1(H)
\\& \simeq_1 8\Sigma_{j\leq J}f_1(H) \Re\parbb{ \,
    G^*_{d_j} \xi^+_j\vG_j \parb{\tilde{f}_1(H)\xi^+_jG_{d_j}\tilde{f}_1(H)}  A_1A_3^aA_1
    M_a} f_1(H)
\\& \simeq_1 \Sigma_{j\leq J}\,T_j;\quad T_j={ f_1(H) \, G^*_{d_j}\xi^+_j{\vG'_j}\xi^+_j G_{d_j} f_1(H)}\text{ with } {\vG'_j}\text{ bounded}.
\end{align*}  Indeed the last expression arises by commuting the
factor $\tilde{f}_1(H)\xi^+_jG_{d_j}\tilde{f}_1(H)$ to the right producing an error on the
form $\vO(r^{\rho_1-\delta-1})=\vO(r^{(-1)_-})$ along the pattern of  Remark \ref{remark:comm-with-a_3yy}. 
 Now we   apply \cas and \ref{item:H1} to each term $T_j$ on the
right-hand side.

As for the contribution
from $\i[g(H), A_1]$ we use \eqref{eq:basicFB},  writing
\begin{align*}
  &2f_1(H) \Re\parb{\i \tilde f_2(H) M_a[g(H),  A_1] A_3^aA_1M_a\tilde
    f_2(H) }f_1(H) \\& \simeq_1
2f_1(H) \Re\parb{\i M_a\tilde f_2(H) [g(H),  A_1]\tilde f_2(H)  A_3^aA_1M_a}f_1(H) \\& \simeq_1
                                                          f_1(H)\sqrt{(\chi^2_+)'}\parbb{\pm B/\epsilon_0} \vO(r^{-1})\sqrt{(\chi^2_+)'}\parbb{\pm B/\epsilon_0}f_1(H).
             \end{align*} 
 The right-hand side is treated by  \ref{item:H2}.

Next let us elaborate on how
\eqref{eq:basicF}--\eqref{eq:comMloc} are  used for  treating  the contribution
from $\i[g(H), A_3^a]$. We need to compute
\begin{align*}
  f_1(H) M_aA_1\i
[g(H),A_3^a]A_1M_af_1(H).
\end{align*}
  Modulo a term in $\vO(r^{(-1)_-})$ this
quantity is given by  
\begin{align*}
  &f_1(H) \tilde\xi_a^+M_aA_1\i
[\tilde\xi_a^+g(H)\tilde\xi_a^+,A_3^a]A_1M_a\tilde\xi_a^+f_1(H)\\
&\simeq_1 f_1 (H)M_aA_1 \tilde f_2(H_a)\i
  [g(H_a),A_3^a] \tilde f_2(H_a)A_1M_af_1(H). 
\end{align*} Note that any derivative of $\tilde\xi_a^+$ will
contribute by an error on the form  $\vO(r^{-\infty})$ due to
  \eqref{eq:partM}. Then we apply the computations of Subsection \ref{subsec:Commutation
  with A_3^a}, and by the same argument as above
(i.e.  using that $f_1\prec \tilde f_2$) we
can then remove  appearing  factors of $\tilde f_2(H_a)$.

{\bf \ref{item:H5}--\ref{item:H8}:} With $A_1$ and $A_3^a$ as above we
consider
\begin{align*}
  \Psi_4=  f_1(H) M_aA_1\chi_+\parb{r^{\rho_2-1}r_\delta^a }A_3^a\chi_+\parb{r^{\rho_2-1}r_\delta^a }A_1M_af_1(H).
\end{align*}  We   compute and integrate  $\tfrac{\d}{\d
  t}\inp{\Psi_4}_t$, using again 
  \ref{item:H1}, \ref{item:H2} and   \eqref{eq:smoothBnd} as well as the
  boundedness of  $\Psi_4$. We got from the above  computation of  $\i[g(H),A_3^a]$ effectively negative
main terms. There is an additional  contribution, more
  precisely from the two appearances of 
$\i[g(H),\chi_+\parb{r^{\rho_2-1}r_\delta^a }]$  and this is also
effectively negative (contributing by two  additional negative terms), cf. the treatment of
  $\i[g(H),A_2^a]$ in  Subsection
  \ref{subsec:A2}. Since the  four non-trivial  contributing terms
  come out with the same sign we conclude
   \ref{item:H5}--\ref{item:H8}.
\end{proof}

\begin{remarks}\label{remark:breveBNDs}
  \begin{enumerate}[i)] \item \label{item:P0} Due to
    \eqref{eq:comMloc} the appearing factor of
    $ \tilde{f}_2(H_a) $ in \ref{item:H3},  \ref{item:H5}  and \ref{item:H7} can  be
    replaced by  $ \tilde{f}_2(H) $ (or by  $ \tilde{f}_2(\breve H_a)
    $, cf. \ref{item:P1} below). (This was actually implicitly used in
    the above proof.) The factor  serves as a regularization indeed
    making $Q^\pm_3(a)$, $Q^\pm_5(a)$  and  $Q^\pm_7(a)$ bounded relatively to  $H$
    and can  (presumably) not  be removed. By the same argument one
    could  make all of the `$Q$-operators' bounded, if preferred. Note
    for example that  in all cases $Q \tilde f_2 (H)=\vO(r^{(\rho_1-\delta)/2})$
 in the precise sense of  \eqref{eq:1712022}.
  \item \label{item:P1}
There are very  similar
  bounds  for $\brH_a$. So denote  $\varphi(t)=\e^{-\i t \brH_a } \varphi$ for
  $\varphi\in \vH$.  Using  \eqref{eq:LAPbndbr} and 
  \eqref{eq:mourre0} we may replace $f_1 (H)\psi(t)$ in
  \ref{item:H1}--\ref{item:H8} by
  $f_1(\brH_a)f_1(\brh_\alpha)\varphi(t)$ and the expression
  $\norm{\psi}^2$ to right of the estimates by $\norm{\varphi}^2$.
  The proof of these 
  modifications of 
  \ref{item:H1}--\ref{item:H8}  are essentially the same. Note that the analogue of \eqref{eq:smoothBnd} reads
\begin{align}\label{eq:smoothBnd22}
  \int^\infty_{-\infty} \,\norm{Q_0f_1(\brH_a)f_1(\brh_\alpha)\varphi(t)}^2\,\d t\leq
  C_\epsilon \norm{\varphi}^2;\quad Q_0=r^{-1/2-\epsilon},\,\epsilon>0.
\end{align}
When modified by replacing the factors of $f_1 (H)$ by factors of
$f_1(\brH_a)f_1(\brh_\alpha)$ the  propagation observables
$\Psi_{1,j},j\leq J,\Psi_2,\Psi_3$ and $ \Psi_4$ may be denoted
 by
$\Phi_{1,j},$ $j\leq J, \Phi_2,\Phi_3$ and $\Phi_4$, respectively (to
be  useful  in the proof of Lemma 
\ref{lemma:strongCont}).
\item \label{item:P2} The structure of the bounds 
  \ref{item:H1}--\ref{item:H8}  and the analogous ones  mentioned in \ref{item:P1} is given as 
\begin{align*}
  &\int^\infty_{-\infty} \,\norm[\big]{Q^\pm f_1 (H)\psi(t)}^2\,\d t\leq
  C \norm{\psi}^2,\\
&\int^\infty_{-\infty} \,\norm[\big]{Q^\pm f_1(\brH_a)f_1(\brh_\alpha)\varphi(t)}^2\,\d t\leq
  C \norm{\varphi}^2,
\end{align*} respectively, where $Q^\pm $ may (or may not)  depend on
$a$ and $j\leq J=J(a)$. In addition we also have such bounds with
$Q=\vO(r^{(-1/2)_-})$,
cf. \eqref{eq:smoothBnd} and \eqref{eq:smoothBnd22}. We  give the
mentioned 
$Q$'s  (including $Q=\vO(r^{(-1/2)_-})$) some uniform index to distinguish them, say
$Q^\pm =Q^\pm(a,k)$ (not specifying $k$). By the Kato smoothness
theory \cite{Ka} a consequence of the above bounds are the following resolvent
bounds, valid for any combination of the indices,
\begin{subequations}
 \begin{align}\label{eq:kato1}
  \quad \sup _{\Im z>0}\norm{(2\i\pi)^{-1}Q^+(b,l)f_1 (H)\parb{R(z)-R(\bar z)}f_1
     (H)Q^-(a,k)^*}< \infty.
\end{align} In particular, recalling \eqref{eq:LAPbnda},
\eqref{eq:LAPbnd} and the surrounding discussion, there exist weak limits as $z\to \lambda\in\R$ from above and these limits define a   weakly continuous
function in
$\lambda$. Whence we can record that the  limiting operators 
\begin{align}\label{eq:kato2}
  \begin{split}
   & \quad Q^+(b,l))f_1 (H)\delta(H-\lambda)f_1 (H)Q^-(a,k)^* \in
   \vL(\vH)\\&  \quad\quad \text{ exist with a weakly continuous
     dependence of }\lambda\in\R.
  \end{split}
\end{align}
\item \label{item:P3} We do not show the existence of
  \begin{align*}
    \wlim _{\epsilon\to 0_+} \,Q^+(b,l))f_1 (H)R(\lambda+ \i \epsilon)f_1 (H)Q^-(a,k)^*, 
  \end{align*} although it is likely  doable by mimicking
  \cite[Section 3]{Ya1}.  Up to Subsection \ref{subsubsec:Besov space setting at stationary scattering
    energies} it is more
  convenient 
  to deal with the delta function only. However in
  Subsection \ref{subsubsec:Besov space setting at stationary scattering
    energies}  we need the resolvent bounds
\begin{align}\label{eq:kato10}
  \quad \sup _{\Im z\neq 0}\norm{Q^\pm(b,l)f_1 (H){R(z)}}_{\vL(\vB,\vH)}< \infty,
\end{align} which follow by combining our commutator
bounds, the Besov space bound (cf. \eqref{eq:BB^*a})
  and the proof
of \cite[Lemma 3.3]{Ya1}.

 \end{subequations}  

\end{enumerate}
  \end{remarks}

 \subsection{Integrability of  $T^\pm_\alpha$ }\label{subsec:Integrability}
We shall show the assertion \eqref{eq:wav2s} in a standard fashion  verifying `integrability of the
   time-derivative'. 

   \begin{lemma}\label{lemma:exist_integr-tpm_-}
     The operators $\Omega^\pm_\alpha$ are well-defined.
   \end{lemma}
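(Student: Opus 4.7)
The plan is Cook's criterion in matrix-element form combined with a double Kato-smoothness principle. Since $\Phi_\alpha^\pm$ already contains the factor $f_1(\brh_\alpha)f_1(\brH_a)$, it suffices to establish the limit on vectors of the form $\psi = f_1(\brh_\alpha)f_1(\brH_a)\psi_0$. On such vectors I would use the identity
\[
\e^{\i t_1 H}\Phi_\alpha^\pm \e^{-\i t_1\brH_a}\psi - \e^{\i t_2 H}\Phi_\alpha^\pm \e^{-\i t_2\brH_a}\psi = \int_{t_2}^{t_1}\e^{\i sH}T^\pm_\alpha \e^{-\i s\brH_a}\psi\,\d s,
\]
and reduce the strong Cauchy property, as $t_1,t_2\to \pm\infty$ (same sign), to showing that for every $\phi\in \vH$ with $\|\phi\|\le 1$ the bilinear integral $\int_{t_2}^{t_1}\inp{\e^{\i sH}\phi, T^\pm_\alpha \e^{-\i s\brH_a}\psi}\,\d s$ tends to $0$ uniformly in $\phi$.

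The next step is to decompose $T^\pm_\alpha$ by applying the product rule for commutation to the factors in $\Phi_\alpha^\pm = f_1(H)\,M_a A_1 A_2^a A_3^a A_2^a A_1 M_a\,f_1(\brh_\alpha)f_1(\brH_a)$, using the bracket $g(\lambda)=\lambda \tilde f_1(\lambda)$ on $\supp f_1$. The computations of Sections~\ref{subsec:Operatat B}--\ref{subsec:A2} (and in particular Lemmas~\ref{lem:fHB}, \ref{lem:Bpost}, \ref{lem:fHB2}, \ref{lem:fHB3}, \ref{lem:171115152}, \ref{lemma:comm-with-a_3x} together with the computation \eqref{eq:comlong}) express every commutator $\i[g(H),X]$, $X\in\{M_a,A_1,A_2^a,A_3^a\}$, as a sum of an $\vO(r^{-1-2\epsilon})$ remainder and a finite collection of explicit propagation observables, each of which naturally factors through the $Q$-operators of Lemma~\ref{lemma:Hbnds}. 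After reorganisation this yields a finite decomposition
\[
T^\pm_\alpha = \sum_k (A_k^\pm)^* B_k^\pm,
\]
where each left factor has the form $A_k^\pm = Q^+(b,l)f_1(H)$ or $A_k^\pm = r^{-1/2-\epsilon}f_1(H)$, and each right factor has the form $B_k^\pm = Q^-(a,k)f_1(\brh_\alpha)f_1(\brH_a)$ or $B_k^\pm = r^{-1/2-\epsilon}f_1(\brh_\alpha)f_1(\brH_a)$.

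With this factorisation the Cauchy estimate is routine. For every $k$, the Cauchy--Schwarz inequality gives
\[
\Bigl|\int_{t_2}^{t_1}\inp{A_k^\pm \e^{-\i sH}\phi,\,B_k^\pm \e^{-\i s\brH_a}\psi}\,\d s\Bigr|
\le \Bigl(\int_{\R}\|A_k^\pm \e^{-\i sH}\phi\|^2\,\d s\Bigr)^{1/2}\Bigl(\int_{t_2}^{t_1}\|B_k^\pm \e^{-\i s\brH_a}\psi\|^2\,\d s\Bigr)^{1/2}.
\]
The first factor is $\le C_k\|\phi\|$ by the $H$-Kato smoothness of $A_k^\pm$, coming from Lemma~\ref{lemma:Hbnds} or from \eqref{eq:smoothBnd} and \eqref{eq:LAPbnd}; the second factor is the tail of a finite integral (the total $L^2$-integral over $\R$ is controlled by the $\brH_a$-analogues of Lemma~\ref{lemma:Hbnds}, cf.\ Remark~\ref{remark:breveBNDs}\ref{item:P1}, or by \eqref{eq:smoothBnd22} and \eqref{eq:LAPbndbr}) and therefore tends to $0$ as $[t_2,t_1]$ recedes to $\pm\infty$. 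Summing over the finitely many $k$ gives $\|\e^{\i t_1 H}\Phi_\alpha^\pm \e^{-\i t_1\brH_a}\psi - \e^{\i t_2 H}\Phi_\alpha^\pm \e^{-\i t_2\brH_a}\psi\|\to 0$, hence $\Omega^\pm_\alpha\psi$ exists.

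The main obstacle is not conceptual but the bookkeeping in the decomposition step: one must verify term by term that the $\vO(r^{(-1)_-})$ leftovers truly split as $r^{-1/2-\epsilon}\cdot r^{-1/2-\epsilon}$ with $f_1(H)$ to the left and $f_1(\brh_\alpha)f_1(\brH_a)$ to the right, and that the explicit propagation observables produced by the commutator calculus exactly match one of the $Q^\pm(a,k)$ of Lemma~\ref{lemma:Hbnds}. This verification is tedious but entirely modular, being precisely the purpose for which Sections~\ref{subsec:Operatat B}--\ref{subsec:A2} and the choice of parameters \eqref{eq:parameters} were engineered.
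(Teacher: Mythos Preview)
Your overall strategy --- Cook's criterion combined with a double Kato-smoothness factorisation and Cauchy--Schwarz --- is exactly the paper's, and your treatment of the commutator terms is correct in outline. But there is a genuine omission: you have implicitly treated $T^\pm_\alpha$ as if it arose from a single commutator $\i[g(H),M_aN^a_\pm M_a]$, whereas in fact
\[
T^\pm_\alpha=\i f_1(H)\bigl(g(H)M_aN^a_\pm M_a-M_aN^a_\pm M_a\,g(\brH_a)\bigr)f_1(\brh_\alpha)f_1(\brH_a)
\]
involves two \emph{different} Hamiltonians. Pushing $g(H)$ to the right by the product rule produces, besides all the commutators $\i[g(H),X]$ with $X\in\{M_a,A_1,A_2^a,A_3^a\}$ that you list, the residual cross term
\[
T_2=\i f_1(H)M_aN^a_\pm M_a\,\tilde\xi_a^+\bigl(g(H)-g(\brH_a)\bigr)f_1(\brH_a)f_1(\brh_\alpha),
\]
which is not a commutator and is not covered by any of the lemmas in Sections~\ref{subsec:Operatat B}--\ref{subsec:A2} that you invoke.

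Establishing $T_2=f_1(H)\vO(r^{(-1)_-})f_1(\brH_a)f_1(\brh_\alpha)$ requires a separate argument, and this is precisely where the parameter condition $\rho_2>1-\mu$ of \eqref{eq:parameters} enters. On the support of the factors $A_2^a$ one has $\inp{x^a}\le C r^{1-\rho_2}$, and on the support of $\tilde\xi_a^+$ a first-order Taylor expansion of $V_{\rm lr}^b$ for $b\not\leq a$ gives $|I_a^{\rm lr}(x)-\breve I_a(x_a)|\le C r^{-\rho_2-\mu}$ (the short-range part contributes only $\vO(r^{-1-\mu})$). Since $\rho_2+\mu>1$, this places $T_2$ in the $\vO(r^{(-1)_-})$ remainder class; it then splits as $r^{-1/2-\epsilon}\cdot r^{-1/2-\epsilon}$ and is handled exactly as you describe. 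But the estimate itself is an independent step that your proposal does not supply.
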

   \begin{proof}
   The existence of the time-limits  will be verified by means of  Section
   \ref{sec:Calculus considerations}, \eqref{eq:smoothBnd}, Lemma \ref{lemma:Hbnds} and
   Remarks \ref{remark:breveBNDs} \ref{item:P1},  computing  and estimating the quantity
\begin{align*}
  T^\pm_\alpha&=\i f_1(H) g(H)  M_a N^a_\pm M_a f_1(\brH_a)f_1(\brh_\alpha)-\i f_1(H)  M_a N^a_\pm M_a g(\brH_a) f_1(\brH_a)f_1(\brh_\alpha)\\& =
{f_1(H) \i [g(H),  M_a N^a_\pm M_a] f_1(\brH_a)f_1(\brh_\alpha)}\\&\quad\quad
                                                                                                                                                                   +\i
                                                                                                                                                                   f_1(H)
                                                                                                                                                                   M_a
                                                                                                                                                                   N^a_\pm
                                                                                                                                                                   M_a\tilde{\xi}_a^+\parb{g(H)-g(\brH_a)}f_1(\brH_a)f_1(\brh_\alpha)=: T_1+T_2.
\end{align*} For convenience we treat only the `plus case'.
 The calculation is  modulo  terms on the form
$T=f_1(H)\vO(r^{(-1)_-})f_1(\brH_a)f_1(\brh_\alpha)$. Let us abbreviate this
property as $T\simeq_2 0$ (which is a different relation than the one
used in Subsection \ref{subsec:Propagation estimates}). Any such `error'
can be treated by  
 the smoothness bounds 
\eqref{eq:smoothBnd} and \eqref{eq:smoothBnd22}. 

 We aim at verifying integrability of $T^+_\alpha$  in the precise
sense  stated below  as
\eqref{eq:intCau} ($T^-_\alpha$ can be
treated similarly).
Most of the work is
essentially  already done in the proof of  Lemma
\ref{lemma:Hbnds}. However the 
computation and estimation of the  term $T_2$ to the right is different. We claim
that this term $\simeq_2 0$.  To see this we
note that on the support of the factors $A_2^a$ in the definition of
$N^a_+$
\begin{align}\label{eq:abndindre}
  \inp{x^a}\leq C {r}^{1-\rho_2}.
\end{align} Next we
use a Taylor formula like the one used in \eqref{eq:Ta1}, however used
 differently.
We write and estimate  for any $b\not\leq a$, on the support of
$\tilde{\xi}_a^+$, with \eqref{eq:abndindre}  and for $r$ large
\begin{align*}
  \abs[\big]{V_{\rm lr}^b(x)- V_{\rm lr}^b(x_a)}&=\abs[\Big]{\int_0^1 \pi^bx^a\cdot (\nabla
  V_{\rm lr}^b)\parb{\pi^b(x-tx^a)}\,\d
    t}\\&\leq C_1\inp{x^a}\parb {\abs{x^b}-C_2\abs{x^a}}^{-(1+\mu)}
\\&\leq C_3{r}^{1-\rho_2}\parb {\abs{x}-C_4{r}^{1-\rho_2}}^{-(1+\mu)}
\\&\leq C_5{r}^{-\rho_2-\mu}.
\end{align*} 
 Similarly,  under the same
conditions as above,
\begin{align*}
  \abs{\pi^bx_a}\geq \abs{\pi^bx}-\abs{\pi^bx^a}\geq c_1 r\geq
  c_2\abs{x_a};\quad c_1>c_2>0.
\end{align*}

For the case where $I^{\rm sr}_a=0$ we can use the above bounds and
the fact that $\rho_2+\mu>1$ to
argue, abbreviating $\chi_-\parb{\inp{x^a}
       {r}^{\rho_2-1}/C}=\chi_-(\cdot)$,
\begin{align*}
  T_2&\simeq_2 f_1(H)\vO(r^0)\chi_-(\cdot)\tilde{\xi}_a^+\parb{g(H)-g(\brH_a)}f_1(\brH_a)f_1(\brh_\alpha)\\
&\simeq_2 \int_{\C}
  f_1(H)\vO(r^0)(H-z)^{-1}\chi_-(\cdot)\tilde{\xi}_a^+\parb{I_a-\breve
  I_a}(\brH_a-z)^{-1}f_1(\brH_a)f_1(\brh_\alpha)\,\d \mu_g(z)\\
&=f_1(H)\vO(r^{-\rho_2-\mu})f_1(\brH_a)f_1(\brh_\alpha)\,\simeq_2 0.
\end{align*} If $I^{\rm
  sr}_a\neq 0$ there is an extra term on the form $\vO(r^{-1-\mu})$ in
the above integral, which obviously yields the same conclusion.

As for the term $T_1$ we use  as in Section \ref{subsec:Propagation
  estimates} the functions $ f_1\prec \tilde f_2\prec\tilde{f_1}$. We may write $T_1$ as
\begin{align}\label{eq:comuse}
  T_1\simeq_2 f_1(H) \tilde f_2(H)\i [g(H),  M_a N^a_\pm M_a] \tilde f_2(H)f_1(\brH_a)f_1(\brh_\alpha).
\end{align}

Applying   the  product rule for commutation  several
terms arise, 
essentially all treated in Section
\ref{sec:Calculus considerations} (see also the proof of Lemma \ref{lemma:Hbnds}). Let us show how  to treat the contribution from the factors
of $\i [g(H), A_1]$. (The contribution from the factors
of $\i [g(H), M_a]$ may be  treated similarly, cf. the proof of  Lemma
\ref{lemma:Hbnds}.)
  
 By \eqref{eq:basicFB} and Remark \ref{remark:comm-with-a_3yy}  we can write it as 
\begin{align*}
 &2f_1(H) \tilde f_2(H) M_a\Re\parbb{\i [g(H),
  A_1]A_2^aA_3^aA_2^aA_1}M_a\tilde f_2(\brH_a)f_1(\brH_a)f_1(\brh_\alpha)\\
&\simeq_2 f_1(H)
   (Q^+_2)^*\vO(r^{-0})Q^+_2f_1(\brH_a)f_1(\brh_\alpha);\quad Q^+_2=r^{-1/2}\sqrt{(\chi^2_+)'}\parbb{\pm B/\epsilon_0}.
\end{align*} 
Next we apply \cas estimating
\begin{align*}
  &\int^\infty_{-\infty} \abs[\big]{\inp{f_1(H)\psi(t),
    (Q_2^+)^*\vO(r^{-0})Q_2^+f_1(\brH_a)f_1(\brh_\alpha)\varphi(t)}}\,\d t\\
&\leq \int^\infty_{-\infty}
  \,\norm{Q_2^+{f_1(H)\psi(t)}}\,\norm{Q_2^+f_1(\brH_a)f_1(\brh_\alpha)\varphi(t)}\,\d t\\
&\leq \parbb{\int^\infty_{-\infty} \,\norm{Q_2^+{f_1(H)\psi(t)}}^2\,\d t}^{1/2}\,\parbb{\int^\infty_{-\infty} \,\norm{Q_2^+f_1(\brH_a)f_1(\brh_\alpha)\varphi(t)}^2\,\d t}^{1/2}.
\end{align*} The first factor is finite due to Lemma \ref{lemma:Hbnds}
\ref{item:H2}, in turn  with a bound proportional to 
$\norm{\psi}$. Similarly, thanks to   Remark \ref{remark:breveBNDs}
\ref{item:P1}, the
second factor is finite too (although not needed with a bound proportional with
$\norm{\varphi}$).

We may argue similarly for the other commutators arising from expanding
the commutator in  \eqref{eq:comuse}. More precisely we  first compute
\begin{align*}
  &f_1(H) M_aA_1\i
    [g(H),A_2^aA_3^aA_2^a]A_1M_af_1(\brH_a)f_1(\brh_\alpha)\\&\quad\simeq_2 f_1(H) M_aA_1\i
    \Big [g(H),\parbb {A_3^a-\chi_+\parb{r^{\rho_2-1}r_\delta^a
  }A_3^a\chi_+\parb{r^{\rho_2-1}r_\delta^a }}\Big ]A_1M_af_1(\brH_a)f_1(\brh_\alpha)\\&\quad
\simeq_2 f_1(H) \parbb{-(Q_{3}^+)^2-Q_{4}^+)^2+(Q_{5}^+)^2+Q_{6}^+)^2 +(Q_{7}^+)^2+Q_{8}^+)^2} f_1(\brH_a)f_1(\brh_\alpha).
\end{align*} We integrate, estimate by  \cas and  invoke  Remarks \ref{remark:breveBNDs}
\ref{item:P1} and \ref{item:P2}.

These estimates  lead 
 to the conclusion that 
\begin{align*}  \int^\infty_{-\infty} \abs[\big]{\inp{\psi(t),
   T^+_\alpha\varphi(t)}}\,\d t\leq C_\varphi\norm{\psi},
\end{align*} and by the same arguments,  that 
\begin{align}\label{eq:intCau}
  \forall \epsilon>0\,\,\exists t_\epsilon=t_\epsilon(\varphi)>0:\quad \int^\infty_{t_\epsilon} \abs[\big]{\inp{\psi(t),
   T^+_\alpha\varphi(t)}}\,\d t\leq \epsilon \norm{\psi}.
\end{align} 

Clearly the existence of
   $\Omega^+_\alpha$ follows from \eqref{eq:intCau}. The existence of
   $\Omega^-_\alpha$ can be shown similarly.   
   \end{proof}

\section{Formula for $\widetilde
  S_{\beta\alpha}(\lambda)$}\label{sec:Formula for widetilde}

Recall from Section \ref{sec:Stationary modifier},
\begin{align}\label{eq:formwOp}
  \widetilde W^\pm_\alpha= \Omega^\pm_\alpha J_\alpha \breve
  w_a^{\pm}f_2(k_\alpha)\mand \widetilde S_{\beta\alpha}=\parb{\Omega^+_\beta J_\beta \breve
  w_b^{+}f_2(k_\beta}^*\Omega^-_\alpha J_\alpha \breve
  w_a^{-}f_2(k_\alpha).
\end{align}

We show  in Appendix \ref{sec:AppendixB} the following representation
of  $\widetilde
  S_{\beta\alpha}(\lambda)$, formally given as
\begin{align}\label{eq:SreP}
  \widetilde
  S_{\beta\alpha}(\lambda)=(2\pi \i)^2f^2_2(\lambda){\gamma_b^+(\lambda_\beta)}J^*_\beta
  \parb{T^+_\beta}^* \delta(H-\lambda)T^-_\alpha
    J_\alpha{\gamma_a^-(\lambda_\alpha)}^*;\quad \lambda_\alpha:=
  \lambda-\lambda^\alpha.  
\end{align}  

Parallel to Section \ref{sec:one-body  matrices} we call the appearing
 $\lambda$-depending operators
$J_\beta{\gamma_b^+(\lambda_\beta)}^*$ and $J_\alpha{\gamma_a^-(\lambda_\alpha)}^*$ \emph{future
 and past auxiliary 
  channel wave matrices}, respectively. In Subsections \ref{subsec:Trace estimates} and \ref{subsec:Conclusion of argument, the
              weak continuity} we examine the well-definedness  of
            the right-hand side of \eqref{eq:SreP} upon substituting
            expressions for $T^+_\beta$ and $T^-_\alpha$ 
            derived in the proof of Lemma
            \ref{lemma:exist_integr-tpm_-}. With the resulting  final
            representation we then derive the weak
            continuity of $\widetilde S_{\beta\alpha}(\cdot)$  from
            which Theorem \ref{thm:mainstat-modif-n} follows.

\subsection{Phase space estimates of  auxiliary channel wave matrices}\label{subsec:Trace estimates}
 We aim at   bounding  the operators (recall the index convention of
 Remark \ref{remark:breveBNDs} \ref{item:P2})
\begin{align*}
    Q^\pm(a,k)f_1(\brH_a)f_1(\brh_\alpha)J_\alpha{\gamma_a^\pm(\lambda_\alpha)}^*=
  f^2_1(\lambda)Q^\pm(a,k)J_\alpha{\gamma_a^\pm(\lambda_\alpha)}^*\in\vL\parb{L^2(C_a),
  \vH}
\end{align*} as well as showing  a continuity property. This  will
along with \eqref{eq:kato2} be
crucial  for our usage of \eqref{eq:SreP}.

\begin{lemma}\label{lemma:strongCont}
   For all  $a\in \vA'$ and indices $k$ 
\begin{align}\label{eq:Tbnd}
    \sup_{\lambda\in\Lambda}\,\,\norm{Q^\pm(a,k)f_1(\brH_a)f_1(\brh_\alpha)J_\alpha{\gamma_a^\pm(\lambda_\alpha)}^*}<\infty,
\end{align}
    and  the operator-valued functions  
\begin{align*}
 Q^\pm(a,k)f_1(\brH_a)f_1(\brh_\alpha)J_\alpha{\gamma_a^\pm(\lambda_\alpha)}^*\in\vL\parb{L^2(C_a),
  \vH}
\end{align*} are strongly continuous in $\lambda\in
\Lambda$. 
\end{lemma}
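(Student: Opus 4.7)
The plan is a two-step argument: Step I establishes the uniform bound \eqref{eq:Tbnd} via a $TT^*$ calculation, and Step II promotes this bound to strong continuity.

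\textbf{Step I.} Set $K_\lambda := Q^\pm(a,k) f_1(\brH_a) f_1(\brh_\alpha) J_\alpha \gamma_a^\pm(\lambda_\alpha)^*$. Stone's formula \eqref{eq:stone} gives $\gamma_a^\pm(\lambda_\alpha)^* \gamma_a^\pm(\lambda_\alpha) = \delta(\brh_a - \lambda_\alpha)$, so
\begin{align*}
K_\lambda K_\lambda^* = Q f_1(\brH_a) f_1(\brh_\alpha) J_\alpha \delta(\brh_a-\lambda_\alpha) J_\alpha^* f_1(\brh_\alpha) f_1(\brH_a) Q^*.
\end{align*}
The key identity is $J_\alpha \delta(\brh_a - \lambda_\alpha) J_\alpha^* = P_\alpha \delta(\brH_a - \lambda) P_\alpha$, where $P_\alpha := J_\alpha J_\alpha^* = \ket{u^\alpha}\bra{u^\alpha} \otimes I_{\vH_a}$. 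This follows from the intertwining $\brH_a J_\alpha = J_\alpha(\lambda^\alpha + \brh_a)$, which, combined with $H^a u^\alpha = \lambda^\alpha u^\alpha$, yields $[P_\alpha, \brH_a] = 0$ and hence $[P_\alpha, \sqrt{\delta(\brH_a-\lambda)}] = 0$. Using this commutation and $\|P_\alpha\| \leq 1$ one obtains the form bound
\begin{align*}
K_\lambda K_\lambda^* \leq Q f_1(\brH_a) f_1(\brh_\alpha) \delta(\brH_a-\lambda) f_1(\brh_\alpha) f_1(\brH_a) Q^*,
\end{align*}
whose right-hand side is uniformly bounded on $\Lambda$ by the $\brH_a$-analog of Kato smoothness from Remark~\ref{remark:breveBNDs} \ref{item:P1}--\ref{item:P2}, establishing \eqref{eq:Tbnd}.

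\textbf{Step II.} For strong continuity of $\lambda \mapsto K_\lambda g$, it suffices, by the polar identity, to show weak continuity together with continuity of $\lambda \mapsto \|K_\lambda g\|^2$. For weak continuity: by the uniform bound from Step I and density, take $\phi \in \vH$ with $Q^*\phi \in \vB$; pulling the $f_1$'s through using $\brH_a J_\alpha = J_\alpha \brh_\alpha$ and the generalized eigenfunction property of $\gamma_a^\pm(\lambda_\alpha)^* g$ in $\vB^*$, one writes
\begin{align*}
\langle \phi, K_\lambda g\rangle = f_1(\lambda)^2 \langle Q^*\phi, J_\alpha \gamma_a^\pm(\lambda_\alpha)^* g\rangle_{\vB, \vB^*},
\end{align*}
which is continuous in $\lambda$ by the strong weak*-continuity $\gamma_a^\pm(\cdot)^*: L^2(C_a) \to \vB^*(\mathbf X_a)$ recalled in Section~\ref{sec:one-body  matrices}. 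Norm continuity of $\|K_\lambda g\|^2 = \langle g, K_\lambda^* K_\lambda g\rangle$ is handled symmetrically from
\begin{align*}
K_\lambda^* K_\lambda = f_1(\lambda)^4 \gamma_a^\pm(\lambda_\alpha) J_\alpha^* Q^* Q J_\alpha \gamma_a^\pm(\lambda_\alpha)^*,
\end{align*}
by applying (after approximating $J_\alpha^* Q^* Q J_\alpha$ by operators in suitable weighted spaces via Kato-smoothness bounds for the $\brH_a$-dynamics restricted to $\range P_\alpha$) the one-body analog of the weak continuity property \eqref{eq:kato2}.

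\textbf{Main obstacle.} The subtle point is norm continuity: the bounded operator $J_\alpha^* Q^* Q J_\alpha$ on $\vH_a$ is not itself smoothing, so one cannot directly invoke the one-body LAP. The remedy is to factor through $P_\alpha$ as in Step I and use the commutation $[P_\alpha, \brH_a] = 0$ to bridge the one-body wave matrix $\gamma_a^\pm$ with the $N$-body Kato-smoothness of Remark~\ref{remark:breveBNDs}, reducing to weak continuity of $Q f_1(\brH_a) f_1(\brh_\alpha) P_\alpha \delta(\brH_a - \lambda) P_\alpha f_1(\brh_\alpha) f_1(\brH_a) Q^*$ in $\vL(\vH)$, which in turn follows from weak continuity of its $P_\alpha$-free counterpart supplied by the analog of Remark~\ref{remark:breveBNDs} \ref{item:P2}.
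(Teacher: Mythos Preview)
Your Step~I is a legitimate alternative to the paper's route. The paper argues via $T^*T$ and a positive-commutator identity on the generalized eigenfunction $\phi=J_\alpha\gamma_a^\pm(\lambda_\alpha)^*g$, introducing a spatial cutoff $\chi_\rho$ and expanding $0=\inp{\i[\brH_a-\lambda,\chi_\rho\Phi\chi_\rho]}_\phi$ with a suitable propagation observable $\Phi$. Your $TT^*$ bound $K_\lambda K_\lambda^*\le Qf_1f_1\,\delta(\brH_a-\lambda)\,f_1f_1Q^*$ (via $[P_\alpha,\brH_a]=0$ and $0\le P_\alpha\le I$) combined with the $\brH_a$-analogue of \eqref{eq:kato2} reaches the same uniform bound, and is arguably cleaner.

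Step~II, however, has a genuine gap. You want continuity of $\|K_\lambda g\|^2=\inp{g,K_\lambda^*K_\lambda g}$, which lives on $L^2(C_a)$; your ``remedy'' instead addresses weak continuity of $K_\lambda K_\lambda^*=Qf_1f_1\,P_\alpha\delta(\brH_a-\lambda)P_\alpha\,f_1f_1Q^*$ on $\vH$. These are different objects: weak continuity of $K_\lambda K_\lambda^*$ yields strong continuity of $K_\lambda^*$, not of $K_\lambda$, and the two are not equivalent in general. Moreover, even the claimed reduction of the $P_\alpha$-sandwiched expression to its $P_\alpha$-free counterpart fails: the weak continuity in \eqref{eq:kato2} ultimately rests on the LAP continuity $\lambda\mapsto R(\lambda\pm\i0)\in\vL(L^2_s,L^2_{-s})$, which requires plugging in vectors from $L^2_s$. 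Since $P_\alpha=\ket{u^\alpha}\bra{u^\alpha}\otimes I$ does \emph{not} preserve $L^2_s$ for $s>0$ (no decay is assumed on $u^\alpha$), and $P_\alpha$ does not commute with $Q$, you cannot transfer the weak continuity across the $P_\alpha$-insertion by the mechanism you indicate.

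The paper's proof of strong continuity is essentially different and cannot be shortcut this way. It treats each $Q$-operator separately (Steps~II--V): for $Q_1=\xi_j^+G_{d_j}$ one splits according to $\abs{x^a}\lessgtr\rho$ and uses $L^2$-localization of $u^\alpha$ in $x^a$ together with \eqref{eq:perpOne2}; for $Q_2^\pm$ one exploits incompatibility of the localizations $\chi_+(\pm B/\epsilon_0)$ and $\chi_+\parb{\mp\Re((x_a/r)\cdot p_a)/(4\epsilon_0)}$ on the one-body generalized eigenfunctions; and for $Q_3^\pm,\dots,Q_8^\pm$ one revisits the commutator identity $0=\inp{\i[\brH_a,\chi_\rho\Phi\chi_\rho]}_\phi$, now taking $\rho\to\infty$ and computing the boundary term \emph{explicitly} as $\mp\tfrac{2\lambda_\alpha}{\pi}f_1^4(\lambda)\norm{m_a g}_{L^2(C_a)}^2$ (see \eqref{eq:cruci}), which is manifestly continuous in $\lambda$. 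This concrete evaluation of the flux at infinity is the missing ingredient in your argument.
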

\begin{proof} It suffices     to consider   the cases $Q^\pm(a,k)\neq Q_0$ since
  for 
  $ Q_0$ (as given by \eqref{eq:smoothBnd}) the assertions  reduce to  properties of
  ${\gamma_a^\pm(\lambda_\alpha)}^*$ stated in Section
  \ref{sec:one-body  matrices}. Whence from this point we assume
  $Q^\pm(a,k)\neq Q_0$. 
The  uniform
  bounds \eqref{eq:Tbnd} will be proved in Step I below. The stated
  strong continuity assertions will then be shown in  Steps
  II--V. Note that  weak continuity is an immediate consequence of
  \eqref{eq:Tbnd} and the strong continuity for  $Q^\pm(a,k)= Q_0$ (to
  be used in Steps IV and V).

{\bf I.} By `the $T^*T$ argument' is suffices for  \eqref{eq:Tbnd}  to show that 
\begin{align*}
    \sup_{\lambda\in\Lambda}\,\,\norm{{\gamma_a^\pm(\lambda_\alpha)}J_\alpha^*f_1(\brh_\alpha)f_1(\brH_a)Q^\pm(a,k)^*Q^\pm(a,k)f_1(\brH_a)f_1(\brh_\alpha)J_\alpha{\gamma_a^\pm(\lambda_\alpha)}^*}<\infty.
\end{align*} We pick, using quantities from Remark
\ref{remark:breveBNDs} \ref{item:P1}, 
\begin{align}\label{eq:probndbbs}
  \begin{split}
  &\Phi\in\spann_{\R}
\set{\Phi_{1,j},\Phi_2,\Phi_3,\Phi_4\,|\,j\leq J}:\\&\quad
  \i[\brH_a,\Phi]\geq f_1(\brh_\alpha)f_1(\brH_a)Q^\pm(a,k)^*Q^\pm(a,k)f_1(\brH_a)f_1(\brh_\alpha)\\&\quad\quad+f_1(\brh_\alpha)f_1(\brH_a)\vO(r^{(-1)_-})f_1(\brH_a)f_1(\brh_\alpha).  
  \end{split}
\end{align}
For $\rho>1$ we let $\chi_\rho=\chi_-(r/\rho)$ and compute
using   the notation $\inp{T}_\phi=\inp{\phi,T\phi}$ with 
$\phi:=J_\alpha{\gamma_a^\pm(\lambda_\alpha)}^*g$, 
 $
  g\in L^2(C_a)$,
\begin{align*}
  0&=\inp{\i [\brH_a-\lambda,\chi_\rho\Phi\chi_\rho]}_\phi\\
&=\inp{\chi_\rho\i [\brH_a,\Phi]\chi_\rho}_\phi+2\Re{\inp{\i
  [\brH_a,\chi_\rho]\Phi \chi_\rho}_\phi}.
\end{align*}
For the last term to the right we compute and insert the expression
\begin{align*}
  \i
  [\brH_a,\chi_\rho]=-\rho^{-1}{\sqrt{-\chi'_-}(r/\rho)B\sqrt{-\chi'_-}(r/\rho)}, 
\end{align*} commute and  invoke the dual versions of
\eqref{eq:genFou2} allowing us to  conclude that this term is
bounded by $C\norm{g}^2$ with  a constant $C>0$ being  independent of
$\rho>1$. For the first term we use \eqref{eq:probndbbs},
\eqref{eq:genFou2} and  a commutation,  yielding the bound
\begin{align}\label{eq:Tbndmm}
    \sup_{\lambda\in\Lambda, \rho>1}\,\,\norm{\chi_\rho Q^\pm(a,k)f_1(\brH_a)f_1(\brh_\alpha)J_\alpha{\gamma_a^\pm(\lambda_\alpha)}^*}<\infty.
\end{align} 

Clearly 
\eqref{eq:Tbnd} follows from  \eqref{eq:Tbndmm}  and 
 Lebesgue's monotone convergence theorem (by taking $\rho\to \infty$).

{\bf II.}  For
  $Q_1(a,j)=\xi^+_jG_{d_j}$ (as  in   Lemma
  \ref{lemma:Hbnds} \ref{item:H1}) the strong continuity assertion follows from
  \eqref{eq:perpOne2} if  $ d_j= a$ due to the representation 
  \begin{align*}
    \xi^+_jG_{d_j}f_1(\brH_a)f_1(\brh_\alpha)J_\alpha{\gamma_a^\pm(\lambda_\alpha)}^*= f^2_1(\lambda)\xi^+_jJ_\alpha
    G_{d_j}
   {\gamma_a^\pm(\lambda_\alpha)}^*.
  \end{align*}

 For   $ d_j\neq a$ we have $ d_j \lneq a$ (since $ d_j\leq a$). The fact
that the factor
$\xi^+_j$ is supported in $Y_{d_j}(\delta_j)$ then yields  the estimate
$\abs{x^a}\geq \epsilon\abs{x}$ on $\supp \xi^+_j$ for some
$\epsilon>0$.
We decompose  for
 (large) $\rho>1$
\begin{align*}
  \xi^+_jG_{d_j}&=\chi_1\xi^+_jG_{d_j}+\chi_2\xi^+_jG_{d_j},\\&\chi_1=\chi^2_+\parb{\abs{x^a}/\rho},\quad \chi_2=\chi^2_-\parb{\abs{x^a}/\rho},
\end{align*} and write correspondingly
\begin{align*}
  \xi^+_jG_{d_j} J_\alpha{\gamma_a^\pm(\lambda_\alpha)}^*g
=\chi_1\xi^+_jG_{d_j}J_\alpha{\gamma_a^\pm(\lambda_\alpha)}^*g+\chi_2\xi^+_jG_{d_j}J_\alpha{\gamma_a^\pm(\lambda_\alpha)}^*g;\quad
  g\in\vL\parb{L^2(C_a)}.
\end{align*}

For the first term we estimate using \eqref{eq:genFou2}, cf. the proof of
\cite[Lemma 4.5]{Ya1},
\begin{align*}
  \norm{\chi_1\xi^+_jG_{d_j}&J_\alpha{\gamma_a^\pm(\lambda_\alpha)}^*g}^2\leq
  C_1\int_{\abs{x^a}\geq \rho}\d
  x^a \parb{\abs{u^\alpha}^2+\abs{p^au^\alpha}^2}\\&\abs{x^a}^{-1}\int_{\abs{x_a}\leq
  \epsilon^{-1}\abs{x^a}} \parb{\abs{\gamma_a^\pm(\lambda_\alpha)^*g}^2+\abs{p_a{\gamma_a^\pm(\lambda_\alpha)}^*g}^2}\,\d
                                                     x_a\\
&\leq
  C_2\int_{\abs{x^a}\geq \rho}\,\parb{\abs{u^\alpha}^2+\abs{p^au^\alpha}^2}\,\d
  x^a ;
\end{align*} here the constant $C_2$ is independent of  $\lambda\in
\Lambda$. Consequently the right-hand side can be taken arbitrary
  small uniformly in $\lambda$ by taken $\rho>1$ big enough (now fixed).  

For the the second term  note that  $\abs{x}\leq
  2\epsilon^{-1}\rho$ on the   support of $\chi_2$. Hence this term is continuous in
  $\lambda\in \Lambda$ 
   thanks to the established continuity for $Q^\pm(a,k)= Q_0$. 
     Whence we have treated  the case
  $Q_1(a,j)=\xi^+_jG_{d_j}$ for  any $j \leq J$.

{\bf III.} For  $Q_2^+$ (as  in   Lemma
  \ref{lemma:Hbnds} \ref{item:H2}, leaving out a similar discussion
  for $Q_2^-$)  it suffices by  \eqref{eq:Tbnd} to show continuity in
  $\lambda$ of
  \begin{align*}
  Q_2^+f_1(\brH_a)f_1(\brh_\alpha)J_\alpha{\gamma_a^+(\lambda_\alpha)}^*g;\quad
      g \in C^\infty (C_a).
\end{align*}  For $\lambda$ close to a fixed $\lambda'\in \Lambda$ we have
correspondingly that $\lambda_\alpha=\lambda-\lambda^\alpha$ is close to
$\lambda'_\alpha=\lambda'-\lambda^\alpha$, and we can to some extent
use  that for $s>1/2$
\begin{align*}
  \norm{\inp{x_a}^{-s}\phi(\lambda)}\to 0\text{ for }\lambda\to \lambda';\quad \phi(\lambda):={\gamma_a^+(\lambda_\alpha)}^*g-{\gamma_a^+(\lambda'_\alpha)}^*g.
\end{align*}
By  arguments from Step II, in particular    it suffices to show that   for some 
  $\epsilon>0$ 
  \begin{align*}
  \norm{Q_2^+\chi^2_-\parb{\abs{x^a}/(\epsilon
    r)}f_1(\brH_a)f_1(\brh_\alpha)
    J_\alpha\phi(\lambda)}\to 0
\text{ for }\lambda\to \lambda'.
      \end{align*} In turn, possibly seen  by using one-body versions of Lemmas \ref{lemma:Sommerfeld} and
      \ref{lemma:poiss-oper-geom} and \cite[Theorem 1.8]{AIIS}
      (more precisely by using resolvent bounds), it suffices to
show that  $\norm
{\tilde{\varphi}(\lambda)}=o(\abs{\lambda-\lambda'}^0)$, where for 
$\epsilon, \epsilon_0>0$ taken  small
\begin{align*}
  \tilde{\varphi}(\lambda)&:=\tilde{\varphi}_+(\lambda)+\tilde{\varphi}_-(\lambda),\\
\tilde {\varphi}_\pm(\lambda)&:=
\tilde f_1(\brH_a)\sqrt{(\chi^2_+)'}(
    B/\epsilon_0)\tilde f_1(\brH_a)r^{-1/2}T^\pm_\epsilon
  f_1(\brH_a)f_1(\brh_\alpha)J_\alpha\phi(\lambda),\\\quad  T^\pm_\epsilon&:=\chi^2_-\parb{\abs{x^a}/(\epsilon
    r)}
  \chi_+\parb{\pm{\Re{\parb{(x_a/r)\cdot
  p_a}}}/(4\epsilon_0)}.
\end{align*} (Recall from the discussion at the end of Subsecton
\ref{subsec:Limiting absorption principles} that we allow ourselves
the freedom of considering only sufficiently small values of $\epsilon_0$.)
 We can represent 
 \begin{align}\label{eq:Bdecomp}
  \begin{split}
  B&=2\Re{\parb{(x/r)\cdot p+\vO(r^{-1})\cdot
  p}}\\&=2\Re \parb{{(x_a/r)\cdot p_a}+{(x^a/r)\cdot p^a}+{\vO(r^{-1})\cdot
  p}}+\vO(r^{-1}),  
  \end{split}
\end{align} and by using this 
 and   commutation (cf.  Remark 
\ref{remark:comm-with-a_3yy}),   we estimate (with $\inp{T}_\varphi=\inp{\varphi,T\varphi}$)
\begin{align}\label{eq:commsmall}
  \begin{split}
  &\tfrac12 \norm {\tilde{\varphi}_\pm(\lambda)}^2\\
&\leq \inp{I\mp {B}/(4\epsilon_0)}_{\tilde{\varphi}_\pm(\lambda)}+o(\abs{\lambda-\lambda'}^0)\\
&\leq \inp{(1+C\epsilon) I\mp\Re{\parb{(x_a/r)\cdot
  p_a}/(2\epsilon_0)}}_{\tilde{\varphi}_\pm(\lambda)}+o(\abs{\lambda-\lambda'}^0)\\&\leq
                                                                             \inp{(C\epsilon-1)
                                                                             I}_{\tilde{\varphi}_\pm(\lambda)}+o(\abs{\lambda-\lambda'}^0)
\\&\leq o(\abs{\lambda-\lambda'}^0).  
  \end{split}
\end{align}
{\bf IV.} As for $Q_3^\pm(a)$ and $ Q_4^\pm(a)$ we introduce
$Q^\pm:=\sqrt{\abs{Q_3^\pm(a)}^2+\abs{Q_4^\pm(a)}^2}$. It  suffices (thanks to
\eqref{eq:Tbnd}  and the
weak continuity) to
show that
\begin{align}
  \label{eq:weakStrong}
  \begin{split}
&\norm{Q^\pm
  f_1(\brH_a)f_1(\brh_\alpha)J_\alpha{\gamma_a^\pm(\lambda_\alpha)}^*g}^2\\&\quad\to 
\norm{Q^\pm
  f_1(\brH_a)f_1(\brh_\alpha)J_\alpha{\gamma_a^\pm(\lambda'_\alpha)}^*g}^2\text{
                                                                           for
                                                                           }\lambda\to\lambda'
;\\&\quad \quad \quad\quad
  g\in C^\infty(C_a).   
  \end{split}
\end{align}

Letting  $\chi_\rho=\chi_-(\abs{x_a}/\rho)$ for $\rho>1$ and $\phi^\pm(\lambda)=J_\alpha{\gamma_a^\pm(\lambda_\alpha)}^*g$,
 we compute as
in Step I 
(now with $\Phi^\pm_3=\Phi^\pm_3(a)$) 
\begin{align*}
  0&=\inp{\i [\brH_a,\chi_\rho\Phi^\pm_3\chi_\rho]}_{\phi^\pm(\lambda)}\\
&=\inp{\i [\brH_a,\Phi^\pm_3]}_{\chi_\rho\phi^\pm(\lambda)}+2\Re{\inp{\i [\brH_a,\chi_\rho]\Phi^\pm_3\chi_\rho}_{\phi^\pm(\lambda)}}.
\end{align*} We take the limit $\rho\to \infty$. For the first term we
get in the limit $\mp\norm{f^2_1(\lambda)Q^\pm\phi^\pm(\lambda)}^2$ plus a term that we 
 know is  continuous in $\lambda$ by Steps II and III. Here we used
 commutation to replace $\norm{f^2_1(\lambda)Q^\pm\chi_\rho\phi^\pm(\lambda)}^2$ by
 \begin{align*}
   \inp{Q_3^\pm(a)^*\chi^2_\rho Q_3^\pm(a)}_{f_1(\brH_a)f_1(\brh_\alpha)\phi^\pm(\lambda)}+\inp{Q_4^\pm(a)^*\chi^2_\rho Q_4^\pm(a)}_{f_1(\brH_a)f_1(\brh_\alpha)\phi^\pm(\lambda)}
 \end{align*} before taking the limit.

 As for the second term we compute  as follows,   using again 
 conveniently 
  one-body versions of  Lemma 
      \ref{lemma:poiss-oper-geom} and \cite[Theorem 1.8]{AIIS},  and
      using in
      the second step also  \ref{item:1a}
and \ref{item:2a} from Section \ref{sec:Yafaev's construction}. (The
computation overlaps partially one in Appendix \ref{sec:AppendixO}.) A
further elaboration is needed, it is given after the computation.  We compute for any 
$\epsilon,\epsilon_0>0$ taken small
\begin{align}\label{eq:cruci}
 &\lim_{\rho\to \infty}2\Re\inp{\i
    [\brH_a,\chi_\rho]\Phi^\pm_3\chi_\rho}_{\phi^\pm(\lambda)}\\\nonumber
&=\pm 2\sqrt{\lambda_\alpha}f^4_1(\lambda)\lim_{\rho\to
  \infty}\rho^{-1}\Re\inp [\big]{(\chi^2_-)'\parb{\tfrac{\abs{x_a}}\rho}\chi_-\parb{\tfrac{\abs{x^a}}{\epsilon
    r}}M_a^2}_{\chi_+\parb{\pm{\Re{\parb{(x_a/r)\cdot
  p_a}}}/(4\epsilon_0)}\phi^\pm(\lambda)}\\\nonumber
&=\pm \parbb{2\sqrt{\lambda_\alpha}}^3f^4_1(\lambda)\lim_{\rho\to
  \infty}\rho^{-1}\inp [\big]{(\chi^2_-)'\parb{\tfrac{\abs{x_a}}\rho}}_{m_a(\hat x_a)\chi_+\parb{\pm{\Re{\parb{(x_a/r)\cdot
  p_a}}}/(4\epsilon_0)}\phi^\pm(\lambda)}\\\nonumber
&=\mp\tfrac {2\lambda_\alpha}\pi
f^4_1(\lambda)\norm{{m_a}g}_{L^2(C_a)}^2.
\end{align}

 Note that in the first step of \eqref{eq:cruci} we used  that
 effectively the factor $A_3^a=A_{3\pm}^a$ of $\Phi^\pm_3$ can be removed. This is 
essentially due to the fact that 
 $B_{\delta,\rho_1}^a=0$ in $\set{\abs{x^a}<c r^\delta}$,
 cf. \ref{item:2c} of Section \ref{sec:Derezinski's
  construction}, since for the complement we have 
 \begin{align*}
2\lim_{\rho\to
  \infty} \rho^{-1}\Re\inp [\big]{p_a\cdot \hat x_a(\chi^2_-)'\parb{\tfrac{\abs{x_a}}\rho}\chi_-\parb{\tfrac{\abs{x^a}}{\epsilon
    r}}\chi^2_+\parb{\tfrac{2\abs{x^a}}{c r^\delta}}\Phi^\pm_3}_{\phi^\pm(\lambda)} 
   =0.
 \end{align*} In fact (using commutation and   Remark 
\ref{remark:comm-with-a_3yy})
 \begin{align*}
&2\lim_{\rho\to
  \infty} \rho^{-1}\Re\inp [\big]{p_a\cdot \hat x_a(\chi^2_-)'\parb{\tfrac{\abs{x_a}}\rho}\chi_-\parb{\tfrac{\abs{x^a}}{\epsilon
    r}}\chi^2_-\parb{\tfrac{2\abs{x^a}}{c r^\delta}}\Phi^\pm_3}_{\phi^\pm(\lambda)} 
   \\&=2 f^4_1(\lambda)\lim_{\rho\to
  \infty} \rho^{-1}\Re\inp [\big]{p_a\cdot \hat x_a(\chi^2_-)'\parb{\tfrac{\abs{x_a}}\rho}\chi_-\parb{\tfrac{\abs{x^a}}{\epsilon
    r}}\chi^2_-\parb{\tfrac{2\abs{x^a}}{c r^\delta}}A_3^aM_a\tilde f_2(\brH_a)^2A_1^2M_a}_{\phi^\pm(\lambda)},
 \end{align*} and
 \begin{align}\label{eq:aligeen}
  \chi^2_-\parb{\tfrac{2\abs{x^a}}{c r^\delta}}\parb{A_3^a-I}= \pm 
\int _{\C}
\chi^2_-\parb{\tfrac{2\abs{x^a}}{c r^\delta}}B_{\delta,\rho_1}^a(\pm B_{\delta,\rho_1}^a-z)^{-1}\,z^{-1}\mathrm
d\mu_{\chi_-}(z)=0.
 \end{align} Once  $A_3^a$ is removed we can also remove the remaining factor
 $\chi^2_-\parb{\tfrac{2\abs{x^a}}{c r^\delta}}$. Similarly  in the first step we also replaced   the factors of
 $ A_1= A_{1\pm}$ by   factors of $\chi_+\parb{\pm{\Re{\parb{(x_a/r)\cdot
  p_a}}}/(4\epsilon_0)}$. Using the one-body version of \cite[Theorem
1.8]{AIIS} as in Step III this is
justified by the assertions 
\begin{align*}
  &2f^4_1(\lambda)\lim_{\rho\to
  \infty} \rho^{-1}\Re\inp [\big]{p_a\cdot \hat x_a(\chi^2_-)'\parb{\tfrac{\abs{x_a}}\rho}\chi_-\parb{\tfrac{\abs{x^a}}{\epsilon
    r}} M_a\tilde f_2(\brH_a)\parb{A_1^2-I}\tilde f_2(\brH_a)M_a}_{ \phi_1^\pm}=0,\\
 &2 f^4_1(\lambda)\lim_{\rho\to
  \infty} \rho^{-1}\Re\inp [\big]{p_a\cdot \hat x_a(\chi^2_-)'\parb{\tfrac{\abs{x_a}}\rho}\chi_-\parb{\tfrac{\abs{x^a}}{\epsilon
    r}}M_a\tilde f_2(\brH_a)A_1^2\tilde f_2(\brH_a)M_a}_{\phi_2^\pm}=0;\\&\quad \quad \phi_1^\pm=\chi_+\parb{\pm{\Re{\parb{(x_a/r)\cdot
  p_a}}}/(4\epsilon_0)}\phi^\pm(\lambda),\\&\quad\quad  \phi_2^\pm=\chi_+\parb{\mp{\Re{\parb{(x_a/r)\cdot
  p_a}}}/(4\epsilon_0)}\phi^\pm(\lambda).
\end{align*} In turn these assertions may be verified  essentially  as in
Step III (more precisely  by commutation along the lines of  \eqref{eq:commsmall})  by showing
\begin{subequations}
\begin{align}\label{eq:good}
  \begin{split}
   &\tilde f_2(\brH_a) \chi_-\parb{\tfrac{\abs{x^a}}{\epsilon
    r}}\chi_+\parb{\pm{\Re{\parb{(x_a/r)\cdot
  p_a}}}/(4\epsilon_0)}^2\parb{A_{1\pm}^2-I}\tilde f_2(\brH_a)\\&\quad
                                                                \quad
                                                                \quad\quad
                                                                \quad
                                                                \quad
                                                               =
                                                                \vO(r^{-1/2})=\vO(r^{0_-}), 
  \end{split}
\end{align} and 
\begin{align}\label{eq:lessgood}\begin{split}
 &\tilde f_2(\brH_a) \chi_-\parb{\tfrac{\abs{x^a}}{\epsilon
    r}}\chi_+\parb{\mp{\Re{\parb{(x_a/r)\cdot
  p_a}}}/(4\epsilon_0)}^2A^2_{1\pm}\tilde f_2(\brH_a)\\&\quad \quad \quad\quad \quad \quad = \vO(r^{-1/2})=\vO(r^{0_-}). \end{split}
\end{align} 
  
\end{subequations}

Obviously the right-hand side of \eqref{eq:cruci}, i.e.  $\mp\tfrac
{2\lambda_\alpha}\pi f^4_1(\lambda)\norm{{m_a}g}_{L^2(C_a)}^2$, is continuous in
$\lambda$. We conclude that the function 
\begin{align*}
  \lambda\to \norm{Q^\pm f_1(\brH_a)f_1(\brh_\alpha)\phi^\pm 
(\lambda)}^2
\end{align*}
  is  continuous,  as wanted.

{\bf V.} We  mimic Step IV using  now 
\begin{align*}
  Q^\pm:=\sqrt{\abs{Q_5^\pm(a)}^2+\abs{Q_6^\pm(a)}^2+\abs{Q_7^\pm(a)}^2+\abs{Q_8^\pm(a)}^2},
\end{align*}   $\Phi^\pm_4=\Phi^\pm_4(a)$ (replacing $\Phi^\pm_3(a)$) 
and using 
again Steps II and III,
yielding similarly  the strong continuity for $Q_5^\pm(a),
Q_6^\pm(a),Q_7^\pm(a)$ and $ Q_8^\pm(a)$. In fact this  case is simpler
in that one readily shows the following analogue of \eqref{eq:cruci},
\begin{align*}
 \lim_{\rho\to \infty}2\Re\inp{\i
    [\brH_a,\chi_\rho]\Phi^\pm_4\chi_\rho}_{\phi^\pm(\lambda)}=0,
\end{align*} which obviously is continuous in $\lambda$. \end{proof}

\subsection{Conclusion of argument, the weak
              continuity}\label{subsec:Conclusion of argument, the
              weak continuity}
 The  expression to the right in \eqref{eq:SreP}  can (and should)  be written, cf. the proof of
              Lemma \ref{lemma:exist_integr-tpm_-},  as a sum of terms on
              the form
              \begin{align*}
                 \parbb{f^2_2(\lambda)&{\gamma_b^+(\lambda_\beta)}J^*_\beta
                Q^+(b,l)^*B^*_+}\\
&\parbb{Q^+(b,l)f_1 (H)\delta(H-\lambda)f_1 (H)Q^-(a,k)^*}\parbb{B_-Q^-(a,k)J_\alpha{\gamma_a^-(\lambda_\alpha)}^*},
              \end{align*} where $B_+=B_+(b,l)$ and $B_-=B_-(a,k)$ are bounded
              operators. The middle factor is weakly continuous by
              Remark \ref{remark:breveBNDs} \ref{item:P2}. The factor to the right is
              strongly continuous by Lemma \ref{lemma:strongCont}. The
              adjoint of the   factor to the left is
              strongly continuous by the same result. Consequently the
              entire product is weakly continuous, and  Theorem
              \ref{thm:mainstat-modif-n} follows (cf. the
              discussion given before the statement). \qed

\section{Channel wave matrices and scattering at fixed
  energy}\label{sec:Exact channel wave-matrices}
In this concuding section we derive  various consequences of our proof
of Theorem \ref{thm:mainstat-modif-n}. Whence   we concretely
construct the open channel wave
  matrices and the scattering matrix for  all energies in the   small
  neighbourhood $I_0 \ni \lambda_0$ fixed  in Section
  \ref{sec:Stationary modifier}. Away
  from a null set of
 energies, more precisely at any stationary scattering
  energy in $I_0$, the scattering matrix  is unitary and strongly
  continuous and it is characterized by asymptotic properties of minimum
  generalized eigenfunctions.  The discussion
relies heavily on Appendix \ref{sec:AppendixB}. 

  For all $b\in \vA'$ we pick an arbitrary  increasing sequence
  $(C_{b,k})_k $ of open reflection symmetric subsets of  $C_b'$ with closure
    $\overline C_{b,k}\subset C_b'$ and $\cup_k C_{b,k}=C_b'$. We pick for each such set a
    function $m_b=m_{b,k}$  as in  Section \ref{sec:Yafaev's construction} such
    that $m_b(\hat \xi)=1$ for all $\hat \xi\in C_{b,k}$.   Let for any channel  $\beta
    =(b,\lambda^\beta, u^\beta)$
    \begin{align*}
      \chi_{\beta,k}&=\chi_{\beta,k}(p_b)=F_\beta^{-1}
    1_{C_{b,k}}(\hat \xi_b)F_\beta,\\
\tilde\chi_{\beta,k}&=\tilde\chi_{\beta,k}(p_b)=F_\beta^{-1}(4\lambda_\beta)^{-1}
    1_{C_{b,k}}(\hat \xi_b)F_\beta.
\end{align*}
    
\begin{subequations}
 By combining \eqref{eq:wave_opc2}, \eqref{eq:conjF}  and
 \eqref{eq:Wfinal} we then obtain  (by an elementary approximation argument)
\begin{align}\label{eq:Wfinal2}
  \begin{split}
    &W^+_\beta
  (f_2g)(k_\beta)\chi_{\beta,k}\varphi
=\widetilde W^+_\beta
  g(k_\beta)\tilde\chi_{\beta,k}\varphi\\
    &=\int \,\tfrac \pi {2\lambda_\beta} (f_2g)(\lambda) \parb{f_1(H)\delta(H-\lambda) T^+_{\beta,k} J_\beta 
      \gamma_{b}^+(\lambda_\beta)^*}1_{C_{b,k}}\gamma_{b,0}(\lambda_\beta)\varphi
    \,\d \lambda;\\
&\quad \quad\quad\text{ for  any  }\varphi\in L^2_s(\mathbf  X_b) \text{ with   }s>1/2. 
  \end{split}
              \end{align} Note that indeed $
              T^+_{\beta}=T^+_{\beta,k}$ depends on $k$. 

              We are lead to introduce the \emph{approximate  channel wave matrices}
\begin{align}\label{eq:AppExa}
 \Gamma^+_{\beta,k}(\lambda)^*=\tfrac \pi {2\lambda_\beta}  f_1(H){\delta(H-\lambda) T^+_{\beta,k} J_\beta 
      \gamma_{b}^+(\lambda_\beta)^*}1_{C_{b,k}};\quad k\in
  \N,\,\lambda \in  I_0\,(\ni \lambda_0).
\end{align} 
 
\end{subequations}

\subsection{Channel wave matrices }\label{subsec: Channel wave matrices }

By \eqref{eq:Wfinal2} the right-hand side of \eqref{eq:AppExa} 
 is independent of  details of the
construction of $T^+_{\beta,k}$. Considered as an $\vL(L^2(C_b), L_{-s}^2(\bX))$-valued function of   $\lambda\in  I_0$ 
for
any $s>1/2$, it is strongly continuous. Clearly 
\begin{align*}
  1_{C_{b,k}}\Gamma^+_{\beta,k+1}(\lambda)=\Gamma^+_{\beta,k}(\lambda)\in \vL( L_{s}^2(\bX),L^2(C_b)).
\end{align*} We can then  for any  $\lambda\in  I_0$  and $\psi\in
L_{s}^2(\bX)$ with  $s>1/2$  introduce   a measurable function
$\Gamma^+_{\beta}(\lambda)\psi$  on $C_b$ by 
\begin{align}\label{eq:GAM}
  1_{C_{b,k}}\Gamma^+_{\beta}(\lambda)\psi=\Gamma^+_{\beta,k}(\lambda)\psi;\quad
  k\in \N.
\end{align}
We argue that $\Gamma^+_{\beta}(\lambda)\psi\in L^2(C_b)$ as follows. 
With \eqref{eq:Wfinal2} it follows from Stone's formula that
\begin{align*}
  \int_\Delta\,\norm{\Gamma^+_{\beta,k}(\lambda)\psi}^2\,\d \lambda\leq
  \int_\Delta\,\inp{\delta(H-\lambda)}_\psi\,\d \lambda\quad \text{for
  any   interval }
  \Delta\subseteq I_0,
\end{align*}  leading
to the
conclusion  that 
\begin{align}\label{eq:semiBND}
  \norm{\Gamma^+_{\beta,k}(\lambda)\psi}^2\leq
  \inp{\delta(H-\lambda)}_\psi \text{ for a.e. } \lambda \in I_0.
\end{align}
 Since $\Gamma^+_{\beta,k}(\cdot)\psi\in L^2(C_b)$ is
weakly continuous and the functional $\norm{\cdot}^2$  is weakly lower semi-continuous we
conclude that 
\begin{align*}
  \norm{\Gamma^+_{\beta,k}(\lambda)\psi}^2 \leq
  \inp{\delta(H-\lambda)}_\psi  \text{ for all } \lambda \in I_0.
\end{align*} In particular also  $\norm{\Gamma^+_{\beta}(\lambda)\psi}^2=\lim _{k\to
  \infty}\norm{\Gamma^+_{\beta,k}(\lambda)\psi}^2  \leq
\inp{\delta(H-\lambda)}_\psi $, showing that  indeed $\Gamma^+_{\beta}(\lambda)\psi\in L^2(C_b)$.  

We can now record,  using  Lemma \ref{lemma:strongCont} and
\eqref{eq:LAPbnda},  that for any $ s>1/2$
\begin{align*}
  \Gamma^+_{\beta}(\lambda)\in \vL( L_{s}^2(\bX),L^2(C_b))
 \text{
  with a weakly continuous  dependence  of }  \lambda \in I_0,
\end{align*} and in fact  that the  \emph{outgoing channel wave matrix}
\begin{align*}
  \Gamma^+_{\beta}(\lambda)^*\in \vL(L^2(C_b),  L_{-s}^2(\bX))
 \text{
  depends strongly continuously of }  \lambda \in I_0.
\end{align*}

By taking $k\to \infty$  in \eqref{eq:Wfinal2} we conclude that  for any $\psi\in L_{s}^2(\bX)$,  $s>1/2$,
\begin{align*}
  \parb{F_\beta
  (W^+_\beta)^*\psi}(\lambda)=\Gamma_\beta^+(\lambda)\psi\quad\text{for a.e.
  }
  \lambda\in I_0.
\end{align*}

We can argue similarly for $W^-_\beta$, cf. \eqref{eq:Wfinal2a},  and
introduce the \emph{incoming  channel wave
  matrix} $\Gamma^-_{\beta}(\lambda)^*$, concluding by  the same
reasoning that 
\begin{align*}
  &\Gamma^-_{\beta}(\lambda)\in \vL( L_{s}^2(\bX),L^2(C_b))\text{
  depends weakly continuously of }  \lambda \in I_0,\\
  &\Gamma^-_{\beta}(\lambda)^*\in \vL(L^2(C_b),  L_{-s}^2(\bX))
 \text{
  depends strongly continuously of }  \lambda \in I_0,\\
&\forall \,\psi\in L_{s}^2(\bX), \, s>1/2:\quad \parb{F_\beta
  (W^-_\beta)^*\psi}(\lambda)=\Gamma_\beta^-(\lambda)\psi\quad\text{for a.e.
  }
  \lambda\in I_0.
\end{align*}

We summarize as follows.
\begin{thm}\label{thm:chann-wave-matr}  Let $\beta$ be a  given channel  $\beta
    =(b,\lambda^\beta, u^\beta)$,     $f:I^\beta=(\lambda^\beta,\infty)\to \C$ be
  bounded and 
  continuous, and let  and $s>1/2$. For any $\varphi\in
  L^2_s(\mathbf  X_b)$   
  \begin{subequations}
  \begin{align}\label{eq:wav1}
  W^\pm_\beta
  (f1_{I_0})\paro{ k_\beta}\varphi=\int_{I_0} \,f(\lambda)
    \Gamma^\pm_{\beta}(\lambda)^* \gamma_{b,0}(\lambda_\beta)\varphi
    \,\d \lambda\in  L_{-s}^2(\bX), 
              \end{align} 
More generally for any $\varphi
\in1_{I_0}(k_\beta) L^2(\bX_b)$
\begin{align}\label{eq:wav2}
  W^\pm _\beta
  f\paro{ k_\beta}\varphi=\int_{I_0} \,f(\lambda)
  \Gamma^\pm_{\beta}(\lambda)^* \paro{F_\beta \varphi)(\lambda,
  \cdot} \,\d \lambda \in L_{-s}^2(\bX), 
              \end{align} 

In \eqref{eq:wav1} the integrand is a bounded and continuous
$L_{-s}^2(\bX)$-valued function. For   \eqref{eq:wav2} the integral
has  the weak interpretation of a    measurable 
$L_{-s}^2(\bX)$-valued function.
  \end{subequations}
\end{thm}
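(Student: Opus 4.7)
The plan is to pass from the already-established formula \eqref{eq:Wfinal2} through two limits: first $k\to\infty$ to remove the auxiliary cutoffs $\chi_{\beta,k}$ and $1_{C_{b,k}}$, and then a truncation of $f$ to integrate out to all bounded continuous $f$.

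First I would fix $\varphi\in L_s^2(\mathbf X_b)$, $s>1/2$, and $g\in C_\c(I_0)$, and apply \eqref{eq:Wfinal2} with $f_2 g$ replaced simply by $g$ (admissible since $f_2=1$ on $\bar I_0\supseteq\supp g$). Using $(\Gamma^+_{\beta,k}(\lambda))^*=\Gamma^+_\beta(\lambda)^*1_{C_{b,k}}$, cf.\ \eqref{eq:GAM}, the right-hand side rewrites as
\begin{align*}
\int_{I_0} g(\lambda)\,\Gamma^+_\beta(\lambda)^*\,1_{C_{b,k}}\gamma_{b,0}(\lambda_\beta)\varphi\,\mathrm d\lambda.
\end{align*}
As $k\to\infty$ we have $C_{b,k}\uparrow C_b'$ and $C_b\setminus C_b'$ has measure zero, so $1_{C_{b,k}}\gamma_{b,0}(\lambda_\beta)\varphi\to\gamma_{b,0}(\lambda_\beta)\varphi$ in $L^2(C_b)$ locally uniformly in $\lambda$, and by the strong continuity of $\Gamma^+_\beta(\cdot)^*\in\vL(L^2(C_b),L^2_{-s}(\mathbf X))$ established earlier in the section the integrand converges in $L^2_{-s}(\mathbf X)$ pointwise in $\lambda$. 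The uniform bound $\|\Gamma^+_\beta(\lambda)^*\|\,\|\gamma_{b,0}(\lambda_\beta)\varphi\|_{L^2(C_b)}$ is locally bounded on $\Lambda$, and $g$ is compactly supported, so dominated convergence applies. On the left, $\chi_{\beta,k}\varphi\to\varphi$ in $L^2(\mathbf X_b)$ and $W^+_\beta g(k_\beta)$ is bounded, yielding
\begin{align*}
W^+_\beta g(k_\beta)\varphi=\int_{I_0} g(\lambda)\,\Gamma^+_\beta(\lambda)^*\gamma_{b,0}(\lambda_\beta)\varphi\,\mathrm d\lambda.
\end{align*}

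Next I would approximate $f\,1_{I_0}$ pointwise by a sequence $g_n\in C_\c(I_0)$ with $|g_n|\le 2\|f\|_\infty$. By the spectral theorem and dominated convergence, $g_n(k_\beta)\varphi\to(f1_{I_0})(k_\beta)\varphi$ in $L^2(\mathbf X_b)$, hence the same convergence holds after applying $W^+_\beta$ in $L^2(\mathbf X)\hookrightarrow L^2_{-s}(\mathbf X)$. On the right-hand side the integrand converges pointwise in $L^2_{-s}(\mathbf X)$ and is bounded in norm by an integrable function on $I_0$; another application of dominated convergence yields \eqref{eq:wav1} for the $+$ case. The analogous argument, starting from the $-$ counterpart of \eqref{eq:Wfinal2} (referenced as \eqref{eq:Wfinal2a} in Appendix~\ref{sec:AppendixB}), gives the $-$ case. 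Continuity and boundedness of the integrand in \eqref{eq:wav1} as an $L^2_{-s}(\mathbf X)$-valued function follow from the strong continuity of $\Gamma^\pm_\beta(\cdot)^*$ and $\gamma_{b,0}(\cdot_\beta)\varphi$.

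For \eqref{eq:wav2} I would pass from the special $\varphi\in L^2_s(\mathbf X_b)$ in \eqref{eq:wav1} to general $\varphi\in 1_{I_0}(k_\beta)L^2(\mathbf X_b)$ via a weak formulation. For any $\psi\in L^2_s(\mathbf X)$, $s>1/2$, the identity $(F_\beta(W^\pm_\beta)^*\psi)(\lambda)=\Gamma^\pm_\beta(\lambda)\psi$ for a.e.\ $\lambda\in I_0$ follows from \eqref{eq:wav1} by polarization and density of $\gamma_{b,0}(\lambda_\beta)L^2_s(\mathbf X_b)$ in $L^2(C_b)$. Then for general $\varphi\in 1_{I_0}(k_\beta)L^2(\mathbf X_b)$,
\begin{align*}
\langle\psi,W^\pm_\beta f(k_\beta)\varphi\rangle
&=\langle F_\beta(W^\pm_\beta)^*\psi,\,f\,F_\beta\varphi\rangle
=\int_{I_0} f(\lambda)\,\bigl\langle\Gamma^\pm_\beta(\lambda)\psi,\,(F_\beta\varphi)(\lambda,\cdot)\bigr\rangle\,\mathrm d\lambda,
\end{align*}
which is the weak interpretation of \eqref{eq:wav2}. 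Measurability of $\lambda\mapsto\Gamma^\pm_\beta(\lambda)^*(F_\beta\varphi)(\lambda,\cdot)\in L^2_{-s}(\mathbf X)$ follows from the strong continuity of $\Gamma^\pm_\beta(\cdot)^*$ combined with the measurability of $\lambda\mapsto(F_\beta\varphi)(\lambda,\cdot)\in L^2(C_b)$. The main obstacle I expect is keeping the dominated-convergence estimate in Step~1 uniform in $k$; it is handled because the norm $\|\Gamma^+_\beta(\lambda)^*\|$ is locally bounded on $\Lambda$ (a consequence of Lemma~\ref{lemma:strongCont}) and $1_{C_{b,k}}$ is a contraction, so no $k$-dependence enters the majorant.
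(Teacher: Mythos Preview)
Your proposal is correct and follows essentially the same approach as the paper: the theorem is stated in the paper as a summary of the preceding discussion in Subsection~\ref{subsec: Channel wave matrices }, whose key step is precisely taking $k\to\infty$ in \eqref{eq:Wfinal2} after having established that $\Gamma^\pm_\beta(\lambda)^*\in\vL(L^2(C_b),L^2_{-s}(\bX))$ is strongly continuous in $\lambda\in I_0$. You have simply made the dominated-convergence details and the passage from $g\in C_\c(I_0)$ to bounded continuous $f$ explicit, and your weak-formulation derivation of \eqref{eq:wav2} matches the adjoint identity $(F_\beta(W^\pm_\beta)^*\psi)(\lambda)=\Gamma^\pm_\beta(\lambda)\psi$ that the paper records just before the theorem.
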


 For $f(\lambda)= \e^{-\i t\lambda}$,
$t\in\R$,  the formulas \eqref{eq:wav1} and \eqref{eq:wav2} represent
Schr\"odinger wave packets of energy-localized states outgoing to  or incoming  from
the channel $\beta$, cf. a discussion in Subsection \ref{subsec:A principle example, atomic $N$-body
  Hamiltonians}.  

\subsection{Parseval identities and   construction of the scattering matrix}\label{subsec:Parseval identity}
Let $Q=Q_0$ be given as in \eqref{eq:smoothBnd} (for an arbitrary $\epsilon>0$).
By the orthogonality and completeness of channel wave operators we deduce the Parseval
formulas  
\begin{align}\label{eq:parseval}
  \begin{split}
   &\int_\Delta\,\,\sum_{\lambda^\beta< \lambda_0}\,\norm{\Gamma^\pm_{\beta}(\lambda)Q_0^*\varphi}^2\,\d \lambda=
  \int_\Delta\,\inp{\delta(H-\lambda)}_{Q_0^*\varphi }\,\d \lambda;\\&\quad\quad\quad\quad\text{for
   all intervals }
  \Delta\subseteq I_0 \mand \varphi\in \vH. 
  \end{split}
\end{align} Moreover, arguing more or less as before, we can deduce
from \eqref{eq:parseval}  that
\begin{subequations}
 \begin{align}\label{eq:pointw1}
 &\sum_{\lambda^\beta<
  \lambda_0}\,\norm{\Gamma^\pm_{\beta}(\lambda)Q_0^*\varphi }^2\leq 
  \inp{\delta(H-\lambda)}_{Q_0^*\varphi } \quad\text{ for all } \lambda \in I_0,\\
&\sum_{\lambda^\beta<
  \lambda_0}\,\norm{\Gamma^\pm_{\beta}(\lambda)Q_0^*\varphi }^2= 
  \inp{\delta(H-\lambda)}_{Q_0^*\varphi } \quad\text{  a.e.  in }
  I_0,\text{  say for } \lambda \in I_0
\setminus \vN_0.   \label{eq:pointw2}
\end{align}   
\end{subequations} By first taking first $\varphi$ from any  dense countable
subset $\vC\subseteq \vH$ we then  
obtain \eqref{eq:pointw1} and \eqref{eq:pointw2}
for all vectors  from  $\vC$ 
with the null set $\vN_0$ in   \eqref{eq:pointw2} being independent of
$\varphi\in \vC$. Secondly we extend by continuity  and conclude that  \eqref{eq:pointw2} holds with the same null
set $\vN_0$ for  all
$\varphi\in\vH$. In particular we obtain  
that for some null subset $\vN_0\subseteq I_0$
\begin{subequations}
 \begin{align}\label{eq:Finalgood0}
&\forall \psi\in L^2_\infty:\,\,\sum_{\lambda^\beta<
    \lambda_0}\,\norm{\Gamma^\pm_{\beta}(\lambda)\psi }^2\leq 
  \inp{\delta(H-\lambda)}_{\psi}\text{ for all }  \lambda
  \in I_0,\\
 &\forall \psi\in L^2_\infty:\,\,\sum_{\lambda^\beta<
    \lambda_0}\,\norm{\Gamma^\pm_{\beta}(\lambda)\psi }^2= 
  \inp{\delta(H-\lambda)}_{\psi}\text{ for all }  \lambda
  \in I_0\setminus \vN_0.  \label{eq:Finalgood}
\end{align} 
\end{subequations}

We are lead to introduce the following concept.
\begin{defn}\label{defn:scatEnergy}  
A {stationary scattering energy} is  a point  $\lambda\in I_0$ for which  
\begin{align}\label{eq:ScatEnergy}
  \forall \psi\in  L^2_\infty:\,\, \sum_{\lambda^\beta<
  \lambda_0}\,\norm{\Gamma^\pm_{\beta}(\lambda)\psi}^2= 
  \inp{\delta(H-\lambda)}_{\psi}.
\end{align} 
  \end{defn}

 We have shown that almost all energies in $ I_0$ are stationary scattering
energies. The property missing for possibly having this conclusion for all
energies in $ I_0$ is clearly the continuity in $\lambda$ of the sum to the left in \eqref{eq:ScatEnergy}.

Recall that the above operator $Q_0$ is only one out of many
`$Q$-operators' from our procedure. More
precisely  for each
              $k$ the expansions of $
              T^\pm_{\beta}$ used in Sections \ref{sec:Computation of
                T} and \ref{sec:Formula for widetilde} involved
              several (but 
              finitely many)  `$Q$-operators' either on the above form
              $Q_0$ or being one of those from  Lemma \ref{lemma:Hbnds}. So when varying $k$ we
              obtain in total an infinite  family of `$Q$-operators',
              say  denoted  by $\vQ$.

For any  $Q\in \vQ$ and $\varphi \in L^2_\infty$   we  record that
$\tilde f_2(H) Q^*\varphi \in  L^2_\infty$, cf. Remark
\ref{remark:breveBNDs} \ref{item:P0}, and 
\begin{align*}
  \Gamma^\pm_{\beta}(\lambda)Q^*\varphi
  =\Gamma^\pm_{\beta}(\lambda)\tilde  f_2(H)  Q^*\varphi =\lim_{k\to \infty}
  \Gamma^\pm_{\beta,k}(\lambda)\tilde  f_2(H) Q^*\varphi \in L^2(C_b);\,\lambda\in I_0.
\end{align*} 
 However (more generally) for any finite sum $\psi=\sum_j Q_j^*\varphi_j$, where now 
 $\varphi_j$ is  arbitrary  from $\vH$, the quantity  $ \Gamma^\pm_{\beta,k}(\lambda)\psi=\Gamma^\pm_{\beta,k}(\lambda)\tilde
 f_2(H) \psi$ is well-defined. Thanks to
 \eqref{eq:Wfinal2} it is  unambiguously given   as
 \begin{align*}
   \Gamma^\pm_{\beta,k}(\lambda)
\psi=\sum_j 1_{C_{b,k}}\Gamma^\pm_{\beta,k}(\lambda)
  Q_j^*\varphi_j\in L^2(C_b),
 \end{align*} and as used  before
 $1_{C_{b,k}}\Gamma^\pm_{\beta,k}(\lambda)=1_{C_{b,k}}\Gamma^\pm_{\beta,k+1}(\lambda)$,
 allowing us to compute
 \begin{align*}
   \Gamma^\pm_{\beta}(\lambda) \psi:&=\lim_{k\to \infty} \Gamma^\pm_{\beta,k}(\lambda) \psi\\&=\sum_j \lim_{k\to \infty} 1_{C_{b,k}}\Gamma^\pm_{\beta,k}(\lambda)
   Q_j^*\varphi_j\\&
=\sum_j \lim_{k\to \infty} 1_{C_{b,k}}\Gamma^\pm_{\beta}(\lambda)
    Q_j^*\varphi_j\\&=\sum_j \Gamma^\pm_{\beta}(\lambda)
    Q_j^*\varphi_j \in L^2(C_b).
 \end{align*} We  used the following  analogue of \eqref{eq:semiBND},
\begin{align}\label{eq:SE}
  \norm{\Gamma^+_{\beta,k}(\lambda)Q_j^*\varphi_j}^2 \leq
  \inp{\delta(H-\lambda)}_{f_1(H)Q_j^*\varphi_j}  \text{ for all } \lambda \in I_0.
\end{align} This reasoning yields that  
\begin{align*}
   \forall Q\in \vQ:\,\Gamma^\pm_{\beta}(\lambda)Q^*\in\vL(
\vH,L^2(C_b)),
\end{align*} in fact with a weakly continuous dependence of
$\lambda \in I_0$. 

We are lead  to introduce the  subspace $\vV\subseteq \vH^{-2}$ of  vectors 
$\psi$ as above, i.e.  vectors being  given as a finite
sum  of terms on  the form $Q^* \varphi$, where $Q\in \vQ$ and
$\varphi\in\vH$. Mimicking the proof of \eqref{eq:pointw1}   we
conclude the following extension of \eqref{eq:SE},
\begin{align}\label{eq:Finalgood02a}
\forall \psi\in \vV:\,\,\sum_{\lambda^\beta<
    \lambda_0}\,\norm{\Gamma^\pm_{\beta}(\lambda)\psi }^2\leq 
  \inp{\delta(H-\lambda)}_{f_1(H)\psi}\text{ for all }  \lambda
  \in I_0.
\end{align} We are lead  to define
\begin{align*}
  \Gamma^\pm(\lambda):\vV\to  \vG:=\Sigma^\oplus_{\lambda^\beta<
    \lambda_0}L^2(C_b);\quad\Gamma^\pm(\lambda)=\parb{\Gamma^\pm_{\beta}(\lambda)}_{\lambda^\beta
  <\lambda}.
\end{align*} By \eqref{eq:Finalgood02a} and the above computation, 
$\Gamma^\pm(\lambda)=\Gamma^\pm(\lambda)\tilde  f_2(H)$ are  well-defined linear maps.  Using the natural 
  Hilbert space structure on  $\vG$, 
\eqref{eq:Finalgood02a} may be phrased as
\begin{subequations}
\begin{align}\label{eq:Finalgood023}
\forall \psi\in \vV:\,\,\norm{\Gamma^\pm(\lambda)\psi }^2\leq 
  \inp{\delta(H-\lambda)}_{f_1(H)\psi}\text{ for all }  \lambda
  \in I_0.
\end{align}

The 
  expression $\inp{\delta(H-\lambda)}_{f_1(H)\psi}$  appearing to the right in
  \eqref{eq:Finalgood023}   is continuous in $\lambda \in I_0$,
  cf. Remark \ref{remark:breveBNDs} \ref{item:P2}. The
  quantities on the left-hand side might not have this property,
  however we can  record  that $\Gamma^\pm(\cdot)\psi $
  are  weakly continuous and note
the following version af \eqref{eq:parseval},
\begin{align}\label{eq:parseval3}
  \forall \psi\in \vV,\forall \text{ intervals }
  \Delta\subseteq I_0:\,\int_\Delta\,\norm{\Gamma^\pm(\lambda)\psi}^2\,\d \lambda=
  \int_\Delta\,\inp{\delta(H-\lambda)}_{f_1(H)\psi}\,\d \lambda. 
  \end{align}  
  \end{subequations}
 Since almost all energies in $ I_0$ are stationary scattering
energies the Parseval formulas \eqref{eq:parseval3} may be seen as a
consequence of the following extension of \eqref{eq:ScatEnergy}.

\begin{lemma}\label{lemma:pars-ident-constrSvat}
   Let  $\lambda\in I_0$ be
  any  stationary scattering energy. Then 
\begin{align}
 \forall \psi\in \vV:\,\,\norm{\Gamma^\pm(\lambda)\psi }^2=\sum_{\lambda^\beta<
    \lambda_0}\,\norm{\Gamma^\pm_{\beta}(\lambda)\psi }^2= 
  \inp{\delta(H-\lambda)}_{f_1(H)\psi}.  \label{eq:Finalgood2}
\end{align} 
\end{lemma}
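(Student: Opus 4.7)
\textbf{Proof plan for Lemma \ref{lemma:pars-ident-constrSvat}.}
Since $\leq$ holds by \eqref{eq:Finalgood023}, my plan is to prove the reverse inequality by a density argument that reduces to the defining property \eqref{eq:ScatEnergy} of a stationary scattering energy. The key observation is that every $Q\in\vQ$ comes with a definite finite order in the sense of \eqref{eq:defOrder}, so the adjoint $Q^\ast$ also has finite order and in particular maps $L^2_\infty$ into $L^2_\infty$. Consequently, given $\psi=\sum_{j}Q_j^\ast\varphi_j\in\vV$ with $\varphi_j\in\vH$, and picking (by density of $L^2_\infty$ in $\vH$) sequences $\varphi_j^{(n)}\in L^2_\infty$ with $\varphi_j^{(n)}\to\varphi_j$ in $\vH$, the approximants
\[
\psi_n:=\sum_{j}Q_j^\ast\varphi_j^{(n)}
\]
lie in $L^2_\infty$. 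The hypothesis that $\lambda$ is a stationary scattering energy then gives
\[
\|\Gamma^\pm(\lambda)\psi_n\|^2=\inp{\delta(H-\lambda)}_{\psi_n}=\inp{\delta(H-\lambda)}_{f_1(H)\psi_n},
\]
where the second equality uses that $f_1\equiv 1$ on a neighbourhood of $\bar I_0\ni\lambda$ (so $f_1(H)\delta(H-\lambda)=\delta(H-\lambda)$ and $f_1(H)$, of order $0$ by Lemma~\ref{lem:171113}, preserves $L^2_\infty$).

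It remains to pass to the limit $n\to\infty$ on both sides. For the right-hand side I would expand
\[
\inp{\delta(H-\lambda)}_{f_1(H)\psi_n}=\sum_{j,k}\inp[\big]{\varphi_j^{(n)},\,Q_jf_1(H)\delta(H-\lambda)f_1(H)Q_k^\ast\,\varphi_k^{(n)}},
\]
which is a finite bilinear form; by Remark \ref{remark:breveBNDs} \ref{item:P2} each sandwich $Q_jf_1(H)\delta(H-\lambda)f_1(H)Q_k^\ast$ is a bounded operator on $\vH$, so the convergence $\varphi_j^{(n)}\to\varphi_j$ in $\vH$ yields $\inp{\delta(H-\lambda)}_{f_1(H)\psi_n}\to\inp{\delta(H-\lambda)}_{f_1(H)\psi}$. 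For the left-hand side I would apply \eqref{eq:Finalgood023} to $\psi_n-\psi\in\vV$, obtaining
\[
\|\Gamma^\pm(\lambda)(\psi_n-\psi)\|_{\vG}^2\le\inp{\delta(H-\lambda)}_{f_1(H)(\psi_n-\psi)}\xrightarrow{n\to\infty}0,
\]
by the same bilinear-form argument applied to the difference. Hence $\Gamma^\pm(\lambda)\psi_n\to\Gamma^\pm(\lambda)\psi$ in $\vG$ and the norms converge.

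Combining the two convergences with the identity for $\psi_n$ yields $\|\Gamma^\pm(\lambda)\psi\|^2=\inp{\delta(H-\lambda)}_{f_1(H)\psi}$, which is \eqref{eq:Finalgood2}. The only nontrivial point is the uniform control required for the limit passage, and this is supplied by the Kato-smoothness bounds of Remark \ref{remark:breveBNDs} \ref{item:P2} together with the order-$t$ structure of each $Q\in\vQ$; I do not expect any serious obstacle beyond keeping track of these ingredients.
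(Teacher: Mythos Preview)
Your overall approach coincides with the paper's---approximate $\psi\in\vV$ by elements to which \eqref{eq:ScatEnergy} applies, then pass to the limit using \eqref{eq:Finalgood023} on the left and the boundedness of $Q_j f_1(H)\delta(H-\lambda)f_1(H)Q_k^*$ on the right---but your ``key observation'' contains a gap. You assert that every $Q\in\vQ$ has a definite finite order in the sense of \eqref{eq:defOrder}, so that $Q^*$ maps $L^2_\infty$ into itself. The paper, however, establishes only that $Q\tilde f_2(H)=\vO(r^{(\rho_1-\delta)/2})$ (Remark~\ref{remark:breveBNDs}~\ref{item:P0}); the bare $Q$ need not be of any finite order. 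For instance $Q_1(a,j)=\xi_j^+G_{d_j}$ is a first-order differential operator with no smoothing factor, and a vector in $L^2_\infty$ need not lie in $H^1$, so $Q_1^*$ applied to it need not land in $L^2_\infty$ (or even in $L^2$). Consequently your approximants $\psi_n$ are not known to belong to $L^2_\infty$, and \eqref{eq:ScatEnergy} does not apply to them directly.

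The repair is exactly what the paper does: apply \eqref{eq:ScatEnergy} to $\tilde f_2(H)\psi_n\in L^2_\infty$ instead (this membership follows from $\tilde f_2(H)Q_j^*=\vO(r^0)$ together with $\varphi_j^{(n)}\in L^2_\infty$). Then invoke the identity $\Gamma^\pm(\lambda)=\Gamma^\pm(\lambda)\tilde f_2(H)$, recorded just before \eqref{eq:Finalgood023}, to rewrite the left side as $\|\Gamma^\pm(\lambda)\psi_n\|^2$, and use that $\tilde f_2\equiv1$ near $\lambda\in I_0$ to rewrite the right side as $\inp{\delta(H-\lambda)}_{f_1(H)\psi_n}$. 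The paper's concrete choice of approximants is $\varphi_j^{(m)}=\chi_-(r/m)\varphi_j$ (compact support forces $\varphi_j^{(m)}\in L^2_\infty$), but your abstract density argument works equally well once this correction is made; your limit-passage steps are then exactly right.
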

\begin{proof}
  Fix an arbitrary $\psi=\Sigma_j Q^*_j \varphi_j\in \vV$,  let
  $\chi_m=\chi_-(r/m)$ for $m\in\N$ and then
 $\psi_m=\Sigma_jQ^*_j \chi_m\varphi_j$. Since
 $\chi_m\varphi_j\in  L^2_\infty$ and $\tilde  f_2(H) Q^*_j=\vO(r^0)$
 (cf. Remark \ref{remark:breveBNDs} \ref{item:P0}), also $\tilde  f_2(H) \psi_m\in
 L^2_\infty$, and   for such  vector  we know  the identities \eqref{eq:ScatEnergy}. By \eqref{eq:Finalgood023} the sequences
 $\Gamma^\pm(\lambda)\tilde  f_2(H) \psi_m =\Gamma^\pm(\lambda) \psi_m\to  \Gamma^\pm(\lambda)\psi$  in
 $\vG$. Whence also $\norm{\Gamma^\pm(\lambda)\tilde  f_2(H) \psi_m}^2 \to
 \norm{\Gamma^\pm(\lambda)\psi}^2$. Similarly $
 \inp{\delta(H-\lambda)}_{f_1(H)\psi_m}\to
 \inp{\delta(H-\lambda)}_{f_1(H)\psi}$. Since the identities in \eqref{eq:ScatEnergy}
 hold  for all elements of two sequences we conclude  the limiting identities,
 agreeing with \eqref{eq:Finalgood2}. 
\end{proof}

\begin{lemma}\label{lemma:exact-channel-wave} Let  $\lambda\in I_0$. Then the subspaces  
  \begin{align*}
    \vG^\pm_\infty=\set{{\Gamma^\pm(\lambda)\psi}=\parb{\Gamma^\pm_{\beta}(\lambda)\psi}_{{\lambda^\beta<
    \lambda_0}}\in
    \vG\,|\,\psi   \in L^2_\infty}\text{ are  dense in  }\vG.
  \end{align*} 
  \end{lemma}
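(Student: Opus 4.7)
The plan is to argue by duality. Fix $\lambda \in I_0$ and suppose $g=(g_\beta)_{\lambda^\beta<\lambda_0}\in\vG$ is orthogonal to $\vG^\pm_\infty$; I will show that $g=0$, which by the Hahn--Banach theorem gives density.

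First, using the already established boundedness $\Gamma^\pm_\beta(\lambda)\in\vL(L^2_s(\bX),L^2(C_b))$ for any $s>1/2$ together with the density of $L^2_\infty$ in $L^2_s$, the orthogonality $\inp{g,\Gamma^\pm(\lambda)\psi}_\vG=0$ for all $\psi\in L^2_\infty$ is equivalent to
\begin{align*}
u\ :=\ \sum_{\lambda^\beta<\lambda_0}\Gamma^\pm_\beta(\lambda)^*\,g_\beta\ =\ 0\quad \text{in } L^2_{-s}(\bX).
\end{align*}
Next I would exploit the explicit representation \eqref{eq:AppExa} to view $u$ as a generalized eigenfunction of $H$ at energy $\lambda$: each summand equals $\delta(H-\lambda)$ applied to a channel-$\beta$ quasi-mode of the form $T^\pm_\beta J_\beta \gamma_b^\pm(\lambda_\beta)^* g_\beta$, up to the harmless factor $f_1(H)\tfrac{\pi}{2\lambda_\beta}$. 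The minimum generalized eigenfunction characterization referenced as Theorems \ref{prop:besov-space-setting}--\ref{Cor:besov-space-setting} in Section \ref{sec:Exact channel wave-matrices} asserts that such an eigenfunction is uniquely determined by its asymptotic profile in each open channel: the $\beta$-component has internal factor $u^\beta$ along $\bX^b$ and oscillates with frequency $\sqrt{\lambda_\beta}$ along $\bX_b$. Consequently the channels are asymptotically mutually orthogonal, so $u=0$ forces $\Gamma^\pm_\beta(\lambda)^* g_\beta=0$ for each $\beta$, and then the one-body Sommerfeld uniqueness assertion (together with \eqref{eq:stone} and the one-body analogue of Lemma \ref{lemma:poiss-oper-geom}) gives injectivity of $\gamma_b^\pm(\lambda_\beta)^*$, yielding $g_\beta=0$.

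The main obstacle is making the channel-separation step rigorous at \emph{every} $\lambda\in I_0$ and not merely on the full-measure subset of stationary scattering energies where the Parseval identity \eqref{eq:Finalgood2} is available. At any stationary scattering energy the Parseval identity combined with asymptotic completeness \cite{De} directly yields the injectivity of $\Gamma^\pm(\lambda)^*:\vG\to L^2_{-s}(\bX)$, because the Parseval norm on $\vV$ furnishes the isometry needed for surjectivity of $\Gamma^\pm(\lambda)$ onto $\vG$. To transfer to an arbitrary $\lambda\in I_0$ I would choose a sequence $\lambda_n\to\lambda$ of stationary scattering energies (possible since $\vN_0$ is a null set) and combine the strong continuity of $\Gamma^\pm_\beta(\lambda)^*:L^2(C_b)\to L^2_{-s}(\bX)$ in $\lambda$ with the asymptotic robustness of the channel profiles in Theorems \ref{prop:besov-space-setting}--\ref{Cor:besov-space-setting} to conclude $g=0$.
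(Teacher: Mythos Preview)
Your argument is circular. You invoke Theorems~\ref{prop:besov-space-setting} and~\ref{Cor:besov-space-setting} to separate channels and obtain injectivity of $\Gamma^\pm(\lambda)^*$, but the proof of Theorem~\ref{prop:besov-space-setting} explicitly uses Lemma~\ref{lemma:exact-channel-wave} (it begins with ``The ranges of $\Gamma^\pm(\lambda)$ are dense due to Lemma~\ref{lemma:exact-channel-wave}''), and Definition~\ref{defn:U}, on which Theorem~\ref{thm:mainTHM} and hence Theorem~\ref{Cor:besov-space-setting} rest, also cites Lemma~\ref{lemma:exact-channel-wave}. So the duality route cannot be closed by appealing to those later results. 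Moreover, even at a stationary scattering energy the Parseval identity~\eqref{eq:Finalgood2} only tells you that $\Gamma^\pm(\lambda)$ is a partial isometry from $\vV$ (modulo its kernel) \emph{into} $\vG$; it does not by itself give surjectivity onto $\vG$, which is precisely the content of the lemma. Your proposed limiting argument from stationary scattering energies $\lambda_n\to\lambda$ does not help either: injectivity of a family of operators does not pass to strong limits, and in any case the injectivity at $\lambda_n$ was obtained circularly.

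The paper's proof avoids all of this by a direct constructive argument that works for \emph{every} $\lambda\in I_0$. It passes to the larger space $\vV\supseteq L^2_\infty$ (density of $\vG^\pm_\infty$ in $\vG^\pm_\vV$ follows from the approximation in the proof of Lemma~\ref{lemma:pars-ident-constrSvat}) and then, for each open channel $\alpha$, $k\in\N$ and $g_a\in L^2(C_a)$, takes the explicit vector
\[
\psi_\alpha=T^-_{\alpha,k}J_\alpha\gamma_a^-(\lambda_\alpha)^*1_{C_{a,k}}g_a\in\vV.
\]
Using the orthogonality relation~\eqref{eq:Wfinal2b} from Remark~\ref{remark:proof-eqrefeq:srep} one computes $\Gamma^-(\lambda)\psi_\alpha=\bigl(\delta_{\beta\alpha}\tfrac{-2\lambda_\alpha}{\pi}1_{C_{a,k}}g_a\bigr)_\beta$, i.e.\ a vector supported in the single channel $\alpha$. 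Varying $\alpha$, $k$ and $g_a$ these span a dense set in $\vG$. The key input is thus the already-established channel orthogonality~\eqref{eq:Wfinal2b}, not the later eigenfunction characterization.
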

  \begin{proof} It suffices to prove  density of the subspaces $\vG^\pm_\vV=\set{{\Gamma^\pm(\lambda)\psi}\in
    \vG\,|\,\psi   \in \vV}$, cf. the proof of Lemma \ref{lemma:pars-ident-constrSvat}.
    Let us only
    consider the `minus case'.  For any `open channel'  $\alpha$,  $k\in \N$ and $g_a\in L^2(C_a)$  we consider
    \begin{subequations}
    \begin{align}\label{eq:gPsi}
      \psi_{\alpha}=\psi_{\alpha,k,g_a}=T^- _{\alpha,k}
    J_\alpha{\gamma_a^- (\lambda_\alpha)}^*1_{C_{a,k}}(\hat \xi_a)g_a.
    \end{align}   
  We can write $ \psi_{\alpha}$ as a finite sum of vectors on
    the form $Q^*\varphi$ where $Q\in \vQ$ and $\varphi\in\vH$. Whence
    $\psi_{\alpha}\in \vV$ and 
  $\Gamma^-_{\beta}(\lambda)\psi_{\alpha}$ is a
    well-defined vector in $ L^2(C_b)$. We compute for any $l\in \N$
    the vector 
    $1_{C_{b,l}}(\hat \xi_b)\Gamma^-_{\beta}(\lambda)\psi_{\alpha}$
      using  Remark \ref{remark:proof-eqrefeq:srep}, concluding
 that $1_{C_{b,l}}(\hat \xi_b)\Gamma^-_{\beta}(\lambda)\psi_{\alpha}=0$ if $\beta\neq
\alpha$, while for $\beta=
\alpha$
\begin{align*}
  1_{C_{b,l}}(\hat \xi_b)\Gamma^-_{\beta}(\lambda)\psi_{\alpha}=1_{C_{a,l}}(\hat \xi_a)\Gamma^-_{\alpha,l}(\lambda)\psi_{\alpha}= -
\tfrac {2\lambda_\alpha}\pi 1_{C_{a,l}\cap {C_{a,k}}}g_a.
\end{align*} By taking $l\to \infty$ it follows that
\begin{align}\label{eq:fundCOMP}
  \vG^-_\vV\ni \Gamma^-(\lambda)\psi_{\alpha}=( \delta_{ \beta \alpha} 
\tfrac {-2\lambda_\alpha}\pi 1_{{C_{a,k}}}g_a)_{{\lambda^\beta<
    \lambda_0}}.
\end{align}\end{subequations}  By varying over $\alpha$ (with $\lambda^\alpha<\lambda_0$), $k\in \N$ and $g_a\in L^2(C_a)$ 
the
corresponding vectors to the right in \eqref{eq:fundCOMP} span  a
dense  subset  of $\vG$.
\end{proof}

\begin{defn}\label{defn:U} 
  Let  $\lambda\in I_0$ be
  any  stationary scattering energy. Then we let  $U(\lambda)\in \vL(\vG)$ be
  the uniquely defined unitary
  operator $U(\lambda)$ on $\vG$ such that 
  \begin{align*}
    U(\lambda){\Gamma^-(\lambda)\psi}={\Gamma^+(\lambda)\psi};
\quad \psi   \in \vV.
  \end{align*}  (Note that Lemma \ref{lemma:exact-channel-wave} is used.)
\end{defn}
\begin{defn}\label{defn:scattering matrix} 
  Let  $\lambda\in I_0$. The  scattering matrix  $S(\lambda)$ is  the unique bounded operator  on $\vG$  with each $\beta\alpha$-entry
  given by  $S_{\beta\alpha}(\lambda)$ as specified in Theorem
  \ref{thm:mainstat-modif-n}. (Note that $\norm {S(\lambda)} \leq 1$
  by  the orthogonality of channels and the weak continuity.)
\end{defn}

\begin{thm}\label{thm:mainTHM} Let $\lambda\in
  I_0$ be   any  stationary scattering energy. Then  $S(\lambda)$ is unitary. 
 Moreover 
    the $\vL(\vG)$-valued function $S(\cdot )$ is strongly
  continuous at $\lambda$, for 
  any $\psi \in \vV$ the $\vG$-valued functions 
  $\Gamma^\pm(\cdot)\psi=(\Gamma^\pm_\alpha(\cdot)\psi)_{{\lambda^\alpha<
    \lambda_0}}$ are  norm-continuous
  at $\lambda$, and
  \begin{align}\label{eq:scaSS}
    S(\lambda )\Gamma^-(\lambda)\psi=\Gamma^+(\lambda)\psi.
  \end{align}
\end{thm}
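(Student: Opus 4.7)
My approach is to identify the scattering matrix $S(\lambda)$ with the unitary $U(\lambda)$ of Definition~\ref{defn:U} by verifying the intertwining \eqref{eq:scaSS} on a dense generating family, and then to deduce strong and norm continuity from the weak continuity already at hand combined with convergence of norms. The proof splits into three essentially independent parts.

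First I would establish norm continuity of $\Gamma^\pm(\cdot)\psi$ at $\lambda$ for $\psi\in\vV$. Weak continuity at $\lambda$ is already known from the construction in Section~\ref{sec:Exact channel wave-matrices}, Lemma~\ref{lemma:pars-ident-constrSvat} gives the Parseval identity $\|\Gamma^\pm(\lambda)\psi\|^2=\inp{\delta(H-\lambda)}_{f_1(H)\psi}$ at the stationary energy $\lambda$, and \eqref{eq:Finalgood023} gives the inequality $\|\Gamma^\pm(\mu)\psi\|^2\le\inp{\delta(H-\mu)}_{f_1(H)\psi}$ for all $\mu\in I_0$. The right-hand side is continuous in $\mu$ (expand $\psi=\sum_j Q_j^*\varphi_j$ and apply Remark~\ref{remark:breveBNDs}\ref{item:P2} to each pair $(Q_j,Q_k)$). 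Weak lower semi-continuity of the norm combined with this inequality forces $\|\Gamma^\pm(\mu)\psi\|\to\|\Gamma^\pm(\lambda)\psi\|$ as $\mu\to\lambda$, and in a Hilbert space weak convergence plus convergence of norms yields strong convergence.

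Second, I would identify $S(\lambda)=U(\lambda)$ by verifying \eqref{eq:scaSS} on the generating vectors $\psi_{\alpha,k,g_a}$ from \eqref{eq:gPsi}. By the computation in the proof of Lemma~\ref{lemma:exact-channel-wave}, $\Gamma^-(\lambda)\psi_{\alpha,k,g_a}=\bigl(\delta_{\beta\alpha}\tfrac{-2\lambda_\alpha}{\pi}1_{C_{a,k}}g_a\bigr)_\beta$. On the other hand, combining \eqref{eq:AppExa} with the representation \eqref{eq:SreP} applied with the specific choices $T^+_{\beta,l}$ and $T^-_{\alpha,k}$, one obtains
\begin{align*}
1_{C_{b,l}}\Gamma^+_\beta(\lambda)\psi_{\alpha,k,g_a}=\tfrac{\pi}{2\lambda_\beta(2\pi\i)^2 f_2(\lambda)^2}\,1_{C_{b,l}}\widetilde S_{\beta\alpha}(\lambda)1_{C_{a,k}}g_a.
\end{align*}
Substituting the link \eqref{eq:locScata} between $\widetilde S_{\beta\alpha}$ and $S_{\beta\alpha}$, and using $m_{b,l}=1$ on $C_{b,l}$ together with $m_{a,k}=1$ on $-C_{a,k}=C_{a,k}$ (reflection symmetry), the constants collapse to
\begin{align*}
1_{C_{b,l}}\Gamma^+_\beta(\lambda)\psi_{\alpha,k,g_a}=-\tfrac{2\lambda_\alpha}{\pi}\,1_{C_{b,l}}S_{\beta\alpha}(\lambda)1_{C_{a,k}}g_a.
\end{align*}
Letting $l\to\infty$ gives $\Gamma^+_\beta(\lambda)\psi_{\alpha,k,g_a}=\bigl(S(\lambda)\Gamma^-(\lambda)\psi_{\alpha,k,g_a}\bigr)_\beta$, as required. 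Since the vectors $\Gamma^-(\lambda)\psi_{\alpha,k,g_a}$ span a dense subspace of $\vG$ (proof of Lemma~\ref{lemma:exact-channel-wave}) and both $S(\lambda)$ and $U(\lambda)$ are bounded, we conclude $S(\lambda)=U(\lambda)$; in particular $S(\lambda)$ is unitary, and \eqref{eq:scaSS} extends to all $\psi\in\vV$.

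Third, strong continuity of $S(\cdot)$ at $\lambda$ follows from weak continuity (Theorem~\ref{thm:mainstat-modif-n}) combined with $\|S(\mu)\phi\|\le\|\phi\|$ for every $\mu$ (contraction, Definition~\ref{defn:scattering matrix}) and $\|S(\lambda)\phi\|=\|\phi\|$ (unitarity). Weak lower semi-continuity of the norm gives $\|\phi\|=\|S(\lambda)\phi\|\le\liminf_{\mu\to\lambda}\|S(\mu)\phi\|$, hence $\|S(\mu)\phi\|\to\|\phi\|$, and weak convergence together with norm convergence yields strong convergence in $\vG$. The main obstacle is the bookkeeping in the second step: the precise matching of normalization constants ($\pi$, $\i$, factors of $\lambda_\alpha,\lambda_\beta,f_2$) arising from \eqref{eq:AppExa}, \eqref{eq:SreP}, and \eqref{eq:locScata} must cancel exactly to yield the asserted identity. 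This is mechanical but delicate.
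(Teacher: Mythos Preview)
Your proposal is correct and follows essentially the same approach as the paper. The paper's proof is terser: it simply asserts ``It suffices to check that $S_{\beta\alpha}(\lambda)=U_{\beta\alpha}(\lambda)$'' and then performs exactly your Part~2 computation (tracking the constants through \eqref{eq:gPsi}, \eqref{eq:fundCOMP}, \eqref{eq:AppExa}, \eqref{eq:SreP}, \eqref{eq:locScata}), leaving your Parts~1 and~3 --- the norm-continuity arguments via weak continuity plus convergence of norms --- implicit as standard consequences of the identification together with Lemma~\ref{lemma:pars-ident-constrSvat} and \eqref{eq:Finalgood023}.
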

\begin{proof}
  It suffices to check that
  $S_{\beta\alpha}(\lambda)=U_{\beta\alpha}(\lambda)$. By 
\eqref{eq:gPsi},  \eqref{eq:fundCOMP},
\eqref{eq:AppExa}, \eqref{eq:SreP}  and \eqref{eq:locScata}
\begin{align*}
  &1_{C_{b,l}}(\hat \xi_b)U_{\beta\alpha}(\lambda)1_{{C_{a,k}}}g_a\\&=-\tfrac {\pi}{2\lambda_\alpha}\Gamma^+_{\beta,l}(\lambda)T^- _{\alpha,k}
    J_\alpha{\gamma_a^- (\lambda_\alpha)}^*1_{C_{a,k}}(\hat\xi_a)g_a\\& 
=-\tfrac {\pi}{2\lambda_\alpha}\tfrac \pi {2\lambda_\beta} 1_{C_{b,l}}(\hat \xi_b){\gamma_b^+(\lambda_\beta)}J^*_\beta
  \parb{T^+_{\beta,l}}^* \delta(H-\lambda)T^- _{\alpha,k}
    J_\alpha{\gamma_a^-
                                                                        (\lambda_\alpha)}^*1_{C_{a,k}}(\hat\xi_a)g_a\\&
=\tfrac{(2\pi\i)^2}{16\lambda_\beta \lambda_\alpha} 1_{C_{b,l}}(\hat \xi_b){\gamma_b^+(\lambda_\beta)}J^*_\beta
  \parb{T^+_{\beta,l}}^* \delta(H-\lambda)T^- _{\alpha,k}
    J_\alpha{\gamma_a^-
                                                                        (\lambda_\alpha)}^*1_{C_{a,k}}(\hat\xi_a)g_a\\&
=\parb{16\lambda_\beta \lambda_\alpha}^{-1} 1_{C_{b,l}}(\hat \xi_b) \widetilde
  S_{\beta\alpha}(\lambda)1_{C_{a,k}}(\hat\xi_a)g_a\\& = 1_{C_{b,l}}(\hat \xi_b) S_{\beta\alpha}(\lambda)1_{C_{a,k}}(\hat\xi_a)g_a.
 \end{align*} We take $k,l\to \infty$ and conclude that indeed
 $U_{\beta\alpha}(\lambda)g_a=S_{\beta\alpha}(\lambda)g_a$ for any  $g_a\in L^2(C_a)$.
\end{proof}

The free energy $K:=\diag(k_\alpha)$  is diagonalised on the neighbourhood 
$I_0\ni \lambda_0$ by 
\begin{align*}
  F_{I_0}: \sum_{\lambda^\alpha < \lambda_0} &\oplus 1_{I_0}(k_\alpha) L^2(\bX_a)\to  \sum_{\lambda^\alpha < \lambda_0} \oplus L^2(I_0,L^2(C_a))=\int_{I_0}^\oplus \vG \,\d \lambda;\\&F_{I_0}=\diag(F_\alpha).
\end{align*} Recall from Theorem \ref{thm:mainstat-modif-n} the
notation $S_{\beta\alpha}=\parb{W_{\beta}^+}^*W_{\alpha}^-$.
\begin{corollary}\label{cor:direct} The scattering matrix  considered
  as a map 
\begin{align*}
I_0\ni \lambda\to S(\lambda)\in\vL \paro{\vG}\text{ is 
 weakly continuous.}
\end{align*}   Within the class of such maps it is uniquely determined by the identity
  \begin{align}
  F_{I_0}SF^{-1}_{I_0}&=F_{I_0}\paro{S_{\beta\alpha}}_{\lambda^\beta,\,\lambda^\alpha < \lambda_0}\,F^{-1}_{I_0}=\int^\oplus_{I_0}
  S(\lambda)\,\d \lambda.
\end{align} For almost all energies in $I_0$ the scattering matrix
is unitary and strongly continuous.
\end{corollary}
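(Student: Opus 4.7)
The plan is to assemble the three assertions --- weak continuity on $I_0$, uniqueness within weakly continuous representations, and almost-everywhere unitarity plus strong continuity --- from results already established, without substantial new analysis. The core inputs are Theorem~\ref{thm:mainstat-modif-n} (entry-wise weak continuity together with the entry-wise direct integral representation), the material of Subsection~\ref{subsec:Parseval identity} identifying almost every energy as a stationary scattering energy, and Theorem~\ref{thm:mainTHM}.

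For the weak continuity of $S(\cdot)\in\vL(\vG)$ on $I_0$, I would first verify that the index set $\{\alpha:\lambda^\alpha<\lambda_0\}$ is finite, so that $\vG$ is a finite orthogonal sum of spaces $L^2(C_b)$. This uses that $\lambda_0\in\vE$ stays at positive distance from $\vT(H)$ on $\overline{I_0}$, and that the non-threshold eigenvalues of each sub-Hamiltonian $H^a$ can only accumulate at thresholds in $\vT(H^a)\subseteq\vT(H)$ from below, hence $\sigma_{\pp}(H^a)\cap(-\infty,\lambda_0)$ is finite; finiteness of $\vA$ completes the reduction. On a finite block matrix of operators, weak operator continuity is equivalent to entry-wise weak continuity, which is precisely the content of Theorem~\ref{thm:mainstat-modif-n} applied to each pair $(\beta,\alpha)$ with $\lambda^\beta,\lambda^\alpha<\lambda_0$ on $I_0\subseteq\vE\subseteq\vE_{\beta\alpha}$.

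The direct integral identity reduces entry-wise to the representation of Theorem~\ref{thm:mainstat-modif-n} restricted to $I_0$ and then reassembled via $F_{I_0}=\diag(F_\alpha)$. Uniqueness within the class of weakly continuous maps is a standard argument: if $\widetilde{S}(\cdot)$ is another such map with the same direct integral, then $\widetilde{S}(\lambda)=S(\lambda)$ almost everywhere on $I_0$, and joint weak continuity of $\widetilde S$ and $S$ forces pointwise agreement on all of $I_0$.

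For the remaining statement, the development of Subsection~\ref{subsec:Parseval identity} culminating in \eqref{eq:Finalgood} produces a null set $\vN_0\subset I_0$ such that every $\lambda\in I_0\setminus\vN_0$ satisfies the Parseval identity \eqref{eq:ScatEnergy}, i.e., is a stationary scattering energy in the sense of Definition~\ref{defn:scatEnergy}. At every such $\lambda$, Theorem~\ref{thm:mainTHM} directly yields both unitarity of $S(\lambda)$ and strong continuity of $S(\cdot)$ at $\lambda$. I do not expect any genuine obstacle; the only point requiring care is the finiteness observation needed to lift entry-wise weak continuity to operator-level weak continuity on $\vG$.
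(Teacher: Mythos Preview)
Your argument for the uniqueness and the almost-everywhere assertions is correct, but there is a genuine gap in the weak continuity part. The finiteness claim for $\{\alpha:\lambda^\alpha<\lambda_0\}$ is false in general: eigenvalues of a sub-Hamiltonian $H^a$ may accumulate from below at a threshold of $H^a$ that lies strictly below $\lambda_0$, producing infinitely many open channels. The atomic Coulomb model of the introduction is a case in point --- the two-body sub-Hamiltonians have infinitely many negative eigenvalues accumulating at $0$, and for any $\lambda_0>0$ these are all open. Your inference ``accumulation only at thresholds from below $\Rightarrow$ finitely many eigenvalues below $\lambda_0$'' fails precisely when such a threshold sits below $\lambda_0$. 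Without finiteness, entry-wise weak continuity alone does not lift to weak continuity of $S(\cdot)$ in $\vL(\vG)$.

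The mechanism the paper uses instead is the uniform bound $\|S(\lambda)\|\le 1$ recorded parenthetically in Definition~\ref{defn:scattering matrix}. By orthogonality of channels the full scattering operator is a contraction, so for any \emph{finite} set $F$ of open channels the truncated block matrix $\big(S_{\beta\alpha}(\lambda)\big)_{\beta,\alpha\in F}$ has norm $\le 1$ for a.e.\ $\lambda$; entry-wise weak continuity (Theorem~\ref{thm:mainstat-modif-n}) then propagates this bound to every $\lambda\in I_0$, uniformly in $F$, whence $\|S(\lambda)\|\le 1$ on all of $I_0$. With the uniform bound in hand, weak continuity in $\vL(\vG)$ follows from entry-wise weak continuity by approximating elements of $\vG$ by finitely supported vectors.
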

\begin{remark*}
 It is stated in  \cite [Theorem 6.7]{Ya1} that for short-range
systems the  scattering
matrix is strongly continuous away from thresholds and eigenvalues. The proof
appears  incorrect  to the author (however the   assertion  of weak continuity
is correctly derived).    
\end{remark*}

 \subsection{Besov space setting and  minimum  generalized eigenfunctions}\label{subsubsec:Besov space setting at stationary scattering energies}
 In Lemma \ref{lemma:pars-ident-constrSvat} we considered $\Gamma^\pm(\lambda)=\parb{\Gamma^\pm_{\beta}(\lambda)}_{\lambda^\beta
  <\lambda}$  as mappings 
 $\Gamma^\pm(\lambda):\vV\to \vG$; 
 $\lambda\in I_0$. The extended space $\vV\supseteq
 L^2_\infty$ was convenient  for our constructions, however there is
 a different extension which  appears 'more natural',  namely the Besov
 space $\vB$ in  which 
 $L^2_\infty$ is densely embedded. Using 
 \eqref{eq:BB^*a}, \eqref{eq:Finalgood0} and extension by continuity it follows that
 $\Gamma^\pm(\lambda)\in \vL(\vB,\vG)$ with a uniform bound in
 $\lambda\in I_0$. 
Clearly $\Gamma^\pm(\cdot)\in\vL(\vB,\vG)$ are
 strongly continuous at  any stationary scattering energy $\lambda\in I_0$ in
 which case \eqref{eq:scaSS} also holds for $\psi\in \vB$. Due to
  Theorem  \ref{prop:besov-space-setting} (stated below) the vector
  $\Gamma^-(\lambda)\psi$ with 
 $\psi\in \vB$  (appearing 
to
 the left in  \eqref{eq:scaSS}) in this case is  the generic form of vectors in $\vG$.

Introducing the space of  minimum  generalized eigenfunctions,
\begin{align*}
  \vE_\lambda=\set{\phi\in \vB^*|\, (H-\lambda)\phi=0},
\end{align*}
 we note  that $\delta(H-\lambda)\psi\in\vE_\lambda$ for
$\psi\in \vB$ (or possibly for $\psi\in f_1(H)\vV$). However if $\lambda$ is a
stationary scattering energy, in fact  any
$\phi\in\vE_\lambda$ has this form. This is a 
 consequence
of  the following more general 
result. We remark that  there are   analogous assertions   for various one-body type Hamiltonians in the literature, see for
example \cite{IS2}.  
\begin{thm}\label{prop:besov-space-setting} Let $\lambda\in
  I_0$ be   any  stationary scattering energy.
  Then the  operators $\Gamma^\pm(\lambda): \vB\to \vG$,
  $\Gamma^\pm(\lambda)^*: \vG\to \vE_\lambda$   and $\delta(H-\lambda)
  : \vB \to \vE_\lambda$  all map 
 onto. Moreover the wave matrices
  $\Gamma^\pm(\lambda)^*: \vG\to \vE_\lambda(\subseteq \vB^*)$ are bi-continuous linear isomorphisms.
\end{thm}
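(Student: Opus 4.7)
The plan combines the Parseval identity at the stationary scattering energy, the explicit factorization of $\Gamma^\pm(\lambda)^*$ through $\delta(H-\lambda)$, and the Sommerfeld uniqueness principle from \cite{AIIS}.

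First, I would extend the Parseval identity \eqref{eq:Finalgood} from $L^2_\infty$ to $\vB$. Density of $L^2_\infty$ in $\vB$ follows by compact-support truncation $\chi_-(x/\rho)\psi \to \psi$ in the $\vB$-norm, exploiting that the dyadic $\vB$-tails vanish. Both sides of the identity extend continuously: the left via the uniform bound $\Gamma^\pm(\lambda) \in \vL(\vB,\vG)$ (deduced from Lemma \ref{lemma:strongCont} combined with \eqref{eq:BB^*a}), the right via the Besov space bound \eqref{eq:BB^*a} on $\delta(H-\lambda)$. Polarization then yields the operator identity
\[
\Gamma^\pm(\lambda)^*\Gamma^\pm(\lambda) = \delta(H-\lambda) \quad \text{in }\vL(\vB,\vB^*).
\]

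Second, I would observe that $\delta(H-\lambda)[\vB]\subseteq\vE_\lambda$ (since $(H-\lambda)R(\lambda\pm\i 0)=I$ on $\vB$) and $\Gamma^\pm(\lambda)^*[\vG]\subseteq\vE_\lambda$ (since each $\Gamma^\pm_{\beta,k}(\lambda)^*$ in \eqref{eq:AppExa} begins with the factor $f_1(H)\delta(H-\lambda)$, so $(H-\lambda)\Gamma^\pm_{\beta,k}(\lambda)^*=0$; this property passes to the limit via \eqref{eq:GAM}).

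Third, for the bijectivity of $\Gamma^\pm(\lambda)^*:\vG\to\vE_\lambda$: injectivity follows from density of $\Gamma^\pm(\lambda)[\vB]$ in $\vG$, which in turn follows from Lemma \ref{lemma:exact-channel-wave} together with $L^2_\infty\subseteq\vB$. For surjectivity, given $\phi\in\vE_\lambda$, I would construct $v=(v_\beta)_\beta\in\vG$ from $\phi$ componentwise via channel radial limits—extending the construction of Subsection \ref{Radial limits} from the decay Condition \ref{cond:decayBNstates} to general $\phi$ by exploiting the phase-space localization of the stationary channel modifiers $\Phi^\pm_\alpha$ and the Kato-smooth $Q$-operators from Lemma \ref{lemma:Hbnds}. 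The Sommerfeld uniqueness result (cf.\ \cite[Corollary 1.10]{AIIS} and Lemma \ref{lemma:Sommerfeld}\ref{item:1s}) applied to $\phi-\Gamma^\pm(\lambda)^*v\in\vE_\lambda$, which now has vanishing asymptotics in every channel, forces $\phi=\Gamma^\pm(\lambda)^*v$. Bi-continuity then follows from the open mapping theorem applied to the Banach spaces $\vG$ and $\vE_\lambda$ (the latter endowed with the $\vB^*$-norm). Finally, the operator identity from the first step, combined with bijectivity of $\Gamma^\pm(\lambda)^*$, delivers surjectivity of $\Gamma^\pm(\lambda):\vB\to\vG$ and of $\delta(H-\lambda):\vB\to\vE_\lambda$: any $w\in\vG$ equals $\Gamma^\pm(\lambda)^{-*}\Gamma^\pm(\lambda)^*w=\Gamma^\pm(\lambda)^{-*}\phi$ for some $\phi\in\vE_\lambda$, and hence solving $\delta(H-\lambda)\psi=\phi$ for $\psi\in\vB$ (which is the content of the full surjectivity statement for $\delta(H-\lambda)$) is equivalent to solving $\Gamma^\pm(\lambda)\psi=\Gamma^\pm(\lambda)^{-*}\phi$, reducing everything to the first claim.

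The main obstacle is the intrinsic radial-limit construction for $\phi\in\vE_\lambda$ without the channel bound-state decay assumption: the construction in Subsection \ref{Radial limits} genuinely uses Condition \ref{cond:decayBNstates}. Overcoming this requires replacing the naive weak radial limit of Corollary \ref{cor:radi-limits-chann} by one defined through the stationary channel modifiers $\Phi^\pm_\alpha$ and the Kato smoothness bounds on the $Q$-operators established in Sections \ref{sec:Formula for widetilde} and \ref{subsec:Trace estimates}, so as to extract channel-by-channel asymptotic coefficients of $\phi$ directly from the action of $\Gamma^\pm(\lambda)$ on test states in $\vB$, rather than from decay of the channel bound states themselves.
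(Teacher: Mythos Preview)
Your overall architecture matches the paper's, but the surjectivity step for $\Gamma^\pm(\lambda)^*$ contains a genuine gap, and the paper closes it by a quite different mechanism.

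You propose to extract, channel by channel, asymptotic coefficients $v_\beta$ of an arbitrary $\phi\in\vE_\lambda$ and then apply Sommerfeld uniqueness to $\phi-\Gamma^\pm(\lambda)^*v$. There are two problems. First, you correctly flag that the radial-limit construction of Subsection~\ref{Radial limits} needs Condition~\ref{cond:decayBNstates}, and you only gesture at a replacement via the $Q$-operators; this is not carried out. Second, and more seriously, even granting such coefficients, the Sommerfeld uniqueness available here is the \emph{global} statement \cite[Corollary~1.10]{AIIS}: it requires $\chi_-(\pm B/\epsilon)\phi\in\vB_0^*$, not ``vanishing asymptotics in every channel''. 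Passing from the latter to the former is essentially the content of Lemma~\ref{lemma:besov-space-setting} and Theorem~\ref{Cor:besov-space-setting}, both of which come \emph{after} the present theorem and rely on it. So your argument is circular at this point.

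The paper avoids channel-by-channel asymptotics entirely. Given $\phi\in\vE_\lambda$, it decomposes $\phi=\chi_+(B/\epsilon)\phi+\chi_-(B/\epsilon)\phi=:\phi_++\phi_-$ using the \emph{global} Graf-type operator $B$, then manufactures the preimage indirectly: with $\chi_\rho=\chi_-(r/\rho)$ and $h(e)=e\tilde f_1(e)$, set $g_\rho=\Gamma^+(\lambda)\chi_\rho(h(H)-\lambda)\phi_+$. Since $(h(H)-\lambda)\phi=0$, only the commutator $[h(H),\chi_\rho]$ contributes, giving $\sup_\rho\|g_\rho\|<\infty$; a weak-limit $g_\infty\in\vG$ exists along a subsequence. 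One then shows $\phi=\Gamma^+(\lambda)^*g_\infty$ by combining the identity $\Gamma^+(\lambda)^*\Gamma^+(\lambda)=\delta(H-\lambda)$ with the microlocal bound \eqref{eq:MicroB} and \eqref{eq:firstHB}, mimicking \cite[Lemma~3.12]{IS2}. This delivers surjectivity of $\Gamma^\pm(\lambda)^*$ directly in the $B$-calculus, with no reference to individual channels.

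A secondary issue: your final paragraph deducing surjectivity of $\Gamma^\pm(\lambda)$ and $\delta(H-\lambda)$ is circular---you reduce surjectivity of $\Gamma^\pm(\lambda)$ to that of $\delta(H-\lambda)$ and vice versa without closing the loop. The paper instead invokes the closed range theorem: since $\Gamma^\pm(\lambda)^*$ has closed range $\vE_\lambda$, so does $\Gamma^\pm(\lambda)$; combined with density of the range (Lemma~\ref{lemma:exact-channel-wave}) this gives surjectivity onto $\vG$, and then $\delta(H-\lambda)=\Gamma^\pm(\lambda)^*\Gamma^\pm(\lambda)$ is onto $\vE_\lambda$.
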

\begin{proof} The ranges of $\Gamma^\pm(\lambda)$
  are dense due to Lemma
  \ref{lemma:exact-channel-wave}, and    clearly $\vE_\lambda$ is closed in
  $\vB^*$. 
  Thanks to   the open
  mapping and closed range theorems  (see for example see \cite{Yo})
  it then suffices 
  to show that  $\Gamma^\pm(\lambda)^*$ map onto $\vE_\lambda$.  (Here
  and below we use the identities 
$\Gamma^\pm(\lambda)^*\Gamma^\pm(\lambda)=\delta(H-\lambda)$ as
quadratic forms on $\vB$.)

For convenience we only show that  $\Gamma^+(\lambda)^*$ maps
 onto
$\vE_\lambda$.  Let $B$ is given by \eqref{eq:B} (possibly rescaled),
and let  $h$ be the   function $h(e)=e\tilde f_1(e)$  ($h(e)$ is
 denoted by $g(\lambda)$  in
Section \ref{sec:Computation of T}) and note that  $h(H)-\lambda$ annihilates the spaces $\ran \Gamma^+(\lambda)^*\subseteq\vE_\lambda$. Due to \cite[Corollary
    1.10]{AIIS}  there exists  $\epsilon>0$ such that 
    \begin{align}\label{eq:MicroB}
      \chi_-(\pm B/\epsilon) R(\lambda\pm\i0)\psi\in \vB_0^*
      \text{ for any }\psi\in \vB.
    \end{align}  (The fact  that $\chi_-(\pm B/\epsilon)=\vO(r^0)$ implies that
    $\chi_-(\pm B/\epsilon)\in \vL(\vB)$, cf.  \cite[Theorem 14.1.4]{H1}.)
   For  any given $\phi\in \vE_\lambda$ we introduce for $\epsilon>0$
   taken small
   enough
   \begin{align*}
     \phi_+=\chi_+(
   B/\epsilon)\phi,\quad \phi_-=\phi-\phi_+, \quad g_\rho=\Gamma^+(\lambda)\chi_\rho \parb{h(H)-\lambda}\phi_+;
   \end{align*}
 here 
 $\chi_\rho=\chi_-(r/\rho)$, $\rho>1$.  By commuting the factor $h(H)-\lambda$ through the factor $\chi_\rho$
we deduce that $\sup_{\rho>1}\,\norm
{g_\rho}<\infty$ (note that only the  commutator contributes).
  Let $g_\infty=\wlim_{n\to \infty}g_{\rho_n}\in \vG$, the limit
taken  along a  suitable
sequence $\rho_n\to \infty$, and let $\breve
\phi=\Gamma^+(\lambda)^*g_\infty$. We can  now mimic the proof
of \cite[Lemma 3.12]{IS2},  using the identity 
$\Gamma^+(\lambda)^*\Gamma^+(\lambda)=\delta(H-\lambda)$, the
decomposition $\phi=\phi_++\phi_-$  and
the bounds \eqref{eq:MicroB} in combination with \eqref{eq:firstHB},  and conclude that $\phi=\breve
\phi$. In particular $\phi\in \ran \Gamma^+(\lambda)^*$ as wanted.
\end{proof}

The following functions $w^{\pm}_{\lambda,\beta} [g]$ are identical
with the functions $v^{\pm}_{\lambda,\beta} [g]$ of Subsection
\ref{Radial limits} up to  trivial factors.

\begin{lemma}\label{lemma:besov-space-setting} Let $\beta=(b,\lambda^\beta, u^\beta)$ be any
  channel with $\lambda^\beta < \lambda_0$. For $\epsilon>0$ taken
  small enough the following bounds hold for any $\lambda\in
  I_0$ and any $g\in L^2(C_b)$.
  \begin{align*}
    \chi^2_-(\mp B/\epsilon)\Gamma^\pm_{\beta}(\lambda)^*g-J_\beta
                            w^{\pm}_{\lambda,\beta} [g]\in \vB_0^*,
\end{align*} where
\begin{align*} 
  w^{\pm}_{\lambda,\beta} [g]( x_b)&= d_\beta^\pm \lambda_\beta^{-1/4} \chi_+(\abs{x_b}) \abs{x_b}^{(1-n_b)/2}
                                      \e^{\pm \i
                                        K_b(x_b,\lambda_\beta)}g(\hat x_b);\\&\quad\quad\quad\quad d_\beta^\pm=\mp\tfrac \i 2 \e^{\mp
  \i(n_b-3)/4}\pi^{-1/2}.
 \end{align*}
  \end{lemma}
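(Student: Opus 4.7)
The plan is to identify the outgoing asymptotic of $\Gamma^+_\beta(\lambda)^*g$ (we treat the plus case; the minus case is analogous) by combining the explicit formula \eqref{eq:AppExa} with the microlocal radiation bound \eqref{eq:MicroB} from \cite[Corollary 1.10]{AIIS}. First, by closedness of $\vB_0^*$ in $\vB^*$ and uniform boundedness of both sides in $g\in L^2(C_b)$, it suffices to take $g=1_{C_{b,k}}g$ for some $k$, so that $\Gamma^+_\beta(\lambda)^*g=\Gamma^+_{\beta,k}(\lambda)^*g$ by \eqref{eq:GAM}. Writing $\delta(H-\lambda)=(2\pi\i)^{-1}\bigl(R(\lambda+\i 0)-R(\lambda-\i 0)\bigr)$ in \eqref{eq:AppExa} gives
\begin{align*}
\Gamma^+_{\beta,k}(\lambda)^*g=\tfrac{1}{4\i\lambda_\beta}f_1(H)\bigl(R(\lambda+\i 0)-R(\lambda-\i 0)\bigr)h, \quad h:=T^+_{\beta,k}J_\beta\gamma^+_b(\lambda_\beta)^*1_{C_{b,k}}g.
\end{align*}
The crucial membership $h\in\vB$ follows from the $\sum Q^+(b,l)^*B_{+,l}$-decomposition of $T^+_{\beta,k}$ (cf.\ Subsection \ref{subsec:Conclusion of argument, the weak continuity}) combined with Lemma \ref{lemma:strongCont}, which bounds $Q^+(b,l)f_1(\brH_b)f_1(\brh_\beta)J_\beta\gamma^+_b(\lambda_\beta)^*$ in $\vL(L^2(C_b),\vH)$.

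Second, applying $\chi^2_-(-B/\epsilon)$ and \eqref{eq:MicroB}, the $R(\lambda-\i 0)$-contribution lies in $\vB_0^*$, using that $\chi_-(-B/\epsilon)=\vO(r^0)$ preserves $\vB_0^*$ by a Helffer--Sj\"ostrand argument based on Lemmas \ref{lem:171113b} and \ref{lem:fHB}. Hence modulo $\vB_0^*$,
\begin{align*}
\chi^2_-(-B/\epsilon)\Gamma^+_{\beta,k}(\lambda)^*g\equiv\tfrac{1}{4\i\lambda_\beta}\chi^2_-(-B/\epsilon)f_1(H)R(\lambda+\i 0)h.
\end{align*}

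Third, introduce the stationary channel quasi-mode $\psi_k:=\Phi^+_\beta J_\beta\gamma^+_b(\lambda_\beta)^*1_{C_{b,k}}g\in\vB^*$. Using the identity $\brH_b J_\beta\gamma^+_b(\lambda_\beta)^*=\lambda J_\beta\gamma^+_b(\lambda_\beta)^*$ and the definition $T^+_\beta=\i(H\Phi^+_\beta-\Phi^+_\beta\brH_b)$, one obtains $(H-\lambda)\psi_k=-\i h$, so that $R(\lambda+\i 0)h=\i(\psi_k-\check\phi_k)$ with $\check\phi_k:=\psi_k-R(\lambda+\i 0)(H-\lambda)\psi_k\in\vE_\lambda$. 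Consequently it remains to show the combined identification
\begin{align*}
\tfrac{1}{4\lambda_\beta}\chi^2_-(-B/\epsilon)f_1(H)(\psi_k-\check\phi_k)\equiv J_\beta w^+_{\lambda,\beta}[g] \pmod{\vB_0^*}.
\end{align*}

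The main obstacle is this final matching of outgoing profiles. It is handled by unpacking $\Phi^+_\beta=f_1(H)M_bN^b_+M_bf_1(\brh_\beta)f_1(\brH_b)$ in the outgoing microlocal regime: on the energy shell and at $B>0$, the factors $f_1(\cdot)$ act as the identity to leading order, $N^b_+\equiv I$ modulo $\vB_0^*$ (cf.\ the cancellation argument in \eqref{eq:aligeen}), and $M_b$ contributes via \eqref{eq:conjF} the multiplicative factor $(2\sqrt{\lambda_\beta}m_b(\hat\xi_b))^2=4\lambda_\beta$ once $m_b$ is chosen to equal $1$ on $C_{b,k}\subset C_b'$, as allowed by the construction of Section \ref{sec:Yafaev's construction}. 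A one-body radial-limit computation, parallel to the $v^\pm\!\to\!w^\pm$ normalization of Subsection \ref{subsec: -body effective potential and a $1$-body radial limit}, then identifies the outgoing profile of $\gamma^+_b(\lambda_\beta)^*g$ as $w^+_{\lambda,\beta}[g]$ with precisely the constant $d^+_\beta$ stated, so that the $\psi_k$-contribution gives $4\lambda_\beta J_\beta w^+_{\lambda,\beta}[g]$. The $\check\phi_k$-contribution, being an exact eigenfunction built by subtracting an outgoing-resolvent correction from the outgoing quasi-mode $\psi_k$, is handled by a Sommerfeld-type argument in the spirit of Lemmas \ref{lemma:Sommerfeld}--\ref{lemma:poiss-oper-geom} (used here without the decay condition thanks to the modifier $\Phi^+_\beta$), whose outgoing part cancels to close the identification.
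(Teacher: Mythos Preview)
The central gap is your claim that $h:=T^+_{\beta,k}J_\beta\gamma^+_b(\lambda_\beta)^*1_{C_{b,k}}g\in\vB$. From the expansion $T^+_{\beta,k}=\sum_l f_1(H)Q^+(b,l)^*B_{+,l}Q^+(b,l)f_1(\brH_b)f_1(\brh_\beta)$ and Lemma~\ref{lemma:strongCont} you obtain $\varphi_l:=B_{+,l}Q^+(b,l)f_1(\brH_b)f_1(\brh_\beta)J_\beta\gamma^+_b(\lambda_\beta)^*1_{C_{b,k}}g\in\vH$, and then $h=\sum_l f_1(H)Q^+(b,l)^*\varphi_l$. But the operators $f_1(H)Q^+(b,l)^*$ are in general only of order $(\rho_1-\delta)/2$ in the sense of \eqref{eq:1712022} (see Remark~\ref{remark:breveBNDs}~\ref{item:P0} and the discussion in Appendix~\ref{sec:AppendixB}), so $h\in L^2_s$ with $s=(\delta-\rho_1)/2$. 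By \eqref{eq:parameters} one has $1/2-\mu/2<s<1/2$, and $L^2_{1/2}\not\subseteq\vB$ by \eqref{eq:nest}. Consequently \eqref{eq:MicroB} does not apply directly to $R(\lambda-\i0)h$, the boundary value $R(\lambda+\i0)h$ is not defined via the standard limiting absorption principle, and your generalized eigenfunction $\check\phi_k$ is not well-defined as stated.

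The paper fixes this without asserting $h\in\vB$. It first invokes \eqref{eq:AltForm} to replace $\pm2\pi\delta(H-\lambda)T^\pm_{\beta,k}$ by $\Phi^\pm_{\beta,k}+\i R(\lambda\mp\i0)T^\pm_{\beta,k}$ (so the $R(\lambda+\i0)$ piece and your $\check\phi_k$ never appear), and then uses the Kato-type resolvent bound \eqref{eq:kato10} to see that $S^\mp:=f_1(H)R(\lambda\mp\i0)T^\pm_{\beta,k}J_\beta\gamma_b^\pm(\lambda_\beta)^*1_{C_{b,k}}\in\vL(L^2(C_b),\vB^*)$. The microlocal statement $\chi^2_-(\mp B/\epsilon)S^\mp g\in\vB_0^*$ is then obtained by an approximation: insert spatial cutoffs $\chi_\rho=\chi_-(r/\rho)$ after each $Q^+(b,l)^*$ to form $S^\mp_\rho$, for which the argument of the relevant resolvent does lie in $\vB$ and \eqref{eq:MicroB} applies, and pass to the limit using $S^\mp_\rho\to S^\mp$ strongly in $\vL(L^2(C_b),\vB^*)$ and closedness of $\vB_0^*$ in $\vB^*$. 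The remaining $\Phi^\pm_{\beta,k}$-piece is identified with $J_\beta w^\pm_{\lambda,\beta}[g]$ by the explicit computations of Steps~IV and~V in the proof of Lemma~\ref{lemma:strongCont}; your final paragraph points in this direction but, since it still rests on the undefined $\check\phi_k$, the Sommerfeld-type cancellation you invoke cannot be carried out as written.
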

  \begin{proof} 
 By \eqref{eq:AppExa} and its `minus-analogue'
\begin{align}\label{eq:AppExa2}
  \Gamma^\pm_{\beta,k}(\lambda)^*=\pm \tfrac \pi {2\lambda_\beta} f_1(H){\delta(H-\lambda) T^\pm_{\beta,k} J_\beta 
      \gamma_{b}^\pm(\lambda_\beta)^*}1_{C_{b,k}}.
\end{align}  The identity \eqref{eq:AltForm}  may  formally be seen as
a  replacement  
 of the factor
 $\pm 2\pi\delta(H-\lambda)
 T^\pm_{\beta,k}$ from \eqref{eq:Wfinal}  by   ${\Phi_{\beta,k}^\pm+\i
   R(\lambda\mp\i 0)T^\pm_{\beta,k}} $, and from 
the derivation of
\eqref{eq:Wfinal} we see that this is legitimate considering the image
space of the operators as being 
$L^2_{-1}(\bf  X)$. 
 In particular there exist  in $\vL(L^2(C_b),L^2_{-1}(\bf  X))$
\begin{align*}
 f_1(H)R(\lambda\mp\i 0)T^\pm_{\beta,k}J_\beta 
      \gamma_{b}^\pm(\lambda_\beta)^*1_{C_{b,k}}:=\wlim_{\epsilon\to 0_+}f_1(H)R(\lambda\mp\i \epsilon)T^\pm_{\beta,k}J_\beta 
      \gamma_{b}^\pm(\lambda_\beta)^*1_{C_{b,k}} .
\end{align*}  
By  combining  \eqref{eq:kato10} and Lemma
 \ref{lemma:strongCont} we obtain  that
\begin{align*}
  \sup_{\epsilon>0}\,\norm{R(\lambda\mp\i \epsilon)T^\pm_{\beta,k}J_\beta 
      \gamma_{b}^\pm(\lambda_\beta)^*1_{C_{b,k}}}_{
  \vL(L^2(C_b),\vB^*)}<\infty.
\end{align*} 
 Whence  in fact  
\begin{align*}
   S^\mp:=f_1(H)R(\lambda\mp\i 0)T^\pm_{\beta,k}J_\beta 
      \gamma_{b}^\pm(\lambda_\beta)^*1_{C_{b,k}}\in\vL(L^2(C_b),\vB^*).
\end{align*} 
Next we expand  for each sign $T^\pm_{\beta,k}$ in terms of the $Q$-operators and
consider the approximation of $ S^\mp$ given by replacing   $R(\lambda\mp\i 0)f_1(H) Q^*_j$  by 
 $R(\lambda\mp\i 0)f_1(H) Q^*_j \chi_\rho$, say denoted by
 $S^\mp_\rho$ (as in the proof of Theorem
 \ref{prop:besov-space-setting}, here  
 $\chi_\rho=\chi_-(r/\rho)$ with   $\rho$ considered large).  Now $S^\mp_\rho \to S^\mp$ strongly in
 $\vL(L^2(C_b),\vB^*)$, and by \eqref{eq:MicroB} 
$\chi^2_-(\mp
 B/\epsilon)S^\mp_\rho\in\vL(L^2(C_b),\vB_0^*)$. Whence also $\chi^2_-(\mp
 B/\epsilon)S^\mp\in\vL(L^2(C_b),\vB_0^*)$.

 By density and continuity   we are left with showing 
\begin{align*}
       { \tfrac 1 {4\lambda_\beta}
  \chi^2_-(\mp
 B/\epsilon)f_1(H)\Phi_{\beta,k}^\pm J_\beta 
      \gamma_{b}^\pm(\lambda_\beta)^*1_{C_{b,k}}g_k-J_\beta
    w^{\pm}_{\lambda,\beta} [g_k]}\in \vB_0^*
    \end{align*} for any $g_k=1_{C_{b,k}}g\in  C^\infty
  (C_b)$, $k\in \N$. In turn, thanks to the form of $\Phi_{\beta,k}^\pm
  $ (and commutation), it remains to show that 
\begin{align*}
       { \tfrac 1 {4\lambda_\beta}
  f_1(H)\Phi_{\beta,k}^\pm J_\beta 
      \gamma_{b}^\pm(\lambda_\beta)^*1_{C_{b,k}}g_k-J_\beta
    w^{\pm}_{\lambda,\beta} [g_k]}\in \vB_0^*
    \end{align*} for any such  $g_k$.

By using the  analysis of Steps IV and V  in
    the proof of Lemma \ref{lemma:strongCont} (not to be repeated
    here) the first term
    simplifies 
    exactly as the subtracted  second term modulo a term in $\vB_0^*$.
\end{proof}

We can now give a `geometric interpretation' of the operator
$\Gamma^-_{\alpha}(\lambda)^*$ needed for a wave packet description of
an incoming  $\alpha$-channel  experiment, energy-localized in
$I_0$, cf. a discussion in Subsection \ref{subsec:A principle example, atomic $N$-body
  Hamiltonians}. More precisely we  characterize for any stationary scattering
  energy  the space  of
  generalized eigenfunctions  in  the range of
  $\Gamma^-_{\alpha}(\lambda)^*$ in terms of asymptotics expressed by
  the quasi-modes $J_\beta
                            w^{\pm}_{\lambda,\beta} [\cdot]$ of Lemma \ref{lemma:besov-space-setting}. The result
  may also be vieved as a  characterization of the    incoming 
  $\alpha$-channel  part of the
  scattering matrix.
We equip the space $\vB^*/\vB_0^*$ with the quotient-norm, thus
making it  a Banach space.
  \begin{thm}\label{Cor:besov-space-setting}  Let $\lambda\in
  I_0$ be   any  stationary scattering energy and $\alpha=(a,\lambda^\alpha, u^\alpha)$ be any
  channel with $\lambda^\alpha < \lambda_0$. Then the following
  existence and uniqueness result holds for any  $
  g\in L^2(C_a)$.
  \begin{enumerate}[1)]
  \item \label{item:As10} Let   $
    u=\Gamma^-_{\alpha}(\lambda)^* g$,  and let 
    $(g_\beta)_{\lambda^\beta<\lambda_0}\in \vG$ be   given by
    $g_\beta=S_{\beta\alpha}(\lambda) g$. Then,  as an
    identity in  $\vB^*/\vB_0^*$,
    \begin{align}\label{eq:as}
       u=J_\alpha
    w^{-}_{\lambda,\alpha} [ g]+\sum_{\lambda^\beta<\lambda_0} J_\beta
    w^{+}_{\lambda,\beta} [g_\beta].
    \end{align} 
  \item \label{item:As20} Conversely, if  \eqref{eq:as}
  is fulfilled for some  $ u\in\vE_\lambda$
  and some $(g_\beta)_{\lambda^\beta<\lambda_0}\in \vG$, 
  then  $ u=\Gamma^-_{\alpha}(\lambda)^* g$ and
  $g_\beta=S_{\beta\alpha}(\lambda) g$ for all 
$\lambda^\beta<\lambda_0$.
\end{enumerate}
\end{thm}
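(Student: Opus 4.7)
The two parts will be linked through the intertwining identity $S(\lambda)\Gamma^-(\lambda)=\Gamma^+(\lambda)$ from Theorem~\ref{thm:mainTHM}, the Besov asymptotics of Lemma~\ref{lemma:besov-space-setting}, the bijection $\Gamma^-(\lambda)^*\colon\vG\to\vE_\lambda$ of Theorem~\ref{prop:besov-space-setting}, and a channel-independence assertion which will be the main technical input.

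For Part~\ref{item:As10} I would take the adjoint of the identity $S(\lambda)\Gamma^-(\lambda)=\Gamma^+(\lambda)$ (valid on $\vB$) and use unitarity of $S(\lambda)$ to get $\Gamma^-(\lambda)^*=\Gamma^+(\lambda)^*S(\lambda)$ as operators $\vG\to\vB^*$. Applied to $e_\alpha g$ this gives $u=\sum_{\lambda^\beta<\lambda_0}\Gamma^+_\beta(\lambda)^*g_\beta$ with $g_\beta=S_{\beta\alpha}(\lambda)g$. Decomposing $u=\chi^2_-(B/\epsilon)u+\chi^2_+(B/\epsilon)u$, the first summand is handled by applying Lemma~\ref{lemma:besov-space-setting} directly to $\Gamma^-_\alpha(\lambda)^*g$, yielding $\chi^2_-(B/\epsilon)u\equiv J_\alpha w^-_{\lambda,\alpha}[g]$ mod $\vB_0^*$. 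For the second summand I use the elementary support identity $\chi^2_+(B/\epsilon)=\chi^2_+(B/\epsilon)\chi^2_-(-B/\epsilon)$, then Lemma~\ref{lemma:besov-space-setting} applied to each $\Gamma^+_\beta(\lambda)^*g_\beta$, together with the microlocal remark that $\chi^2_-(B/\epsilon)J_\beta w^+_{\lambda,\beta}[g_\beta]\in\vB_0^*$ (which can be obtained from a Helffer--Sj\"ostrand computation using that on the support of $w^+$ the observable $B$ approximates the positive constant $2\sqrt{\lambda_\beta}$). Combining gives \eqref{eq:as}.

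For Part~\ref{item:As20} let $\tilde u=\Gamma^-_\alpha(\lambda)^*g$; by Part~\ref{item:As10}, $v:=u-\tilde u\in\vE_\lambda$ with $v\equiv\sum_\beta J_\beta w^+_{\lambda,\beta}[g_\beta-S_{\beta\alpha}(\lambda)g]$ mod $\vB_0^*$. Since $\Gamma^-(\lambda)^*\colon\vG\to\vE_\lambda$ is a bijection, write $v=\Gamma^-(\lambda)^*h$. Applying Part~\ref{item:As10} componentwise to $h=(h_\gamma)_\gamma$ and equating asymptotic data, then splitting by $\chi^2_-(\pm B/\epsilon)$ (using again the microlocal distinction between $w^+$ and $w^-$), yields the two relations $\sum_\gamma J_\gamma w^-_{\lambda,\gamma}[h_\gamma]\in\vB_0^*$ and $\sum_\beta J_\beta w^+_{\lambda,\beta}[(S(\lambda)h)_\beta-(g_\beta-S_{\beta\alpha}(\lambda)g)]\in\vB_0^*$.

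The hard part will be the channel-independence lemma: $\sum_\gamma J_\gamma w^\pm_{\lambda,\gamma}[c_\gamma]\in\vB_0^*$ implies $c_\gamma=0$ for every $\gamma$. My plan is to test against a single channel $\gamma_0=(c_0,\lambda^{\gamma_0},u^{\gamma_0})$ by forming the projection $\pi_{\gamma_0}$ and then applying the weak radial-limit functional $Q^\mp_{\lambda,\gamma_0}$ of Lemma~\ref{lemma:basic} in the one-body variable $x_{c_0}$. Bound-state orthogonality $\langle u^{\gamma_0},u^\gamma\rangle_{L^2(\mathbf X^{c_0})}=0$ kills all same-$c_0$ contributions with $\gamma\ne\gamma_0$; contributions from channels with $c\ne c_0$ are suppressed by the nonstationarity of the combined oscillatory phases $\e^{\pm iK_c}$ along the rays used to build the radial limit (their level surfaces do not align with those of $K_{c_0}$ away from a lower-dimensional set). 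Finally, the radial limit of any $\vB_0^*$ element is zero by the elementary dyadic bound $\rho^{-1}\sum_{m\le\log\rho}2^{m/2}\cdot(2^{-m/2}\|F_m\breve u\|_\vH)=o(1)$. Some care is required to verify the hypotheses of Lemma~\ref{lemma:basic} (namely $(\breve h_{c_0}-\lambda_{\gamma_0})\pi_{\gamma_0}v\in\vB(\mathbf X_{c_0})$) without Condition~\ref{cond:decayBNstates}; this can be arranged by an approximation argument using the localization factor $\chi_+(|x_a|)$ already built into $w^\pm$ and the decay of $u^{\gamma_0}$ that follows from $\lambda^{\gamma_0}$ being an eigenvalue below the continuous spectrum of $H^{c_0}$. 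Once the independence lemma is in hand, $h=0$ gives $u=\tilde u=\Gamma^-_\alpha(\lambda)^*g$, and the second relation then forces $g_\beta=S_{\beta\alpha}(\lambda)g$, completing Part~\ref{item:As20}.
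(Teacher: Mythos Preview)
Your treatment of Part~\ref{item:As10} is essentially the paper's Step~II: decompose by $\chi^2_\pm(B/\epsilon)$, use the intertwining $\Gamma^-_\alpha(\lambda)^*=\Gamma^+(\lambda)^*S(\lambda)e_\alpha$ on the outgoing piece, and invoke Lemma~\ref{lemma:besov-space-setting} termwise together with the microlocal fact $\chi_-(\mp B/\epsilon)J_\beta w^{\pm}_{\lambda,\beta}[\cdot]\in\vB_0^*$ (this is \eqref{eq:bminus} of Appendix~\ref{sec:AppendixO}). One point you skip is that the infinite sum $\sum_\beta J_\beta w^+_{\lambda,\beta}[g_\beta]$ must first be shown to make sense in $\vB^*/\vB_0^*$; the paper does this in its Step~I by computing the quotient norm directly and proving cross-channel terms are $o(\rho)$ (Fubini plus $\langle u^\beta,u^{\beta'}\rangle=0$ when $b=b'$, and a dimension count $n_c\le(n_b+n_{b'}-1)/2$ when $b\ne b'$), arriving at the Parseval identity \eqref{eq:pars}.

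The real gap is in Part~\ref{item:As20}, specifically your ``channel-independence lemma''. Your proposed proof via the radial-limit functionals $Q^\mp_{\lambda,\gamma_0}$ of Lemma~\ref{lemma:basic}/Corollary~\ref{cor:radi-limits-chann} relies on machinery that was set up under Condition~\ref{cond:decayBNstates}, and the paper explicitly drops that condition from Section~\ref{sec:Stationary modifier} onward. Your suggested workaround---that $u^{\gamma_0}$ inherits sufficient decay because $\lambda^{\gamma_0}$ lies ``below the continuous spectrum of $H^{c_0}$''---is not available: channel eigenvalues may well sit above $\min\sigma_{\mathrm{ess}}(H^{c_0})$, and even when they do not, mere $L^2$ membership does not yield $u^{\gamma_0}\in L^2_s(\bX^{c_0})$ for $s>5/2-\mu$. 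The paper avoids this entirely. Its Step~I computation \eqref{eq:pars} \emph{is} the independence lemma: if a combination $\sum_\beta J_\beta w^+_{\lambda,\beta}[c_\beta]$ lies in $\vB_0^*$ then its quotient norm vanishes, and \eqref{eq:pars} forces every $c_\beta=0$. Moreover, the paper's Step~III does not route uniqueness through the bijection of Theorem~\ref{prop:besov-space-setting} as you do; it simply shows $\chi_-(B/\epsilon)u'\in\vB_0^*$ for $u'=u-\Gamma^-_\alpha(\lambda)^*g\in\vE_\lambda$ and then invokes Sommerfeld uniqueness \cite[Corollary~1.10]{AIIS} to get $u'=0$, after which \eqref{eq:pars} handles the $g_\beta$'s. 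Replacing your radial-limit plan by the explicit cross-term estimate of Step~I repairs the argument.
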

\begin{proof}
  {\bf I.} We note that with $F_\rho:=F(\set{x\in X|\,\abs{x}<\rho})$, $\rho>1$, the expression
  \begin{align*}
    \norm{u}_{\rm quo}=\limsup_{\rho\to \infty}\,\rho^{-1/2}\norm{F_\rho u}
  \end{align*} is a norm on $\vB^*/\vB_0^*$ equivalent with the
  quotient-norm. Using this norm the Cauchy criterion assures that for
  any  $(g_\beta)_{\lambda^\beta<\lambda_0}\in \vG$ the right-hand
  side of  \eqref{eq:as} is well-defined. The cross-terms do not
  contribute for the following reason: For given open different channels
  $\beta=(b,\lambda^\beta, u^\beta)$ and $\beta'=(b',\lambda^{\beta'},
  u^{\beta'})$ with $b=b'$,    Fubini's theorem and the fact that
  $\inp{u^\beta, u^{\beta'}}_{\vH^b}=0$ clearly implies
  \begin{align}\label{eq:ort}
    \inp{F_\rho J_\beta
    w^{+}_{\lambda,\beta} [g_\beta],F_\rho J_{\beta'}
    w^{+}_{\lambda,\beta'} [g_{\beta'}]}_{\vH}=\int_{\bX} \,\overline{F_\rho u^\beta
    w^{+}_{\lambda,\beta} [g_\beta]}\, F_\rho u^{\beta'}
    w^{+}_{\lambda,\beta'} [g_{\beta'}]\,\d x=0.
  \end{align}
 If $b\neq b'$ we let $c\in \vA$ be given by $
 \bX^c=\bX^b+\bX^{b'}$ and $n_c=\dim \bX_c$, and  note that in this
 case 
 $n_c\leq (n_b+n_{b'}-1)/2$. By approximation, and by using  Fubini's
 theorem and \caS,  we can assume that the functions
 $g_\beta,g_{\beta'}, u^\beta$ and $u^{\beta'}$ are all  bounded with
 compact support. With this assumption we claim that the  integral in \eqref{eq:ort} is $\vO(\rho^{1/2})$ (and hence in particular
  $o(\rho)$  as wanted). On the support of the integrand, 
$\abs{x^c}\leq C_1(\abs{x^b}+\abs{x^{b'}}) \leq C_2$ and
\begin{align*}
  \abs{x_b}^{(1-n_b)/2}\abs{x_{b'}}^{(1-n_{b'})/2}\leq C_3
  \abs{x}^{1-(n_b+n_{b'})/2} \leq C_4 \abs{x}^{(1-n_c)} \abs{x}^{-1/2}.
\end{align*} If $n_c=0$ it follows that $\abs{x}\leq C_2$,  and
consequently the  integral is  $
\vO(\rho^{0})=\vO(\rho^{1/2})$. If  $n_c\geq 1$ we estimate  $\abs{x}^{(1-n_c)}
\abs{x}^{-1/2}\leq \abs{x_c}^{(1-n_c)} \abs{x_c}^{-1/2}$ and use
spherical coordinates in $\bX_c$ and the fact that $\abs{x_c}\leq
\abs{x}< \rho$, indeed yielding the desired bound  $\vO(\rho^{1/2})$.

 We conclude that 
  \begin{align}\label{eq:pars}
    \norm[\big]{u-J_\alpha
    w^{-}_{\lambda,\alpha} [ g]}_{\rm
    quo}^2=\sum_{\lambda^\beta<\lambda_0} \parb{4\pi
    \lambda_\beta^{1/2}}^{-1}\,\norm{g_\beta}^2.
  \end{align}

{\bf II.} For \ref{item:As10} we decompose $u=\Gamma^-_{\alpha}(\lambda)^* g$ as 
\begin{align*}
  u=\chi^2_-( B/\epsilon)u+\chi^2_+( B/\epsilon)u.
\end{align*} The first term corresponds to the first
term to the  right in  \eqref{eq:as} thanks to  Lemma
\ref{lemma:besov-space-setting}. For the second  term we substitute,  cf. \eqref{eq:linkForm} and \eqref{eq:scaSS}, 
\begin{align*}
     \Gamma^-_{\alpha}(\lambda)^*g=\sum_{\lambda^\beta<\lambda}
  \Gamma^+_{\beta}(\lambda)^*g_\beta;\quad g_\beta= S_{\beta\alpha}(\lambda)g.
  \end{align*} The series converges in $\vB^*$. By continuity we can
  take the factor
  $\chi^2_+( B/\epsilon)$ inside the summation. Then we  obtain
  \ref{item:As10} by  applying  Lemma
\ref{lemma:besov-space-setting} to the  terms of  the (convergent)  series $\sum_{\lambda^\beta<\lambda}
  \chi^2_+( B/\epsilon)\Gamma^+_{\beta}(\lambda)^*g_\beta$.

{\bf III.}  For \ref{item:As20} we note that
$ u':=u-\Gamma^-_{\alpha}(\lambda)^* g$ is in $\vE_\lambda$ and
obeys
\begin{align*}
  \chi_-( B/\epsilon) u'=0 \text{ in }\vB^*/\vB_0^*.
\end{align*}
  Here we first use \eqref{eq:as} to $\Gamma^-_{\alpha}(\lambda)^* g $ to obtain a similar representation of
  $u'\in \vB^*/\vB_0^*$, then  we multiply by  $\chi_-( B/\epsilon)$ using 
the fact
that 
this operator  is continuous in $\vB^*/\vB_0^*$  and finally we use Appendix
\ref{sec:AppendixO} to the terms of the resulting series. (Note that
\eqref{eq:bminus} for any $g\in L^2(C_a)$  follows by approximation  
since we know the assertion for $ g\in
  C_\c^\infty (C_a')$.)  This means that $\chi_-( B/\epsilon) u'\in
\vB_0^*$ and then in turn,   thanks to \cite[Corollary
    1.10]{AIIS}, that $ u'=0$. Finally 
\eqref{eq:pars} applied to the mentioned representation of  $u'$ 
    implies that also $g_\beta-S_{\beta\alpha}(\lambda) g=0$, as wanted.
\end{proof}

\appendix

\section{Completion of the proof  of Lemma
  \ref{lemma:Sommerfeld}}\label{sec:AppendixO}

We complete the proof of Lemma \ref{lemma:Sommerfeld}
\ref{item:1s}.  For convenience we only show that $\check
P^+_{\lambda,\alpha} [g]=0$. By  \cite[Corollary
    1.10]{AIIS} it suffices to show that for  some (small)
$\epsilon>0$ 
    \begin{align}\label{eq:bminus}
      \chi_-(B/\epsilon) u^\alpha\otimes v^{+}_{\lambda,\alpha}
                              [g]\in \vB_0^*, 
    \end{align}  cf. \eqref{eq:MicroB}. Our proof of \eqref{eq:bminus}
    does not rely on \eqref{eq:decay}. 
    We recall that 
    \begin{align*}
       v^{+}_{\lambda,\alpha} [g]( x_a)= \chi_+(\abs{x_a})\abs{x_a}^{(1-n_a)/2}
                                      \e^{ \i
                                        K_a(x_a,\lambda_\alpha)}g(\hat x_a) ;\quad g\in
  C_\c^\infty (C_a').
    \end{align*}
 Depending on the support of $g$ we can pick $\xi^+_a$ as in
 \eqref{eq:partM} such that 
\begin{align*}
  \parb{1-\xi^+_a}u^\alpha\otimes v^{+}_{\lambda,\alpha}
                              [g]\in \vB_0^*.
\end{align*} Let $\brH_a$ be given by \eqref{eq:brevH} and $E<\min\sigma(\brH_a)$. By commutation, cf. Lemma \ref{lem:171113b}, it suffices
to find $\epsilon>0$ such that
\begin{align}\label{eq:goal}
  \begin{split}
  &\rho^{-1}\norm{\varphi_\rho}^2\to 0\text{ for }\rho\to \infty;\\
&\varphi_\rho:=\chi_-(B/\epsilon) T_\rho u^\alpha\otimes
  v^{+}_{\lambda,\alpha},\quad T_\rho=\chi_-(r/\rho)\xi^+_a(\brH_a-E)^{-1}.  
  \end{split}
\end{align} Note at this point that
\begin{align}\label{eq:e-lam}
  \begin{split}
   &(\brH_a-E)\parb{u^\alpha\otimes
  v^{+}_{\lambda,\alpha}}-(\lambda-E)\parb{u^\alpha\otimes
  v^{+}_{\lambda,\alpha}}\in \vB_0^*,\\
&{u^\alpha\otimes
  v^{+}_{\lambda,\alpha}}-(\lambda-E)(\brH_a-E)^{-1}\parb{u^\alpha\otimes
  v^{+}_{\lambda,\alpha}}\in \vB_0^*, 
  \end{split}
\end{align} cf. \eqref{eq:basiC0}. As we will below the introduced factor
$(\brH_a-E)^{-1}$ in $T_\rho$ facilitates  commutation in the following estimation:

We estimate (with $\inp{T}_\varphi:=\inp{\varphi,T\varphi}$),
commute repeatedly (using Lemma \ref{lem:fHB}), apply
\eqref{eq:self-similar} and (in the last step) \eqref{eq:r0},
\begin{align*}
  &-2\rho^{-1}\norm{\varphi_\rho}^2\\
&\leq-\rho^{-1}\inp{B/\epsilon}_{\varphi_\rho}\\&=-(\rho \epsilon)^{-1}\inp{T^*_\rho
  B T_\rho}_{\chi_-(B) \parb{u^\alpha\otimes
  v^{+}_{\lambda,\alpha}}}+o(\rho^{0})
\\&=-(\rho \epsilon)^{-1}\Re\inp{T^*_\rho
  r^{-1}\parb{2{x_a}\cdot p_a+{\mathop{\mathrm{grad}} (r^a)^2}\cdot p^a} T_\rho}_{\chi_-(B) \parb{u^\alpha\otimes
  v^{+}_{\lambda,\alpha}}}+o(\rho^{0})
\\&=-(\rho \epsilon)^{-1}\inp{T^*_\rho
  r^{-1}\parb{2\sqrt{\lambda_\alpha}\abs{x_a}} T_\rho}_{\chi_-(B) \parb{u^\alpha\otimes
  v^{+}_{\lambda,\alpha}}}+o(\rho^{0})
\\&=-(\rho
    \epsilon)^{-1}{2\sqrt{\lambda_\alpha}\inp[\big]{\inp{x}/r}}_{\varphi_\rho}+o(\rho^{0})
\\&=-(\rho \epsilon)^{-1}{2\sqrt\lambda_\alpha}\norm{\varphi_\rho}^2+o(\rho^{0}).
\end{align*}
 Note  for the fourth step that we need to commute $r^{-1}{x_a}\cdot
 p_a$ through a factor $\chi_-(B) $  (before we can replace $r^{-1}{x_a}\cdot
 p_a\approx \sqrt\lambda_\alpha r^{-1}\abs{x_a}$ and commute back). After a
 little commutation this may in turn be accomplished by showing 
 \begin{align*}
   \comm[\big]{\brH_a-E)^{-1}r^{-1}{x_a}\cdot p_a(\brH_a-E)^{-1},\chi_-(B) }=\vO(r^{-1})
 \end{align*} in the sense of \eqref{eq:1712022}. However  with the
 two factors of  resolvents in place the result follows from the Helffer--Sj\"ostrand
formula and  Lemma \ref{lem:fHB}, cf. Remark
\ref{remark:comm-with-a_3yy}.  

Clearly the above estimation yields \eqref{eq:goal} for any  positive 
$\epsilon<{\sqrt\lambda_\alpha}$ (by a subtraction). \qed

\section{Proof of \eqref{eq:lim00}
  \eqref{eq:wave_opc2} and \eqref{eq:conjF}}\label{sec:AppendixA}

For convenience we consider only the assertions for $t\to
+\infty$. Our procedure relies on standard stationary phase analysis
on which we omit  details, see for
example \cite{H0, II} for  elaborated accounts.

\underline{{ \eqref{eq:lim00}:}} \quad We need to show that 
\begin{align}\label{eq:simp1}
  \begin{split}
 &\forall \varphi \in L^2(\mathbf X_a):\quad \lim_{t\to +\infty}\norm{\parb{I-A_1 A_2^aA_3^aA_2^aA_1}\parb{u^\alpha\otimes  \varphi(t)}}=0;
\\&\quad \quad\quad\quad\varphi(t)=\e^{-\i S_a (p_a,t)}f_2(k_\alpha)\varphi.   
  \end{split}
\end{align} We can assume that $\varphi$ is smooth in momentum
space. Then also $f_2(k_\alpha)\varphi$ is  smooth  in momentum space,
and 
there it is 
compactly supported 
away from  origo.  Recalling  that $S_a (p_a,t)= t
\xi_a^2+\vO(t^{1-\mu})$ the stationary phase
method then yields the effective localization $\abs{x_a}\geq \epsilon'
t$ for some (computable) $\epsilon'>0$. On the other hand $x^a$ is a localized
variable due to the presence of the factor $u^\alpha$. This means more
precisely that  we may  freely insert to the left of the tensor product
in \eqref{eq:simp1} the   factor $\chi_-\parb{{
    r^{-\delta'}}{\abs{x^a}}}$ for any  $\delta'>0$. We do this for a $\delta'<\delta$.

Noting  that we can also freely insert the  factor $\chi_+\parb{{\Re{\parb{(x_a/r)\cdot
  p_a}}}/(4\epsilon_0)}$ to the left of the tensor product in
\eqref{eq:simp1}  (thanks to stationary phase analysis),  it follows essentially 
from \eqref{eq:good} that $A_1^2=A_{1+}^2$
can be replaced by $I$. Similarly the factors $A_2^a$ and
$A_3^a=A_{3+}^a$ can one by one be replaced by
$I$. This is 
due to the fact that 
  in $\set{\abs{x^a}<c r^\delta}$,
 with $c>0$ given in  the property \ref{item:2c} of Section \ref{sec:Derezinski's
  construction} (this is for $r$ replaced by $r^a$),
$r^a_\delta=r^\delta r^a(0)$ and $B_{\delta,\rho_1}^a=r^{\rho_1/2}B_\delta^a
                         r^{\rho_1/2}=0$. Clearly we
can then remove   $A_2^a$, and the
justification   of  removing $A_3^a$  is  provided by  \eqref{eq:aligeen}.  This completes the proof of \eqref{eq:lim00}.
 
\underline{{ \eqref{eq:wave_opc2} and \eqref{eq:conjF}:}} \quad
 Due to 
\eqref{eq:simp1} it suffices to show (recalling the notation
\eqref{eq:primes})  that for all $\varphi  \in L^2(\mathbf X_a)$
with $\hat \varphi\in C^\infty_\c(\mathbf X'_a):$ 
\begin{align}
   \label{eq:simp1bcde}
  \begin{split}
 &\lim_{t\to
   +\infty}\norm{{M_a\parb{u^\alpha\otimes
       \varphi(t)}-u^\alpha\otimes  \parb{{m_\alpha^+}\varphi(t)}}}=0,\\
&\lim_{t\to
   +\infty}\norm{\parb{f_1(H)-f_1(\brH_a)}\parb{u^\alpha\otimes  \parb{\parb{m_\alpha^+}^2\varphi(t)}}}=0;
\\&\quad \quad\quad\quad\varphi(t)=\e^{-\i S_a (p_a,t)}f_2(k_\alpha)\varphi.   
  \end{split}
\end{align}
  The first assertion of \eqref{eq:simp1bcde} follows by combining
  the properties \ref{item:1a}
and \ref{item:2a} from Section \ref{sec:Yafaev's construction} with arguments from our
  proof  of  \eqref{eq:simp1} (including  in particular stationary
  phase analysis). The second part follows easily from stationary
  phase analysis.  \qed

\section{Proof  of \eqref{eq:SreP} and formulas for channel wave matrices}\label{sec:AppendixB} The
 derivation of the formula \eqref{eq:SreP} for $\widetilde
  S_{\beta\alpha}(\lambda)$, interpretated correctly in
  Subsection \ref{subsec:Conclusion of argument, the
              weak continuity},  will be given using
  smoothness   bounds from Section \ref{sec:Computation of T}, arguments from Subsections \ref{subsec:Trace estimates} and \ref{subsec:Conclusion of argument, the
              weak continuity}  and  finally bounds from \cite{AIIS}. We follow essentially the scheme of \cite[Appendix
  A]{DS} (for a similar issue, see    for example the proof of \cite[Proposition 7.2]{Ya2}).

Formally \eqref{eq:formwOp}  leads with \cite[Corollary
  1.10]{AIIS} to the formula
\begin{align}\label{eq:SreP4}
  \begin{split}
  \widetilde
  S_{\beta\alpha}(\lambda)&= -2\pi f^2_2(\lambda){\gamma_b^+(\lambda_\beta)}J^*_\beta\parb{\Phi_\beta^+}^*\parb{T^-_\alpha
  -\i(T^+_\beta)^* R(\lambda+\i 0)T^-_\alpha}
    J_\alpha{\gamma_a^-(\lambda_\alpha)}^*\\&=
 (2\pi \i)^2f^2_2(\lambda){\gamma_b^+(\lambda_\beta)}J^*_\beta
  \parb{T^+_\beta}^* \delta(H-\lambda)T^-_\alpha
    J_\alpha{\gamma_a^-(\lambda_\alpha)}^*.  
  \end{split}
\end{align} Note that we have not given an independent justification
 for
the above middle  term (with the factor $R(\lambda+\i 0)$), including its well-definedness. However
this will not be 
needed, and the stated expression to the far right   does indeed have a clean
interpretation, cf. Remark \ref{remark:breveBNDs}
\ref{item:P2} and Subsections \ref{subsec:Trace estimates} and \ref{subsec:Conclusion of argument, the
              weak continuity}.

            Let for $\epsilon>0$ and $\lambda\in \Lambda$
            \begin{align*}
              \delta_{\epsilon,\beta}(\lambda)= (2\i\pi)^{-1}\parb{(k_\beta -\lambda-\i \epsilon)^{-1}-(k_\beta -\lambda+\i \epsilon)^{-1}}.
            \end{align*} The outset for our analysis is the following
            two formulas which may be derived exactly as in \cite[Appendix
  A]{DS}. In the first formula $g$ is any complex continuous function
   on $\R$ vanishing at infinity. 
  \begin{align}
    \begin{split}
\label{eq:W}
    &\widetilde W^+_\beta (g1_{\Lambda})(k_\beta)\varphi\\
    &\quad=\lim_{\epsilon\to 0}\int_{\Lambda}\,g(\lambda)f_1(H)\parb{\Phi_\beta^+
      +\i R(\lambda-\i \epsilon)T^+_\beta}J_\beta \breve w^+_b f_2(k_\beta)\delta_{\epsilon,\beta}(\lambda)
      \varphi \,\d \lambda, \end{split}\\
    \begin{split}
&\inp{\varphi_b,\widetilde
  S_{\beta\alpha}\varphi_a}\\&\quad=-2\pi \lim_{\epsilon\to 
                                   0}\int^\infty_{-\infty}\,\inp{\varphi_b,\delta_{\epsilon,\beta}(\lambda')\parb{
                                   \widetilde W^+_\beta }^*T^-_\alpha J_\alpha \breve w^-_a f_2(k_\alpha)\delta_{\epsilon,\alpha}(\lambda')
      \varphi_a} \,\d \lambda'.\label{eq:S}  
    \end{split}
  \end{align} Note that \eqref{eq:S} is based on the fact that
  \begin{align*}
    \slim_{t\to +\infty}\e^{\i tH}\tilde{J}_\alpha^-\e^{-\i
   tk_\alpha}f_2(k_\alpha)=0,
  \end{align*} which in  turn is a consequence of \eqref{eq:lessgood} and stationary phase analysis, cf. Appendix
  \ref{sec:AppendixA}.

Although \eqref{eq:W} is valid for any $\varphi\in L^2(\mathbf
X_b)$ we take below  
$\varphi=\varphi_b\in L^2_s(\mathbf  X_b)$ for  $s>1/2$,  and compute by taking
$\epsilon\to 0$    using \cite[Corollary
  1.10]{AIIS}, here done formally only,
  \begin{subequations}
\begin{align*}
  \begin{split}
  \widetilde W^+_\beta g(k_\beta)\varphi
    &=2\pi\int_{\Lambda}\,g(\lambda) f_1(H)\delta(H-\lambda)T^+_\beta
    J_\beta \breve w^+_b f_2(k_\beta)\delta_{0,\beta}(\lambda)
      \varphi \,\d \lambda;\\
\delta_{0,\beta}(\lambda)&=(2\i\pi)^{-1}\parb{(k_\beta -\lambda-\i
  0)^{-1}-(k_\beta -\lambda+\i
  0)^{-1}}\\&=\gamma_{b,0}(\lambda_\beta)^* \gamma_{b,0}(\lambda_\beta).  
  \end{split}
\end{align*} See  Section \ref{sec:one-body  matrices} for
notation, and note that  this formula   has the following precise
meaning. We substitute, referring again to Section \ref{sec:one-body
  matrices}, 
\begin{align*}
  \breve w^+_b f_2(k_\beta)\delta_{0,\beta}(\lambda)
       =f_2(\lambda) \gamma_{b}^+(\lambda_\beta)^*\gamma_{b,0}(\lambda_\beta).
\end{align*}
    Then by combining the proof of
              Lemma \ref{lemma:exist_integr-tpm_-}, 
              Remark \ref{remark:breveBNDs} \ref{item:P2} and Lemma
              \ref{lemma:strongCont} we conclude that the integral to
              the right  is well-defined,
              \begin{align}\label{eq:Wfinal} \begin{split}
                &
                \widetilde W^+_\beta  g(k_\beta)\varphi\\&\quad
    =2\pi\int \,(gf_2)(\lambda)  \parb{ f_1(H)\delta(H-\lambda) T^+_\beta J_\beta 
      \gamma_{b}^+(\lambda_\beta)^*}\gamma_{b,0}(\lambda_\beta)\varphi \,\d \lambda, \end{split}
              \end{align} where the correct interpretation of the product in
              parentheses involves the `$Q$-operators' as in
              Subsection \ref{subsec:Conclusion of argument, the
              weak continuity} (more precisely  obtained by expanding
            $T^+_\beta$ into terms on the
            form \eqref{eq:Ex1} given below). We call \eqref{eq:Wfinal} a `channel
            wave matrix' representation (examined closer in Section \ref{sec:Exact channel wave-matrices}). Although the below proof of
            \eqref{eq:Wfinal} reveals a different representation,  more
            closely related to \eqref{eq:W}, the formula
            \eqref{eq:Wfinal} is more useful for most of our  purposes
            (with exceptions  only in Subsection \ref{subsubsec:Besov space setting at stationary scattering
    energies}). The
            alternative formula
            reads
\begin{align}\label{eq:AltForm}
  \begin{split}
                &\widetilde W^+_\beta  g(k_\beta)\varphi\\
    &\quad =\int \,(gf_2)(\lambda)   f_1(H)\parb{\Phi_\beta^++\i R(\lambda-\i 0) T^+_\beta }J_\beta 
      \gamma_{b}^+(\lambda_\beta)^*\gamma_{b,0}(\lambda_\beta)\varphi \,\d \lambda.
  \end{split}
\end{align}  
\end{subequations}
\begin{subequations} 

Now there are two assertions
               that need to be checked to justify  \eqref{eq:Wfinal}  by
               taking the  `$\epsilon\to 0$'--limit  in
              \eqref{eq:W},  in the precise meaning of taking  limit
              in the weak topology of $L^2_{-1}(\mathbf
X)$:  
\begin{align}
\i \lim_{\epsilon\to 0}&\int_{\Lambda}\,g(\lambda) f_1(H)\parb{R(\lambda-\i \epsilon)-
      R(\lambda+\i \epsilon)}T^+_\beta J_\beta \breve w^+_b f_2(k_\beta)\delta_{\epsilon,\beta}(\lambda)
      \varphi \,\d \lambda \nonumber
\\&=\i \lim_{\epsilon\to 0} \int_{\Lambda}\,g(\lambda) f_1(H)\parb{R(\lambda-\i \epsilon)-
      R(\lambda+\i \epsilon)}T^+_\beta J_\beta f_2(\lambda) \gamma_{b}^+(\lambda_\beta)^*\gamma_{b,0}(\lambda_\beta)
      \varphi \,\d \lambda \nonumber\\&
=2\pi\int_{\Lambda}\,g(\lambda) f_1(H)\delta(H-\lambda)T^+_\beta
    J_\beta f_2(\lambda) \gamma_{b}^+(\lambda_\beta)^*\gamma_{b,0}(\lambda_\beta)
      \varphi \,\d \lambda,\label{eq:W3}
         \end{align} and  
              \begin{align}
                \label{eq:W4}
                \lim_{\epsilon\to 0}\int_{\Lambda}\,g(\lambda) f_1(H)\parb{\Phi_\beta^+
      +\i R(\lambda+\i \epsilon)T^+_\beta}J_\beta \breve w^+_b f_2(k_\beta)\delta_{\epsilon,\beta}(\lambda)
      \varphi \,\d \lambda=0.
              \end{align}
  \end{subequations}

For \eqref{eq:W3}  we write with $\delta_{\epsilon,\lambda}(\lambda'):=\pi^{-1}\epsilon/\parb{(\lambda-\lambda')^2+\epsilon^2}$
for any  fixed  $\lambda\in \R$ and $\epsilon>0$,
\begin{align}\label{eq:delta}
  \begin{split}
  \breve w^+_b f_2(k_\beta)\delta_{\epsilon,\beta}(\lambda)
  \varphi &=\breve w^+_b \int^\infty _{\lambda^\beta}
  f_2(\lambda')\delta_{\epsilon,\lambda}(\lambda')\gamma_{b,0}(\lambda'-\lambda^\beta)^*
  \gamma_{b,0}(\lambda'-\lambda^\beta)\varphi \,\d \lambda'\\
&=\int^\infty_{\lambda^\beta}
  f_2(\lambda')\delta_{\epsilon,\lambda}(\lambda')\gamma^+_{b}(\lambda'-\lambda^\beta)^*
  \gamma_{b,0}(\lambda'-\lambda^\beta)\varphi \,\d \lambda',
  \end{split}
\end{align} substitute  and replace by the limit (when taking  $\epsilon\to
0$). This is doable thanks to Remark \ref{remark:breveBNDs}
\ref{item:P2}, Lemma \ref{lemma:strongCont} and the continuity of $\gamma_{b,0}(\lambda'_\beta)\varphi$ justifying  the first equality, and the second is a
consequence of Remark \ref{remark:breveBNDs}
\ref{item:P2} and Lemma \ref{lemma:strongCont} too. Recall from Subsection \ref{subsec:Conclusion of argument, the weak
continuity}  that $T^+_\beta$ is a finite sum of terms 
expressed as 
\begin{align}\label{eq:Ex1}
  f_1(H)Q^+(b,l)^*B_+Q^+(b,l)f_1(\brH_b)f_1(\brh_\beta),\quad
  B_+\text{ bounded},
\end{align} to be used below too.

For \eqref{eq:W4} we will invoke \cite[Theorem 1.8 and Corollary
1.9]{AIIS}.  Indeed thanks to these results and the presence of the factors
$A_{1+}$ in $\Phi^+_\beta$ we can compute the left-hand side to be
equal
  \begin{align}\label{eq:EXP}
\int_{\Lambda}\,g(\lambda) f_1(H)\parb{\Phi_\beta^+ +\i R(\lambda+\i
0)T^+_\beta}J_\beta f_2(\lambda)
\gamma_{b}^+(\lambda_\beta)^*\gamma_{b,0}(\lambda_\beta) \varphi \,\d
\lambda.
  \end{align} To see this, considering only the `difficult term'
  \begin{align*} \int_{\Lambda}\,g(\lambda) f_1(H)R(\lambda+\i
\epsilon)T^+_\beta J_\beta \breve w^+_b
f_2(k_\beta)\delta_{\epsilon,\beta}(\lambda) \varphi \,\d \lambda,
  \end{align*} we insert $I=\chi^2_+( 2B/\epsilon_0)+\chi^2_-(
2B/\epsilon_0)$ to the right of the factor $R(\lambda+\i
\epsilon)$. Using the factors of $A_{1+}$ and a little commutation
the contribution from the second term $\chi^2_-( 2B/\epsilon_0)$
allows the insertion of the weight $\inp{x}^{-s}$ for some  $s>1/2$  to the
right of $R(\lambda+\i \epsilon)$ and we can use \eqref{eq:LAPbnd} and
\eqref{eq:delta} to compute the `$\epsilon\to 0$'--limit for this
term (more precisely we use that $\inp{x}^{s}\chi^2_-(
2B/\epsilon_0)f_1(H)Q^+(b,l)^*B_+$ is bounded). As for the contribution from the first term $\chi^2_+(
2B/\epsilon_0)$ we write
  \begin{align*} R(\lambda+\i \epsilon)\chi^2_+(
2B/\epsilon_0)=R(\lambda+\i \epsilon)\chi^2_+(
2B/\epsilon_0)\inp{x}^{-s}\inp{x}^{s};\quad s>1/2-\mu/2.
  \end{align*} Then by \cite[Theorem 1.8 and Corollary 1.9]{AIIS}
  \begin{align*} \inp{x}^{-1}f_1(H)R(\lambda+\i \epsilon)\chi^2_+(
2B/\epsilon_0)\inp{x}^{-s}\text{ is uniformly bounded},
\end{align*} and there exists 
\begin{align*} \Lambda\ni \lambda\to\inp{x}^{-1}f_1(H)&R(\lambda+\i
0)\chi^2_+( 2B/\epsilon_0)\inp{x}^{-s}\\&=\lim_{\epsilon\to 0}
\,\inp{x}^{-1}f_1(H)R(\lambda+\i \epsilon)\chi^2_+(
2B/\epsilon_0)\inp{x}^{-s},
  \end{align*} constituting a continuous $\vL(\vH)$-valued
function. Consequently, using again \eqref{eq:delta}, it suffices
to check the existence and continuity of the $\vL(L^2(C_b),\vH)$-valued
function
  \begin{align*} \R\ni \lambda \to f_2(\lambda)\inp{x}^{s}T^+_\beta
J_\beta \gamma_{b}^+(\lambda_\beta)^*.
  \end{align*} Expanding $T^+_\beta$ as above, 
 we can use Lemma \ref{lemma:strongCont} to combine for
each  term the factor $Q^+(b,l)$ with the factor $J_\beta
\gamma_{b}^+(\lambda_\beta)^*$ (as done above). Each  remaining factor of a `$Q$-operator'
(more precisely $Q^+(b,l)^*$ appearing to the left)   contributes by a factor 
$\vO(r^{\rho_1/2-\delta/2}) $, cf. Remark \ref{remark:breveBNDs} \ref{item:P0}. Whence we are left with checking that
$s+\rho_1/2-\delta/2\leq 0$ is possible. Due to \eqref{eq:parameters}
it suffices to produce an $s>1/2-\mu/2$ such that $s+ 1/2-2/(2+\mu)\leq 0$,
    and therefore in turn to check that $ 1/2-\mu/2+ 1/2-2/(2+\mu)<0$.
  But the latter condition is fulfilled for all $\mu>0$,
so the check is done. Consequently indeed the left-hand side of \eqref{eq:W4} is given by the
integral \eqref{eq:EXP}.

Next, using the above arguments and  \cite[Corollary
  1.10]{AIIS},  the expression
  \eqref{eq:EXP} may be  computed   with  $\chi_\rho:=\chi_-(r/\rho)$ as
\begin{align*} 
&\int_{\Lambda}\,g(\lambda)f_1(H)\parb{\Phi_\beta^+
      - R(\lambda+\i 0)(H-\lambda)\Phi^+_\beta}J_\beta f_2(\lambda) \gamma_{b}^+(\lambda_\beta)^*\gamma_{b,0}(\lambda_\beta)
      \varphi \,\d \lambda\\
&=\lim_{\rho\to \infty}\int_{\Lambda}\,g(\lambda)f_1(H)\parb{\Phi_\beta^+
      - R(\lambda+\i 0)\chi_\rho (H-\lambda)\Phi^+_\beta}J_\beta f_2(\lambda) \gamma_{b}^+(\lambda_\beta)^*\gamma_{b,0}(\lambda_\beta)
      \varphi \,\d \lambda\\
&=\lim_{\rho\to \infty}\int_{\Lambda}\,g(\lambda) f_1(H)R(\lambda+\i
  0)[H,\chi_\rho]\Phi^+_\beta J_\beta f_2(\lambda) \gamma_{b}^+(\lambda_\beta)^*\gamma_{b,0}(\lambda_\beta)
      \varphi \,\d \lambda\\
&=0,
  \end{align*} proving \eqref{eq:W4}.

Now to prove \eqref{eq:SreP},  we would like to substitute the adjoint expression  of \eqref{eq:Wfinal} with $g(\lambda)=\delta_{\epsilon,\lambda'}(\lambda)$ into
\eqref{eq:S} and then  interchange the order of the two
integrations. This requires of course some modification since the
meaning of the right-hand side of  \eqref{eq:Wfinal} is a vector in  $L^2_{-s}(\mathbf
X)$  for any $s>1/2$ (but unlikely any  smaller),  and there is no
reason to expect  that the quantity  $T^-_\alpha J_\alpha \breve w^-_a f_2(k_\alpha)\delta_{\epsilon,\alpha}(\lambda')
      \varphi_a \in L^2_{s}(\mathbf
X)$ for any such $s$. However, writing    $T^-_\alpha$ as  a finite sum terms on the
form
\begin{align}\label{eq:Ex2}
  f_1(H)Q^-(a,k)^*B_-Q^-(a,k)f_1(\brH_a)f_1(\brh_\alpha),\quad
  B_-\text{ bounded},
\end{align}
   we can introduce 
the  modification, say denoted by 
$ T^-_{\alpha,\rho}$, given by inserting for each such term  the above factor
$\chi_\rho$ to the right of $Q^-(a,k)^*$. Using  Fubini's theorem and
the computation 
\begin{align*}
  \int^\infty_{-\infty}\,\delta_{\epsilon,\lambda'}(\lambda)\delta_{\epsilon,\alpha}(\lambda')\,\d \lambda'=\delta_{2\epsilon,\alpha}(\lambda),
\end{align*} we then obtain 
\begin{align*}
  &(2\pi \i)^{-2}\inp{\varphi_b,\widetilde
  S_{\beta\alpha}\varphi_a}=\lim_{\epsilon\to
                                   0}\,\lim_{\rho\to
                                   \infty}\\&\quad \int_{\Lambda}\,f_2(\lambda) \inp{\gamma_{b,0}(\lambda_\beta)\varphi_b,\gamma^+_{b}(\lambda_\beta)J^*_\beta (T^+_\beta)^*\delta(H-\lambda)T^-_{\alpha,\rho} J_\alpha \breve w^-_a f_2(k_\alpha)
    \delta_{2\epsilon,\alpha}(\lambda)
      \varphi_a} \,\d \lambda,
\end{align*} valid  for any $\varphi_a\in L^2_s(\mathbf  X_a)$ and
$\varphi_b\in L^2_s(\mathbf  X_b)$ for
$s>1/2$.

 With an analogue of \eqref{eq:delta} for $\alpha$ rather
than $\beta$ we compute 
\begin{align*}
  &\lim_{\epsilon\to
                                   0}\int_{\Lambda}\,f_2(\lambda) \inp{\gamma_{b,0}(\lambda_\beta)\varphi_b,\gamma^+_{b}(\lambda_\beta)J^*_\beta (T^+_\beta)^*\delta(H-\lambda)T^-_{\alpha,\rho} J_\alpha \breve w^-_a f_2(k_\alpha)
    \delta_{2\epsilon,\alpha}(\lambda)
      \varphi_a} \,\d \lambda\\& =\int_{\Lambda}\,f_2(\lambda) \inp{\gamma_{b,0}(\lambda_\beta)\varphi_b,\gamma^+_{b}(\lambda_\beta)J^*_\beta (T^+_\beta)^*\delta(H-\lambda)T^-_{\alpha,\rho} J_\alpha f_2(\lambda)
    \gamma_{a}^-(\lambda_\alpha)^*\gamma_{a,0}(\lambda_\alpha)
      \varphi_a} \,\d \lambda,
\end{align*} uniformly in $\rho>1$. Due to  these 
features  we can interchange limits above and conclude that
\begin{align*}
  &(2\pi \i)^{-2}\inp{\varphi_b,\widetilde
  S_{\beta\alpha}\varphi_a}=\lim_{\rho\to
                                   \infty}\\&\int_{\Lambda}\,f_2(\lambda) \inp{\gamma_{b,0}(\lambda_\beta)\varphi_b,\gamma^+_{b}(\lambda_\beta)J^*_\beta (T^+_\beta)^*\delta(H-\lambda)T^-_{\alpha,\rho} J_\alpha f_2(\lambda)
    \gamma_{a}^-(\lambda_\alpha)^*\gamma_{a,0}(\lambda_\alpha)
      \varphi_a} \,\d \lambda
\\&=\int_{\Lambda}\,f_2(\lambda) \inp{\gamma_{b,0}(\lambda_\beta)\varphi_b,\gamma^+_{b}(\lambda_\beta)J^*_\beta (T^+_\beta)^*\delta(H-\lambda)T^-_\alpha J_\alpha f_2(\lambda)
    \gamma_{a}^-(\lambda_\alpha)^*\gamma_{a,0}(\lambda_\alpha)
      \varphi_a} \,\d \lambda,
\end{align*} showing  \eqref{eq:SreP}. Note that indeed the right
interpretation is given by expansion into a (finite) sum,  substituting the expressions
\eqref{eq:Ex1} and \eqref{eq:Ex2}, cf. Subsection \ref{subsec:Conclusion of argument, the
              weak continuity}. \qed

\begin{remark}\label{remark:proof-eqrefeq:srep} Note the following analogues of
  \eqref{eq:Wfinal} and \eqref{eq:SreP}, cf. \eqref{eq:locScata} and \eqref{eq:locScatb}, 
  \begin{subequations}
  \begin{align}\label{eq:Wfinal2a}
\widetilde W^-_\beta g(k_\beta)\varphi
    &=-2\pi\int \,(gf_2)(\lambda)  \parb{ f_1(H)\delta(H-\lambda) T^-_\beta J_\beta 
      \gamma_{b}^-(\lambda_\beta)^*}\gamma_{b,0}(\lambda_\beta)\varphi
    \,\d \lambda,\\
    \begin{split}
 16\lambda_\beta \lambda_\alpha f^2_2(\lambda)&m_b(\pm\hat
  \xi_b)^2m_b(\pm\hat \xi_a)^2\,\delta_{ \beta \alpha}\\&\quad =-(2\pi \i)^2f^2_2(\lambda){\gamma_b^\pm(\lambda_\beta)}J^*_\beta
  \parb{T^\pm _\beta}^*\delta(H-\lambda)T^\pm _\alpha
    J_\alpha{\gamma_a^\pm (\lambda_\alpha)}^*. \label{eq:Wfinal2b}   
    \end{split}
              \end{align} The  quantity to the  left in
              \eqref{eq:Wfinal2b} is meant to be  an operator in
              $\vL\parb{L^2(C_a),L^2(C_b)}$;  the use of the
              Kronecker symbol  $\delta_{
                \beta \alpha}$ specifies that it vanishes unless
              $\beta =\alpha$. The operator  may be considered as   the fiber of $F_\beta \parb{\widetilde W_\beta^\pm}^*\widetilde W^\pm_\alpha
  F_\alpha^{-1}$ at energy $\lambda$, invoking 
the  orthogonality of channels.  The formula \eqref{eq:Wfinal2b} 
results 
              by mimicking 
              the above procedure for showing \eqref{eq:SreP}.
  \end{subequations}
\end{remark}

\end{document}